\documentclass[a4paper,11pt]{article}
\usepackage{jheppub}
\makeatletter
\gdef\@fpheader{}
\makeatother
\usepackage[T1]{fontenc}
\usepackage[utf8]{inputenc}
\usepackage{lmodern}
\usepackage{amsthm,mathtools}
\usepackage{mathrsfs}
\usepackage{bm}
\usepackage{booktabs}
\usepackage{array}
\usepackage{enumitem}
\usepackage{microtype}
\usepackage{tikz}
\usepackage[nameinlink,noabbrev]{cleveref}
\graphicspath{{figures/}}
\hypersetup{
  hypertexnames=false,
  pdftitle={When does a state-dependent proto-area define a bulk geometry?},
  pdfauthor={Ruiliang Li},
  pdfsubject={Nonlinear geometrizability in holographic quantum error correction}
}
\usetikzlibrary{arrows.meta,calc,positioning,decorations.pathreplacing}
\newtheorem{theorem}{Theorem}[section]
\newtheorem{proposition}[theorem]{Proposition}
\newtheorem{corollary}[theorem]{Corollary}
\newtheorem{lemma}[theorem]{Lemma}
\theoremstyle{definition}
\newtheorem{definition}[theorem]{Definition}

\newcommand{\Rank}{\operatorname{rank}}
\newcommand{\Dist}{\operatorname{dist}}

\newcommand{\Supp}{\operatorname{supp}}
\newcommand{\Rgeq}{\mathbb{R}_{\geq 0}}
\newcommand{\Rpos}{\mathbb{R}_{>0}}
\newcommand{\one}{\mathbf{1}}
\newcommand{\cA}{\mathscr{C}}
\newcommand{\calI}{\mathcal{I}}
\newcommand{\bfa}{\boldsymbol{a}}
\newcommand{\bfq}{\boldsymbol{q}}

\newcommand{\bfe}{\boldsymbol{e}}
\newcommand{\cK}{\mathcal{K}}
\newcommand{\cP}{\mathcal{P}}
\newcommand{\cZ}{\mathcal{Z}}

\newcommand{\ip}[2]{\left\langle #1,#2\right\rangle}
\newcommand{\norm}[1]{\left\lVert #1\right\rVert}
\newcommand{\abs}[1]{\left\lvert #1\right\rvert}
\newcommand{\Ran}{\operatorname{Ran}}

\newcommand{\Id}{\operatorname{Id}}
\newcommand{\tr}{\operatorname{tr}}

\title{\boldmath When does a state-dependent proto-area define a bulk geometry?}

\author[a]{Ruiliang Li}
\affiliation[a]{Tsinghua University, Beijing 100084, China}
\emailAdd{lirl23@mails.tsinghua.edu.cn}

\abstract{An area-like function for each boundary region need not come from a single bulk geometry. Cross-region locality requires the data for all regions to factor through a single boundary-length map. We formulate this common-source condition as a nonlinear sewing problem. For a fixed network chamber, we derive exact primal and dual certificates together with an operator-valued criterion for central area operators. On the hyperbolic disk, the first variation lies in the range of the rank-two geodesic X-ray transform. At second order, extremality no longer removes geodesic displacement, and the forced Jacobi equation determines the normal Hessian of renormalized boundary length. We prove a gauge-invariant necessary and sufficient criterion for a regular proto-area coefficient two-jet to arise from a metric two-jet. At a product reference, we prove an exact orthogonal decomposition for the Bogoliubov--Kubo--Mori (BKM) metric associated with the Petz recovery map. If the traced-out factor is nontrivial, no unrestricted reverse stability bound exists; on an observable tangent space the sharp constant is $(1-\alpha_E)/\alpha_E$, with $\alpha_E$ the minimum retained BKM fraction. Combining the geometric and information-theoretic structures yields BKM--Jacobi matching: the negative BKM information-loss form must equal the Jacobi normal acceleration across the full interval family. We construct two isometric code families with complementary obstructions. Exact regional central area operators on the recoverable algebra can fail to sew into one local graph geometry, while a fixed-central-fiber affine logical path has zero first response and a nonzero quadratic witness. Under source density, sampled-data convergence, and first- and second-derivative consistency, corrected discrete witnesses converge to the continuum obstruction.}
\keywords{AdS-CFT Correspondence, Holography and quantum information, Quantum error correction, Integral geometry}
\begin{document}
\maketitle
\flushbottom

\section{Introduction}
\label{sec:introduction}

The AdS/CFT correspondence relates boundary observables to a common bulk
spacetime \cite{Maldacena1998,GubserKlebanovPolyakov1998,Witten1998}.
Holographic entropy is therefore not an independent number for each boundary
region.  In a local bulk description, the data for all regions are correlated
because the same metric determines every extremal area.  The relation between
entanglement structure and emergent bulk locality motivates both continuum and
tensor-network descriptions \cite{VanRaamsdonk2010,Swingle2012}.  This
common-source requirement is
already visible in static $\mathrm{AdS}_3$.  Renormalized lengths of all
boundary-anchored geodesics form a highly constrained function of two boundary
endpoints.  Once such data are known to be geometric, boundary rigidity and
tensor tomography can reconstruct the metric up to a boundary-fixing
diffeomorphism \cite{PorratiRabadan2004,GrahamEtAl2019,Lefeuvre2020,CaoQiSwingleTang2020}.
Before inversion, however, one must decide whether the data lie in the image of
one local tensor field.

This distinction becomes essential for a state-dependent proto-area.  Exact
operator-algebra quantum error correction produces central area operators,
while approximate entanglement-wedge recovery can produce genuinely nonlinear
regional functions of the logical state
\cite{AlmheiriDongHarlow2015,Harlow2017,CaoEtAl2026,Witten2026}.  Either
mechanism gives an area-like quantity region by region.  Neither guarantees
that the full regional family can be sewn into a single bulk geometry.  The
problem is not whether each interval can be assigned a length in some
auxiliary geometry, but whether one metric accounts for all intervals and all
state directions simultaneously.

The first obstruction is tomographic.  A metric perturbation contributes to
every interval by integration along the corresponding reference geodesic, so
its first response must lie in the range of the rank-two geodesic X-ray
transform.  The second obstruction is nonlinear.  The same perturbation that
changes the line element also displaces the geodesic.  Its displacement is
fixed by an inhomogeneous Jacobi equation and therefore fixes the allowed
normal acceleration of the complete length data.  A proto-area path can pass
every linear range condition and still fail this quadratic bending condition.

On the quantum-information side, the quadratic response has an equally rigid
form.  At a product reference state, the Hessian of a recovery-calibrated
proto-area is minus the loss of Bogoliubov--Kubo--Mori distinguishability under
the decoded boundary and recovered-matter channels.  Theorem~\ref{thm:bkm-jacobi-matching}
projects both responses to $\mathcal Y_{\mathrm{ng}}$ and equates the
information-loss Hessian with the normal Jacobi acceleration required by one
metric.  The matching is a signed tensorial
identity across the full interval family, not an entropy inequality and not a
separate reconstruction for each interval.

\subsection{The common-source problem}
\label{subsec:introduction-common-source}

The Ryu--Takayanagi formula and its quantum corrections organize boundary
entropy through a common bulk spacetime
\cite{RyuTakayanagi2006,FaulknerLewkowyczMaldacena2013,EngelhardtWall2015}.
Holographic quantum error correction explains the associated redundancy of
bulk reconstruction and the appearance of an area term
\cite{PastawskiEtAl2015,DongHarlowWall2016,Cotler:2017erl,AlmheiriDongSwingle2017,AkersPenington2022}.
In a single exact subsystem sector the ancillary contribution is state
independent.  With a center it is affine in the sector probabilities.  By
contrast, the coherent-information prescription of Cao, Cheng, Karthikeyan,
Li, and Preskill yields a proto-area that can vary within a logical sector
\cite{CaoEtAl2026}.  Witten's analysis separates the resulting area function
from a state-independent area operator and determines its typical suppression
in a random deformation of the encoding \cite{Witten2026}.  Here we test
whether that regional function lies in the image of one boundary-length map.

Let $\Sigma$ be a recovery-regular manifold of logical states containing
$\rho_0$.  Holding the calibrated recovery fixed while the state varies gives
the background-subtracted, length-normalized proto-area map
\begin{equation}
  \mathcal A_{\rho_0}:\Sigma\longrightarrow\mathcal Y,
  \label{eq:introduction-proto-area-map}
\end{equation}
where $\mathcal Y$ is a space of symmetric interval data in the fixed boundary
representative.  On the Poincar\'e disk $(M,g_{\mathbb H})$, with a fixed
conformal representative at infinity, the corresponding geometric map is
\begin{equation}
  \mathcal B:
  \frac{\mathcal M_{\mathrm{AH}}}{\operatorname{Diff}_0(M;\partial M)}
  \longrightarrow\mathcal Y,
  \qquad
  \mathcal B([g])=\mathcal L_g^x-\mathcal L_{g_{\mathbb H}}^x.
  \label{eq:introduction-boundary-length-map}
\end{equation}
Local geometrizability is the factorization
\begin{equation}
  \mathcal A_{\rho_0}=\mathcal B\circ\mathcal G
  \label{eq:introduction-factorization}
\end{equation}
for a single metric-valued state map $\mathcal G$.  It follows neither from the
existence of regional area functions nor from their linearized geometricity:
\begin{align}
  \text{state dependence}
  &\not\Rightarrow \text{metric geometrizability},
  \label{eq:introduction-first-nonimplication}\\
  \text{first-order geometrizability}
  &\not\Rightarrow \text{second-order geometrizability}.
  \label{eq:introduction-second-nonimplication}
\end{align}
The first failure is a cross-region sewing obstruction.  The second is the
curvature of the image of $\mathcal B$ inside interval-data space.  Figure
\ref{fig:overview} displays these alternatives.

\begin{figure}[t]
\centering
\resizebox{\textwidth}{!}{%
\begin{tikzpicture}[font=\small,>=Latex,x=1cm,y=1cm]
  \node[draw,rounded corners=2pt,minimum width=3.25cm,minimum height=2.35cm,align=center]
    (states) at (1.7,1.65) {recovery-regular\\state manifold $\Sigma$};
  \fill (0.82,1.12) circle (1.7pt) node[below left=1pt] {$\rho_0$};
  \fill (2.57,2.12) circle (1.7pt) node[above right=1pt] {$\rho$};
  \draw[->,line width=0.8pt] (0.88,1.17).. controls (1.35,1.85) and (1.92,1.47)..(2.50,2.06);

  \draw[->,line width=0.9pt] (3.42,1.65)--(5.03,1.65)
    node[midway,above] {$\mathcal A_{\rho_0}$};

  \node[draw,rounded corners=2pt,minimum width=5.55cm,minimum height=3.45cm]
    (data) at (7.95,1.65) {};
  \node[anchor=north west] at (5.34,3.20) {interval-data space $\mathcal Y$};
  \draw[dashed] (5.68,0.82)--(10.18,0.82)
    node[below right] {$\mathcal Y_{\mathrm{geo}}$};
  \draw[line width=1pt]
    plot[smooth] coordinates {(5.70,0.82) (6.65,0.91) (7.62,1.18) (8.60,1.65) (9.60,2.37) (10.18,2.96)};
  \node[anchor=west] at (9.12,2.77) {$\mathfrak G_{g_{\mathbb H}}$};
  \fill (6.05,0.85) circle (1.7pt);
  \draw[->,line width=0.8pt] (6.05,0.85).. controls (6.75,0.92) and (7.30,1.05)..(7.82,1.27);
  \fill (7.82,1.27) circle (1.7pt);
  \draw[->,dotted,line width=0.9pt] (6.05,0.85).. controls (6.82,0.85) and (7.42,0.84)..(8.10,0.84);
  \draw[->,line width=0.8pt] (8.10,0.84)--(8.10,2.22);
  \fill (8.10,2.22) circle (1.7pt);
  \node[align=left,anchor=west] at (8.28,2.12) {nongeometric\\normal component};

  \draw[->,line width=0.9pt] (10.78,2.30)--(12.05,2.78);
  \node[draw,rounded corners=2pt,minimum width=3.15cm,minimum height=1.15cm,align=center]
    at (13.60,2.82) {metric reconstruction\\$g_\rho$ modulo gauge};
  \draw[->,line width=0.9pt] (10.78,1.05)--(12.05,0.57);
  \node[draw,rounded corners=2pt,minimum width=3.15cm,minimum height=1.15cm,align=center]
    at (13.60,0.53) {dual witness\\$\Pi_{\mathrm{ng}}a$ or $\mathfrak O_2$};
\end{tikzpicture}}
\caption{The common-source condition.  Proto-area data define a path in
interval-data space.  Local metrics generate the nonlinear image
$\mathfrak G_{g_{\mathbb H}}=\operatorname{Im}\mathcal B$.  Geometric data
reconstruct a metric modulo gauge; an incompatible normal component gives a
boundary-accessible witness. The curved locus is schematic and represents only
the regular coefficient two-jet criterion at the hyperbolic background; no
global nonlinear image is assumed.}
\label{fig:overview}
\end{figure}

\subsection{Results}
\label{subsec:introduction-main-results}

The analysis gives complete local certificates at finite resolution and for
regular coefficient two-jets on the disk.  The assumptions and outputs are
collected in Table~\ref{tab:introduction-theorem-map}.

\begin{table}[t]
  \centering
  \small
  \setlength{\tabcolsep}{4pt}
  \renewcommand{\arraystretch}{1.16}
  \begin{tabular}{p{0.17\textwidth}p{0.42\textwidth}p{0.31\textwidth}}
    \toprule
    setting & criterion and output & hypotheses \\
    \midrule
    finite network
      & chamber membership; edge reconstruction or an exact left-null witness
      & fixed topology and selected cut chamber \\
    continuum first jet
      & $\Pi_{\mathrm{ng}}a_1=0$; the gauge-fixed metric coefficient
        $h_1=\mathscr R_2a_1$
      & hyperbolic disk and regular decay \\
    continuum two-jet
      & $\Pi_{\mathrm{ng}}a_1=0$ and
        $\mathfrak O_2(a_1,a_2)=0$; reconstruction of $(h_1,h_2)$
      & regular proto-area two-jet; pathwise differentiability for a strong
        $\mathcal Y^\tau$ remainder \\
    \bottomrule
  \end{tabular}
  \caption{Geometrizability criteria.  The finite chamber test and the regular
  coefficientwise two-jet test are necessary and sufficient in their stated
  classes.}
  \label{tab:introduction-theorem-map}
\end{table}

The recovery-calibrated proto-area has a well-defined two-jet on a
recovery-regular state sector.  At a product reference, the Petz embedding is
isometric for the BKM metric, and the loss of distinguishability is precisely
the squared norm orthogonal to the Petz-sufficient tangent subspace.  On an
allowed tangent space $E$, the sharp reverse constant is
$(1-\alpha_E)/\alpha_E$, where $\alpha_E$ is the minimum BKM fraction retained
by the channel.  No finite reverse constant exists when
$E\cap\ker T\neq\{0\}$.  This separates the ensemble-typical suppression found
in Ref.~\cite{Witten2026} from the worst-case statement required for a fixed
code.

At finite resolution, a weighted graph and a chosen minimum-cut chamber
define a cut-incidence matrix $M_{\mathcal C}$.  A regional variation $q$ has a
common edge source exactly when
\begin{equation}
  q=M_{\mathcal C}\delta w
  \quad\Longleftrightarrow\quad
  y^{\mathsf T}q=0
  \quad\text{for all }y\in\ker M_{\mathcal C}^{\mathsf T}.
  \label{eq:introduction-discrete-criterion}
\end{equation}
The left-null vectors are complete nongeometric witnesses.  For central area
operators, the same relations must hold as operator identities.  Thus the
existence of a central operator for each sampled region is strictly weaker than the
existence of one local graph geometry.

In the continuum, the differential of the renormalized boundary-length map is
\begin{equation}
  D\mathcal B_{g_{\mathbb H}}[h]=\frac12 I_2h,
  \label{eq:introduction-linearized-boundary-map}
\end{equation}
with boundary-fixing infinitesimal diffeomorphisms as its gauge kernel.  The
range and inversion theory of the hyperbolic tensor X-ray transform gives a
source-adapted orthogonal splitting
\begin{equation}
  \mathcal Y^\tau=\mathcal Y_{\mathrm{geo}}
  \widehat\oplus\mathcal Y_{\mathrm{ng}}.
  \label{eq:introduction-continuum-splitting}
\end{equation}
$a_1$ is geometric if and only if
$\Pi_{\mathrm{ng}}a_1=0$; when this holds, $h_1=\mathscr R_2a_1$ is the unique
iterated transverse-traceless (ITT) representative.  Otherwise
$\Pi_{\mathrm{ng}}a_1$ is both the nearest-point residual and the optimal unit
witness.

The nonlinear condition is fixed by geodesic displacement.  For a reference
geodesic $\gamma$, the normal displacement generated by $h_1$ solves
\begin{equation}
  \mathcal J_\gamma J_{h_1}=\mathcal F_\gamma(h_1),
  \qquad \mathcal J_\gamma=-\partial_s^2+1,
  \label{eq:introduction-forced-jacobi}
\end{equation}
and the inverse Jacobi operator supplies the bending term in
$D^2\mathcal B_{g_{\mathbb H}}[h_1,h_1]$.  For
$a(t)=t a_1+\tfrac12t^2a_2+o(t^2)$, define
\begin{equation}
  \mathfrak O_2(a_1,a_2)
  =\Pi_{\mathrm{ng}}\!\left[
    a_2-D^2\mathcal B_{g_{\mathbb H}}[h_1,h_1]
  \right].
  \label{eq:introduction-quadratic-obstruction}
\end{equation}
The proto-area coefficient two-jet is generated by a local asymptotically
hyperbolic metric two-jet if and only if
\begin{equation}
  \Pi_{\mathrm{ng}}a_1=0,
  \qquad
  \mathfrak O_2(a_1,a_2)=0.
  \label{eq:introduction-two-jet-criterion}
\end{equation}
The criterion is gauge invariant, reconstructs both metric coefficients, and
has a bilinear multiparameter form.

For state directions $X,Y$, let $\boldsymbol\Delta(X,Y)$ denote the vector of
regional BKM information losses.  The affine-state proto-area Hessian is
$-4G_{\mathrm{eff}}\boldsymbol\Delta(X,Y)$.  Substitution into the geometric
two-jet criterion gives
\begin{equation}
  \Pi_{\mathrm{ng}}\!\left[
   -4G_{\mathrm{eff}}\boldsymbol\Delta(X,Y)
   -D^2\mathcal B_{g_{\mathbb H}}[h_X,h_Y]
  \right]=0.
  \label{eq:introduction-bkm-jacobi-matching}
\end{equation}
The first term is the BKM cost of information discarded by regional recovery;
the second is the normal acceleration forced by the displacement of the
extremal curve.  Their cancellation is the BKM--Jacobi matching condition.

Two isometric code constructions show that both obstructions occur
microscopically.  A controlled cross-cell skew gives exact regional central
area operators that violate an operator-valued cut relation.  A separate
fixed-central-fiber affine logical path has vanishing first response and a nonzero
quadratic chamber witness computed in closed form, so it has neither a local
graph realization nor an area-operator representation on the fixed fiber.
Both witness gaps persist under bounded recovery errors.  Under source density,
sampled-data convergence, and first- and second-derivative consistency, the graph projectors,
minimum-norm reconstructions, and corrected quadratic witnesses converge to
their continuum counterparts.  A fixed graph has no continuum interpretation
unless these hypotheses are imposed.

\subsection{Relation to previous work and scope}
\label{subsec:introduction-relation}

Hole-ography, kinematic space, and holographic integral geometry reconstruct
bulk curves and distances from interval entropies
\cite{Balasubramanian:2013lsa,CzechLamprou2014,Czech:2015qta,CzechEtAl2016}.
Boundary rigidity and tensor tomography sharpen the inverse problem, while
Ref.~\cite{CaoQiSwingleTang2020} applies the tensor Radon transform to
near-AdS entanglement data.  These results reconstruct a metric from geometric
data.  Here the X-ray range is used first as a membership condition for a
recovery-calibrated proto-area, and its orthogonal complement supplies an
optimal nongeometric witness.

Metric and extremal-surface variations, including the inhomogeneous Jacobi
equation, have been studied in
Refs.~\cite{GhoshMishra2016,GhoshMishra2018,Mosk2018,BaoCaoFischettiKeeler2019,JokelaLiimatainenSarkkinenTzou2025}.
Projecting the renormalized length Hessian to the normal complement of the
rank-two X-ray range identifies the second
fundamental form of the boundary-length image and turns a familiar second
variation into a necessary and sufficient coefficientwise membership test for
an arbitrary regular proto-area two-jet.

Graph models and holographic entropy cones provide the finite geometric
setting \cite{Headrick2014,BaoEtAl2015,HaydenEtAl2016}; bit threads give a
different flow-based duality \cite{FreedmanHeadrick2017,HeadrickHubeny2018}.
The witnesses used here are covectors in regional-data space and test a fixed
interior locality structure.  They are not bit-thread flows.  Likewise,
modular Berry transport sews relative edge-mode frames
\cite{CzechLamprouMcCandlishSully2018,CzechDeBoerGeLamprou2019}; the present
condition sews scalar proto-areas into the two-jet of one boundary-length map.

The analytic setting is a time-reflection-symmetric asymptotically
$\mathrm{AdS}_3$ slice, a fixed conformal boundary representative, simple
metrics near $g_{\mathbb H}$, and single-interval geodesics.  The regular
coefficientwise two-jet theorem is unconditional in its stated high-regularity
class.  Strong realization in the data norm requires the pathwise
differentiability stated in Theorem~\ref{thm:two-jet-geometrizability}, and the
discrete-to-continuum result requires genuine second-order consistency.  The
tests establish metric kinematics, not the Einstein equations or quantum
extremality.  Covariant HRT data, competing extremal branches, higher
dimensions, and quantum shape forces require additional range and stability
theory.

The argument proceeds from quantum data to geometric consistency.  Section~\ref{sec:proto-area}
derives the recovery-calibrated two-jet and its BKM structure.  Section~\ref{sec:discrete}
gives exact finite-network sewing criteria, while Sections~\ref{sec:linearized-geometry}
and~\ref{sec:nonlinear-consistency} develop the continuum range test and its
Jacobi correction.  Section~\ref{sec:skewed-codes-revised} realizes both
obstructions in explicit codes, and Section~\ref{sec:physical-interpretation}
separates metric kinematics from quantum extremality and dynamics.  The
appendices contain the recovery regularity, discrete-to-continuum estimates,
analytic boundary-length results, and finite-dimensional code calculations.

\section{Proto-area data and the geometrizability problem}
\label{sec:proto-area}

A recovery calibrated on the code, rather than reoptimized for each logical
state, defines a regional entropy remainder only after optimizer ambiguity and
support changes are controlled. Recovery-regular sectors provide that domain.
The resulting proto-area has a well-defined two-jet, whose product-reference
Hessian is an information-loss form. The optimizer and entropy regularity
estimates needed for this construction are collected in
Appendix~\ref{app:recovery-entropy}.

\subsection{Optimal recovery and the proto-area function}
\label{subsec:proto-area-optimal-recovery}

Let $\mathcal H_L$ be a finite-dimensional logical Hilbert space and let
\begin{equation}
  V:\mathcal H_L\longrightarrow\mathcal H_{\partial}
  \label{eq:section2-encoding-isometry}
\end{equation}
be an isometric encoding. For a boundary bipartition
$\mathcal H_{\partial}=\mathcal H_A\otimes\mathcal H_{\bar A}$, choose a logical bipartition
$\mathcal H_L=\mathcal H_{a_A}\otimes\mathcal H_{\bar a_A}$. The encoded state and its $A$ marginal are
\begin{equation}
  \omega(\rho)=V\rho V^\dagger,
  \qquad
  \omega_A(\rho)=\tr_{\bar A}\omega(\rho).
  \label{eq:section2-encoded-state}
\end{equation}

An admissible recovery pair $\mathbf R=(R_A,R_{\bar A})$ consists of local Stinespring isometries, or local unitaries after adjoining fixed pure ancillas, that identify output factorizations
\begin{equation}
  \mathcal H_A\simeq\mathcal H_{A_1}\otimes\mathcal H_{A_2},
  \qquad
  \mathcal H_{\bar A}\simeq
  \mathcal H_{\bar A_1}\otimes\mathcal H_{\bar A_2},
  \label{eq:section2-recovery-factorizations}
\end{equation}
with $\mathcal H_{A_1}\simeq\mathcal H_{a_A}$ and
$\mathcal H_{\bar A_1}\simeq\mathcal H_{\bar a_A}$. When the recoverable logical algebra has a prescribed center, admissible recoveries preserve its minimal central projectors. The recovered channel is
\begin{equation}
\begin{split}
  \mathcal Q_{A,\mathbf R}(\rho)
  =\tr_{A_2\bar A_2}\Bigl[
  &(R_A\otimes R_{\bar A})V\rho V^\dagger
  (R_A\otimes R_{\bar A})^\dagger
  \Bigr].
  \label{eq:section2-recovered-channel}
\end{split}
\end{equation}

Let $d_L=\dim\mathcal H_L$ and
\begin{equation}
  |\Phi\rangle_{LR}
  =\frac{1}{\sqrt{d_L}}
  \sum_{j=1}^{d_L}|j\rangle_L|j\rangle_R,
  \qquad
  \Phi_{LR}=|\Phi\rangle\langle\Phi|.
  \label{eq:section2-maximally-entangled-calibration}
\end{equation}
The channel coherent information calibrated on the maximally mixed logical input is~\cite{SchumacherNielsen1996}
\begin{equation}
  I_c(\mathcal Q_{A,\mathbf R})
  =S\!\left(\mathcal Q_{A,\mathbf R}(\pi_L)\right)
  -S\!\left[
    (\mathcal Q_{A,\mathbf R}\otimes\Id_R)(\Phi_{LR})
  \right],
  \qquad
  \pi_L=\frac{I_L}{d_L}.
  \label{eq:section2-coherent-information}
\end{equation}
We use the calibration of Ref.~\cite{CaoEtAl2026}, maximizing
this coherent information.  The optimal recoveries form the compact nonempty set
\begin{equation}
  \mathsf{Opt}_A(V)
  =\operatorname*{arg\,max}_{\mathbf R}
  I_c(\mathcal Q_{A,\mathbf R}).
  \label{eq:section2-optimal-recovery-set}
\end{equation}
The maximization depends on the code and the bipartition, not on the logical state subsequently evaluated. Reoptimizing separately for each state would produce an upper envelope with optimizer-switching singularities unrelated to geometric backreaction.

For $\mathbf R\in\mathsf{Opt}_A(V)$, define
\begin{equation}
  \tau_{A,\mathbf R}(\rho)
  =\tr_{\bar A_1}\mathcal Q_{A,\mathbf R}(\rho).
  \label{eq:section2-recovered-matter-state}
\end{equation}
Following the proto-area prescription of Cao, Cheng, Karthikeyan, Li, and
Preskill~\cite{CaoEtAl2026}, but retaining the possible nonuniqueness of an
optimal recovery, the operational prescription is the set
\begin{equation}
  \mathfrak P_A(V,\rho)
  =\left\{
  S\!\left(\omega_A(\rho)\right)
  -S\!\left(\tau_{A,\mathbf R}(\rho)\right)
  :\mathbf R\in\mathsf{Opt}_A(V)
  \right\}.
  \label{eq:section2-set-valued-proto-area}
\end{equation}

\begin{definition}[Recovery-regular state sector]
\label{def:section2-recovery-regular}
Let $\Sigma\subset\mathcal D(\mathcal H_L)$ be a connected smooth manifold of logical states and let $\calI$ be a finite family of boundary regions. The pair $(V,\Sigma)$ is recovery-regular on $\calI$ if, for every $A\in\calI$, the set in Equation~\eqref{eq:section2-set-valued-proto-area} is a singleton for every $\rho\in\Sigma$, and a fixed $\mathbf R_A^\star\in\mathsf{Opt}_A(V)$ can be chosen such that $\omega_A(\rho)$ and $\tau_{A,\mathbf R_A^\star}(\rho)$ have constant support on $\Sigma$. The sector is faithfully recovery-regular if both states are positive definite on those supports.
\end{definition}

The singleton condition identifies recoveries that give the same recovered-entropy function; it does not require a unique Stinespring representation.

\begin{definition}[Fixed calibrated recovery branch]
\label{def:section2-certified-recovery-branch}
A fixed calibrated recovery branch is a state-independent choice
$\mathbf R_A^\sharp$ for every $A\in\calI$, together with a stated
operational certificate for that choice, such as an optimality gap or a
diamond-norm error bound. It is faithfully
branch-regular on $\Sigma$ if $\omega_A(\rho)$ and
$\tau_{A,\mathbf R_A^\sharp}(\rho)$ have constant faithful supports on
$\Sigma$.  The corresponding branch proto-area is
\begin{equation}
 S_{\mathrm{PA}}^\sharp(V,\rho;A)
 =S(\omega_A(\rho))-S(\tau_{A,\mathbf R_A^\sharp}(\rho)).
 \label{eq:section2-branch-proto-area}
\end{equation}
\end{definition}

The prescription of Ref.~\cite{CaoEtAl2026} is recovered when
$\mathbf R_A^\sharp\in\mathsf{Opt}_A(V)$ and the set in
Equation~\eqref{eq:section2-set-valued-proto-area} is a singleton.  A
diamond-minimax branch is used in the fixed-fiber example of
Section~\ref{subsec:fixed-center-erasure-code}.  Entropy differentiation
and every geometrizability criterion below are branchwise statements
and require only the fixed linear output channels and support
regularity.  Optimality with respect to a different recovery objective
is not inferred from those calculations.

In formulas common to both cases, write $\mathbf R_A^{\mathrm{cal}}$ for
the selected branch and
$\tau_A^{\mathrm{cal}}:=\tau_{A,\mathbf R_A^{\mathrm{cal}}}$.

\begin{proposition}[Unitary equivalence of optimal recovered marginals]
\label{prop:section2-unitary-equivalent-recoveries}
Suppose that for every $\mathbf R\in\mathsf{Opt}_A(V)$ there is a unitary $U_{A,\mathbf R}$ on $A_1$, independent of $\rho$, such that
\begin{equation}
  \tau_{A,\mathbf R}(\rho)
  =U_{A,\mathbf R}
  \tau_{A,\mathbf R_A^\star}(\rho)
  U_{A,\mathbf R}^\dagger
  \label{eq:section2-unitary-equivalent-recovered-states}
\end{equation}
for every $\rho$ in the affine span of $\Sigma$. Then the entropy in Equation~\eqref{eq:section2-set-valued-proto-area} is independent of the optimal recovery. If, in addition, $\omega_A(\rho)$ and the reference recovered branch have constant support on $\Sigma$, then recovery regularity holds.
\end{proposition}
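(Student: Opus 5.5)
The argument is short and rests on the unitary invariance of von Neumann entropy. Fix $\rho\in\Sigma$ and $\mathbf R\in\mathsf{Opt}_A(V)$. By hypothesis \eqref{eq:section2-unitary-equivalent-recovered-states},
\[
S\bigl(\tau_{A,\mathbf R}(\rho)\bigr)=S\bigl(U_{A,\mathbf R}\tau_{A,\mathbf R_A^\star}(\rho)U_{A,\mathbf R}^\dagger\bigr)=S\bigl(\tau_{A,\mathbf R_A^\star}(\rho)\bigr).
\]
This holds because conjugation by a unitary preserves the spectrum, and $S$ depends only on the spectrum. The first term $S(\omega_A(\rho))$ in \eqref{eq:section2-set-valued-proto-area} does not involve the recovery at all. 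Hence every element of $\mathfrak P_A(V,\rho)$ equals
\[
S(\omega_A(\rho))-S\bigl(\tau_{A,\mathbf R_A^\star}(\rho)\bigr),
\]
so the set is a singleton. This proves the first claim. The hypothesis is stated on the affine span of $\Sigma$, which contains $\Sigma$, so it can be used pointwise on $\Sigma$. The affine-span version is only needed if one also wants the equivalence to persist for tangent directions.

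For the second claim, I check the two clauses of Definition~\ref{def:section2-recovery-regular} for each $A\in\calI$.

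\textbf{Singleton clause.} This holds for every $\rho\in\Sigma$ by the first part.

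\textbf{Fixed-branch clause.} I choose the fixed recovery to be the reference $\mathbf R_A^\star$, which lies in $\mathsf{Opt}_A(V)$ by hypothesis. The supports of $\omega_A(\rho)$ and $\tau_{A,\mathbf R_A^\star}(\rho)$ are constant on $\Sigma$, again by hypothesis.

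Both clauses are satisfied for every $A$, so $(V,\Sigma)$ is recovery-regular on $\calI$. I would also note that for any other optimal $\mathbf R$, the support of $\tau_{A,\mathbf R}(\rho)$ is $U_{A,\mathbf R}$ applied to the reference support. It is therefore also constant, so the choice of branch does not matter.

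There is no real obstacle here. The only point needing care is bookkeeping: the unitary $U_{A,\mathbf R}$ must be independent of $\rho$ for the support statement above to transfer constancy between branches. It also matters that $\mathsf{Opt}_A(V)$ is nonempty, which is already recorded after \eqref{eq:section2-optimal-recovery-set}, so that $\mathbf R_A^\star$ exists. The state-independence of $U_{A,\mathbf R}$ is not needed for the entropy equality itself.
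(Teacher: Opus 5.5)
Your proposal is correct and follows essentially the same route as the paper. Unitary invariance of von Neumann entropy makes every optimal recovery give the same recovered entropy, so the proto-area set is a singleton. The assumed constant supports of $\omega_A(\rho)$ and the reference branch, transported by the state-independent unitary to every other optimal branch, then give recovery regularity.
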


\begin{proposition}[Persistence of a nondegenerate optimal recovery]
\label{prop:section2-optimal-recovery-persistence}
Let $s\mapsto V_s$ be a $C^k$ family of encodings, $k\geq2$. Parameterize the admissible recoveries by the compact product of their Stiefel manifolds and choose a smooth local slice transverse to discarded-output gauge. Assume that the states entering Equation~\eqref{eq:section2-coherent-information} are faithful near $(0,\mathbf R_0)$, that the maximizing gauge orbit at $s=0$ is unique, and that the Hessian of the coherent-information objective is negative definite on the slice. Then there are $\varepsilon>0$ and a unique $C^{k-1}$ maximizing branch modulo gauge,
\begin{equation}
  (-\varepsilon,\varepsilon)\ni s
  \longmapsto\mathbf R_A^\star(s).
  \label{eq:section2-smooth-optimal-recovery-branch}
\end{equation}
Suppose, in addition, that on a neighborhood of $(0,\Sigma)$ the support projections of the boundary and recovered marginals are constant and their nonzero spectra are uniformly bounded away from zero. Then the recovered marginal and proto-area are $C^{k-1}$ in $(s,\rho)$.
\end{proposition}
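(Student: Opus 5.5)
The plan is an implicit function theorem argument on a finite-dimensional manifold, followed by a composition-of-smooth-maps argument for the entropies.

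\textbf{Step 1: reduce to a smooth objective on a slice.} The admissible recoveries $\mathbf R=(R_A,R_{\bar A})$ form a compact product of Stiefel manifolds. The map $(s,\mathbf R)\mapsto(\mathcal Q_{A,\mathbf R}\otimes\Id_R)(\Phi_{LR})$ is polynomial in the entries of $\mathbf R$ and $C^k$ in $s$, since $V_s$ enters bilinearly through $V_s\rho V_s^\dagger$. The same holds for $\mathcal Q_{A,\mathbf R}(\pi_L)$. The discarded-output gauge (unitaries on $A_2,\bar A_2$) leaves $I_c$ invariant, so we restrict to the given local slice $\mathcal S$ through $\mathbf R_0$. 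By assumption both states in Equation~\eqref{eq:section2-coherent-information} are faithful near $(0,\mathbf R_0)$. The von Neumann entropy $X\mapsto-\tr X\log X$ is real-analytic on the open cone of positive definite matrices, which one sees from the holomorphic functional calculus. Hence the objective
\begin{equation}
F(s,\mathbf R)=I_c(\mathcal Q_{A,\mathbf R}^{(s)})
\end{equation}
is $C^k$ on a neighborhood of $(0,\mathbf R_0)$ in $\mathbb R\times\mathcal S$.

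\textbf{Step 2: implicit function theorem and global uniqueness.} Choose slice coordinates $x$ and apply the implicit function theorem to $\partial_xF(s,x)=0$. This equation is $C^{k-1}$, and its $x$-Jacobian at $(0,x_0)$ is the Hessian, which is negative definite and hence invertible. This gives a unique $C^{k-1}$ local critical branch $x(s)$. It remains a strict local maximum for small $s$, because negative definiteness is an open condition.

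The main obstacle is showing that $x(s)$ is the \emph{global} maximizer modulo gauge, and not just a local one. For this I would use compactness. Suppose there were $s_n\to0$ with maximizers $\mathbf R_n$ whose gauge orbits stay outside a fixed neighborhood $U$ of the optimal orbit. Pass to a convergent subsequence. The limit maximizes $F(0,\cdot)$, because $F$ is jointly continuous on the whole compact recovery space: entropy is continuous on all density matrices, not only the faithful ones. By uniqueness of the maximizing orbit at $s=0$, the limit lies on that orbit, which contradicts the choice of $U$. So for small $\varepsilon$ every maximizer lies in $U$. Inside $U$ it must coincide with the unique critical branch. This gives $\mathbf R_A^\star(s)$, unique modulo gauge.

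\textbf{Step 3: regularity of the marginals and proto-area.} The recovered marginal
\begin{equation}
\tau_{A,\mathbf R_A^\star(s)}(\rho)=\tr_{\bar A_1}\mathcal Q_{A,\mathbf R^\star(s)}^{(s)}(\rho)
\end{equation}
is $C^{k-1}$ in $(s,\rho)$, because it is affine in $\rho$ and polynomial in $(V_s,\mathbf R^\star(s))$. It is gauge independent, since the discarded factors are traced out. The boundary marginal $\omega_A(\rho)$ is $C^k$ in $(s,\rho)$.

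For the proto-area, fix the constant support projections $P$. Each marginal is then a $C^{k-1}$ map into positive operators on $\Ran P$. Its spectrum there is bounded below by some $\lambda_0>0$, uniformly on the neighborhood. Entropy restricted to $\Ran P$ is real-analytic on $\{X>0\}$, so we can compose. Concretely, write
\begin{equation}
S(X)=-\frac{1}{2\pi i}\oint_{\Gamma}z\log z\,\tr(z-X)^{-1}\,dz
\end{equation}
with a fixed contour $\Gamma$ enclosing $[\lambda_0,1]$ and avoiding $0$. This representation shows that derivatives of all orders are bounded locally uniformly. The proto-area $S(\omega_A)-S(\tau_A)$ is therefore $C^{k-1}$ in $(s,\rho)$.
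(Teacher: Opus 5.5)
Your proposal is correct and follows essentially the paper's route. Like the paper, you use slice coordinates, apply the implicit function theorem to the stationarity equation using the negative-definite Hessian, use compactness of the Stiefel parameter space to exclude remote maximizers, and use smoothness of entropy on the fixed-support positive cone. The only difference is cosmetic: you rule out remote maximizers by a sequential-compactness contradiction, whereas the paper extracts an explicit gap of $3\delta$ outside a gauge-saturated neighborhood $U$ and then applies uniform continuity of the objective in $(s,\mathbf R)$.
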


The proof combines the implicit function theorem with a uniform gap excluding remote maxima.  The resulting branch is calibrated once and held fixed under logical-state differentiation.  If the code itself varies, the envelope theorem applies to the optimized coherent information, not automatically to the distinct proto-area function; the latter generally retains an optimizer-velocity term.

On a recovery-regular sector or a fixed calibrated branch, the
entropy-valued proto-area is
\begin{equation}
  S_{\mathrm{PA}}(V,\rho;A)
  =S\!\left(\omega_A(\rho)\right)
  -S\!\left(\tau_A^{\mathrm{cal}}(\rho)\right).
  \label{eq:section2-proto-area-entropy}
\end{equation}
Natural logarithms and $\hbar=1$ are used throughout. A fixed factor $4G_{\mathrm{eff}}>0$ puts the quantity in length units,
\begin{equation}
  a_\rho(A)=4G_{\mathrm{eff}}S_{\mathrm{PA}}(V,\rho;A).
  \label{eq:section2-length-normalized-proto-area}
\end{equation}
For a reference state $\rho_0\in\Sigma$, define
\begin{equation}
\begin{split}
  \mathcal A_{\rho_0}:\Sigma&\longrightarrow\mathbb R^{\calI},\\
  \mathcal A_{\rho_0}(\rho)(A)
  &=a_\rho(A)-a_{\rho_0}(A).
  \label{eq:section2-background-subtracted-map}
\end{split}
\end{equation}
The code, recovery calibration, region assignment, and conversion factor are held fixed.

Exact complementary recovery has the standard subsystem normal form underlying the code derivation of the RT/FLM formula~\cite{FaulknerLewkowyczMaldacena2013,AlmheiriDongHarlow2015,Harlow2017,CaoEtAl2026}. In the normalization above, this form immediately fixes the proto-area remainder.

\begin{proposition}[State independence under exact complementary recovery]
\label{prop:section2-exact-code-state-independence}
Suppose that an optimal recovery pair and a fixed pure state $\chi_{A_2\bar A_2}$ satisfy
\begin{equation}
  (R_A\otimes R_{\bar A})V\rho V^\dagger
  (R_A\otimes R_{\bar A})^\dagger
  =\rho_{A_1\bar A_1}\otimes\chi_{A_2\bar A_2}
  \label{eq:section2-exact-recovery-factorization}
\end{equation}
for every logical state, with $\rho_{A_1}=\rho_{a_A}$. Then
\begin{equation}
  S_{\mathrm{PA}}(V,\rho;A)=S(\chi_{A_2})
  \label{eq:section2-exact-code-proto-area}
\end{equation}
for every $\rho$, and every derivative of the background-subtracted map vanishes.
\end{proposition}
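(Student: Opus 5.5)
The plan is to compute both entropies in Equation~\eqref{eq:section2-proto-area-entropy} directly from the factorization~\eqref{eq:section2-exact-recovery-factorization}, using only invariance of von Neumann entropy under isometries and additivity over tensor products.

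\textbf{Boundary term.} Write $W=R_A\otimes R_{\bar A}$. Since $R_A$ is a local isometry on $A$ (after adjoining a fixed pure ancilla if needed), applying $R_A$ and tracing over $\bar A$ gives $R_A\,\omega_A(\rho)\,R_A^\dagger$. To see this, note that $R_{\bar A}$ acts only on $\bar A$ and drops out of the partial trace by cyclicity, and a fixed pure ancilla adds no entropy. Hence
\[
S(\omega_A(\rho))=S\bigl(\tr_{\bar A_1\bar A_2}[W\omega(\rho)W^\dagger]\bigr).
\]
Substituting the factorization and tracing out $\bar A_1\bar A_2$ gives $\rho_{A_1}\otimes\chi_{A_2}$. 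Additivity of entropy then yields
\[
S(\omega_A(\rho))=S(\rho_{A_1})+S(\chi_{A_2}).
\]

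\textbf{Recovered-matter term.} By~\eqref{eq:section2-recovered-channel}, $\mathcal Q_{A,\mathbf R}(\rho)=\tr_{A_2\bar A_2}[\rho_{A_1\bar A_1}\otimes\chi_{A_2\bar A_2}]=\rho_{A_1\bar A_1}$, using $\tr\chi=1$. Then~\eqref{eq:section2-recovered-matter-state} gives $\tau_{A,\mathbf R}(\rho)=\rho_{A_1}$. Subtracting the two terms leaves
\[
S_{\mathrm{PA}}=S(\chi_{A_2}),
\]
which is independent of $\rho$. The map $\mathcal A_{\rho_0}$ is therefore identically zero on $\Sigma$, and all its derivatives vanish.

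\textbf{Main obstacle.} The delicate point is well-definedness when $\mathsf{Opt}_A(V)$ has other elements, since the proto-area is defined through the selected branch. The hypothesis says the given optimal pair satisfies the factorization, and I would take that pair as the calibrated branch. Alternatively, if recovery regularity is assumed, the singleton condition in~\eqref{eq:section2-set-valued-proto-area} makes the value branch-independent.

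Two further points need only brief checks. First, the identification $\rho_{A_1}=\rho_{a_A}$ is not actually needed for the cancellation; it only names the recovered state. Second, passing from local unitaries to isometries with ancillas preserves entropies, because conjugation by an isometry preserves the nonzero spectrum.
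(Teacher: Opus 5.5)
Your proposal is correct and follows essentially the same route as the paper's proof: take the $A$ marginal of the factorization, use isometric invariance of the nonzero spectrum together with additivity to get $S(\omega_A(\rho))=S(\rho_{A_1})+S(\chi_{A_2})$, identify the recovered matter state as $\rho_{A_1}$, and subtract to obtain a constant. Your choice to fix the given optimal pair as the calibrated branch, with recovery regularity as the alternative, makes explicit a point that the paper leaves implicit.
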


The proof, together with the corresponding fixed-support differentiation
statements, is given in Appendix~\ref{appA:proto-area-differentials}.

The proposition applies within a single sector. For an algebra with a center, Harlow's finite-dimensional algebraic RT decomposition~\cite{Harlow2017}, built on operator-algebra quantum error correction~\cite{sec6BenyKempfKribs2007}, gives the following central affine benchmark.

\begin{proposition}[Central area terms in exact operator-algebra codes]
\label{prop:section2-central-area-operator}
Let the exact reconstructible algebra for the wedge $a$ have the finite-dimensional form
\begin{equation}
  \mathcal M_a
  =\bigoplus_\alpha
  \left(\mathcal B(\mathcal H_{a_\alpha})\otimes I_{\bar a_\alpha}\right),
  \qquad
  \mathcal H_L
  =\bigoplus_\alpha
  \left(\mathcal H_{a_\alpha}\otimes\mathcal H_{\bar a_\alpha}\right),
  \label{eq:section2-oa-algebra-decomposition}
\end{equation}
with minimal central projectors $P_\alpha$. Suppose exact complementary recovery gives, for region $A$,
\begin{equation}
  U_A\omega_A(\rho)U_A^\dagger
  =\bigoplus_\alpha
   p_\alpha(\rho)\,
   \rho_{a_\alpha}(\rho)\otimes\chi_{A_2,\alpha},
  \qquad
  p_\alpha(\rho)=\tr(P_\alpha\rho),
  \label{eq:section2-central-block-state}
\end{equation}
where $\rho_{a_\alpha}$ is normalized when $p_\alpha>0$ and the states $\chi_{A_2,\alpha}$ are independent of $\rho$ within each central block. Let
\begin{equation}
  S(\rho;\mathcal M_a)
  =H(p)+\sum_\alpha p_\alpha S(\rho_{a_\alpha}),
  \qquad
  H(p)=-\sum_\alpha p_\alpha\log p_\alpha,
  \label{eq:section2-algebraic-bulk-entropy}
\end{equation}
be the entropy of the state restricted to $\mathcal M_a$. Then
\begin{equation}
  S(\omega_A(\rho))-S(\rho;\mathcal M_a)
  =\sum_\alpha p_\alpha(\rho)\ell_{A,\alpha},
  \qquad
  \ell_{A,\alpha}=S(\chi_{A_2,\alpha}).
  \label{eq:section2-central-area-affine}
\end{equation}
Equivalently, the area contribution is the expectation of the central operator
\begin{equation}
  L_A=\bigoplus_\alpha \ell_{A,\alpha}P_\alpha
  \in Z(\mathcal M_a),
  \qquad
  \tr(\rho L_A)=\sum_\alpha p_\alpha(\rho)\ell_{A,\alpha}.
  \label{eq:section2-central-area-observable}
\end{equation}
It is affine as a function of the code state and is constant on every submanifold with fixed central probabilities $p_\alpha$.
\end{proposition}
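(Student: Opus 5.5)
The argument is a direct entropy computation on the block-diagonal normal form, so the plan is to reduce everything to the standard entropy of a direct sum. Unitary invariance gives $S(\omega_A(\rho))=S(U_A\omega_A(\rho)U_A^\dagger)$. Equation~\eqref{eq:section2-central-block-state} displays this operator as an orthogonal direct sum over $\alpha$. For positive operators $\bigoplus_\alpha p_\alpha\sigma_\alpha$ with normalized $\sigma_\alpha$ and probabilities $p_\alpha$, the eigenvalues are the products $p_\alpha\lambda$ with $\lambda$ ranging over the spectrum of $\sigma_\alpha$. Hence
\begin{equation}
  S\Bigl(\bigoplus_\alpha p_\alpha\sigma_\alpha\Bigr)=H(p)+\sum_\alpha p_\alpha S(\sigma_\alpha).
\end{equation}
Blocks with $p_\alpha=0$ contribute nothing, by the convention $0\log 0=0$. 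This is exactly where the stipulation that $\rho_{a_\alpha}$ is normalized only when $p_\alpha>0$ is needed.

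Apply this with $\sigma_\alpha=\rho_{a_\alpha}(\rho)\otimes\chi_{A_2,\alpha}$. Additivity of von Neumann entropy on tensor products gives $S(\sigma_\alpha)=S(\rho_{a_\alpha})+S(\chi_{A_2,\alpha})$. Therefore
\begin{equation}
  S(\omega_A(\rho))=H(p)+\sum_\alpha p_\alpha S(\rho_{a_\alpha})+\sum_\alpha p_\alpha S(\chi_{A_2,\alpha}).
\end{equation}
The first two terms are $S(\rho;\mathcal M_a)$ by definition~\eqref{eq:section2-algebraic-bulk-entropy}. This yields~\eqref{eq:section2-central-area-affine} with $\ell_{A,\alpha}=S(\chi_{A_2,\alpha})$, which is independent of $\rho$ by hypothesis.

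For the operator statement, define $L_A=\sum_\alpha\ell_{A,\alpha}P_\alpha$. Each $P_\alpha$ is a minimal central projector of $\mathcal M_a$, so $L_A\in Z(\mathcal M_a)$. By linearity of the trace, $\tr(\rho L_A)=\sum_\alpha\ell_{A,\alpha}\tr(P_\alpha\rho)=\sum_\alpha p_\alpha(\rho)\ell_{A,\alpha}$. The map $\rho\mapsto\tr(\rho L_A)$ is linear, hence affine on the convex set of code states. It depends on $\rho$ only through the vector $p$, so it is constant on any submanifold along which every $p_\alpha$ is fixed.

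The only point requiring care is the bookkeeping in the direct-sum entropy formula: checking that the block decomposition in~\eqref{eq:section2-central-block-state} is orthogonal, and that vanishing blocks are handled consistently. Everything else is standard entropy additivity, so I do not expect a real obstacle.
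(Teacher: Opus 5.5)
Your proposal is correct and matches the paper's proof essentially step for step: the direct-sum entropy formula $H(p)+\sum_\alpha p_\alpha S(\sigma_\alpha)$, additivity on $\rho_{a_\alpha}\otimes\chi_{A_2,\alpha}$, subtraction of $S(\rho;\mathcal M_a)$, and centrality and affineness of $L_A=\sum_\alpha\ell_{A,\alpha}P_\alpha$. The unitary-invariance step also goes through if $U_A$ is only an isometry, since the nonzero spectrum is then still preserved.
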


\begin{proof}
The entropy of a block-diagonal density matrix is the Shannon entropy of the block probabilities plus the probability-weighted entropies of the normalized blocks. Applying this to Equation~\eqref{eq:section2-central-block-state} gives
\begin{equation}
  S(\omega_A(\rho))
  =H(p)+\sum_\alpha p_\alpha S(\rho_{a_\alpha})
   +\sum_\alpha p_\alpha S(\chi_{A_2,\alpha}).
\end{equation}
Subtracting Equation~\eqref{eq:section2-algebraic-bulk-entropy} gives Equation~\eqref{eq:section2-central-area-affine}. The operator in Equation~\eqref{eq:section2-central-area-observable} is central because it is scalar on each central block, and its expectation is the same affine function. The reconstructible wedge algebra is block diagonal with respect to the projectors $P_\alpha$, so its unitaries preserve the central probabilities. If two sectors have the same value of $\ell_{A,\alpha}$, commutation with $L_A$ alone is weaker than preservation of the individual central projectors; the central projectors are the invariant data used here.
\end{proof}

On the block-diagonal algebraic state space, this benchmark fits the
operational branch formalism by retaining the sector record and the
$a_\alpha$ subsystem in the recovered output.  Its recovered entropy is
$H(p)+\sum_\alpha p_\alpha S(\rho_{a_\alpha})$, so the operational
entropy difference equals
Equation~\eqref{eq:section2-central-area-affine}.  More generally, a
fixed center-preserving approximate branch sends block-diagonal states
to
\begin{equation}
 \tau_A(\rho)=\bigoplus_\alpha
 p_\alpha\tau_{A,\alpha}(\rho_{a_\alpha}),
 \label{eq:section2-approximate-algebraic-output}
\end{equation}
whose ordinary von Neumann entropy is the recovered algebraic entropy
$H(p)+\sum_\alpha p_\alpha S(\tau_{A,\alpha})$.  The entropy-jet
theorems therefore apply blockwise on every fixed central fiber.  This
is the convention used in
Section~\ref{subsec:fixed-center-erasure-code}.  This blockwise statement does
not extend by itself to coherent superpositions of distinct central blocks,
which require a separate correctable-algebra analysis.

Exact recovery therefore supplies the central affine benchmark for the finite constructions in Section~\ref{sec:skewed-codes-revised}. A nontrivial cross-region condition remains even when every $L_A$ exists. The regional operators must arise from one family of local edge operators, or in the continuum from one operator-valued metric perturbation. The controlled-sector model violates the finite-chamber condition. Corollary~\ref{cor:continuum-central-sewing} gives the coefficientwise continuum criterion, and a continuum limit of that code additionally requires the transfer hypotheses of Appendix~\ref{app:finite-geometry}. The separate fixed-fiber model has a genuine noncentral Hessian and therefore requires a controlled departure from exact complementary recovery. In continuum QFT, Ref.~\cite{Witten2026} raises the additional obstruction to defining a central codimension-two area operator without a cutoff.

\begin{proposition}[Area functions and area operators]
\label{prop:section2-area-function-operator}
Let $U$ be a convex open subset of the affine space of density matrices and let $F:U\to\mathbb R$ be $C^2$. There are a Hermitian operator $O$ and a constant $c$ such that
\begin{equation}
  F(\rho)=\tr(O\rho)+c
  \label{eq:section2-area-operator-representation}
\end{equation}
throughout $U$ if and only if $F$ is affine. A nonzero Hessian therefore excludes a state-independent operator representation.
\end{proposition}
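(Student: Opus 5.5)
The plan is to prove the two implications separately and then read off the final sentence as a corollary. Throughout, $U$ is viewed inside the affine hyperplane $\{\rho:\tr\rho=1\}$ of Hermitian operators. Its tangent space is the real vector space $\mathcal T_0$ of traceless Hermitian operators, equipped with the Hilbert--Schmidt pairing $\ip{X}{Y}=\tr(XY)$.

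For the forward direction, suppose $F(\rho)=\tr(O\rho)+c$ on $U$. Restricted to any line $\rho+tX$ with $X\in\mathcal T_0$, this is $\tr(O\rho)+c+t\,\tr(OX)$, which is affine in $t$. Hence $F$ is affine on $U$, and its second derivative $D^2F$ vanishes identically.

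For the converse, suppose $F$ is affine on the convex set $U$, i.e.\ $F((1-t)\rho+t\sigma)=(1-t)F(\rho)+tF(\sigma)$. Since $F$ is $C^2$, this is equivalent to $D^2F\equiv0$ on $U$, because $U$ is convex and hence connected. Fix $\rho_0\in U$. Then $DF_\rho=DF_{\rho_0}=:\ell$ is a constant linear functional on $\mathcal T_0$. Integrating along segments, which stay in $U$ by convexity, gives $F(\rho)=F(\rho_0)+\ell(\rho-\rho_0)$.

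The next step is to represent $\ell$ by an operator. By Riesz representation on the finite-dimensional real inner-product space $\mathcal T_0$, there is a unique traceless Hermitian $O_0$ with $\ell(X)=\tr(O_0X)$. Any $O=O_0+\lambda I$ represents $\ell$ equally well, since $\tr X=0$ on $\mathcal T_0$. Set $c=F(\rho_0)-\tr(O\rho_0)$. Then $F(\rho)=\tr(O\rho)+c$ on all of $U$, using $\rho-\rho_0\in\mathcal T_0$.

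The only care point is the ambiguity in $(O,c)$. The shift $O\mapsto O+\lambda I$, $c\mapsto c-\lambda$ leaves $F$ unchanged on trace-one states. So $O$ is determined only modulo the identity, and no canonical choice is needed.

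The final sentence follows by contraposition. If $D^2F\neq0$ at some point, $F$ is not affine, so no representation \eqref{eq:section2-area-operator-representation} exists. The phrase "state-independent operator" means exactly that $O$ and $c$ do not depend on $\rho$, which is what the equivalence addresses.

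I expect no real obstacle. The points needing care are that $U$ is open only in the affine hull of the trace-one hyperplane, so that derivatives are taken along $\mathcal T_0$, and that convexity (hence connectedness) is what lets the local condition $D^2F=0$ integrate to a global affine formula.
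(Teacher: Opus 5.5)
Your proposal is correct and follows the paper's proof essentially step for step: the forward direction is immediate, and for the converse both arguments take the constant differential of an affine $F$ on traceless Hermitian tangents, represent it through the Hilbert--Schmidt pairing by a traceless Hermitian $O_0$, and set $c=F(\rho_0)-\tr(O_0\rho_0)$. Your detour through $D^2F\equiv0$ and your remark on the $O\mapsto O+\lambda I$ ambiguity are harmless additions; you could instead read off the differential directly from affineness along segments, since $DF_{\rho_0}(\sigma-\rho_0)=F(\sigma)-F(\rho_0)$.
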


Proposition~\ref{prop:section2-area-function-operator} gives a local differential form of the area-function/area-operator distinction emphasized in Ref.~\cite{Witten2026}. A central area operator has an affine expectation value and is constant on a fixed central fiber. A nonzero fixed-fiber Hessian therefore detects a noncentral area function. This is logically independent of the operator-valued sewing test: a family of affine regional functions may still fail to arise from one local metric. No assumption on the Newton-constant scaling of a recovery error enters either geometrizability criterion.

On a disk, complementary boundary arcs have the same unordered endpoints and hence the same bulk geodesic. Introduce
\begin{equation}
  (\mathsf C f)(A)=f(\bar A),
  \qquad
  f_\pm=\frac12(f\pm\mathsf C f).
  \label{eq:section2-complement-decomposition}
\end{equation}

\begin{proposition}[Complement parity]
\label{prop:section2-complement-parity}
Every interval-length data set produced by a single-valued static disk geometry satisfies $f_-=0$. Hence
\begin{equation}
  \mathcal A_{\rho_0,-}(\rho)(A)
  =\frac12\left[
  \mathcal A_{\rho_0}(\rho)(A)
  -\mathcal A_{\rho_0}(\rho)(\bar A)
  \right]
  \label{eq:section2-complement-odd-obstruction}
\end{equation}
is an immediate nongeometric obstruction whenever it is nonzero.
\end{proposition}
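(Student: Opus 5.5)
The plan is to read the statement as a consequence of how a single-valued static disk geometry assigns lengths to boundary arcs. We then apply the parity decomposition \eqref{eq:section2-complement-decomposition} to the background-subtracted map.

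First, fix a static asymptotically hyperbolic disk metric $g$ in the class of Section~\ref{sec:proto-area}: simple, near $g_{\mathbb H}$, with a unique minimizing geodesic for each pair of boundary points and a fixed conformal boundary representative. A boundary arc $A$ is determined by an ordered pair of endpoints $(x_1,x_2)$, with $A$ the arc running counterclockwise from $x_1$ to $x_2$. The complement $\bar A$ is the arc from $x_2$ to $x_1$, so it has the same unordered endpoint set $\{x_1,x_2\}$. Single-valuedness means the geometric datum is a function of the unordered endpoint pair alone. For a time-reflection-symmetric slice with single-interval geodesics, the homology region changes from $A$ to $\bar A$, but the extremal curve is the unique geodesic $\gamma_{x_1x_2}$ in both cases. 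The renormalized length $\mathcal L^x_g$ is computed with the same cutoff prescription at the same two endpoints, so $\mathcal L^x_g(A)=\mathcal L^x_g(\bar A)$. The same holds for $g_{\mathbb H}$. Subtracting the two, every datum $f=\mathcal B([g])$ satisfies $\mathsf C f=f$, and hence $f_-=0$. The same symmetry passes to the first- and second-order coefficients $D\mathcal B[h]$ and $D^2\mathcal B[h,h]$, since they are derivatives of $\mathsf C$-invariant functions along a path, and $\mathsf C$ is linear and independent of $g$.

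Second, suppose $\mathcal A_{\rho_0}=\mathcal B\circ\mathcal G$ as in \eqref{eq:introduction-factorization}. Then $\mathcal A_{\rho_0}(\rho)$ lies in the $+1$ eigenspace of the involution $\mathsf C$. Since $f\mapsto f_-$ is the linear projector onto the $-1$ eigenspace, $\mathcal A_{\rho_0,-}(\rho)=0$ for every $\rho$. Contrapositively, any $\rho$ with $\mathcal A_{\rho_0,-}(\rho)\neq0$ excludes a common geometric source. Moreover $\mathcal A_{\rho_0,-}(\rho)$ can be paired with the test covector $y=\delta_A-\delta_{\bar A}$ to give an explicit witness. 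If $\mathcal I$ contains an arc $A$ but not $\bar A$, the parity test is simply vacuous for that pair; the statement should then be read only on pairs closed under $\mathsf C$.

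The only step needing care is the claim that the renormalization is genuinely endpoint-symmetric. The cutoff subtraction uses the fixed boundary representative at the endpoints, not the orientation or the chosen arc. I would check this directly from the definition of $\mathcal L^x_g$ in the analytic appendix, via the asymptotic expansion of the geodesic length near each endpoint. The main obstacle, if any, is excluding competing geodesic branches, where $A$ and $\bar A$ could select different minimizers. This is ruled out by the simplicity assumption on metrics near $g_{\mathbb H}$ stated in the scope.
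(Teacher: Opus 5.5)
Your proposal is correct and follows the paper's own argument. Complementary arcs $A$ and $\bar A$ share the same unordered endpoints and hence the same unoriented geodesic, whose renormalized length depends only on those endpoints and the fixed defining function, so $\mathsf C f=f$ and this survives background subtraction. Your additional remarks (transfer to the derivative coefficients, the explicit covector $\delta_A-\delta_{\bar A}$, and restriction to complement-closed families) match how the paper uses the result. The endpoint symmetry you planned to verify is already manifest, because $\mathcal L^x_g(u,v)$ is defined directly through the unoriented geodesic $\gamma^g_{u,v}$.
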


\begin{proof}
The geodesic anchored at the endpoints of $A$ is the same unoriented curve as the geodesic anchored at the endpoints of $\bar A$. Its renormalized length is independent of the chosen boundary arc, and the statement survives background subtraction.
\end{proof}

The recovery optimization need not enforce this parity. All tomographic tests below are therefore applied to the complement-even sector, while Equation~\eqref{eq:section2-complement-odd-obstruction} is retained as a separate obstruction.

\subsection{Differential structure of proto-area}
\label{subsec:proto-area-differential-structure}

On a fixed faithful support, define
\begin{equation}
  \mathcal T_\sigma(X)
  :=D\log_\sigma[X]
  =\int_0^\infty
  (\sigma+sI)^{-1}X(\sigma+sI)^{-1}\,ds
  \label{eq:section2-frechet-logarithm}
\end{equation}
and the Bogoliubov--Kubo--Mori form~\cite{Petz1996}
\begin{equation}
  \mathfrak g_\sigma(X,Y)
  =\tr\!\left[X\mathcal T_\sigma(Y)\right].
  \label{eq:section2-bkm-form}
\end{equation}
In an eigenbasis of $\sigma$, the coefficient multiplying $X_{ij}$ is the divided difference
\begin{equation}
  \ell(p_i,p_j)=
  \begin{cases}
    \dfrac{\log p_i-\log p_j}{p_i-p_j},&p_i\neq p_j,\\[2mm]
    \dfrac1{p_i},&p_i=p_j,
  \end{cases}
  \label{eq:section2-divided-logarithm}
\end{equation}
which is positive and symmetric. The resolvent proof, spectral bounds, and perturbation estimates are given in Appendix~\ref{appA:fixed-support-entropy}.

The fixed-support entropy Hessian is standard~\cite{Petz1996,LesniewskiRuskai1999,Higham2008}. In the BKM normalization of Equation~\eqref{eq:section2-bkm-form}, it takes the following form.

\begin{theorem}[Entropy two-jet]
\label{thm:section2-entropy-two-jet}
Let $t\mapsto\sigma(t)$ be a $C^2$ curve of faithful density matrices on a fixed support. At $t=0$,
\begin{align}
  \left.\frac{d}{dt}\right|_{0}S(\sigma(t))
  &=-\tr(\dot\sigma\log\sigma),
  \label{eq:section2-entropy-first-derivative}\\
  \left.\frac{d^2}{dt^2}\right|_{0}S(\sigma(t))
  &=-\tr(\ddot\sigma\log\sigma)
  -\mathfrak g_\sigma(\dot\sigma,\dot\sigma).
  \label{eq:section2-entropy-second-derivative}
\end{align}
For a $C^2$ multiparameter family,
\begin{equation}
  \partial_i\partial_jS(\sigma)
  =-\tr(\sigma_{ij}\log\sigma)
  -\mathfrak g_\sigma(\sigma_i,\sigma_j).
  \label{eq:section2-entropy-mixed-hessian}
\end{equation}
\end{theorem}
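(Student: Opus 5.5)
The approach is to write $S(\sigma)=-\tr(\sigma\log\sigma)$ on the fixed support and differentiate using the Fréchet derivative of the logarithm, $\mathcal T_\sigma=D\log_\sigma$ from Equation~\eqref{eq:section2-frechet-logarithm}. Because the support is fixed and $\sigma(t)$ stays faithful on it, we may work in the finite-dimensional space of Hermitian operators on that support. There $\log$ is real-analytic on the open cone of positive definite operators. Hence $t\mapsto\log\sigma(t)$ is $C^2$ whenever $\sigma$ is, and the chain rule gives $\frac{d}{dt}\log\sigma(t)=\mathcal T_{\sigma}(\dot\sigma)$. The resolvent representation follows from $\log\sigma=\int_0^\infty[(1+s)^{-1}I-(\sigma+sI)^{-1}]\,ds$ by differentiating under the integral, which is justified by the uniform bounds on $(\sigma+sI)^{-1}$ for $\sigma\geq\lambda_{\min}>0$.

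\emph{First derivative.} The product rule gives
$\frac{d}{dt}S=-\tr(\dot\sigma\log\sigma)-\tr(\sigma\,\mathcal T_\sigma(\dot\sigma))$. Since $\sigma$ commutes with $(\sigma+sI)^{-1}$, cyclicity of the trace gives
$\tr(\sigma\mathcal T_\sigma(X))=\tr\!\big(X\int_0^\infty\sigma(\sigma+sI)^{-2}ds\big)=\tr(X)$, because $\int_0^\infty p(p+s)^{-2}ds=1$. Trace preservation gives $\tr\dot\sigma=0$, so the second term vanishes and Equation~\eqref{eq:section2-entropy-first-derivative} follows. The same identity holds along the whole curve: $\frac{d}{dt}S(\sigma(t))=-\tr(\dot\sigma(t)\log\sigma(t))$ for every $t$.

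\emph{Second derivative.} Differentiate this identity once more, which is allowed since $\dot\sigma$ is $C^1$ and $\log\sigma$ is $C^1$. This gives $-\tr(\ddot\sigma\log\sigma)-\tr(\dot\sigma\,\mathcal T_\sigma(\dot\sigma))$, and the last term is exactly $\mathfrak g_\sigma(\dot\sigma,\dot\sigma)$ by Equation~\eqref{eq:section2-bkm-form}. For the multiparameter version, the same computation gives $\partial_jS=-\tr(\sigma_j\log\sigma)$. Applying $\partial_i$ then yields Equation~\eqref{eq:section2-entropy-mixed-hessian}. Symmetry in $i,j$ is consistent: $\tr(\sigma_{ij}\log\sigma)$ is symmetric by Schwarz for $C^2$ families, and $\mathfrak g_\sigma$ is symmetric. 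In an eigenbasis, $\mathfrak g_\sigma(X,Y)=\sum_{ij}\ell(p_i,p_j)X_{ji}Y_{ij}$ with $\ell$ symmetric, as in Equation~\eqref{eq:section2-divided-logarithm}.

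\emph{Main obstacle.} The only delicate point is regularity: establishing $D\log_\sigma=\mathcal T_\sigma$ and the $C^1$-ness of $\log\sigma(t)$, so that the second derivative exists with only $C^2$ hypotheses on $\sigma$. I would handle it with the resolvent integral and dominated convergence, using $\norm{(\sigma+sI)^{-1}}\leq(\lambda_{\min}+s)^{-1}$. A uniform $\lambda_{\min}>0$ holds on a compact $t$-neighbourhood because the support is constant and the spectrum is continuous. Alternatively, I could invoke the appendix perturbation estimates (Appendix~\ref{appA:fixed-support-entropy}). Support constancy is essential, since otherwise $\log\sigma$ is not differentiable at eigenvalues tending to zero.
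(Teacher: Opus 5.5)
Your proof is correct and follows essentially the same route as the paper: apply the product rule to $-\tr(\sigma\log\sigma)$, use the trace identity $\tr[\sigma\mathcal T_\sigma(X)]=\tr X$ together with $\tr\dot\sigma=0$ to kill the extra term, and differentiate $-\tr(\dot\sigma\log\sigma)$ once more to produce the BKM term. The differences are cosmetic. You derive the trace identity by cyclicity inside the resolvent integral, whereas the paper reads it off the eigenbasis divided differences in Lemma~\ref{lem:appA-resolvent-logarithm}. You also obtain the mixed Hessian by direct differentiation of $\partial_jS$ rather than by polarization.
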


\begin{theorem}[Proto-area two-jet]
\label{thm:section2-proto-area-two-jet}
Let $(V,\Sigma)$ be faithfully recovery-regular on $\calI$, or let a
fixed calibrated recovery be faithfully branch-regular there, and let
$\rho(t)$ be a $C^2$ state curve with $\rho(0)=\rho_0$. Set
\begin{equation}
  \omega_A(t)=\omega_A(\rho(t)),
  \qquad
  \tau_A(t)=\tau_A^{\mathrm{cal}}(\rho(t)).
  \label{eq:section2-output-state-curves}
\end{equation}
Then
\begin{equation}
  a_{\rho(t)}(A)-a_{\rho_0}(A)
  =t\,a_1(A)+\frac{t^2}{2}a_2(A)+o(t^2),
  \label{eq:section2-proto-area-jet-expansion}
\end{equation}
where
\begin{align}
  \frac{a_1(A)}{4G_{\mathrm{eff}}}
  ={}&-\tr(\dot\omega_A\log\omega_A)
  +\tr(\dot\tau_A\log\tau_A),
  \label{eq:section2-proto-area-first-jet}\\
  \frac{a_2(A)}{4G_{\mathrm{eff}}}
  ={}&-\tr(\ddot\omega_A\log\omega_A)
  -\mathfrak g_{\omega_A}(\dot\omega_A,\dot\omega_A)
  \notag\\
  &+\tr(\ddot\tau_A\log\tau_A)
  +\mathfrak g_{\tau_A}(\dot\tau_A,\dot\tau_A).
  \label{eq:section2-proto-area-second-jet}
\end{align}
For an affine path $\rho(t)=\rho_0+tX$,
\begin{equation}
  \frac{a_2(A)}{4G_{\mathrm{eff}}}
  =\mathfrak g_{\tau_A}
  \!\left(\mathcal M_A^{\mathrm{cal}} X,\mathcal M_A^{\mathrm{cal}} X\right)
  -\mathfrak g_{\omega_A}
  \!\left(\mathcal N_A X,\mathcal N_A X\right),
  \label{eq:section2-affine-proto-area-curvature}
\end{equation}
with
\begin{equation}
  \mathcal N_A(X)=\tr_{\bar A}(VXV^\dagger),
  \qquad
  \mathcal M_A^{\mathrm{cal}}(X)
  =\tr_{\bar A_1}\mathcal Q_{A,\mathbf R_A^{\mathrm{cal}}}(X).
  \label{eq:section2-linear-output-maps}
\end{equation}
\end{theorem}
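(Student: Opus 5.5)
The proof reduces to Theorem~\ref{thm:section2-entropy-two-jet}, applied separately to the two output curves $\omega_A(t)$ and $\tau_A(t)$. Linearity of the fixed output channels does the rest.

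First, both curves are $C^2$. The encoding $V$ and the calibrated recovery $\mathbf R_A^{\mathrm{cal}}$ are held fixed, independent of the state. Hence $\omega_A(t)=\mathcal N_A(\rho(t))$ and $\tau_A(t)=\mathcal M_A^{\mathrm{cal}}(\rho(t))$, where $\mathcal N_A$ and $\mathcal M_A^{\mathrm{cal}}$ are the linear maps of Equation~\eqref{eq:section2-linear-output-maps}. A $C^2$ state curve therefore gives $C^2$ output curves with
\begin{equation}
\dot\omega_A=\mathcal N_A\dot\rho,\qquad \ddot\omega_A=\mathcal N_A\ddot\rho,
\end{equation}
and the analogous relations for $\tau_A$.

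Second, both curves have constant faithful support. This is exactly what faithful recovery-regularity (Definition~\ref{def:section2-recovery-regular}) or faithful branch-regularity (Definition~\ref{def:section2-certified-recovery-branch}) provides along $\Sigma$. In the recovery-regular case, the singleton condition ensures that the entropy does not depend on which optimal recovery is chosen. We may therefore fix $\mathbf R_A^\star$ once and for all, and no optimizer-switching term appears.

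Restricting to that support, Theorem~\ref{thm:section2-entropy-two-jet} gives the first and second derivatives of $S(\omega_A(t))$ and of $S(\tau_A(t))$. We subtract them and multiply by $4G_{\mathrm{eff}}$. Taylor's theorem with Peano remainder for the $C^2$ scalar function $t\mapsto a_{\rho(t)}(A)$ then yields Equation~\eqref{eq:section2-proto-area-jet-expansion}, with $a_1$ and $a_2$ as in Equations~\eqref{eq:section2-proto-area-first-jet}--\eqref{eq:section2-proto-area-second-jet}.

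One technical point needs care: the traces $\tr(\dot\omega\log\omega)$ must be understood on the fixed support. Since $\omega_A(t)$ lives on a constant support projection $P$, the derivatives satisfy $\dot\omega=P\dot\omega P$. The logarithm taken on that support is therefore the correct one, and the resolvent formula~\eqref{eq:section2-frechet-logarithm} applies there.

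For the affine path $\rho(t)=\rho_0+tX$ we have $\ddot\rho=0$, so $\ddot\omega_A=\ddot\tau_A=0$ and the $\tr(\ddot\cdot\log\cdot)$ terms vanish. Substituting $\dot\omega_A=\mathcal N_AX$ and $\dot\tau_A=\mathcal M_A^{\mathrm{cal}}X$ gives Equation~\eqref{eq:section2-affine-proto-area-curvature} directly.

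The main obstacle is not the computation but justifying differentiability on the fixed support. We must check that $\log$ is smooth on the faithful cone of the support algebra, with spectra bounded away from zero along the curve. We must also check that constancy of the support is enough to make $S$ equal to the restricted entropy. Both follow from the regularity hypotheses and the appendix estimates cited before the theorem; I would verify them first and then treat everything else as bookkeeping.
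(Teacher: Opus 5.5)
Your proposal is correct and follows the paper's own proof. Both arguments hold the calibrated recovery fixed so that the output maps are linear, apply Theorem~\ref{thm:section2-entropy-two-jet} to $\omega_A(t)$ and $\tau_A(t)$ on their fixed faithful supports, subtract and multiply by $4G_{\mathrm{eff}}$, and use $\ddot\omega_A=\ddot\tau_A=0$ on affine paths; your remarks on the support projection and the Peano remainder match the paper's note that the $o(t^2)$ term follows from twice Fr\'echet differentiability on the open fixed-support cone.
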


The affine-path curvature is the difference between the BKM distinguishability retained by recovered matter and by the boundary marginal. In the product form used in Ref.~\cite{Witten2026}, this difference is the negative of a relative-entropy loss under the partial trace over the short-distance factor. For ball-shaped regions around the holographic vacuum, the Hessian of relative entropy was identified with bulk canonical energy in Ref.~\cite{LashkariVanRaamsdonk2016}. The condition derived here is different: it compares BKM information lost under regional recovery with the normal Hessian of the complete boundary-length map, without imposing the Einstein equation. At second order the recovery statement becomes the finite-dimensional BKM--Petz identity below.

Let $T=\tr_{A_2}$ and fix faithful states $\tau_0$ and $\chi$ on
$A_1$ and $A_2$.  Write
\begin{equation}
 \sigma_0=\tau_0\otimes\chi,
 \qquad
 R_\chi(Z)=Z\otimes\chi,
 \qquad
 P_\chi=R_\chi T,
 \qquad
 Q_\chi=I-P_\chi.
 \label{eq:section2-petz-projection-definitions}
\end{equation}
The map $R_\chi$ is the Petz recovery map for the pair
$(T,\sigma_0)$. Channel recovery in this form belongs to Petz's
sufficiency theory~\cite{Petz1986,Petz1988}, and $P_\chi$ retains the
tangent component sufficient for the partial trace. Preservation of
monotone metrics under sufficient channels is part of the same
framework~\cite{Petz1996,GaoLiMarvianRouze2023}, while universal recovery
channels provide a complementary route to entanglement-wedge
reconstruction~\cite{Cotler:2017erl}. At a product reference, the Petz map
gives the explicit orthogonal decomposition below.

\begin{theorem}[BKM--Petz orthogonal decomposition]
\label{thm:section2-bkm-petz-decomposition}
For Hermitian tangent operators $Y,Z$ at $\sigma_0$ and $X$ at
$\tau_0$,
\begin{align}
 \mathfrak g_{\sigma_0}(R_\chi X,Y)
 &=\mathfrak g_{\tau_0}(X,TY),
 \label{eq:section2-bkm-adjointness}\\
 \mathfrak g_{\sigma_0}(R_\chi X,R_\chi Z)
 &=\mathfrak g_{\tau_0}(X,Z).
 \label{eq:section2-bkm-isometry}
\end{align}
Consequently $P_\chi$ is the BKM-orthogonal projection onto
$\operatorname{Ran}R_\chi$, and the second-order data-processing defect
has the exact Pythagorean form
\begin{equation}
 \begin{split}
 \Delta_T^{(2)}(Y)
 &:={\mathfrak g}_{\sigma_0}(Y,Y)
   -{\mathfrak g}_{\tau_0}(TY,TY)\\
 &=\mathfrak g_{\sigma_0}(Q_\chi Y,Q_\chi Y).
 \end{split}
 \label{eq:section2-bkm-pythagorean-defect}
\end{equation}
More generally, polarization gives
\begin{equation}
 \mathfrak g_{\sigma_0}(Y,Z)
 -\mathfrak g_{\tau_0}(TY,TZ)
 =\mathfrak g_{\sigma_0}(Q_\chi Y,Q_\chi Z).
 \label{eq:section2-bkm-polarized-defect}
\end{equation}
In particular,
\begin{equation}
 \Delta_T^{(2)}(Y)=0
 \quad\Longleftrightarrow\quad
 Y=(TY)\otimes\chi.
 \label{eq:section2-infinitesimal-sufficiency}
\end{equation}
\end{theorem}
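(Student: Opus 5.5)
The argument reduces everything to one identity for the Fréchet logarithm at a product state. Because $\sigma_0=\tau_0\otimes\chi$ with both factors faithful, $\log\sigma_0=\log\tau_0\otimes I+I\otimes\log\chi$. I would compute $\mathcal T_{\sigma_0}(X\otimes\chi)$ by differentiating $\log$ along the curve $(\tau_0+\epsilon X)\otimes\chi$. This curve stays inside the product family, so
\begin{equation*}
 \mathcal T_{\sigma_0}(X\otimes\chi)=\mathcal T_{\tau_0}(X)\otimes I_{A_2}.
\end{equation*}
This follows directly from the additivity of the logarithm on tensor products of commuting factors: $\log(\tau\otimes\chi)=\log\tau\otimes I+I\otimes\log\chi$ holds for every faithful $\tau$. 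It can also be checked in a joint eigenbasis with the divided differences $\ell(p_ic_k,p_jc_k)=\ell(p_i,p_j)/c_k$ from Equation~\eqref{eq:section2-divided-logarithm}, which give the same result.

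With this identity the adjointness relation~\eqref{eq:section2-bkm-adjointness} follows from the symmetry of the BKM form:
\begin{equation*}
\mathfrak g_{\sigma_0}(R_\chi X,Y)=\mathfrak g_{\sigma_0}(Y,R_\chi X)=\tr[Y(\mathcal T_{\tau_0}(X)\otimes I)]=\tr[(TY)\mathcal T_{\tau_0}(X)]=\mathfrak g_{\tau_0}(X,TY).
\end{equation*}
The symmetry $\tr[X\mathcal T_\sigma Y]=\tr[Y\mathcal T_\sigma X]$ is immediate from the resolvent integral~\eqref{eq:section2-frechet-logarithm} and cyclicity. Taking $Y=R_\chi Z$ and using $TR_\chi=\Id$, since $\tr\chi=1$, gives the isometry~\eqref{eq:section2-bkm-isometry}.

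Next, $P_\chi=R_\chi T$ is idempotent because $TR_\chi=\Id$, and its range is $\Ran R_\chi$. It is BKM self-adjoint: applying adjointness twice,
\begin{equation*}
\mathfrak g_{\sigma_0}(P_\chi Y,Z)=\mathfrak g_{\tau_0}(TY,TZ)=\mathfrak g_{\sigma_0}(Y,P_\chi Z).
\end{equation*}
An idempotent that is self-adjoint for the positive-definite BKM inner product is the orthogonal projection onto its range, and $Q_\chi$ is the complementary orthogonal projection. The polarized defect~\eqref{eq:section2-bkm-polarized-defect} then follows by expanding
\begin{equation*}
\mathfrak g_{\sigma_0}(Y,Z)=\mathfrak g_{\sigma_0}(P_\chi Y,P_\chi Z)+\mathfrak g_{\sigma_0}(Q_\chi Y,Q_\chi Z),
\end{equation*}
where the cross terms vanish by orthogonality. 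The first term equals $\mathfrak g_{\tau_0}(TY,TZ)$ by the isometry. Setting $Z=Y$ gives~\eqref{eq:section2-bkm-pythagorean-defect}.

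Finally, the BKM form is positive definite because $\ell>0$ for faithful $\sigma_0$. Hence $\Delta^{(2)}_T(Y)=0$ iff $Q_\chi Y=0$, iff $Y=P_\chi Y=(TY)\otimes\chi$, which is~\eqref{eq:section2-infinitesimal-sufficiency}.

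The main obstacle is only the tensor-factorization identity for $\mathcal T_{\sigma_0}$ on product tangent vectors. Note that it holds only for tangent vectors of the form $X\otimes\chi$, not for a general $Y$. The adjointness argument is arranged so that $\mathcal T_{\sigma_0}$ is only ever applied to $R_\chi X$, so this restricted identity is all that is required.
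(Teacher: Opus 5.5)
Your proposal is correct and follows essentially the same route as the paper: you use the product-state identity $\mathcal T_{\sigma_0}(X\otimes\chi)=\mathcal T_{\tau_0}(X)\otimes I_{A_2}$ together with symmetry of the BKM form to get adjointness, $TR_\chi=\Id$ for the isometry, and orthogonality plus positive definiteness for the Pythagorean defect and the sufficiency criterion. The only cosmetic difference is that you prove orthogonality by showing the idempotent $P_\chi$ is BKM self-adjoint, whereas the paper checks $\mathfrak g_{\sigma_0}(P_\chi Y,Y-P_\chi Y)=0$ directly; the two arguments are equivalent.
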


\begin{proof}
For a differentiable curve $\tau(s)$ with $\dot\tau(0)=X$,
functional calculus gives
\begin{equation}
 \left.\frac{d}{ds}\right|_0
 \log(\tau(s)\otimes\chi)
 =D\log_{\tau_0}[X]\otimes I_{A_2}.
 \label{eq:section2-product-log-derivative}
\end{equation}
The BKM form is symmetric.  Hence
\begin{equation}
 \begin{split}
 \mathfrak g_{\sigma_0}(R_\chi X,Y)
 &=\tr\!\left[
 Y\bigl(D\log_{\tau_0}[X]\otimes I_{A_2}\bigr)
 \right]\\
 &=\tr\!\left[(TY)D\log_{\tau_0}[X]\right]
 =\mathfrak g_{\tau_0}(X,TY),
 \end{split}
\end{equation}
which proves Equation~\eqref{eq:section2-bkm-adjointness}.
Putting $Y=R_\chi Z$ and using $TR_\chi=I$ proves
Equation~\eqref{eq:section2-bkm-isometry}.  Moreover,
\begin{equation}
 \mathfrak g_{\sigma_0}
 \bigl(P_\chi Y,Y-P_\chi Y\bigr)
 =\mathfrak g_{\tau_0}
 \bigl(TY,T(Y-P_\chi Y)\bigr)=0.
\end{equation}
Thus $P_\chi$ is an orthogonal projection.  Pythagoras and the
isometry of $R_\chi$ give Equation~\eqref{eq:section2-bkm-pythagorean-defect}.
Applying the same orthogonal decomposition to two tangents gives
Equation~\eqref{eq:section2-bkm-polarized-defect}.
Positive definiteness of the BKM form gives
Equation~\eqref{eq:section2-infinitesimal-sufficiency}.
\end{proof}

For an affine decoded-state path $\sigma(t)=\sigma_0+tY$, the entropy
two-jet gives
\begin{equation}
 \left.\frac{d^2}{dt^2}\right|_0
 \bigl[S(\sigma(t))-S(T\sigma(t))\bigr]
 =-\norm{Q_\chi Y}_{\mathrm{BKM},\sigma_0}^{2}.
 \label{eq:section2-pa-hessian-petz-norm}
\end{equation}
When $\chi=I_{A_2}/d_2$, the finite-displacement form of Theorem 4.1 in
Ref.~\cite{CaoEtAl2026}, as recast in Witten's analysis~\cite{Witten2026},
takes the form
\begin{equation}
 \begin{split}
 &S(\sigma(t))-S(T\sigma(t))-\log d_2\\
 &\qquad=-\left[
 D(\sigma(t)\Vert\sigma_0)
 -D(T\sigma(t)\Vert\tau_0)
 \right].
 \end{split}
 \label{eq:section2-witten-relative-entropy-identity}
\end{equation}
For a non-maximally mixed $\chi$, Equation~\eqref{eq:section2-pa-hessian-petz-norm}
still holds, whereas the finite identity acquires the corresponding
$\chi$-modular-energy term.

\begin{theorem}[Sharp tangent-observability bound and its obstruction]
\label{thm:section2-sharp-sufficiency-angle}
Let $E\neq\{0\}$ be a finite-dimensional linear space of Hermitian trace-zero tangents at
$\sigma_0$ and define its minimum retained BKM fraction by
\begin{equation}
 \alpha_E=\inf_{\substack{0\neq Y\in E}}
 \frac{\mathfrak g_{\tau_0}(TY,TY)}
      {\mathfrak g_{\sigma_0}(Y,Y)}
 =\inf_{\substack{0\neq Y\in E}}
 \frac{\norm{P_\chi Y}_{\mathrm{BKM},\sigma_0}^{2}}
      {\norm{Y}_{\mathrm{BKM},\sigma_0}^{2}}.
 \label{eq:section2-sufficiency-angle}
\end{equation}
Equivalently, $\alpha_E=\cos^2\theta_E$, where $\theta_E$ is the
largest BKM principal angle from $E$ to
$\operatorname{Ran}R_\chi$.
The optimal constant in
\begin{equation}
 \Delta_T^{(2)}(Y)
 \leq C_E\,\mathfrak g_{\tau_0}(TY,TY),
 \qquad Y\in E,
 \label{eq:section2-angle-stability-bound}
\end{equation}
is
\begin{equation}
 C_E=
 \begin{cases}
 \dfrac{1-\alpha_E}{\alpha_E},&\alpha_E>0,\\[2mm]
 +\infty,&\alpha_E=0.
 \end{cases}
 \label{eq:section2-optimal-angle-constant}
\end{equation}
This constant is finite precisely when
$E\cap\ker T=\{0\}$.  When finite, it can equivalently be written
$C_E=\tan^2\theta_E$.  Without a
restriction on the allowed tangent space there is no finite
dimension-independent, or even fixed-dimension, constant in
Equation~\eqref{eq:section2-angle-stability-bound} whenever
$\dim A_2>1$.
\end{theorem}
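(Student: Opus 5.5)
The plan is to reduce everything to the orthogonal decomposition of Theorem~\ref{thm:section2-bkm-petz-decomposition}. By Equation~\eqref{eq:section2-bkm-isometry}, $\mathfrak g_{\tau_0}(TY,TY)=\mathfrak g_{\sigma_0}(R_\chi TY,R_\chi TY)=\norm{P_\chi Y}^2_{\mathrm{BKM},\sigma_0}$, which gives the second expression for $\alpha_E$ in Equation~\eqref{eq:section2-sufficiency-angle}. Since $P_\chi$ is the BKM-orthogonal projection onto $\operatorname{Ran}R_\chi$, the ratio $\norm{P_\chi Y}^2/\norm{Y}^2$ equals $\cos^2$ of the angle between $Y$ and $\operatorname{Ran}R_\chi$. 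Its infimum over the unit sphere of $E$ is a minimum, because $E$ is finite dimensional and the sphere is compact. That minimum is the squared cosine of the largest principal angle, so $\alpha_E=\cos^2\theta_E$.

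Next I would use Equation~\eqref{eq:section2-bkm-pythagorean-defect} to write $\Delta_T^{(2)}(Y)=\norm{Y}^2-\norm{P_\chi Y}^2$. For $Y$ with $TY\neq0$, the ratio $\Delta_T^{(2)}(Y)/\mathfrak g_{\tau_0}(TY,TY)$ equals $(1-r(Y))/r(Y)$, where $r(Y)$ is the retained fraction. The function $r\mapsto(1-r)/r$ is strictly decreasing on $(0,1]$. Hence the supremum of the ratio over $E$ is $(1-\alpha_E)/\alpha_E$ when $\alpha_E>0$, and it is attained at a minimizer of $r$. That supremum is the optimal $C_E$, and it equals $\tan^2\theta_E$.

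For the case $\alpha_E=0$, compactness gives a unit $Y\in E$ with $P_\chi Y=0$. Then $TY=R_\chi^{-1}$-data vanish: since $R_\chi$ is injective, $P_\chi Y=R_\chi TY=0$ forces $TY=0$. By Equation~\eqref{eq:section2-infinitesimal-sufficiency} and definiteness of the BKM form, $Y\neq0$ gives $\Delta_T^{(2)}(Y)=\norm{Y}^2>0$, while the right-hand side of Equation~\eqref{eq:section2-angle-stability-bound} is zero. So no finite constant exists. Conversely, $\alpha_E=0$ if and only if some nonzero $Y\in E$ has $TY=0$, which shows that finiteness is equivalent to $E\cap\ker T=\{0\}$.

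For the unrestricted statement, take $\dim A_2>1$ and choose a nonzero Hermitian traceless $W$ on $A_2$, for instance a difference of two diagonal matrix units. Then $Y=\tau_0\otimes W$ is Hermitian and trace zero, with $TY=\tau_0\tr W=0$. Any $E$ containing $Y$, even one of dimension one, has $C_E=\infty$. Consequently no constant valid for all tangents, or for all tangent spaces of a fixed dimension, can exist. The main point needing care is this last claim: for a fixed dimension I would simply take $E$ to be any space of that dimension containing $Y$.
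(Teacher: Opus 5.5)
Your proposal is correct and follows essentially the paper's own argument: the Pythagorean identity of Theorem~\ref{thm:section2-bkm-petz-decomposition} turns the ratio into $1/r(Y)-1$, compactness of the unit sphere of $E$ gives attainment and the characterization through $E\cap\ker T$, and a product tangent with a traceless $A_2$ factor rules out an unrestricted constant. Your $Y=\tau_0\otimes W$ is the special case $Z=\tau_0$ of the paper's $Y=Z\otimes B$, and you also spell out the isometry step $\mathfrak g_{\tau_0}(TY,TY)=\norm{P_\chi Y}^2_{\mathrm{BKM},\sigma_0}$ and the $\cos^2\theta_E$ identification (with $\theta_E$ read as the largest angle from a vector of $E$ to $\operatorname{Ran}R_\chi$), which the paper leaves implicit.
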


\begin{proof}
Theorem~\ref{thm:section2-bkm-petz-decomposition} gives
\begin{equation}
 \frac{\Delta_T^{(2)}(Y)}
      {\mathfrak g_{\tau_0}(TY,TY)}
 =\frac{\mathfrak g_{\sigma_0}(Y,Y)}
       {\mathfrak g_{\tau_0}(TY,TY)}-1
\end{equation}
whenever $TY\neq0$.  Taking the supremum over $E$ proves
Equation~\eqref{eq:section2-optimal-angle-constant}.  In finite
dimension the infimum in Equation~\eqref{eq:section2-sufficiency-angle}
is attained on the BKM unit sphere, so it vanishes exactly when a
nonzero vector of $E$ lies in $\ker T$.  Finally, choose nonzero Hermitian operators $Z$ and $B$ with
$\tr B=0$ and put $Y=Z\otimes B$.  Then $TY=0$ while
$\Delta_T^{(2)}(Y)=\mathfrak g_{\sigma_0}(Y,Y)>0$.  Since $\sigma_0$
is faithful, $\sigma_0+tY$ remains a state for sufficiently small
$|t|$ after imposing $\tr Y=0$.  No finite unrestricted constant can
therefore exist.
\end{proof}

The theorem is a quantitative tangent-recovery statement.  On $E$,
the Petz lift reconstructs the sufficient component and
\begin{equation}
 \norm{Y-P_\chi Y}_{\mathrm{BKM},\sigma_0}
 \leq\sqrt{C_E}\,
 \norm{P_\chi Y}_{\mathrm{BKM},\sigma_0}.
 \label{eq:section2-tangent-petz-stability}
\end{equation}
It does not identify reduced distinguishability with a channel
reconstruction error.  Bounds for an operational recovery error
require a specified encoding channel, recovery map, and spectral
control; preservation of regular quantum Fisher metrics and its
relation to Petz recoverability have been studied in
Ref.~\cite{GaoLiMarvianRouze2023}.

\begin{corollary}[Fixed-central-fiber decomposition]
\label{cor:section2-fixed-center-bkm}
Let
\begin{equation}
 \sigma_0=\bigoplus_\alpha p_\alpha
 (\tau_{0,\alpha}\otimes\chi_\alpha)
 \label{eq:section2-central-product-reference}
\end{equation}
be faithful on its block support.  For fixed-central-probability
tangents $Y=\bigoplus_\alpha p_\alpha Y_\alpha$, define $T$, $R$, $P$
and $Q$ blockwise.  Then
\begin{equation}
 \Delta_T^{(2)}(Y)
 =\sum_\alpha p_\alpha
 \norm{Q_\alpha Y_\alpha}_{\mathrm{BKM},
 \tau_{0,\alpha}\otimes\chi_\alpha}^{2},
 \label{eq:section2-central-bkm-sum}
\end{equation}
and Theorem~\ref{thm:section2-sharp-sufficiency-angle} applies with the
weighted direct-sum norm.  Thus central affine response and
within-block sufficiency defect are orthogonal mechanisms already at
second order.
\end{corollary}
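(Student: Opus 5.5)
The plan is to reduce the corollary to Theorem~\ref{thm:section2-bkm-petz-decomposition} one block at a time, using the fact that the BKM form is additive over an orthogonal direct sum of blocks. The argument proceeds in three steps.

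\emph{Step 1: block additivity of the BKM form.} Since $\sigma_0$ is block diagonal with faithful blocks $p_\alpha(\tau_{0,\alpha}\otimes\chi_\alpha)$, the resolvent $(\sigma_0+sI)^{-1}$ in Equation~\eqref{eq:section2-frechet-logarithm} is block diagonal. Hence $\mathcal T_{\sigma_0}$ maps block-diagonal tangents to block-diagonal operators, acting blockwise. This gives
\begin{equation*}
 \mathfrak g_{\sigma_0}\Bigl(\bigoplus_\alpha X_\alpha,\bigoplus_\alpha Y_\alpha\Bigr)
 =\sum_\alpha \mathfrak g_{p_\alpha\sigma_{0,\alpha}}(X_\alpha,Y_\alpha),
 \qquad \sigma_{0,\alpha}=\tau_{0,\alpha}\otimes\chi_\alpha.
\end{equation*}
The same identity holds for $T\sigma_0=\bigoplus_\alpha p_\alpha\tau_{0,\alpha}$, because $T$ acts blockwise.

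\emph{Step 2: the scaling rule.} From the divided difference in Equation~\eqref{eq:section2-divided-logarithm}, $\ell(cp_i,cp_j)=c^{-1}\ell(p_i,p_j)$. Therefore $\mathfrak g_{c\sigma}(X,X)=c^{-1}\mathfrak g_\sigma(X,X)$. For the tangent $Y=\bigoplus_\alpha p_\alpha Y_\alpha$ this yields
\begin{equation*}
 \mathfrak g_{\sigma_0}(Y,Y)=\sum_\alpha p_\alpha\,\mathfrak g_{\sigma_{0,\alpha}}(Y_\alpha,Y_\alpha),
 \qquad
 \mathfrak g_{T\sigma_0}(TY,TY)=\sum_\alpha p_\alpha\,\mathfrak g_{\tau_{0,\alpha}}(T_\alpha Y_\alpha,T_\alpha Y_\alpha).
\end{equation*}
This is the weighted direct-sum norm referred to in the statement.

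\emph{Step 3: apply the product-reference theorem in each block.} Subtracting the two expressions block by block and applying Equation~\eqref{eq:section2-bkm-pythagorean-defect} at the product reference $\tau_{0,\alpha}\otimes\chi_\alpha$ gives Equation~\eqref{eq:section2-central-bkm-sum}. The fixed-central-probability condition enters here. It forces $\tr Y_\alpha=0$, so each $Y_\alpha$ is a legitimate tangent at a normalized block state. Strictly, Theorem~\ref{thm:section2-bkm-petz-decomposition} holds for arbitrary Hermitian tangents, so this condition mainly keeps the path $\sigma_0+tY$ inside the fixed fiber.

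For the second claim, I would define $R=\bigoplus_\alpha R_{\chi_\alpha}$ and $P=RT$. By Step 1, $R$ is an isometry and $P$ is the orthogonal projection for the weighted direct-sum form. The proof of Theorem~\ref{thm:section2-sharp-sufficiency-angle} used only this Hilbert-space structure together with $TR=I$. It therefore transfers verbatim once $E$ is taken inside the space of fixed-fiber tangents. The obstruction example $Z\otimes B$ can be placed in a single block.

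The closing remark about central affine response follows from Proposition~\ref{prop:section2-central-area-operator}. The central term $\sum_\alpha p_\alpha\ell_{A,\alpha}$ is affine, so it contributes nothing to the Hessian. On a fixed fiber it does not even have a first variation. The entire second-order response is therefore the within-block defect.

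The main obstacle is the bookkeeping of the factor $p_\alpha$. One has to check carefully that the $p_\alpha$ scaling of the tangent and the $p_\alpha^{-1}$ scaling of the metric combine to leave exactly one factor of $p_\alpha$. One also has to check that $T\sigma_0$ carries the same weights, so that the blockwise $R_{\chi_\alpha}$ really is the Petz map for $(T,\sigma_0)$. Everything else is a direct citation of the earlier results.
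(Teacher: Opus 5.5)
Your proposal is correct and follows the paper's proof. The paper likewise notes that $D\log_{p_\alpha\sigma_\alpha}[p_\alpha Y_\alpha]=D\log_{\sigma_\alpha}[Y_\alpha]$, which is equivalent to your scaling $\ell(cp_i,cp_j)=c^{-1}\ell(p_i,p_j)$, concludes that the BKM form is the $p_\alpha$-weighted sum of block forms, and applies Theorem~\ref{thm:section2-bkm-petz-decomposition} in each block; your remarks on transferring the sharp-angle theorem and on the affine central term spell out points the paper leaves implicit without changing the argument.
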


\begin{proof}
For a block $p_\alpha\sigma_\alpha$ and tangent
$p_\alpha Y_\alpha$,
$D\log_{p_\alpha\sigma_\alpha}[p_\alpha Y_\alpha]
=D\log_{\sigma_\alpha}[Y_\alpha]$.  The BKM form is therefore the
$p_\alpha$-weighted sum of the block forms.  Apply
Theorem~\ref{thm:section2-bkm-petz-decomposition} in each block.
\end{proof}

Witten's random encoding-skew calculation probes a particular
encoding parameter at fixed bulk input.  In the stated GUE and
large-dimension regime it gives a typical retained-to-total relative
entropy ratio of order $d_2^{-2}$~\cite{Witten2026}.  In the language
of Equation~\eqref{eq:section2-sufficiency-angle}, this is a typical
Rayleigh quotient of the sampled random tangent, not the infimum over
all logical-state directions.  It explains a typical ratio of order
$d_2^2$ between defect and retained distinguishability in that model,
but does not by itself furnish a model-independent bound in terms of
an optimized reconstruction error.

A nonzero affine-path curvature quantifies the failure of a state-independent area-operator representation for that region. Locality requires additional relations among the curvatures and first derivatives for the full family of regions.

For a torsion-free connection on a multidimensional state manifold, normal coordinates at $\rho_0$ give
\begin{equation}
\begin{split}
  \frac{(\nabla d\mathcal A_{\rho_0})(X,Y)(A)}{4G_{\mathrm{eff}}}
  ={}&-\tr\!\left(\omega_{A;XY}\log\omega_A\right)
  -\mathfrak g_{\omega_A}(\omega_{A;X},\omega_{A;Y})\\
  &+\tr\!\left(\tau_{A;XY}\log\tau_A\right)
  +\mathfrak g_{\tau_A}(\tau_{A;X},\tau_{A;Y}).
  \label{eq:section2-proto-area-mixed-hessian}
\end{split}
\end{equation}
Equation~\eqref{eq:section2-proto-area-mixed-hessian} supplies the quantum-information input to the state-space obstruction tensor in Section~\ref{subsec:state-space-factorization}.

Constant-rank curves can be reduced smoothly to fixed support. Rank-changing curves instead develop terms $t^m\log t$; in particular, a newly populated eigenvalue of order $t^2$ destroys the ordinary second jet. Precise statements appear in Appendix~\ref{appA:rank-changing}.

For later robustness estimates, define
\begin{align}
  F_d(\varepsilon)
  &=\begin{cases}
  0,&d=1,\\
  \varepsilon\log(d-1)+h_2(\varepsilon),
  &d\geq2,\ 0\leq\varepsilon\leq1-d^{-1},\\
  \log d,&d\geq2,\ 1-d^{-1}<\varepsilon\leq1,
  \end{cases}
  \label{eq:section2-audenaert-function}\\
  h_2(\varepsilon)
  &=-\varepsilon\log\varepsilon
    -(1-\varepsilon)\log(1-\varepsilon),
  \qquad 0\log0:=0.
  \label{eq:section2-binary-entropy}
\end{align}

\begin{corollary}[Finite-dimensional continuity]
\label{cor:section2-proto-area-continuity}
If $\rho,\sigma$ lie in a recovery-regular sector and
$\varepsilon=\tfrac12\norm{\rho-\sigma}_1$, then
\begin{equation}
  \abs{a_\rho(A)-a_\sigma(A)}
  \leq4G_{\mathrm{eff}}
  \left[F_{d_A}(\varepsilon)+F_{d_{a_A}}(\varepsilon)\right].
  \label{eq:section2-proto-area-continuity-bound}
\end{equation}
\end{corollary}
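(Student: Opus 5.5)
The bound follows from applying the Fannes--Audenaert continuity inequality to each of the two entropies in the proto-area separately, and then pushing the trace distance of the logical inputs through the two output channels. For $\rho,\sigma$ in the sector, write
\begin{equation*}
a_\rho(A)-a_\sigma(A)=4G_{\mathrm{eff}}\Bigl\{\bigl[S(\omega_A(\rho))-S(\omega_A(\sigma))\bigr]-\bigl[S(\tau_A^{\mathrm{cal}}(\rho))-S(\tau_A^{\mathrm{cal}}(\sigma))\bigr]\Bigr\}.
\end{equation*}
Recovery regularity, or the fixed calibrated branch, lets us use one state-independent recovery $\mathbf R_A^{\mathrm{cal}}$ for both states, so no optimizer switching occurs between $\rho$ and $\sigma$. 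The triangle inequality then reduces the claim to bounding each bracket.

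The first step is to note that the maps $\rho\mapsto\omega_A(\rho)=\mathcal N_A(\rho)$ and $\rho\mapsto\tau_A^{\mathrm{cal}}(\rho)=\mathcal M_A^{\mathrm{cal}}(\rho)$ are CPTP. The first is conjugation by the isometry $V$ followed by a partial trace. The second is conjugation by the local isometries $R_A\otimes R_{\bar A}$ followed by partial traces over $A_2\bar A_2$ and over $\bar A_1$. Trace distance is monotone under CPTP maps, so
\begin{equation*}
\tfrac12\|\omega_A(\rho)-\omega_A(\sigma)\|_1\le\varepsilon,\qquad \tfrac12\|\tau_A^{\mathrm{cal}}(\rho)-\tau_A^{\mathrm{cal}}(\sigma)\|_1\le\varepsilon .
\end{equation*}

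The second step is to apply Audenaert's sharp Fannes inequality, $|S(\alpha)-S(\beta)|\le F_d(\tfrac12\|\alpha-\beta\|_1)$ for states on a $d$-dimensional space. For the boundary marginal we take $d=d_A$. For the recovered marginal, which lives on $A_1\simeq\mathcal H_{a_A}$, we take $d=d_{a_A}$. The third step uses the fact that $F_d$ as defined in \eqref{eq:section2-audenaert-function} is nondecreasing in $\varepsilon$ on $[0,1]$. It increases on $[0,1-d^{-1}]$, where it reaches $\log d$ at the right endpoint, and it is constant $\log d$ beyond that point, so the function is monotone and continuous. Monotonicity lets us replace the output trace distances by $\varepsilon$, and adding the two bounds gives \eqref{eq:section2-proto-area-continuity-bound}.

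The only point needing care is the monotonicity of $F_d$, because Audenaert's inequality is usually stated with the exact trace distance rather than an upper bound on it. The plateau extension in the definition of $F_d$ is designed to make $F_d$ monotone, and $d=1$ is trivial because the entropy then vanishes. Checking this monotonicity is routine, so I expect no real obstacle. One should also confirm that the output dimension of $\tau_A^{\mathrm{cal}}$ is exactly $d_{a_A}$ through the identification $\mathcal H_{A_1}\simeq\mathcal H_{a_A}$. Support regularity is not needed for this step, since Audenaert's bound holds for arbitrary states.
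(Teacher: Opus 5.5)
Your proposal is correct and follows the paper's proof. Both arguments use contractivity of trace distance under the two CPTP output maps $\mathcal N_A$ and $\mathcal M_A^{\mathrm{cal}}$, apply Audenaert's bound to each output pair with dimensions $d_A$ and $d_{a_A}$, and finish with the triangle inequality. Your explicit check that $F_d$ is nondecreasing (increasing up to $\varepsilon=1-d^{-1}$, where it equals $\log d$, then constant) makes precise a step the paper's proof uses only implicitly.
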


Appendix~\ref{appA:recovery-errors} gives the channel-error estimate used in the robustness analysis of Section~\ref{subsec:sec6-robustness}.

\subsection{Sampled data and continuum limits}
\label{subsec:proto-area-continuum-limit}

A finite boundary code produces a sampled vector, not a smooth metric. Let the $N$th code have cyclic boundary sites $B_N$, a complement-closed family $\calI_N$ of connected intervals, and reference state $\rho_{N,0}$. Define
\begin{equation}
  \bfq_N(\rho_N)
  =\left(
  a_{N,\rho_N}(A)-a_{N,\rho_{N,0}}(A)
  \right)_{A\in\calI_N}.
  \label{eq:section2-sampled-proto-area-vector}
\end{equation}
At fixed $N$, geometrizability is the finite polyhedral problem solved in Section~\ref{sec:discrete}.

Let
\begin{equation}
  (\mathsf C_Nq)(A)=q(\bar A),
  \qquad
  P_{N,\pm}=\frac12(I\pm\mathsf C_N).
  \label{eq:section2-discrete-complement-projections}
\end{equation}
The even data are interpolated into the real Hilbert space
$\mathcal Y^\tau_{\mathbb R}$ of Section~\ref{sec:linearized-geometry}; the odd data are retained in
\begin{equation}
  L^2_-(\mathcal G_{\mathbb H})
  =\ker(\operatorname{Id}+S_A^*)
  \subset L^2(\mathcal G_{\mathbb H},d\beta\,db).
  \label{eq:section2-odd-geodesic-data-space}
\end{equation}
Choose bounded maps
\begin{equation}
  J_N^+:P_{N,+}\mathbb R^{\calI_N}
  \longrightarrow\mathcal Y^\tau_{\mathbb R},
  \qquad
  J_N^-:P_{N,-}\mathbb R^{\calI_N}
  \longrightarrow L^2_-(\mathcal G_{\mathbb H}).
  \label{eq:section2-data-interpolation-map}
\end{equation}
For a parameter family $\rho_N(\lambda)$, set
\begin{align}
  \alpha_N^+(\lambda)
  &=J_N^+P_{N,+}\bfq_N(\rho_N(\lambda)),
  \label{eq:section2-interpolated-even-data}\\
  \alpha_N^-(\lambda)
  &=J_N^-P_{N,-}\bfq_N(\rho_N(\lambda)).
  \label{eq:section2-interpolated-odd-data}
\end{align}

\begin{definition}[Continuum proto-area family]
\label{def:section2-continuum-proto-area}
The code family has a $C^k$ continuum proto-area on a compact parameter set $K_0$ if there are maps
\begin{equation}
  \alpha^+\in C^k(K_0;\mathcal Y^\tau_{\mathbb R}),
  \qquad
  \alpha^-\in C^k(K_0;L^2_-(\mathcal G_{\mathbb H}))
  \label{eq:section2-continuum-even-odd-data}
\end{equation}
for which
\begin{equation}
  \norm{\alpha_N^+-\alpha^+}_{C^k(K_0;\mathcal Y^\tau)}
  +\norm{\alpha_N^--\alpha^-}_{C^k(K_0;L^2_-)}
  \longrightarrow0.
  \label{eq:section2-Ck-continuum-convergence}
\end{equation}
A disk-geometric continuum family must satisfy $\alpha^-=0$; then $\alpha:=\alpha^+$ is its continuum proto-area map.
\end{definition}

Data convergence and convergence of the interior realization operators are distinct requirements. The density and consistency hypotheses that connect finite cut maps to the tensor X-ray transform are proved in Theorem~\ref{thm:finite-code-continuum-limit} and Appendix~\ref{appB:continuum-approximation}.

\begin{proposition}[Compatibility of jets with the continuum limit]
\label{prop:section2-jets-continuum-limit}
If Equation~\eqref{eq:section2-Ck-continuum-convergence} holds with $k\geq2$, then for $j=0,1,2$ the even and odd derivatives converge in their respective data spaces at every interior parameter value. A nonzero odd limit obstructs disk geometrizability before the X-ray range test.
\end{proposition}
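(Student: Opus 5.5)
The statement follows directly from the definition of the $C^k$ norm on maps from a compact parameter set into a Banach space. I will take
\[
\norm{f}_{C^k(K_0;\mathcal Y)}=\sum_{j\le k}\sup_{\lambda\in K_0}\norm{D^jf(\lambda)}_{\mathcal Y},
\]
with $D^j$ the $j$th Fr\'echet derivative in the parameter, viewed as a bounded multilinear map into the data space. Derivatives are taken at interior points of $K_0$, and continuously extended to the boundary if $K_0$ is not the closure of its interior. This is why the statement is restricted to interior parameter values.

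\textbf{Step 1.} Since $\mathcal Y^\tau_{\mathbb R}$ and $L^2_-(\mathcal G_{\mathbb H})$ are Banach spaces, the convergence in Equation~\eqref{eq:section2-Ck-continuum-convergence} with $k\ge2$ gives, for $j=0,1,2$, the estimate
\[
\sup_{\lambda}\norm{D^j\alpha_N^\pm(\lambda)-D^j\alpha^\pm(\lambda)}\to0 .
\]
In particular the $j$th derivatives converge pointwise, in operator norm on multilinear maps, at every interior $\lambda$. It is worth recording why the $\alpha_N^\pm$ are $C^k$ at all. By Equations~\eqref{eq:section2-interpolated-even-data}--\eqref{eq:section2-interpolated-odd-data}, $\alpha_N^\pm$ is the composition of the bounded linear maps $J_N^\pm P_{N,\pm}$ with $\lambda\mapsto\bfq_N(\rho_N(\lambda))$. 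Linearity therefore gives
\[
D^j\alpha_N^\pm=J_N^\pm P_{N,\pm}D^j\bfq_N .
\]
Hence the discrete derivative jets, as computed in Theorem~\ref{thm:section2-proto-area-two-jet} region by region, are exactly what converges. The even and odd pieces converge separately because the projections $P_{N,\pm}$ commute with differentiation.

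\textbf{Step 2 (odd obstruction).} Suppose $\alpha^-$ (or one of its first two derivatives) is nonzero at some interior $\lambda$. Disk geometrizability of the continuum family requires $\alpha^-=0$ identically, by Definition~\ref{def:section2-continuum-proto-area}, which rests on Proposition~\ref{prop:section2-complement-parity}. A nonzero derivative $D^j\alpha^-(\lambda)$ forces $\alpha^-\not\equiv0$ near $\lambda$, so geometrizability fails. This test uses only the complement-odd projection, which is orthogonal to the even space where the X-ray range test of Section~\ref{sec:linearized-geometry} acts; so it precedes that test logically.

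\textbf{Main obstacle.} The only delicate point is bookkeeping: the convergence must hold in the full $C^k$ norm, not merely pointwise in $\lambda$. Pointwise convergence of values does not control derivatives, and this is exactly where the hypothesis $k\ge2$ enters.
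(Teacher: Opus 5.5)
Your proposal is correct and follows the same route as the paper. The paper reads off the convergence of the $j\le 2$ derivatives directly from the $C^k$ norm, and it obtains the odd obstruction from the complement-parity proposition, which is equivalent to the requirement $\alpha^-=0$ in the definition. Your extra bookkeeping (linearity of $J_N^\pm P_{N,\pm}$ commuting with differentiation, and the remark that a nonzero odd derivative forces $\alpha^-\not\equiv 0$) is harmless but not needed for the argument.
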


\begin{proof}
The derivative convergence is part of the defining $C^k$ norm. The obstruction is Proposition~\ref{prop:section2-complement-parity}.
\end{proof}

\subsection{The geometrizability hierarchy}
\label{subsec:geometrizability-hierarchy}

The geometric target is a neighborhood of the hyperbolic disk in
\begin{equation}
  \mathfrak M_{\mathrm{AH}}
  =\frac{\left\{
  \text{simple asymptotically hyperbolic metrics with fixed conformal infinity}
  \right\}}
  {\operatorname{Diff}_0(M;\partial M)}.
  \label{eq:section2-AH-metric-moduli}
\end{equation}
The boundary-length map is
\begin{equation}
\begin{split}
  \mathcal B:\mathfrak M_{\mathrm{AH}}
  &\longrightarrow\mathcal Y^\tau_{\mathbb R},\\
  \mathcal B([g])
  &=\mathcal L_g^x-\mathcal L_{g_{\mathbb H}}^x.
  \label{eq:section2-boundary-length-map}
\end{split}
\end{equation}
Geometrizability asks whether
\begin{equation}
  \mathcal A_{\rho_0}=\mathcal B\circ\mathcal G
  \label{eq:section2-central-factorization}
\end{equation}
holds for a metric-valued map
$\mathcal G:\Sigma\to\mathfrak M_{\mathrm{AH}}$ with
$\mathcal G(\rho_0)=[g_{\mathbb H}]$.

\begin{definition}[Levels of geometrizability]
\label{def:section2-geometrizability-levels}
A $C^2$ continuum proto-area map is locally geometrizable near $\rho_0$ if Equation~\eqref{eq:section2-central-factorization} holds on a neighborhood. It is first-order geometrizable if a linear map
\begin{equation*}
  K:T_{\rho_0}\Sigma
  \longrightarrow T_{[g_{\mathbb H}]}\mathfrak M_{\mathrm{AH}}
\end{equation*}
satisfies
\begin{equation}
  d\mathcal A_{\rho_0}
  =D\mathcal B_{g_{\mathbb H}}\circ K.
  \label{eq:section2-first-order-factorization}
\end{equation}
It is coefficientwise second-order geometrizable if there is a metric-valued coefficient two-jet $j^2_{\rho_0}\mathcal G$ such that
\begin{equation}
  j^2_{\rho_0}\mathcal A_{\rho_0}
  =j^2_{\rho_0}(\mathcal B\circ\mathcal G).
  \label{eq:section2-second-order-factorization}
\end{equation}
A strong $\mathcal Y^\tau$-valued two-jet additionally requires the
pathwise differentiability condition in
Theorem~\ref{thm:two-jet-geometrizability}.
A sequence of finite code families is asymptotically geometrizable to order $k$ if its interpolated data converge in $C^k$ to a locally geometrizable map and its finite realization operators satisfy the consistency hypotheses of Theorem~\ref{thm:finite-code-continuum-limit}.

A geometrizable map is dynamically admissible relative to a specified bulk matter model if the realizing metrics can be completed to initial data satisfying the gravitational constraints. On a time-reflection-symmetric two-dimensional slice this includes
\begin{equation}
  R(g)=2\Lambda+16\pi G\,T_{nn}.
  \label{eq:section2-Hamiltonian-constraint}
\end{equation}
\end{definition}

\begin{proposition}[Implication hierarchy]
\label{prop:section2-geometrizability-implications}
Exact local geometrizability implies second-order geometrizability, which implies first-order geometrizability. Dynamical admissibility implies exact local geometrizability after forgetting the matter data. Asymptotic geometrizability to order $k\geq2$ implies convergence of the finite first- and second-order obstruction data to their continuum counterparts.
\end{proposition}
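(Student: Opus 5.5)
We prove the four implications in turn. Each follows from the definitions together with the chain rule, which is why no quantitative estimate is needed.

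\emph{Exact to coefficientwise second order.} Suppose $\mathcal A_{\rho_0}=\mathcal B\circ\mathcal G$ on a neighborhood, with $\mathcal G(\rho_0)=[g_{\mathbb H}]$. We take $j^2_{\rho_0}\mathcal G$ to be the coefficient two-jet of $\mathcal G$ itself. After choosing a gauge-fixed local slice of $\mathfrak M_{\mathrm{AH}}$, we read off $\mathcal G$ in that chart along $C^2$ curves through $\rho_0$. Two-jets of equal maps agree, so Equation~\eqref{eq:section2-second-order-factorization} holds.

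The only subtlety is that $\mathcal G$ must be regular enough for its two-jet to exist coefficientwise. We would therefore read exact local geometrizability as including a $C^2$ realizing map, which is the class in which Definition~\ref{def:section2-geometrizability-levels} is posed. Under that reading, the composite two-jet is given by the chain-rule (Fa\`a di Bruno) formula, with $D\mathcal B_{g_{\mathbb H}}$ and $D^2\mathcal B_{g_{\mathbb H}}$ supplied by the regularity of $\mathcal B$ near the hyperbolic metric.

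\emph{Second order to first order.} We truncate the two-jet identity to its linear part. The first-order coefficient of $j^2(\mathcal B\circ\mathcal G)$ is $D\mathcal B_{g_{\mathbb H}}\circ d\mathcal G_{\rho_0}$, so we set $K=d\mathcal G_{\rho_0}$, taken as the first coefficient of the metric two-jet. Since $D\mathcal B_{g_{\mathbb H}}$ kills the boundary-fixing gauge directions, $K$ is well defined into $T_{[g_{\mathbb H}]}\mathfrak M_{\mathrm{AH}}$.

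\emph{Dynamical admissibility to exact geometrizability.} By definition, a dynamically admissible map comes with realizing metrics $g_\rho$ that satisfy the factorization, completed by matter data subject to Equation~\eqref{eq:section2-Hamiltonian-constraint}. Forgetting the matter data leaves $\mathcal G(\rho)=[g_\rho]$, and this is an exact local factorization.

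\emph{Asymptotic geometrizability and convergence of obstruction data.} This is the only step with real content, and we expect it to be the main obstacle. By assumption the interpolated data converge in $C^k$, $k\geq2$, to a locally geometrizable $\alpha$. Proposition~\ref{prop:section2-jets-continuum-limit} then gives convergence of the zeroth, first and second derivatives in $\mathcal Y^\tau$ and $L^2_-$. The finite first-order obstruction is the projection of the first derivative onto the left-null space of the chamber cut map, and the finite second-order obstruction is the corrected quadratic witness.

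We would invoke the consistency hypotheses of Theorem~\ref{thm:finite-code-continuum-limit}, which give:
\begin{itemize}
\item convergence of the finite nongeometric projectors to $\Pi_{\mathrm{ng}}$;
\item convergence of the minimum-norm reconstructions to $\mathscr R_2$;
\item convergence of the discrete second-derivative corrections to $D^2\mathcal B_{g_{\mathbb H}}$.
\end{itemize}
Each obstruction is a continuous expression in these ingredients. A product of a uniformly bounded, strongly convergent operator sequence with a norm-convergent data sequence converges, so the obstruction data converge.

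Because the limit is geometrizable, the continuum first- and second-order obstructions vanish. It follows that the finite obstruction data tend to zero, and the odd part also tends to zero since $\alpha^-=0$.

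The delicate point is the uniform boundedness of the finite projectors and reconstruction operators. We would take this from the stated consistency hypotheses rather than prove it here, deferring to Appendix~\ref{appB:continuum-approximation}.
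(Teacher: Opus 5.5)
Your first three implications are fine and match the paper's argument: take successive jets of the factorization, and for dynamical admissibility forget the matter data. The gap is in the fourth step, specifically your third bullet. Theorem~\ref{thm:finite-code-continuum-limit} is a purely first-order statement. Its hypotheses are source density and the linear operator-consistency estimate, and its conclusions concern $\Pi_N$, distances, residuals, and $B_N^\dagger$. Nothing in it controls discrete second derivatives. Asymptotic geometrizability, as defined in Definition~\ref{def:section2-geometrizability-levels}, includes only those first-order hypotheses.

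For the natural finite objects, the convergence you claim is in fact false. By Corollary~\ref{cor:fixed-chamber-no-jacobi-limit}, a fixed-chamber discrete map is linear, so its Hessian vanishes and cannot converge to a nonzero $D^2\mathcal B_{g_{\mathbb H}}$. A corrected witness built with a discrete Hessian therefore converges only under the extra second-derivative consistency hypothesis of Theorem~\ref{thm:second-order-discrete-continuum}, which you are not given.

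The paper closes this step with Proposition~\ref{prop:quadratic-obstruction-robustness}. There the second-order obstruction is corrected with the \emph{continuum} Hessian, evaluated on the reconstructed first coefficient. The argument has three ingredients:
\begin{itemize}
\item the second-derivative data converge by Proposition~\ref{prop:section2-jets-continuum-limit};
\item the reconstructions $E_NB_N^\dagger J_Nq_N$ converge to $\mathscr R_2a_1$ by part~(iii) of Theorem~\ref{thm:finite-code-continuum-limit};
\item the Lipschitz estimate, with $C_P=1$ and the bilinear bound of Equation~\eqref{eq:second-form-bilinear-bound}, turns these into convergence of $\mathfrak O_2$.
\end{itemize}

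Your ``uniformly bounded, strongly convergent operator sequence'' argument covers only the linear pieces. The quadratic term $D^2\mathcal B[h_{1,N},h_{1,N}]$ needs that bilinear bound, and the discussion around Equation~\eqref{eq:appC-hessian-bound} stresses that it is not automatic in the $x^{1/2}L^2$ source norm.

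By contrast, the uniform boundedness you propose to defer is already available. Orthogonal projectors have norm one, and part~(i) of Theorem~\ref{thm:finite-code-continuum-limit} gives $\norm{B_N^\dagger}\leq 2/c$.
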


\begin{proof}
Take successive jets of Equation~\eqref{eq:section2-central-factorization}. A dynamically admissible family already contains realizing metrics. The final statement follows from Proposition~\ref{prop:section2-jets-continuum-limit}, Theorem~\ref{thm:finite-code-continuum-limit}, and Proposition~\ref{prop:quadratic-obstruction-robustness}.
\end{proof}

At fixed resolution, failure of the common-source condition is an exact
feasibility problem for the cut-incidence map, to which we now turn.

\section{Discrete geometry and exact criteria for finite codes}
\label{sec:discrete}

A finite holographic code supplies only finitely many entropy or proto-area values, one for each boundary region under consideration. This finiteness does not make the geometric consistency problem vacuous. With the normalization introduced in Section~\ref{sec:proto-area}, a state change $\rho_0\to\rho$ produces the sampled variation
\begin{equation}
  q_\rho=\bigl(a_\rho(A)-a_{\rho_0}(A)\bigr)_{A\in\calI}.
  \label{eq:sampled-proto-area-variation}
\end{equation}
Once the graph topology is held fixed, all components of this vector must be generated by the same collection of local edge weights. The resulting compatibility conditions are nonlocal in the boundary labels even though their unknowns are local in the graph. They are the finite-dimensional precursor of the range conditions for the tensor X-ray transform developed in Section~\ref{sec:linearized-geometry}. The four-terminal star in Section~\ref{subsec:star-example} also supports the Hamiltonian-skewed examples of Section~\ref{sec:skewed-codes-revised}, allowing one witness to be traced from a finite cut relation to the nonlinear code constructions.

Graph models for static holographic entropy encode boundary entropies as minimum-cut weights and connect tensor networks, holographic quantum error correction, random tensor networks, and the holographic entropy cone~\cite{RyuTakayanagi2006,Headrick2014,PastawskiEtAl2015,BaoEtAl2015,HaydenEtAl2016,Harlow2017}. Their continuum max-flow/min-cut dual is the bit-thread formulation~\cite{FreedmanHeadrick2017,HeadrickHubeny2018}. The dual witnesses used below are covectors in sampled interval-data space, not bulk flows. They test whether a prescribed state-dependent variation belongs to the image of the cut-incidence map for a fixed graph and chamber. The entropy-cone problem instead varies the graph and asks which entropy vectors can occur. A vector may satisfy every universal holographic entropy inequality and still fail this fixed-background realization test.

The fixed-topology data locus is a finite union of polyhedral cones. At a regular background, where every relevant minimum cut is unique, the local inverse problem is linear and its obstructions are the left-null vectors of a cut-incidence matrix. Finite perturbations within a prescribed chamber obey Farkas inequalities that enforce the data while preventing a cut transition. These two tests give the discrete common-source criterion needed below.

\subsection{Fixed graph geometry and minimum-cut chambers}
\label{subsec:fixed-graph}

Let $G=(V,E)$ be a finite connected undirected graph. A distinguished set $\partial V\subset V$ consists of boundary vertices, including a purifier whenever one is required. Edge weights are collected in a vector
\begin{equation}
  w=(w_e)_{e\in E}\in \Rgeq^{E}.
  \label{eq:edge-weights}
\end{equation}
For a nonempty proper boundary region $\varnothing\neq A\subsetneq \partial V$, an $A$-cut is a vertex set $W\subset V$ satisfying $W\cap\partial V=A$. Its cut set is
\begin{equation}
  \delta W=\bigl\{\{x,y\}\in E:x\in W,\ y\notin W\bigr\}.
\end{equation}
We identify the cut with its incidence vector $c\in\{0,1\}^{E}$, where $c_e=1$ if and only if $e\in\delta W$. Repeated incidence vectors are discarded, and the resulting finite set is denoted by $\cA_A$. For a fixed nonempty finite family $\calI$ of boundary regions, the discrete area map is
\begin{equation}
  a_A^G(w)=\min_{c\in\cA_A} c^{\mathsf T}w,
  \qquad
  \bfa_G(w)=\bigl(a_A^G(w)\bigr)_{A\in\calI}\in\Rgeq^{\calI}.
  \label{eq:discrete-area-map}
\end{equation}
The normalization that converts cut weight into entropy or proto-area can be absorbed into $w$, which is shared by all regions. In an exact tensor-network model, $w_e$ is proportional to the logarithm of the bond dimension. For a skewed code, a small state-dependent change in $w_e$ represents the component of proto-area response assigned to a local geometric degree of freedom.

A \emph{cut pattern} is a choice
\begin{equation}
  \sigma=(c_A^\sigma)_{A\in\calI}\in\prod_{A\in\calI}\cA_A.
\end{equation}
Its cut-incidence matrix $M_\sigma\in\{0,1\}^{\calI\times E}$ has rows
\begin{equation}
  (M_\sigma)_{A,e}=(c_A^\sigma)_e.
  \label{eq:cut-matrix}
\end{equation}
The closed chamber in which the selected cuts are all minimizing is
\begin{equation}
  \cK_\sigma=
  \left\{
  w\in\Rgeq^{E}:
  (c_A^\sigma-c)^{\mathsf T}w\leq0
  \text{ for every }A\in\calI\text{ and }c\in\cA_A
  \right\}.
  \label{eq:closed-chamber}
\end{equation}
Its regular part is
\begin{equation}
  \Omega_\sigma=
  \left\{
  w\in\Rpos^{E}:
  (c-c_A^\sigma)^{\mathsf T}w>0
  \text{ for every }A\in\calI\text{ and }c\in\cA_A\setminus\{c_A^\sigma\}
  \right\}.
  \label{eq:open-chamber}
\end{equation}
Thus $\Omega_\sigma$ consists of positive weights for which every selected cut is the unique minimizer. The regular set $\Omega_\sigma$ may be empty. The closed cone $\cK_\sigma$ always contains the origin and may be lower-dimensional.

\begin{theorem}[Fixed-topology geometric locus]
\label{thm:fixed-topology-locus}
The map $\bfa_G:\Rgeq^{E}\to\Rgeq^{\calI}$ is continuous, coordinatewise nondecreasing, positively homogeneous, concave, and piecewise linear. More specifically,
\begin{equation}
  \bfa_G(w)=M_\sigma w
  \qquad\text{for every }w\in\cK_\sigma.
  \label{eq:area-on-chamber}
\end{equation}
The exact data locus of the fixed graph is
\begin{equation}
  \mathfrak G_G
  :=\bfa_G(\Rgeq^{E})
  =\bigcup_{\sigma}M_\sigma\cK_\sigma.
  \label{eq:fixed-topology-locus}
\end{equation}
It is a finite union of closed polyhedral cones. Moreover,
\begin{equation}
  \overline{\bfa_G(\Rpos^{E})}=\mathfrak G_G.
  \label{eq:positive-density}
\end{equation}
\end{theorem}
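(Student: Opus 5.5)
The regional minimum $a_A^G(w)=\min_{c\in\cA_A}c^{\mathsf T}w$ is a minimum of finitely many linear functionals with nonnegative coefficients. Almost all of the theorem follows from that observation. I would take the properties in the order they are stated.

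\emph{Elementary properties.} Each map $w\mapsto c^{\mathsf T}w$ is linear, hence continuous. Since $c\in\{0,1\}^E$, it is nondecreasing in every coordinate. A finite minimum of such maps inherits continuity and monotonicity. It is positively homogeneous because $\min_c c^{\mathsf T}(\lambda w)=\lambda\min_c c^{\mathsf T}w$ for $\lambda\ge0$. It is concave because a pointwise minimum of affine functions is concave. Piecewise linearity will follow once \eqref{eq:area-on-chamber} and the covering property below are in hand.

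\emph{Formula on a chamber.} Fix $w\in\cK_\sigma$. The defining inequalities $(c_A^\sigma-c)^{\mathsf T}w\le0$ for all $c\in\cA_A$ say exactly that $c_A^\sigma$ attains the minimum, so $a_A^G(w)=(c_A^\sigma)^{\mathsf T}w=(M_\sigma w)_A$ for every $A\in\calI$.

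\emph{Covering.} For every $w\in\Rgeq^E$ there is a minimizer $c_A(w)$ for each $A$, because each $\cA_A$ is finite and nonempty. Nonemptiness holds since $W=A$, or $W=A\cup(V\setminus\partial V)$, is an $A$-cut. Choosing one minimizer per region defines a pattern $\sigma$ with $w\in\cK_\sigma$. Hence $\Rgeq^E=\bigcup_\sigma\cK_\sigma$, a finite union. This proves piecewise linearity, and together with \eqref{eq:area-on-chamber} it gives $\bfa_G(\Rgeq^E)=\bigcup_\sigma M_\sigma\cK_\sigma$.

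\emph{Polyhedrality.} Each $\cK_\sigma$ is defined by finitely many homogeneous linear inequalities, together with $w\ge0$, so it is a closed polyhedral cone. By Minkowski–Weyl it is finitely generated. Its linear image $M_\sigma\cK_\sigma$ is then generated by the images of those generators, so it is again a finitely generated, hence closed polyhedral, cone. I will cite Minkowski–Weyl rather than argue closedness of images directly, since linear images of closed sets need not be closed in general.

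\emph{Density, Equation~\eqref{eq:positive-density}.} For the inclusion $\supseteq$: the set $\bfa_G(\Rpos^E)$ lies in $\mathfrak G_G$, which is closed as a finite union of closed cones, so its closure does too. For $\subseteq$: given $w\in\Rgeq^E$, set $w_\epsilon=w+\epsilon\one$. Then $w_\epsilon\in\Rpos^E$ and, by continuity, $\bfa_G(w_\epsilon)\to\bfa_G(w)$ as $\epsilon\to0$.

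None of these steps is hard. The only places needing care are the closedness of $M_\sigma\cK_\sigma$, which is why I route it through finite generation, and the nonemptiness of $\cA_A$, which makes the minimum well defined.
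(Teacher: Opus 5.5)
Your proof is correct and follows the paper's argument step for step: you use a minimum of finitely many nonnegative linear functionals for the elementary properties, choose minimizers to cover $\Rgeq^{E}$ by the chambers $\cK_\sigma$, take polyhedral images $M_\sigma\cK_\sigma$, and combine the perturbation $w+\epsilon\one$ with closedness of $\mathfrak G_G$ for the density statement; your extra justifications (nonemptiness of $\cA_A$, and closedness of $M_\sigma\cK_\sigma$ via Minkowski--Weyl) fill in points the paper only asserts. One cosmetic slip: in the density paragraph your labels $\supseteq$ and $\subseteq$ are swapped relative to Equation~\eqref{eq:positive-density} as written, but both inclusions are in fact proved.
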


\begin{proof}
For fixed $A$, Equation~\eqref{eq:discrete-area-map} is the minimum of finitely many linear functionals. It is therefore continuous, positively homogeneous, concave, and piecewise linear. Since every cut-incidence vector has nonnegative entries, it is also coordinatewise nondecreasing. These properties hold componentwise for $\bfa_G$.

Every $w\in\Rgeq^{E}$ admits at least one minimizing cut for each $A$. Choosing one produces a pattern $\sigma$ with $w\in\cK_\sigma$. Conversely, if $w\in\cK_\sigma$, then $c_A^\sigma$ is minimizing for every region, and hence
\begin{equation}
  a_A^G(w)=(c_A^\sigma)^{\mathsf T}w.
\end{equation}
This proves Equation~\eqref{eq:area-on-chamber} and the union in Equation~\eqref{eq:fixed-topology-locus}. Each $\cK_\sigma$ is the intersection of finitely many homogeneous closed half-spaces, so it is a closed polyhedral cone. A linear image of a polyhedral cone is again a closed polyhedral cone in finite dimensions. There are finitely many patterns, which proves the asserted structure of $\mathfrak G_G$.

Finally, for $w\in\Rgeq^{E}$ and $\varepsilon>0$, the vector $w+\varepsilon\one$ has strictly positive entries and converges to $w$. Continuity gives $\bfa_G(w+\varepsilon\one)\to\bfa_G(w)$, so $\mathfrak G_G$ is contained in the closure on the left-hand side of Equation~\eqref{eq:positive-density}. The reverse inclusion follows from the closedness of $\mathfrak G_G$.
\end{proof}

For a fixed background $w_0$, the exact set of finite variations obtainable on the same graph, while allowing arbitrary cut transitions, is therefore
\begin{equation}
  \mathfrak V_G(w_0)
  :=\{\bfa_G(w)-\bfa_G(w_0):w\in\Rgeq^E\}
  =\bigcup_{\sigma}
  \bigl(M_\sigma\cK_\sigma-\bfa_G(w_0)\bigr).
  \label{eq:global-variation-locus}
\end{equation}
Thus $q_\rho$ is realizable on the fixed topology if and only if $\bfa_G(w_0)+q_\rho\in\mathfrak G_G$. This global criterion is disjunctive because the endpoint may lie in any cut chamber.

Theorem~\ref{thm:fixed-topology-locus} isolates the only source of nonlinearity in a fixed graph. Within a chamber, all region data depend linearly on the local weights. Nonlinearity occurs when a competing cut becomes equally light and the system crosses a chamber wall, the discrete analogue of an extremal-surface phase transition.

The distance to the nearest wall controls the validity of a linearized geometric interpretation. Let $\norm{\cdot}_X$ be any norm on $\mathbb R^E$, and let $\norm{\cdot}_{X,*}$ denote its dual norm. Write $\bfe_e$ for the coordinate covector associated with edge $e$.

\begin{proposition}[Exact chamber radius]
\label{prop:chamber-radius}
Suppose $w_0\in\Omega_\sigma$. Define
\begin{equation}
\begin{split}
  \rho_X(w_0;\sigma)=\min\Bigg\{
  &\min_{e\in E}\frac{(w_0)_e}{\norm{\bfe_e}_{X,*}},\\
  &\min_{\substack{A\in\calI\\c\in\cA_A\setminus\{c_A^\sigma\}}}
  \frac{(c-c_A^\sigma)^{\mathsf T}w_0}
       {\norm{c-c_A^\sigma}_{X,*}}
  \Bigg\}.
  \label{eq:chamber-radius}
\end{split}
\end{equation}
where a minimum over an empty competitor set is understood as $+\infty$. Then
\begin{equation}
  \rho_X(w_0;\sigma)=\Dist_X\bigl(w_0,\mathbb R^E\setminus\Omega_\sigma\bigr).
  \label{eq:radius-distance}
\end{equation}
In particular, $w_0+h\in\Omega_\sigma$ whenever $\norm{h}_X<\rho_X(w_0;\sigma)$, and no larger open $X$-ball centered at $w_0$ is contained in $\Omega_\sigma$.
\end{proposition}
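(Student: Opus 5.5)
Write $\Omega_\sigma$ as a finite intersection of open half-spaces. Set $\ell_e(w)=\bfe_e^{\mathsf T}w$ for $e\in E$, and $\ell_{A,c}(w)=(c-c_A^\sigma)^{\mathsf T}w$ for $A\in\calI$ and $c\in\cA_A\setminus\{c_A^\sigma\}$. Then
\begin{equation*}
  \Omega_\sigma=\bigcap_{e}\{\ell_e>0\}\cap\bigcap_{A,c}\{\ell_{A,c}>0\}.
\end{equation*}
Each covector is nonzero: $\bfe_e\neq0$ trivially, and $c-c_A^\sigma\neq0$ because repeated incidence vectors were discarded. The proof rests on one elementary fact from normed-space duality. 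For a nonzero covector $\ell$ and a point $w_0$ with $\ell(w_0)>0$,
\begin{equation*}
  \Dist_X\bigl(w_0,\{\ell\le0\}\bigr)=\frac{\ell(w_0)}{\norm{\ell}_{X,*}}.
\end{equation*}
For the lower bound, any $w$ with $\ell(w)\le0$ satisfies $\ell(w_0)\le\ell(w_0-w)\le\norm{\ell}_{X,*}\norm{w_0-w}_X$. For the upper bound, finite dimensionality gives a unit vector $u$ attaining $\ell(u)=\norm{\ell}_{X,*}$, and the point $w_0-\bigl(\ell(w_0)/\norm{\ell}_{X,*}\bigr)u$ lies on $\{\ell=0\}$ at exactly that distance.

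The complement is
\begin{equation*}
  \mathbb R^E\setminus\Omega_\sigma=\bigcup_{\ell}\{\ell\le0\},
\end{equation*}
the union over the finite family listed above. The distance to a finite union is the minimum of the distances to its pieces. Applying the fact to each piece gives exactly the two inner minima in Equation~\eqref{eq:chamber-radius}. If the competitor set is empty, the second minimum is $+\infty$ and drops out consistently, and the edge constraints always contribute finite terms. This proves Equation~\eqref{eq:radius-distance}.

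Both consequences follow from this identity. If $\norm{h}_X<\rho_X$, then $w_0+h$ is strictly closer to $w_0$ than the complement, so it lies in $\Omega_\sigma$; equivalently, each linear inequality is preserved, since $\ell(w_0+h)\ge\ell(w_0)-\norm{\ell}_{X,*}\norm{h}_X>0$. For sharpness, take a minimizing $\ell$ in the finite family together with the point $w_0-\rho_X u$ constructed above. This point lies at distance exactly $\rho_X$ and satisfies $\ell=0$, so it is outside $\Omega_\sigma$. Hence any open ball of radius $r>\rho_X$ contains a point outside $\Omega_\sigma$.

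The only step that needs care is attainment of the dual norm, which is where finite dimensionality, i.e.\ compactness of the unit sphere, is used. The identification of $\Omega_\sigma$ with the intersection of strict half-spaces also needs a check, including the strict positivity of the weights in $\Rpos^E$. Both points are routine.
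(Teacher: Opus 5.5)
Your proof is correct and follows the same route as the paper's. You write the complement of $\Omega_\sigma$ as a finite union of closed half-spaces, compute the distance to each half-space by H\"older's inequality together with attainment of the dual norm on the compact unit sphere, and take the minimum. Your explicit sharpness point $w_0-\rho_X u$ and your check that each $c-c_A^\sigma\neq0$ only spell out details the paper leaves implicit.
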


\begin{proof}
The complement of $\Omega_\sigma$ is the union of the half-spaces
$\{w:w_e\leq0\}$ and
$\{w:(c-c_A^\sigma)^{\mathsf T}w\leq0\}$.  If
$u^{\mathsf T}w_0>0$, H\"older's inequality gives
\begin{equation}
 \Dist_X\bigl(w_0,\{w:u^{\mathsf T}w\leq0\}\bigr)
 =\frac{u^{\mathsf T}w_0}{\norm{u}_{X,*}}.
\end{equation}
The lower bound is immediate, and equality follows because the dual
norm is attained on the compact $X$-unit sphere.  The distance to a
finite union of closed sets is the minimum of the individual distances.
Substitution of $u=\bfe_e$ and $u=c-c_A^\sigma$ proves
Equations~\eqref{eq:chamber-radius} and~\eqref{eq:radius-distance}.
\end{proof}

\subsection{Regular backgrounds and exact local criteria}
\label{subsec:regular-backgrounds}

Fix a regular background $w_0\in\Omega_\sigma$ and abbreviate
\begin{equation}
  M=M_\sigma,
  \qquad
  p_0=\bfa_G(w_0),
  \qquad
  \rho=\rho_X(w_0;\sigma).
  \label{eq:regular-abbreviations}
\end{equation}
A boundary data variation $q\in\mathbb R^{\calI}$ is locally geometric at $w_0$ if there is an edge variation $h$ such that
\begin{equation}
  \bfa_G(w_0+h)-\bfa_G(w_0)=q
  \label{eq:local-realization}
\end{equation}
without leaving the regular chamber. Define the quotient norm induced by $M$ and $X$ by
\begin{equation}
  \norm{q}_{M,X}=
  \inf\left\{\norm{h}_X:Mh=q\right\},
  \label{eq:quotient-norm}
\end{equation}
with the convention that the infimum is $+\infty$ when $q\notin\operatorname{im}M$.

\begin{theorem}[Regular-background geometrizability]
\label{thm:regular-geometrizability}
Let $w_0\in\Omega_\sigma$. The following statements hold.

\begin{enumerate}[label=\textup{(\roman*)},leftmargin=8mm]
\item For every $h$ with $\norm{h}_X<\rho$,
\begin{equation}
  \bfa_G(w_0+h)-\bfa_G(w_0)=Mh.
  \label{eq:exact-local-linearity}
\end{equation}
Thus the first variation is exact throughout the maximal chamber ball, rather than merely asymptotic.

\item A data variation $q$ has a realization satisfying $\norm{h}_X<\rho$ if and only if
\begin{equation}
  q\in\operatorname{im}M
  \qquad\text{and}\qquad
  \norm{q}_{M,X}<\rho.
  \label{eq:local-realizability-quotient}
\end{equation}

\item The linear compatibility condition has the dual form
\begin{equation}
  q\in\operatorname{im}M
  \quad\Longleftrightarrow\quad
  y^{\mathsf T}q=0
  \text{ for every }y\in\ker M^{\mathsf T}.
  \label{eq:left-kernel-criterion}
\end{equation}
Every nonzero $y\in\ker M^{\mathsf T}$ therefore defines a boundary-accessible nongeometric witness.

\item In Euclidean norms, let $M^+$ be the Moore--Penrose pseudoinverse and $P_M=MM^+$ the orthogonal projector onto $\operatorname{im}M$. Then
\begin{equation}
  h_\star=M^+q
  \label{eq:pseudoinverse-reconstruction}
\end{equation}
is the unique minimum-norm least-squares reconstruction. If $q\in\operatorname{im}M$, it is the minimum-norm exact solution of $Mh=q$. For arbitrary $q$,
\begin{equation}
  D_G^{(1)}(q)
  :=\Dist_2(q,\operatorname{im}M)
  =\norm{q-Mh_\star}_2
  =\norm{(I-P_M)q}_2.
  \label{eq:first-discrete-defect}
\end{equation}
If $D_G^{(1)}(q)>0$, the normalized residual
\begin{equation}
  y_\star=\frac{(I-P_M)q}{\norm{(I-P_M)q}_2}
  \label{eq:optimal-linear-witness}
\end{equation}
obeys $y_\star\in\ker M^{\mathsf T}$, $\norm{y_\star}_2=1$, and
\begin{equation}
  y_\star^{\mathsf T}q=D_G^{(1)}(q)
  =\max_{\substack{y\in\ker M^{\mathsf T}\\\norm{y}_2\leq1}}y^{\mathsf T}q.
  \label{eq:optimal-witness-value}
\end{equation}

\item If $r=\Rank M$ and $\sigma_r(M)$ is the smallest nonzero singular value, then for every $q\in\operatorname{im}M$,
\begin{equation}
  \norm{M^+q}_2\leq\sigma_r(M)^{-1}\norm{q}_2.
  \label{eq:singular-stability}
\end{equation}
The dimensions of the invisible edge-deformation space and the independent witness space are
\begin{equation}
  \dim\ker M=|E|-r,
  \qquad
  \dim\ker M^{\mathsf T}=|\calI|-r.
  \label{eq:nullity-counts}
\end{equation}
\end{enumerate}
\end{theorem}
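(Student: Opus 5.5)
The argument is finite-dimensional linear algebra once Proposition~\ref{prop:chamber-radius} and Theorem~\ref{thm:fixed-topology-locus} are in hand. We take the five parts in order, since each uses the previous one.

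\emph{Part (i).} If $\norm{h}_X<\rho$, Proposition~\ref{prop:chamber-radius} places $w_0+h$ in $\Omega_\sigma\subset\cK_\sigma$. Equation~\eqref{eq:area-on-chamber} then gives $\bfa_G(w_0+h)=M(w_0+h)$ and $\bfa_G(w_0)=Mw_0$. Subtracting yields Equation~\eqref{eq:exact-local-linearity}.

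\emph{Part (ii).} Suppose a realization $h$ with $\norm{h}_X<\rho$ exists. By (i), $q=Mh$, so $q\in\operatorname{im}M$ and $\norm{q}_{M,X}\le\norm{h}_X<\rho$. Conversely, suppose $q\in\operatorname{im}M$ and $\norm{q}_{M,X}<\rho$. The definition of the quotient norm as an infimum gives some $h$ with $Mh=q$ and $\norm{h}_X<\rho$. By (i), this $h$ realizes $q$ and keeps $w_0+h$ inside $\Omega_\sigma$. The only subtlety is the strict inequality, and it is handled by the infimum property directly; attainment of the infimum on the affine fiber is not needed.

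\emph{Part (iii).} This is the Fredholm alternative $\operatorname{im}M=(\ker M^{\mathsf T})^{\perp}$ in $\mathbb R^{\calI}$. If $y\in\ker M^{\mathsf T}$, then $y^{\mathsf T}Mh=(M^{\mathsf T}y)^{\mathsf T}h=0$, so every $q$ in the image is annihilated. For the converse, note that $\operatorname{im}M$ is a closed subspace whose orthogonal complement is exactly $\ker M^{\mathsf T}$. A $q$ orthogonal to that complement therefore lies in $\operatorname{im}M$. A nonzero $y\in\ker M^{\mathsf T}$ pairing nontrivially with $q$ is computed from boundary data alone, which is what makes it a boundary-accessible witness.

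\emph{Part (iv).} We use the standard Moore--Penrose properties. $P_M=MM^+$ is the orthogonal projector onto $\operatorname{im}M$, and $M^+q$ is the unique least-squares minimizer lying in $(\ker M)^\perp$; among all least-squares minimizers it has minimum norm. When $q\in\operatorname{im}M$, the residual vanishes and $h_\star$ is the minimum-norm exact solution. The distance formula is the projection theorem: $\Dist_2(q,\operatorname{im}M)=\norm{(I-P_M)q}_2$, and $Mh_\star=P_Mq$. For the witness, $(I-P_M)q$ lies in $(\operatorname{im}M)^\perp=\ker M^{\mathsf T}$. Then
\[
y_\star^{\mathsf T}q=\frac{\norm{(I-P_M)q}_2^2}{\norm{(I-P_M)q}_2}=D_G^{(1)}(q).
\]
For any unit $y\in\ker M^{\mathsf T}$, orthogonality gives $y^{\mathsf T}q=y^{\mathsf T}(I-P_M)q$, and Cauchy--Schwarz bounds this by $\norm{(I-P_M)q}_2$. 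Hence the maximum in Equation~\eqref{eq:optimal-witness-value} is attained at $y_\star$.

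\emph{Part (v).} Take the compact SVD $M=U_r\Sigma_rV_r^{\mathsf T}$, so that $M^+=V_r\Sigma_r^{-1}U_r^{\mathsf T}$ and $\norm{M^+}_{2\to2}=\sigma_r(M)^{-1}$. This gives the bound in Equation~\eqref{eq:singular-stability}. The dimension counts follow from rank--nullity applied to $M$ and to $M^{\mathsf T}$, using $\Rank M^{\mathsf T}=r$.

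I expect no step to be a genuine obstacle. The one point needing care is (ii), where the strict inequality and the possibly non-Euclidean norm $X$ must both be respected; this is why the argument relies on the infimum definition rather than attainment.
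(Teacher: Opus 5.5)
Your proposal is correct and follows the paper's proof part by part: chamber radius plus linearity on the chamber for (i), the fundamental theorem of linear algebra for (iii), Moore--Penrose projection plus Cauchy--Schwarz for (iv), and the SVD plus rank--nullity for (v). The only deviation is in (ii), where you take some $h$ with $Mh=q$ and $\norm{h}_X<\rho$ directly from the strict inequality on the infimum, whereas the paper first proves the quotient-norm infimum is attained by a compactness argument on the affine fiber; your shortcut suffices for the stated equivalence, and the paper's version additionally gives a realization with $\norm{h}_X=\norm{q}_{M,X}$.
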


\begin{proof}
Part (i) follows immediately from Proposition~\ref{prop:chamber-radius} and Equation~\eqref{eq:area-on-chamber}. For part (ii), if $q=Mh$ with $\norm{h}_X<\rho$, then $q\in\operatorname{im}M$ and $\norm{q}_{M,X}\leq\norm{h}_X<\rho$. Conversely, suppose $q\in\operatorname{im}M$ and $\norm{q}_{M,X}<\rho$. Choose a minimizing sequence in the closed affine space $\{h:Mh=q\}$. Its norms are bounded, so a convergent subsequence exists in finite dimensions; the limit remains in the affine space and attains the infimum in Equation~\eqref{eq:quotient-norm}. Hence there exists $h$ with $Mh=q$ and $\norm{h}_X=\norm{q}_{M,X}<\rho$. Part (i) then gives Equation~\eqref{eq:local-realization}.

The fundamental theorem of linear algebra gives
\begin{equation}
  (\operatorname{im}M)^\perp=\ker M^{\mathsf T},
\end{equation}
which proves part (iii). For Euclidean norms, the standard pseudoinverse identities imply that $M^+q$ is the solution orthogonal to $\ker M$ and therefore the unique minimum-norm solution. They also give $P_M=MM^+$ and the orthogonal decomposition
\begin{equation}
  q=P_Mq+(I-P_M)q.
\end{equation}
This proves Equation~\eqref{eq:first-discrete-defect}. When the residual is nonzero, it lies in $(\operatorname{im}M)^\perp=\ker M^{\mathsf T}$. Cauchy--Schwarz shows that every unit witness satisfies
\begin{equation}
  y^{\mathsf T}q=y^{\mathsf T}(I-P_M)q
  \leq\norm{(I-P_M)q}_2,
\end{equation}
with equality at $y=y_\star$. This proves part (iv).

Finally, in a singular-value decomposition of $M$, the nonzero singular values of $M^+$ are $\sigma_j(M)^{-1}$. Its operator norm on $\operatorname{im}M$ is therefore $\sigma_r(M)^{-1}$, proving Equation~\eqref{eq:singular-stability}. The dimension statements follow from rank--nullity applied to $M$ and $M^{\mathsf T}$.
\end{proof}

To compare regional observables, fix embeddings of their centers into
one finite-dimensional commutative algebra $\mathcal Z$.  This common
identification is additional sewing data; without it, sums of area
operators belonging to different regional centers are not defined.
Applying the scalar cut relations coefficientwise in this common center
gives the following sewing criterion.

\begin{theorem}[Operator-valued sewing of central areas]
\label{thm:operator-valued-central-sewing}
Let $\mathcal Z$ be a finite-dimensional commutative
$C^*$-algebra and let $L_A\in\mathcal Z_{\mathrm{sa}}$ be a central
area operator for every $A\in\calI$.  There are common local central
edge operators $W_e\in\mathcal Z_{\mathrm{sa}}$ satisfying
\begin{equation}
 L_A=\sum_{e\in E}M_{A,e}W_e,
 \qquad A\in\calI,
 \label{eq:operator-valued-edge-sewing}
\end{equation}
if and only if
\begin{equation}
 \sum_{A\in\calI}y_A L_A=0
 \quad\text{for every }y\in\ker M^{\mathsf T}.
 \label{eq:operator-valued-left-kernel}
\end{equation}
If the edge operators are required to be positive, they exist if and
only if
\begin{equation}
 \sum_{A\in\calI}z_A L_A\geq0\quad\text{in }\mathcal Z
 \quad\text{for every }z\in\mathbb R^{\calI}
 \text{ with }M^{\mathsf T}z\geq0.
 \label{eq:operator-valued-farkas}
\end{equation}
\end{theorem}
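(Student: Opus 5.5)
The plan is to reduce everything to scalar linear algebra by diagonalizing the commutative algebra. A finite-dimensional commutative $C^*$-algebra is $*$-isomorphic to $\mathbb C^n$ via its minimal projections $\Pi_1,\dots,\Pi_n$. Every self-adjoint element is therefore $\sum_k \lambda_k\Pi_k$ with real coefficients, and positivity means nonnegativity of every coefficient. Writing $L_A=\sum_k \ell_{A,k}\Pi_k$ and $W_e=\sum_k w_{e,k}\Pi_k$, the operator equation \eqref{eq:operator-valued-edge-sewing} becomes the $n$ decoupled real systems
\[
\ell_{\cdot,k}=M w_{\cdot,k}, \qquad k=1,\dots,n .
\]
Here $\ell_{\cdot,k}\in\mathbb R^{\calI}$ and $w_{\cdot,k}\in\mathbb R^{E}$. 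Because the map $L\mapsto(\lambda_k)$ is a linear bijection onto $\mathbb R^n$, the condition \eqref{eq:operator-valued-left-kernel} is equivalent to $y^{\mathsf T}\ell_{\cdot,k}=0$ for every $k$ and every $y\in\ker M^{\mathsf T}$.

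For the unconstrained statement, I would apply Theorem~\ref{thm:regular-geometrizability}(iii) in each coordinate $k$. That part is purely linear algebra, namely $(\operatorname{im}M)^\perp=\ker M^{\mathsf T}$, and needs no chamber hypothesis. Solvability of every coordinate system gives the $w_{e,k}$, and reassembling them yields self-adjoint $W_e$. The converse direction is immediate: if the $W_e$ exist, then $\sum_A y_AL_A=\sum_e (M^{\mathsf T}y)_eW_e=0$.

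For the positive version, I would use Farkas' lemma coordinatewise. The system $Mw=b$ with $w\ge0$ is feasible if and only if $z^{\mathsf T}b\ge0$ for every $z$ with $M^{\mathsf T}z\ge0$. Applying this with $b=\ell_{\cdot,k}$ for each $k$, and noting that $\sum_A z_AL_A\ge0$ holds exactly when $\sum_A z_A\ell_{A,k}\ge0$ for all $k$, gives \eqref{eq:operator-valued-farkas}. Positive solutions in each coordinate then assemble into positive $W_e$. Necessity again follows by direct substitution, since $\sum_A z_AL_A=\sum_e(M^{\mathsf T}z)_eW_e$ is a nonnegative combination of positive elements.

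The only point requiring care is the sign convention in Farkas' lemma: the dual cone must be quantified over all $z\in\mathbb R^{\calI}$ rather than only nonnegative $z$, because the equality constraints leave $z$ free. A secondary point is to remark that the common embedding into $\mathcal Z$ is what makes the sums over $A$ meaningful. No real obstacle is expected beyond these two points.
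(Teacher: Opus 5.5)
Your proposal is correct and follows the paper's proof: the paper also expands $L_A$ and $W_e$ in the minimal central projections, reduces the sewing equation to the decoupled scalar systems $\ell^{(\alpha)}=Mw^{(\alpha)}$, and applies $(\operatorname{im}M)^\perp=\ker M^{\mathsf T}$ and Farkas duality coefficientwise. Your explicit check of the necessity directions via $\sum_A z_AL_A=\sum_e(M^{\mathsf T}z)_eW_e$ is a small addition that the paper leaves implicit.
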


\begin{proof}
Write $\mathcal Z=\bigoplus_{\alpha=1}^m\mathbb CP_\alpha$ and
\begin{equation}
 L_A=\sum_\alpha\ell_A^{(\alpha)}P_\alpha,
 \qquad
 W_e=\sum_\alpha w_e^{(\alpha)}P_\alpha.
 \label{eq:central-coefficient-expansion}
\end{equation}
Equation~\eqref{eq:operator-valued-edge-sewing} is equivalent to the
$m$ scalar systems
$\ell^{(\alpha)}=Mw^{(\alpha)}$.  The fundamental theorem of linear
algebra gives solvability precisely when
$y^{\mathsf T}\ell^{(\alpha)}=0$ for every
$y\in\ker M^{\mathsf T}$ and every $\alpha$, which is exactly
Equation~\eqref{eq:operator-valued-left-kernel}.  Positivity of $W_e$
is equivalent to $w_e^{(\alpha)}\geq0$ for all $e,\alpha$.
Finite-dimensional Farkas duality says
\begin{equation}
 \ell^{(\alpha)}\in M\mathbb R_{\geq0}^{E}
 \quad\Longleftrightarrow\quad
 z^{\mathsf T}\ell^{(\alpha)}\geq0
 \text{ whenever }M^{\mathsf T}z\geq0.
\end{equation}
Holding for every $\alpha$, these scalar inequalities are equivalent
to the operator inequality in
Equation~\eqref{eq:operator-valued-farkas}.
\end{proof}

The theorem separates two logically independent requirements.
Exact operator-algebra recovery may supply each region with a valid
central area operator, but locality requires the entire operator-valued
family to factor through one set of edge operators.  A single
operator-valued left-kernel violation prevents such sewing even though
every regional algebraic RT formula is exact.

Two different ambiguities occur. A nonzero element of $\ker M$ is a local edge deformation that is invisible to the chosen boundary regions. The resulting ambiguity concerns identifiability of the metric representative. A nonzero element of $\ker M^{\mathsf T}$ is instead a linear relation among boundary variations that every local edge model must obey. A nonzero value excludes a metric realization in the stated class. Enlarging $\calI$ can remove the first ambiguity by increasing the rank of $M$, while at the same time creating more independent tests in the second space.

\subsection{Finite perturbations}
\label{subsec:finite-and-degenerate}

The linear criterion in Theorem~\ref{thm:regular-geometrizability} is complete only inside the chamber ball. A finite data variation can lie in $\operatorname{im}M$ and nevertheless require an edge vector that makes a competing cut lighter. The chamber constraints must then be imposed together with the data equations.

For a pattern $\sigma$ and a background $w_0\in\cK_\sigma$, collect all chamber inequalities into a matrix $B_\sigma$. Its rows are
\begin{equation}
  -\bfe_e^{\mathsf T}
  \quad(e\in E),
  \qquad
  (c_A^\sigma-c)^{\mathsf T}
  \quad(A\in\calI,\ c\in\cA_A\setminus\{c_A^\sigma\}).
  \label{eq:B-rows}
\end{equation}
Define the nonnegative slack vector
\begin{equation}
  d_\sigma(w_0)=-B_\sigma w_0\geq0.
  \label{eq:slack-vector}
\end{equation}
An edge perturbation remains in the selected closed chamber if and only if
\begin{equation}
  h\in\cP_\sigma(w_0)
  :=\{h\in\mathbb R^E:B_\sigma h\leq d_\sigma(w_0)\}.
  \label{eq:perturbation-polyhedron}
\end{equation}
The associated finite data-variation set is
\begin{equation}
  \cZ_\sigma(w_0)=M_\sigma\cP_\sigma(w_0).
  \label{eq:finite-data-polyhedron}
\end{equation}

The affine Farkas alternative and the Minkowski--Weyl theorem in the form used
below are standard; see, for example, Refs.~\cite{Rockafellar1970,Ziegler1995}.

\begin{theorem}[Finite same-chamber criterion]
\label{thm:farkas-finite}
The set $\cZ_\sigma(w_0)$ is a closed convex polyhedron. A vector $q\in\mathbb R^{\calI}$ belongs to it if and only if
\begin{equation}
  y^{\mathsf T}q+\lambda^{\mathsf T}d_\sigma(w_0)\geq0
  \label{eq:farkas-inequality}
\end{equation}
for every pair $(y,\lambda)$ satisfying
\begin{equation}
  M_\sigma^{\mathsf T}y+B_\sigma^{\mathsf T}\lambda=0,
  \qquad
  \lambda\geq0.
  \label{eq:farkas-dual-feasible}
\end{equation}
If $q\notin\cZ_\sigma(w_0)$, there exists such a pair with strict inequality in the opposite direction,
\begin{equation}
  y^{\mathsf T}q+\lambda^{\mathsf T}d_\sigma(w_0)<0.
  \label{eq:farkas-strict-certificate}
\end{equation}
The cone of dual pairs in Equation~\eqref{eq:farkas-dual-feasible} is polyhedral. It therefore suffices to require equality in Equation~\eqref{eq:farkas-inequality} on a basis of its lineality space and nonnegativity on lifts of the extreme rays of its pointed quotient.
\end{theorem}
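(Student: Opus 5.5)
The plan is to treat $\cZ_\sigma(w_0)=M_\sigma\cP_\sigma(w_0)$ as the projection of an explicit polyhedron and then apply the affine Farkas lemma to the joint system in $h$. First, $\cP_\sigma(w_0)=\{h:B_\sigma h\leq d_\sigma(w_0)\}$ is a finite intersection of closed half-spaces, so it is a closed convex polyhedron. It is nonempty because $h=0$ belongs to it, since $d_\sigma(w_0)\geq0$. By Minkowski--Weyl it equals $\operatorname{conv}(\text{finitely many points})+\operatorname{cone}(\text{finitely many rays})$. Its linear image under $M_\sigma$ is generated by the images of these points and rays, so it is again a finitely generated convex polyhedron and therefore closed. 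Closedness is essential here: it is what makes the separation step exact rather than only asymptotic.

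For the characterization, $q\in\cZ_\sigma(w_0)$ means that the system
\[
M_\sigma h=q,\qquad B_\sigma h\leq d_\sigma(w_0)
\]
is feasible in $h$. I would apply the affine Farkas alternative for mixed equality/inequality systems (Rockafellar, Ziegler). It says that exactly one of the following holds: the system is feasible, or there exist $y$ (free) and $\lambda\geq0$ with $M_\sigma^{\mathsf T}y+B_\sigma^{\mathsf T}\lambda=0$ and $y^{\mathsf T}q+\lambda^{\mathsf T}d_\sigma(w_0)<0$.

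The necessity direction is a direct computation. If $q=M_\sigma h$ with $B_\sigma h\leq d$, then for any dual-feasible pair
\[
y^{\mathsf T}q+\lambda^{\mathsf T}d=-\lambda^{\mathsf T}B_\sigma h+\lambda^{\mathsf T}d=\lambda^{\mathsf T}(d-B_\sigma h)\geq0 .
\]
The sign convention works because $y$ is unrestricted, so replacing $y$ by $-y$ matches either orientation of the equality constraint.

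For sufficiency and the strict certificate, I would separate $q$ from the closed convex set $\cZ_\sigma(w_0)$ by a hyperplane with normal $-y$. The condition $\sup_{h\in\cP}(-y)^{\mathsf T}M_\sigma h<(-y)^{\mathsf T}q$ is a linear program over $\cP_\sigma$. Its boundedness and LP duality produce $\lambda\geq0$ with $B_\sigma^{\mathsf T}\lambda=-M_\sigma^{\mathsf T}y$, and optimal value $\lambda^{\mathsf T}d$. The separation inequality then becomes exactly the strict inequality in the opposite direction. This dualization is the main step to get right: the separating functional must be bounded on $\cP_\sigma$, which LP duality converts into the existence of $\lambda$. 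Everything else is bookkeeping.

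Finally, the dual cone $\{(y,\lambda):M_\sigma^{\mathsf T}y+B_\sigma^{\mathsf T}\lambda=0,\ \lambda\geq0\}$ is cut out by finitely many linear equalities and inequalities, so it is polyhedral. Its lineality space $L$ consists of the pairs with $\lambda=0$ and $y\in\ker M_\sigma^{\mathsf T}$. On $L$, the functional $(y,\lambda)\mapsto y^{\mathsf T}q+\lambda^{\mathsf T}d$ must be nonnegative on both $\pm$ directions, which forces equality to zero; these are exactly the left-kernel conditions of Theorem~\ref{thm:regular-geometrizability}. The cone decomposes as $L\oplus(\text{pointed cone})$, and the pointed part is generated by finitely many extreme rays. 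A linear functional is nonnegative on the whole cone iff it vanishes on a basis of $L$ and is nonnegative on lifts of those extreme rays, which gives the finite reduction.
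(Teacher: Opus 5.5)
Your proposal is correct and follows the paper's proof essentially step for step. It uses closedness of the linear image of the nonempty polyhedron $\cP_\sigma(w_0)$, the direct computation $\lambda^{\mathsf T}(d_\sigma(w_0)-B_\sigma h)\geq0$ for necessity, the affine Farkas alternative for sufficiency and the strict certificate, and the lineality/extreme-ray reduction via Minkowski--Weyl; your separation-plus-LP-duality sketch simply derives that same Farkas alternative, and your identification of the lineality space as $\{(y,0):y\in\ker M_\sigma^{\mathsf T}\}$ matches the paper's remark that this sector recovers the left-kernel equalities.
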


\begin{proof}
The set $\cP_\sigma(w_0)$ is a nonempty polyhedron because it contains
$h=0$.  Its linear image $\cZ_\sigma(w_0)$ is therefore a closed
convex polyhedron.  If $q=M_\sigma h$ and
$B_\sigma h\leq d_\sigma(w_0)$, then every pair obeying
Equation~\eqref{eq:farkas-dual-feasible} satisfies
\begin{equation}
 \begin{split}
 y^{\mathsf T}q+\lambda^{\mathsf T}d_\sigma(w_0)
 &=-\lambda^{\mathsf T}B_\sigma h
   +\lambda^{\mathsf T}d_\sigma(w_0)\\
 &=\lambda^{\mathsf T}
   \bigl(d_\sigma(w_0)-B_\sigma h\bigr)\geq0.
 \end{split}
\end{equation}
Conversely, the affine Farkas alternative applied to
$M_\sigma h=q$, $B_\sigma h\leq d_\sigma(w_0)$ says that either this
system is feasible or there exists a dual-feasible $(y,\lambda)$ for
which Equation~\eqref{eq:farkas-strict-certificate} holds.  This proves
necessity, sufficiency, and strict separation.  Finally, the dual set is
a polyhedral cone.  Splitting off its lineality space and applying the
Minkowski--Weyl theorem to the pointed quotient gives the stated finite
test by a lineality basis and extreme rays.
\end{proof}

The lineality sector $\lambda=0$ recovers the left-kernel equalities of
Theorem~\ref{thm:regular-geometrizability}.  Nonzero chamber
multipliers record the finite margin available before a competing cut
becomes lighter.  Equation~\eqref{eq:farkas-strict-certificate} thereby
distinguishes failure of a common local source from an extremal-surface
transition and exhausts the chamber obstruction at the regular
backgrounds used in the code constructions.

\subsection{The four-terminal model and the continuum interpretation}
\label{subsec:star-example}

A four-terminal star already displays the chamber, reconstruction, and witness structures. Let the central vertex be $o$, let the boundary vertices be $1,2,3,4$, and let edge $e_i$ join $o$ to $i$. For every region $A$, the two possible cut weights are
\begin{equation}
  a_A^G(w)=\min\left\{\sum_{i\in A}w_i,\ \sum_{i\notin A}w_i\right\}.
  \label{eq:star-area}
\end{equation}
Take
\begin{equation}
  w_0=\left(1,\frac65,3,\frac{17}{5}\right)
  \label{eq:star-background}
\end{equation}
and the region family
\begin{equation}
  \calI=(1,2,3,4,12,23).
  \label{eq:star-regions}
\end{equation}
For these weights, the cut that isolates the listed region is uniquely minimal in every case. In the stated ordering,
\begin{equation}
  M=
  \begin{pmatrix}
  1&0&0&0\\
  0&1&0&0\\
  0&0&1&0\\
  0&0&0&1\\
  1&1&0&0\\
  0&1&1&0
  \end{pmatrix}.
  \label{eq:star-cut-matrix}
\end{equation}
The matrix has rank $4$ and singular values
\begin{equation}
  2,\quad\sqrt2,\quad1,\quad1.
  \label{eq:star-singular-values}
\end{equation}
Its left kernel is generated over $\mathbb Z$ by the primitive witnesses
\begin{equation}
\begin{split}
  y^{12}&=(-1,-1,0,0,1,0)^{\mathsf T},\\
  y^{23}&=(0,-1,-1,0,0,1)^{\mathsf T}.
  \label{eq:star-witnesses}
\end{split}
\end{equation}
Hence every geometric variation in this chamber obeys
\begin{equation}
  q_{12}=q_1+q_2,
  \qquad
  q_{23}=q_2+q_3.
  \label{eq:star-relations}
\end{equation}
Equation~\eqref{eq:star-relations} gives local geometric consistency relations for this fixed code background; universal entropy inequalities play no role in their derivation.

For the $\ell^\infty$ norm on edge space, Proposition~\ref{prop:chamber-radius} gives
\begin{equation}
  \rho_\infty(w_0;\sigma)=\frac1{20}.
  \label{eq:star-radius}
\end{equation}
The nearest wall is the exchange of the two cuts for region $23$. Its background gap is
\begin{equation}
  (w_1+w_4)-(w_2+w_3)=\frac15,
\end{equation}
and the corresponding normal has $\ell^1$ norm $4$, producing the distance $1/20$.

Consider first the local edge response
\begin{equation}
  h=\left(\frac1{50},-\frac1{100},\frac3{100},0\right).
  \label{eq:star-geometric-h}
\end{equation}
It lies strictly inside the chamber ball and produces
\begin{equation}
  q=Mh=
  \left(
  \frac1{50},-\frac1{100},\frac3{100},0,
  \frac1{100},\frac1{50}
  \right)^{\mathsf T}.
  \label{eq:star-geometric-q}
\end{equation}
Both relations in Equation~\eqref{eq:star-relations} are satisfied.

Now change only the $12$ component and define
\begin{equation}
  \widetilde q=
  \left(
  \frac1{50},-\frac1{100},\frac3{100},0,
  \frac3{100},\frac1{50}
  \right)^{\mathsf T}.
  \label{eq:star-nongeometric-q}
\end{equation}
Then $(y^{12})^{\mathsf T}\widetilde q=1/50$, so no local edge deformation can produce these data. The Euclidean least-squares reconstruction is
\begin{equation}
  \widehat h=M^+\widetilde q
  =\left(\frac{11}{400},-\frac1{200},\frac{11}{400},0\right)^{\mathsf T},
  \label{eq:star-best-h}
\end{equation}
which remains inside the chamber ball. The residual is
\begin{equation}
  r=\widetilde q-M\widehat h
  =\frac1{400}(-3,-2,1,0,3,-1)^{\mathsf T},
  \label{eq:star-residual}
\end{equation}
with
\begin{equation}
  D_G^{(1)}(\widetilde q)=\frac{\sqrt6}{200},
  \qquad
  y_\star=\frac1{2\sqrt6}(-3,-2,1,0,3,-1)^{\mathsf T}.
  \label{eq:star-distance-witness}
\end{equation}
The witness is an exact element of $\ker M^{\mathsf T}$ and attains $y_\star^{\mathsf T}\widetilde q=D_G^{(1)}(\widetilde q)$.

Figure~\ref{fig:star} shows the background graph and the selected cuts responsible for the two primitive relations, illustrating how interior locality enforces relations among separated boundary labels.

\begin{figure}[t]
\centering
\begin{tikzpicture}[font=\small,>=Latex]
  \def\sx{0.92}
  \begin{scope}[xshift=0cm]
    \node at (0,2.15) {\textbf{a}};
    \coordinate (o) at (0,0.9);
    \coordinate (v1) at (-1.15,1.75);
    \coordinate (v2) at (1.15,1.75);
    \coordinate (v3) at (1.15,0.05);
    \coordinate (v4) at (-1.15,0.05);
    \draw (o)--node[pos=0.55,above left,fill=white,inner sep=1pt] {$1$} (v1);
    \draw (o)--node[pos=0.55,above right,fill=white,inner sep=1pt] {$6/5$} (v2);
    \draw (o)--node[pos=0.55,below right,fill=white,inner sep=1pt] {$3$} (v3);
    \draw (o)--node[pos=0.55,below left,fill=white,inner sep=1pt] {$17/5$} (v4);
    \fill (o) circle (2pt) node[below=2pt] {$o$};
    \fill (v1) circle (2pt) node[above left=1pt] {$1$};
    \fill (v2) circle (2pt) node[above right=1pt] {$2$};
    \fill (v3) circle (2pt) node[below right=1pt] {$3$};
    \fill (v4) circle (2pt) node[below left=1pt] {$4$};
    \node at (0,-0.35) {background weights};
  \end{scope}
  \begin{scope}[xshift=4.3cm]
    \node at (0,2.15) {\textbf{b}};
    \coordinate (o) at (0,0.9);
    \coordinate (v1) at (-1.15,1.75);
    \coordinate (v2) at (1.15,1.75);
    \coordinate (v3) at (1.15,0.05);
    \coordinate (v4) at (-1.15,0.05);
    \draw[line width=1.25pt] (o)--(v1);
    \draw[line width=1.25pt] (o)--(v2);
    \draw (o)--(v3);
    \draw (o)--(v4);
    \fill (o) circle (2pt);
    \fill (v1) circle (2pt) node[above left=1pt] {$1$};
    \fill (v2) circle (2pt) node[above right=1pt] {$2$};
    \fill (v3) circle (2pt) node[below right=1pt] {$3$};
    \fill (v4) circle (2pt) node[below left=1pt] {$4$};
    \draw[dashed,rounded corners] (-1.45,2.03) rectangle (1.45,1.42);
    \node at (0,-0.35) {$A=12$};
  \end{scope}
  \begin{scope}[xshift=8.6cm]
    \node at (0,2.15) {\textbf{c}};
    \coordinate (o) at (0,0.9);
    \coordinate (v1) at (-1.15,1.75);
    \coordinate (v2) at (1.15,1.75);
    \coordinate (v3) at (1.15,0.05);
    \coordinate (v4) at (-1.15,0.05);
    \draw (o)--(v1);
    \draw[line width=1.25pt] (o)--(v2);
    \draw[line width=1.25pt] (o)--(v3);
    \draw (o)--(v4);
    \fill (o) circle (2pt);
    \fill (v1) circle (2pt) node[above left=1pt] {$1$};
    \fill (v2) circle (2pt) node[above right=1pt] {$2$};
    \fill (v3) circle (2pt) node[below right=1pt] {$3$};
    \fill (v4) circle (2pt) node[below left=1pt] {$4$};
    \draw[dashed,rounded corners] (0.83,2.03) rectangle (1.47,-0.23);
    \node at (0,-0.35) {$A=23$};
  \end{scope}
\end{tikzpicture}
\caption{Fixed four-terminal geometry and local consistency relations. \textbf{(a)} Background weights from Equation~\eqref{eq:star-background}. \textbf{(b,c)} The heavy edges are the selected cuts for regions $12$ and $23$, which imply $q_{12}=q_1+q_2$ and $q_{23}=q_2+q_3$. Violating either identity rules out a local edge-weight realization in this chamber.}
\label{fig:star}
\end{figure}

\section{Linearized geometry on an asymptotically \texorpdfstring{$\mathrm{AdS}_3$}{AdS3} slice}
\label{sec:linearized-geometry}

Section~\ref{sec:discrete} asks whether one set of edge weights explains every sampled interval. On the hyperbolic disk, the unknown is a symmetric two-tensor $k$, and each boundary interval probes it through a geodesic integral. The complete first-order response must therefore lie in the range of a single tensor X-ray transform; its orthogonal complement is the continuum counterpart of the finite left-null witness space.

We work on a time-reflection-symmetric slice of unit-radius $\mathrm{AdS}_3$. Its reference metric is the hyperbolic metric on the disk. This restriction isolates the kinematic question relevant to proto-area: whether a first-order state-dependent response can be represented by a local metric perturbation. It does not impose the Hamiltonian constraint and therefore does not yet test whether the reconstructed perturbation is dynamically generated by a specified bulk stress tensor.

Tensor tomography on compact simple surfaces identifies potential tensors as the diffeomorphism gauge and proves injectivity on suitable representatives~\cite{PaternainSaloUhlmann2013,PaternainSaloUhlmann2023}. In the asymptotically hyperbolic setting, renormalized lengths and weighted function spaces control the ideal boundary~\cite{GrahamEtAl2019,Eptaminitakis2022}, while blow-up methods resolve endpoint singularities~\cite{MazzeoMonard2024}. Eptaminitakis, Monard, and Zou established the iterated transverse-traceless decomposition and the range characterization for even tensors on the Poincar\'e disk~\cite{EptaminitakisMonardZou2025}; the scalar range and mapping results appear in Ref.~\cite{EptaminitakisMonardZou2026}. Applied to proto-area, these theorems identify the closed geometric range and its orthogonal complement. The latter gives the nongeometric distance and optimal witnesses, while inversion on the former gives stable metric reconstruction. Symmetry-reduced HRT inversion instead uses a stationary radial ansatz and Abel transforms~\cite{Chae2026}; no radial or rotational symmetry is imposed here.

Boundary rigidity within a fixed bulk conformal class addresses the complementary
uniqueness problem for asymptotically hyperbolic metrics
\cite{HumbertMunozThon2026}.  The rank-two range theorem of
Ref.~\cite{EptaminitakisMonardZou2025}, together with the scalar mapping results
of Ref.~\cite{EptaminitakisMonardZou2026}, is needed to make the complete
Poincar\'e-disk membership condition and its reconstruction operator explicit;
the finite sewing theory and the first and second variation formulas for
boundary length do not depend on these results.

\subsection{Renormalized length and physical interval data}
\label{subsec:renormalized-data}

Let $\overline M=\overline{\mathbb D}$ and $M=\mathbb D$. In the complex coordinate $z=re^{i\theta}$, set
\begin{equation}
  g_{\mathbb H}=\frac{4\,|dz|^2}{(1-|z|^2)^2},
  \qquad
  x(z)=\frac{1-|z|^2}{1+|z|^2}.
  \label{eq:hyperbolic-metric-bdf}
\end{equation}
The function $x$ is a smooth boundary defining function, normalized by
\begin{equation}
  \left.x^2g_{\mathbb H}\right|_{T\partial M}=d\theta^2,
  \qquad
  \left.|dx|_{x^2g_{\mathbb H}}\right|_{\partial M}=1.
  \label{eq:round-conformal-infinity}
\end{equation}
It is the defining function used in the weighted tensor tomography below. The geodesic defining function associated with the same boundary representative is, in a collar of $\partial M$,
\begin{equation}
  \rho=2\frac{1-r}{1+r},
  \qquad
  g_{\mathbb H}=\rho^{-2}
  \left[d\rho^2+\left(1-\frac{\rho^2}{4}\right)^2d\theta^2\right].
  \label{eq:hyperbolic-geodesic-normal-form}
\end{equation}
For every sufficiently regular asymptotically hyperbolic metric $g$ close to $g_{\mathbb H}$ and with the same conformal representative, its associated geodesic defining function $\rho_g$ gives, after a boundary-fixing collar identification,
\begin{equation}
  g=\rho_g^{-2}\bigl(d\rho_g^2+h^g_{\rho_g}(\theta)d\theta^2\bigr),
  \qquad h^g_0(\theta)=1.
  \label{eq:ah-normal-form}
\end{equation}
The ratio of any two such defining functions is smooth and strictly positive up to the boundary, so the weighted spaces defined with $x$ are unchanged up to equivalent norms. The analysis is local near $g_{\mathbb H}$ and is restricted to simple metrics, so each ordered pair of distinct ideal endpoints determines a unique geodesic. Negative curvature and sufficiently small compactly supported perturbations provide a concrete open class with this property.

For $u,v\in S^1$ with $u\neq v$, let $\gamma^g_{u,v}$ be the corresponding unoriented $g$-geodesic. The renormalized boundary length relative to $x$ is
\begin{equation}
  \mathcal L_g^x(u,v)
  =\lim_{\varepsilon\downarrow0}
  \left[
    \operatorname{Length}_g\bigl(\gamma^g_{u,v}\cap\{x\geq\varepsilon\}\bigr)
    +2\log\varepsilon
  \right].
  \label{eq:renormalized-boundary-length}
\end{equation}
This limit is smooth on $S^1\times S^1\setminus\operatorname{diag}$ for simple asymptotically hyperbolic metrics~\cite{GrahamEtAl2019}. The subtraction fixes the background normalization required for comparing state-dependent variations. Each end of the geodesic contributes $-\log\varepsilon$, while the finite remainder carries the geometric information.

The defining function is a renormalization convention tied to the boundary metric. Let
\begin{equation}
  \mathscr E
  =\left\{(u,v)\longmapsto\omega(u)+\omega(v):
  \omega\in C^\infty(S^1)\right\}
  \label{eq:endpoint-gauge-space}
\end{equation}
be the endpoint subspace.

\begin{proposition}[Endpoint gauge and the background-subtracted map]
\label{prop:endpoint-gauge}
If $\widehat x=e^\omega x$ is another boundary defining function, with $\omega\in C^\infty(\overline M)$, then
\begin{equation}
  \mathcal L_g^{\widehat x}(u,v)-\mathcal L_g^x(u,v)
  =\omega(u)+\omega(v).
  \label{eq:bdf-length-shift}
\end{equation}
Consequently,
\begin{equation}
  \mathcal B(g)
  :=\mathcal L_g^x-\mathcal L_{g_{\mathbb H}}^x
  \in
  C^\infty_{\mathrm{sym}}
  \bigl(S^1\times S^1\setminus\operatorname{diag}\bigr)
  \label{eq:boundary-length-map}
\end{equation}
independent of the common choice of defining function. It is also invariant under diffeomorphisms of $\overline M$ that restrict to the identity on $\partial M$.
\end{proposition}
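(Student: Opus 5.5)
The plan is to compare the two truncated lengths directly on the sublevel sets of $x$ and $\widehat x$, and then obtain the invariance claims as formal consequences. Fix $(u,v)$ and the $g$-geodesic $\gamma=\gamma^g_{u,v}$, parametrized by $g$-arclength $s$. Near each end, say $s\to+\infty$ toward $v$, the asymptotically hyperbolic normal form~\eqref{eq:ah-normal-form} gives $\log x(\gamma(s))=-s+c_v+o(1)$. This holds because $\rho_g$ decays like $e^{-s}$ along a geodesic hitting the boundary transversally, and $x/\rho_g$ is smooth and positive up to $\partial M$. Since $\widehat x=e^{\omega}x$ with $\omega$ continuous up to $\overline M$, along the same end we get $\log\widehat x(\gamma(s))=\log x(\gamma(s))+\omega(v)+o(1)$.

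Next I compare truncations. The cutoff $\{x\geq\varepsilon\}\cap\gamma$ is, for small $\varepsilon$, an interval $[s_-(\varepsilon),s_+(\varepsilon)]$, using monotonicity of $x$ along the ends, which comes from the transversal approach. The cutoff $\{\widehat x\geq\varepsilon\}$ is an interval $[\widehat s_-,\widehat s_+]$. By the asymptotics above, $\widehat s_+(\varepsilon)=s_+(\varepsilon)+\omega(v)+o(1)$, and likewise $\widehat s_-=s_- -\omega(u)+o(1)$. Hence
\[
\operatorname{Length}_g\bigl(\gamma\cap\{\widehat x\geq\varepsilon\}\bigr)
=\operatorname{Length}_g\bigl(\gamma\cap\{x\geq\varepsilon\}\bigr)+\omega(u)+\omega(v)+o(1).
\]
Adding $2\log\varepsilon$ and letting $\varepsilon\downarrow0$ gives~\eqref{eq:bdf-length-shift}. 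Only boundary values of $\omega$ enter, so the shift lies in $\mathscr E$.

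The background-subtracted map then follows immediately. Applying~\eqref{eq:bdf-length-shift} to both $g$ and $g_{\mathbb H}$ with the same pair $x,\widehat x$ yields the same shift $\omega(u)+\omega(v)$, which cancels in $\mathcal L_g^x-\mathcal L_{g_{\mathbb H}}^x$. Smoothness off the diagonal is the cited regularity of renormalized length~\cite{GrahamEtAl2019}. Symmetry holds because $\gamma$ is unoriented.

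For diffeomorphism invariance, let $\phi$ fix $\partial M$ pointwise. Then $\phi$ maps $\phi^*g$-geodesics to $g$-geodesics isometrically and preserves the ideal endpoints, so
\[
\mathcal L^x_{\phi^*g}(u,v)=\mathcal L^{x\circ\phi^{-1}}_g(u,v).
\]
Since $\phi$ is the identity on $\partial M$, $x\circ\phi^{-1}$ is again a boundary defining function, of the form $e^{\omega}x$ with $\omega|_{\partial M}=\log\bigl(\partial_\nu(x\circ\phi^{-1})/\partial_\nu x\bigr)$. I would need $\omega|_{\partial M}=0$, which holds if $d\phi|_{\partial M}$ fixes the normal direction. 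For the paper's $\operatorname{Diff}_0(M;\partial M)$ with fixed conformal infinity, the normal-form condition $h^g_0=1$ forces this at first order. Granting it, the shift vanishes and $\mathcal B$ is invariant. Even without it, the discrepancy lies in $\mathscr E$, and I would note that invariance then holds modulo the endpoint subspace.

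The main obstacle is the uniform asymptotic $\log\widehat x-\log x\to\omega(\text{endpoint})$ along the geodesic together with monotonicity of $x$ near the ends, so that the truncated sets really are single intervals. I would handle both through the geodesic normal form, where geodesics meet $\partial M$ orthogonally with $\rho_g\sim Ce^{-s}$.
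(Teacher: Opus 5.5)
Your proposal is correct and follows the paper's route. The transformation law is the shift of the logarithmic cutoff parameter at each end of the geodesic, which you carry out explicitly where the paper defers to the meromorphic regularization of Graham et al. Diffeomorphism invariance is then reduced, via $\mathcal L^x_{\phi^*g}=\mathcal L^{x\circ\phi^{-1}}_g$, to that same law.

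Your caveat that this step needs $\omega|_{\partial M}=0$ is right. It is exactly what the paper tacitly assumes when it asserts that $x$ and $x\circ\Psi^{-1}$ induce the same conformal representative. The precise condition is that $\phi$ preserve the fixed representative, i.e.\ that $d\phi$ act as the identity on $T\overline M|_{\partial M}/T\partial M$. Merely preserving the normal line is not enough, since a normal rescaling shifts $\mathcal B$ by a nonzero element of $\mathscr E$.
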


\begin{proof}
Equation~\eqref{eq:bdf-length-shift} is the transformation law for the renormalized length, with $\omega(u)$ and $\omega(v)$ denoting the boundary values of $\omega$. Near the incoming endpoint, replacing the cutoff $x=\varepsilon$ by $\widehat x=\varepsilon$ shifts the logarithmic radial coordinate by $\omega(u)$; the outgoing endpoint contributes $\omega(v)$. A complete derivation using the meromorphic regularization of the geodesic integral is given in~\cite{GrahamEtAl2019}. Applying the same change of defining function to $g$ and $g_{\mathbb H}$ cancels the endpoint term in their difference. If $\Psi|_{\partial M}=\operatorname{Id}$, the defining functions $x$ and $x\circ\Psi^{-1}$ induce the same representative of conformal infinity. Their renormalized lengths therefore agree, while $\Psi$ carries the relevant geodesic to one with the same ideal endpoints. This proves boundary-fixing diffeomorphism invariance.
\end{proof}

The common endpoint term cancels after background subtraction.  We nevertheless keep the boundary representative and defining function fixed when the logical state varies.  Allowing a state-dependent change of defining function would add $\dot\omega(u)+\dot\omega(v)$ before background subtraction and would confuse a boundary Weyl redefinition with bulk backreaction.

We use the geodesic coordinates, orientation convention, and $L^2$
measure of Ref.~\cite{EptaminitakisMonardZou2025}, with its transverse
coordinate $a$ renamed $b$; see Equations (2.15)--(2.17) and (4.1)
there.

The interval data can be identified with functions on the space of oriented hyperbolic geodesics
\begin{equation}
  \mathcal G_{\mathbb H}
  =\bigl(\mathbb R/2\pi\mathbb Z\bigr)_\beta\times\mathbb R_b.
  \label{eq:geodesic-space}
\end{equation}
A unit-speed geodesic is
\begin{equation}
  z_{\beta,b}(t)
  =e^{i\beta}
  \frac{(2+ib)\tanh(t/2)+ib}
       {ib\tanh(t/2)-2+ib},
  \qquad t\in\mathbb R,
  \label{eq:hyperbolic-geodesic-parametrization}
\end{equation}
with ordered endpoints
\begin{equation}
  u=\beta,
  \qquad
  v=\beta+\pi+2\arctan b
  \pmod{2\pi}.
  \label{eq:endpoint-geodesic-coordinates}
\end{equation}
Reversing the orientation of a geodesic is represented by the antipodal scattering involution
\begin{equation}
  S_A(\beta,b)
  =\bigl(\beta+\pi+2\arctan b,-b\bigr).
  \label{eq:antipodal-scattering}
\end{equation}
Thus a symmetric interval function $a(u,v)=a(v,u)$ pulls back to an even geodesic function
\begin{equation}
  (\mathcal Ua)(\beta,b)
  =a\bigl(\beta,\beta+\pi+2\arctan b\bigr),
  \qquad
  S_A^*\mathcal Ua=\mathcal Ua.
  \label{eq:interval-to-geodesic-data}
\end{equation}
The invariant data measure is $d\beta\,db$. We write
\begin{equation}
  L^2_+(\mathcal G_{\mathbb H})
  =\ker\bigl(\operatorname{Id}-S_A^*\bigr)
  \subset L^2(\mathcal G_{\mathbb H},d\beta\,db).
  \label{eq:even-data-space}
\end{equation}
Interval data and their pullbacks by $\mathcal U$ will be denoted by the same symbol. The complement-odd component
\begin{equation}
  a_{-}=\frac12\bigl(a-S_A^*a\bigr)
  \label{eq:complement-odd-data}
\end{equation}
can never be produced by an even-rank tensor and is therefore an immediate nongeometric obstruction. The complement-symmetric sector $a_-=0$ is appropriate to a disk geometry with one unoriented geodesic for each endpoint pair. An odd component contributes an independent first-order defect through its norm.

\subsection{The differential of the boundary-length map}
\label{subsec:first-variation}

For a symmetric two-tensor $k$, its geodesic X-ray transform on the reference disk is
\begin{equation}
  I_2k(\beta,b)
  =\int_{-\infty}^{\infty}
  k_{z_{\beta,b}(t)}\bigl(\dot z_{\beta,b}(t),\dot z_{\beta,b}(t)\bigr)\,dt.
  \label{eq:tensor-xray-transform}
\end{equation}
The integral is absolutely convergent for compactly supported tensors and for the positively weighted classes defined in Subsection~\ref{subsec:gauge-reduction}. The standard first variation of renormalized boundary length is recalled below with the normalization used here; compare Section 4.3 of Ref.~\cite{GrahamEtAl2019}.

\begin{theorem}[Linearization of renormalized boundary length]
\label{thm:first-variation-boundary-length}
Let $g_t$ be a $C^2$ family of simple asymptotically hyperbolic metrics with $g_0=g_{\mathbb H}$ and fixed conformal representative $d\theta^2$. Suppose
\begin{equation}
  k=\left.\frac{d}{dt}g_t\right|_{t=0}
  \label{eq:metric-variation-k}
\end{equation}
is smooth and either compactly supported or satisfies, in the zero-cotangent frame near $x=0$,
\begin{equation}
  \nabla_{g_{\mathbb H}}^j k=O(x^{1+\eta}),
  \qquad 0\leq j\leq2,
  \qquad \eta>0.
  \label{eq:first-variation-decay}
\end{equation}
Then
\begin{equation}
  D\mathcal B_{g_{\mathbb H}}[k]
  =\frac12 I_2k.
  \label{eq:boundary-map-differential}
\end{equation}
The identity extends by continuity to the weighted Hilbert source space defined in Equation~\eqref{eq:source-quotient}.
\end{theorem}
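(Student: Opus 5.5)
The argument is the classical first variation of length, with the renormalization handled at the ideal endpoints. Fix ideal endpoints $(u,v)$ and let $\gamma=z_{\beta,b}$ be the reference hyperbolic geodesic, parametrized by arclength $t$. For each $t$, write $\gamma_t=\gamma^{g_t}_{u,v}$. Simplicity of $g_t$ near $g_{\mathbb H}$ makes $\gamma_t$ unique, and it depends $C^1$ on $t$ on compact subsets. Using the normal form \eqref{eq:ah-normal-form} together with the decay \eqref{eq:first-variation-decay}, the ends of $\gamma_t$ approach those of $\gamma$ exponentially in the arclength parameter. We then split
\begin{equation*}
 \mathcal L^x_{g_t}(u,v)-\mathcal L^x_{g_{\mathbb H}}(u,v)
 =\bigl[\operatorname{Len}_{g_t}(\gamma_t^\varepsilon)-\operatorname{Len}_{g_t}(\gamma^\varepsilon)\bigr]
 +\bigl[\operatorname{Len}_{g_t}(\gamma^\varepsilon)-\operatorname{Len}_{g_{\mathbb H}}(\gamma^\varepsilon)\bigr],
\end{equation*}
where the superscript $\varepsilon$ denotes truncation to $\{x\ge\varepsilon\}$, and we let $\varepsilon\to0$ at the end. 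Because the defining function $x$ is held fixed, the $2\log\varepsilon$ counterterms cancel exactly.

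The second bracket is a fixed-curve computation. Its $t$-derivative at $0$ is $\frac12\int k(\dot\gamma,\dot\gamma)\,dt$ over the truncated curve. The decay $k=O(x^{1+\eta})$ in the zero-cotangent frame, combined with $x\sim 2e^{-|t|}$ along $\gamma$, makes the integrand $O(e^{-(1+\eta)|t|})$. So the limit $\varepsilon\to0$ can be taken, and dominated convergence, applied with a uniform-in-$t$ bound from the $C^2$ dependence, justifies exchanging it with the $t$-derivative. This gives $\frac12 I_2k(\beta,b)$.

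The first bracket must contribute nothing at first order. Here I would use that $\gamma$ is $g_{\mathbb H}$-geodesic. The first variation of length along a variation field $J$ equals the endpoint terms $\langle J,\dot\gamma\rangle|_{\partial}$ on the truncated segment. Two contributions must then be shown to vanish in the limit. First, the tangential displacement of the cut points induced by moving the geodesic: the cutoff surface $\{x=\varepsilon\}$ is fixed, but $\gamma_t$ meets it at a slightly different place. Second, the normal displacement, which is harmless because it is orthogonal to $\dot\gamma$. The key estimate is that the Jacobi displacement $J=\partial_t\gamma_t|_{0}$ solves the forced equation $(-\partial_s^2+1)J=\mathcal F_\gamma(k)$ with forcing $O(e^{-(1+\eta)|s|})$. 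Its decaying solution, the one fixing the ideal endpoints, satisfies $|J|_{g_{\mathbb H}}=O(e^{-\eta'|s|})$. Hence the boundary terms at $x=\varepsilon$ are $O(\varepsilon^{\eta'})\to0$. The tangential mismatch of the cutoff likewise contributes $\langle J,\dot\gamma\rangle$ together with a term in $dx(J)/|dx|$, and both are $o(1)$.

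\textbf{Main obstacle.} The main obstacle is this endpoint control: showing that the displacement of the geodesic is uniformly small, to first order, near infinity, and that it decays fast enough to kill the boundary terms. I would establish it by solving the forced Jacobi equation explicitly with the Green's function $\frac12e^{-|s-s'|}$ and bounding the result, citing Section 4.3 of \cite{GrahamEtAl2019} for the justification that $\gamma_t$ is differentiable in $t$ with that Jacobi field as derivative.

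\textbf{Compactly supported case.} Here $J$ solves the homogeneous equation outside a compact set and decays like $e^{-|s|}$, so this case is immediate. Finally, both sides of \eqref{eq:boundary-map-differential} are continuous in the weighted source norm, using the mapping property of $I_2$ from \cite{EptaminitakisMonardZou2025}. Density of the smooth decaying tensors then extends the identity to the Hilbert source space.
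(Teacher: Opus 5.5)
Your argument is correct in outline, but it follows a different route from the paper's.

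\textbf{What you do differently.} You keep the ideal endpoints fixed and truncate the perturbed geodesic $\gamma_t$ itself at the fixed level set $\{x=\varepsilon\}$, exactly as in the definition \eqref{eq:renormalized-boundary-length}. The truncation points then slide along $\{x=\varepsilon\}$. So the derivative of your first bracket is an endpoint term, and you must control the displacement $J=\partial_t\gamma_t|_0$ near infinity through the decaying solution of the forced Jacobi equation. The paper instead fixes points $p_\varepsilon,q_\varepsilon$ on the \emph{reference} geodesic. It then uses the smooth boundary extension of the renormalized distance to write $\mathcal L^x_{g_t}(u,v)=\lim_{\varepsilon\downarrow0}[d_{g_t}(p_\varepsilon,q_\varepsilon)+2\log\varepsilon]$ uniformly in $t$, even though these points do not lie on $\gamma_t$. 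With fixed endpoints the first variation has neither an endpoint term nor an interior term. No displacement field enters at first order; it is deferred to the Hessian, where the same device gives the Dirichlet Jacobi problem. Your variant is the one the paper mentions in Appendix~\ref{appC:linearized-length}.

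\textbf{What each route buys.} Your route avoids needing the renormalized limit to be independent of how one approaches the boundary. In exchange it needs more analytic input. First, the complete geodesic family must be differentiable in $t$ up to both ideal ends, with $J$ identified as the decaying forced-Jacobi solution. Second, to exchange $d/dt$ with $\varepsilon\downarrow0$ in the first bracket, the sliding endpoint terms must be small uniformly for $t$ near $0$. You check this only at $t=0$, so this uniformity is still to be supplied. The paper reduces the exchange to dominating the single integral $\frac12\int k(T,T)\,ds$ on fixed-endpoint segments.

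\textbf{A remark on your endpoint estimate.} The decay rate of $J$ is not really what kills your endpoint term. Since $J\perp T$, that term is just the sliding coefficient $-dx(J)/dx(T)$. This is $O(\varepsilon\,|J|_{g_{\mathbb H}})$, because $\gamma$ meets the level sets of $x$ orthogonally to leading order. The essential input is that $J$ has no homogeneous component growing like $e^{|s|}$ at either end, which is exactly the statement that the ideal endpoints are fixed. The Green-function solution is in fact $O(e^{-|s|})$.
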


\begin{proof}
Fix distinct endpoints $(u,v)$ and let $\gamma=\gamma^{g_{\mathbb H}}_{u,v}$. For sufficiently small $\varepsilon$, choose points $p_\varepsilon,q_\varepsilon\in\gamma$ approaching $u,v$ and satisfying
\begin{equation}
  x(p_\varepsilon)=x(q_\varepsilon)=\varepsilon.
  \label{eq:truncated-endpoints}
\end{equation}
The smooth extension property of the renormalized distance gives, uniformly for $t$ near zero,
\begin{equation}
  \mathcal L_{g_t}^x(u,v)
  =\lim_{\varepsilon\downarrow0}
  \left[d_{g_t}(p_\varepsilon,q_\varepsilon)+2\log\varepsilon\right].
  \label{eq:renorm-via-fixed-points}
\end{equation}
The approach of $p_\varepsilon$ and $q_\varepsilon$ to the boundary is arbitrary; it need not follow a $g_t$-geodesic. This freedom is useful because the finite endpoints in Equation~\eqref{eq:renorm-via-fixed-points} do not move with $t$.

Let $\eta_{t,\varepsilon}$ be the unique $g_t$-geodesic from $p_\varepsilon$ to $q_\varepsilon$. At $t=0$, it is the segment of $\gamma$ between these points. Parametrize $\eta_{0,\varepsilon}$ by $g_{\mathbb H}$-arclength $s$ and write $T=\partial_s\eta_{0,\varepsilon}$. Differentiating the length functional gives
\begin{equation}
\begin{split}
  \left.\frac{d}{dt}\right|_{t=0}
  d_{g_t}(p_\varepsilon,q_\varepsilon)
  ={}&\frac12\int_{\eta_{0,\varepsilon}}k(T,T)\,ds\\
  &+\left[g_{\mathbb H}(J,T)\right]_{p_\varepsilon}^{q_\varepsilon}
  -\int_{\eta_{0,\varepsilon}}
  g_{\mathbb H}(J,\nabla_TT)\,ds,
  \label{eq:finite-first-variation}
\end{split}
\end{equation}
where $J=\partial_t\eta_{t,\varepsilon}|_{t=0}$. The endpoint term vanishes because $p_\varepsilon$ and $q_\varepsilon$ are fixed, and the final integral vanishes because $\eta_{0,\varepsilon}$ is a geodesic. Hence
\begin{equation}
  \left.\frac{d}{dt}\right|_{t=0}
  d_{g_t}(p_\varepsilon,q_\varepsilon)
  =\frac12\int_{\eta_{0,\varepsilon}}k(T,T)\,ds.
  \label{eq:finite-first-variation-reduced}
\end{equation}

The counterterm in Equation~\eqref{eq:renorm-via-fixed-points} is independent of $t$. The renormalized distance extends smoothly to the boundary off the diagonal, so its $t$-derivative is obtained by differentiating before taking $\varepsilon\downarrow0$. Under Equation~\eqref{eq:first-variation-decay}, the integrand is $O(x^{1+\eta})$ in a unit zero-frame, while $ds$ is asymptotic to $|dx|/x$ at either end. It is therefore dominated by an integrable multiple of $x^\eta|dx|$. Dominated convergence in Equation~\eqref{eq:finite-first-variation-reduced} yields
\begin{equation}
  \left.\frac{d}{dt}\right|_{t=0}\mathcal L_{g_t}^x(u,v)
  =\frac12\int_\gamma k(T,T)\,ds
  =\frac12 I_2k(u,v).
  \label{eq:renorm-first-variation-result}
\end{equation}
The background term in $\mathcal B$ is fixed, proving Equation~\eqref{eq:boundary-map-differential} on the smooth class. The weighted extension follows from density and the boundedness of the transform on the source spaces in~\cite{Eptaminitakis2022,EptaminitakisMonardZou2025}.
\end{proof}

The theorem has a direct local interpretation. The value $k(T,T)$ is the infinitesimal change of squared line element in the direction of the reference RT geodesic. The curve itself need not be varied at first order because the background length is stationary. Its displacement enters at second order and produces the quadratic consistency condition of Section~\ref{sec:nonlinear-consistency}.

Equation~\eqref{eq:boundary-map-differential} is invariant under metric perturbations generated by boundary-fixing coordinate changes.

\begin{corollary}[Diffeomorphism gauge]
\label{cor:diffeomorphism-gauge}
Let $\omega$ be a smooth one-form of positive boundary order, so that $\omega(T)\to0$ at both ends of every complete reference geodesic. Let $d^s\omega$ denote its symmetrized covariant derivative,
\begin{equation}
  (d^s\omega)_{ij}
  =\frac12\bigl(\nabla_i\omega_j+\nabla_j\omega_i\bigr).
  \label{eq:symmetrized-derivative}
\end{equation}
Then
\begin{equation}
  I_2(d^s\omega)=0.
  \label{eq:potential-kernel}
\end{equation}
The identity extends by density and continuity to $\omega\in x^\delta H_0^1(M;{}^0T^*M)$ for every $\delta>0$. In particular, if $X$ is a smooth boundary-vanishing vector field and $k=\mathcal L_Xg_{\mathbb H}=2d^s(X^\flat)$, then $D\mathcal B_{g_{\mathbb H}}[k]=0$.
\end{corollary}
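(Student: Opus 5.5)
The plan is the classical integration-by-parts argument along each reference geodesic, followed by a density argument for the weighted class. For a smooth one-form $\omega$ and a unit-speed reference geodesic $\gamma=z_{\beta,b}$ with tangent $T$, the geodesic equation $\nabla_TT=0$ gives
\begin{equation*}
  (d^s\omega)(T,T)=T^iT^j\nabla_i\omega_j
  =\frac{d}{dt}\bigl[\omega(T)\bigr]-\omega(\nabla_TT)
  =\frac{d}{dt}\bigl[\omega_{\gamma(t)}(T)\bigr].
\end{equation*}
I will integrate this over a truncated segment $t\in[-R,R]$, which gives $\omega(T)|_{\gamma(R)}-\omega(T)|_{\gamma(-R)}$. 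The positive boundary order assumption says exactly that these endpoint values tend to zero as $R\to\infty$.

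The remaining point for the smooth class is that the truncated integrals converge to $I_2(d^s\omega)$. I will use that $\omega$ has positive boundary order in the zero-cotangent frame, say $|\omega|_{g_{\mathbb H}}=O(x^\delta)$ with $|\nabla\omega|_{g_{\mathbb H}}=O(x^\delta)$. Along $\gamma$, $x\sim e^{-|t|}$, so the integrand is $O(e^{-\delta|t|})$ and absolutely integrable. Then $I_2(d^s\omega)(\beta,b)=\lim_{R\to\infty}[\omega(T)]_{-R}^{R}=0$ for every $(\beta,b)$, which proves Equation~\eqref{eq:potential-kernel} on the smooth class.

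For the extension to $\omega\in x^\delta H^1_0(M;{}^0T^*M)$, I will argue as follows. Smooth compactly supported one-forms are dense in this weighted space. The map $\omega\mapsto d^s\omega$ is bounded from $x^\delta H^1_0$ into the weighted $L^2$ tensor space on which $I_2$ is bounded by the mapping results of~\cite{Eptaminitakis2022,EptaminitakisMonardZou2025} that were already invoked in the proof of Theorem~\ref{thm:first-variation-boundary-length}. So $I_2\circ d^s$ is a bounded operator that vanishes on a dense set, and therefore it vanishes identically. The main obstacle is this step: matching the weight in the source space of $I_2$ with the image of $d^s$ on $x^\delta H^1_0$. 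I would first check it in the zero frame, where $\nabla$ acts as a zero-differential operator that preserves $x^\delta$ weights, so that boundedness follows directly.

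Finally, for a smooth boundary-vanishing vector field $X$, I will use $\mathcal L_Xg_{\mathbb H}=2d^s(X^\flat)$. The one-form $X^\flat=g_{\mathbb H}(X,\cdot)$ has zero-frame norm $|X|_{g_{\mathbb H}}=|X|_{\bar g}/x=O(1)$ times the vanishing order of $X$ at the boundary. Since $X$ vanishes at $\partial M$, this gives $|X^\flat|_{g_{\mathbb H}}=O(x^{\delta})$ for some $\delta>0$, and $X^\flat$ lies in the admissible class. Combining this with Theorem~\ref{thm:first-variation-boundary-length} gives $D\mathcal B_{g_{\mathbb H}}[k]=\tfrac12I_2k=I_2(d^sX^\flat)=0$. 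This is consistent with the boundary-fixing diffeomorphism invariance in Proposition~\ref{prop:endpoint-gauge}.
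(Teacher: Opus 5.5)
Your integration by parts along each reference geodesic and your density extension are the paper's argument. The paper integrates $\frac{d}{ds}\,\omega(T)$ and cites Lemma~2.4 of Eptaminitakis--Monard--Zou for the weighted case. Your absolute-integrability check adds rigor the paper leaves implicit.

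You have, however, put the weighted-case difficulty in the wrong place. That $d^s$ preserves the weight $x^\delta$ is trivial. What needs checking is that $I_2$ is bounded on $x^\delta L^2$ for every $\delta>0$, including $\delta<1/2$; the space quoted in Theorem~\ref{thm:first-variation-boundary-length} is only $x^{1/2}L^2$. This follows from Santaló's formula and Cauchy--Schwarz. Since $x=1/\cosh d(0,\cdot)$, one has $x(\gamma(s))\leq 1/\cosh s$ along every geodesic, with $s=0$ at the point closest to the origin. Hence $\int_\gamma x^{2\delta}\,ds\leq\int_{\mathbb R}(\cosh s)^{-2\delta}\,ds<\infty$ uniformly in $\gamma$.

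The genuine gap is in your last step. A smooth vector field that vanishes on $\partial M$ in general satisfies only $|X|_{\bar g}=O(x)$, so $|X^\flat|_{g_{\mathbb H}}=|X|_{\bar g}/x=O(1)$, not $O(x^\delta)$. Your inference that boundary vanishing gives $|X^\flat|_{g_{\mathbb H}}=O(x^\delta)$ is therefore false, and $X^\flat(T)$ need not tend to zero.

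Here is a concrete counterexample. Take $X=\chi\,\rho\partial_\rho$ in the collar \eqref{eq:hyperbolic-geodesic-normal-form}, with $\chi=1$ near $\partial M$. Then $X^\flat=\chi\,d\rho/\rho$. Along every geodesic, $X^\flat(T)=\rho^{-1}d\rho/ds$ tends to $-1$ at the outgoing end and $+1$ at the incoming end. Hence $I_2(d^sX^\flat)=-2$, and $\tfrac12 I_2(\mathcal L_Xg_{\mathbb H})=-2\neq0$ on every geodesic. Indeed $\mathcal L_Xg_{\mathbb H}\sim-2\rho^{-2}d\theta^2$ at the boundary, so $g_{\mathbb H}+t\mathcal L_Xg_{\mathbb H}$ has conformal infinity $(1-2t)d\theta^2$. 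This path leaves the class on which $\mathcal B$ is defined.

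The final claim therefore needs $X$ to vanish to order strictly greater than one, so that $X^\flat$ has positive boundary order in the zero frame. That is how the paper uses the notion elsewhere: the class $X=O(\rho^{1+\eta})$ of Appendix~\ref{app:analytic-geometry}, and the ``sufficient positive boundary order'' of Proposition~\ref{prop:hessian-gauge-covariance}. Under that hypothesis $X^\flat$ is admissible in the first part directly, and no separate norm computation is needed.

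You should also require enough decay that $k=\mathcal L_Xg_{\mathbb H}$ lies in the domain of Theorem~\ref{thm:first-variation-boundary-length}, for instance the bound \eqref{eq:first-variation-decay}. Only then can you write $D\mathcal B_{g_{\mathbb H}}[k]=\tfrac12 I_2k$.
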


\begin{proof}
Along a complete reference geodesic,
\begin{equation}
  (d^s\omega)(T,T)
  =(\nabla_T\omega)(T)
  =\frac{d}{ds}\bigl(\omega(T)\bigr),
  \label{eq:potential-total-derivative}
\end{equation}
where $\nabla_TT=0$ was used. Positive boundary decay makes the endpoint values vanish, so integration proves Equation~\eqref{eq:potential-kernel}. The weighted extension is Lemma~2.4 of \cite{EptaminitakisMonardZou2025}. It also follows by density from polyhomogeneous tensors of positive boundary order.
\end{proof}

\subsection{Gauge reduction and the iterated transverse-traceless representative}
\label{subsec:gauge-reduction}

The correct analytic source space is expressed using the zero-cotangent bundle ${}^0T^*\overline M$, locally spanned by $dx/x$ and $d\theta/x$. For a vector bundle $E$ built from ${}^0T^*\overline M$, write
\begin{equation}
  x^\delta H_0^s(M;E)
  =\left\{f:x^{-\delta}f\in H_0^s(M;E)\right\}.
  \label{eq:weighted-zero-sobolev}
\end{equation}
Here $H_0^s$ is the Sobolev scale generated by zero-vector fields, and $H_0^0=L^2(M,dV_{g_{\mathbb H}})$. We use the distinguished weight $\delta=1/2$. Compactly supported smooth tensors lie in this space, as do normalizable perturbations that vanish sufficiently rapidly at conformal infinity.

Theorem 1 of Ref.~\cite{EptaminitakisMonardZou2025}, specialized to
$m=2$, $s=0$, and $\delta=1/2$, gives the following decomposition.

\begin{theorem}[Iterated transverse-traceless decomposition]
\label{thm:itt-decomposition}
Every
\begin{equation}
  k\in x^{1/2}L^2
  \bigl(M;S^2({}^0T^*M)\bigr)
  \label{eq:rank-two-weighted-source}
\end{equation}
can be written uniquely as
\begin{equation}
  k=d^s\omega+\phi g_{\mathbb H}+k^{\mathrm{tt}},
  \label{eq:rank-two-itt-decomposition}
\end{equation}
where
\begin{equation}
\begin{split}
  &\omega\in x^{1/2}H_0^1(M;{}^0T^*M),
  \qquad
  \phi\in x^{1/2}L^2(M),\\
  &k^{\mathrm{tt}}
  \in x^{1/2}L^2
  \bigl(M;S^2({}^0T^*M)\bigr),
  \qquad
  \delta_{g_{\mathbb H}}k^{\mathrm{tt}}=0,
  \qquad
  \operatorname{tr}_{g_{\mathbb H}}k^{\mathrm{tt}}=0.
  \label{eq:tt-conditions}
\end{split}
\end{equation}
Moreover, there is a constant $C_{\mathrm{itt}}$ such that
\begin{equation}
  \norm{\omega}_{x^{1/2}H_0^1}
  +\norm{\phi}_{x^{1/2}L^2}
  +\norm{k^{\mathrm{tt}}}_{x^{1/2}L^2}
  \leq C_{\mathrm{itt}}
  \norm{k}_{x^{1/2}L^2}.
  \label{eq:itt-decomposition-estimate}
\end{equation}
This decomposition also preserves polyhomogeneous regularity with positive boundary order.
\end{theorem}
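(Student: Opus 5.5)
The statement is Theorem~1 of Ref.~\cite{EptaminitakisMonardZou2025} specialized to $m=2$, $s=0$, $\delta=1/2$. My plan is therefore to verify that the specialization is legitimate and to outline the mechanism, so that the later uses of the estimate~\eqref{eq:itt-decomposition-estimate} are transparent. The decomposition proceeds in two stages. First, the potential part $d^s\omega$ is split off by solving a weighted elliptic problem. Second, the trace part $\phi g_{\mathbb H}$ is split off from the solenoidal remainder.

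For the first stage, I would consider the operator $\delta_{g_{\mathbb H}}d^s$ acting on one-forms in $x^{1/2}H_0^1(M;{}^0T^*M)$. This is a second-order elliptic $0$-differential operator. Its indicial roots at $x=0$ determine the weights $\delta$ for which it is an isomorphism between $x^{\delta}H_0^{1}$ and $x^{\delta}H_0^{-1}$. The first task is to check, from Ref.~\cite{EptaminitakisMonardZou2025}, that $\delta=1/2$ lies in the admissible window. I expect this to hold because $1/2$ is the $L^2$-symmetric weight on the disk, and injectivity there follows from a Weitzenböck identity using the negative curvature. Given the isomorphism, $\omega$ is defined as the solution of $\delta d^s\omega=\delta k$, and $k-d^s\omega$ is then divergence-free.

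For the second stage, I would split the trace. Since $\delta(\phi g)=-d\phi$ is not zero, the trace cannot simply be removed; this is the reason the decomposition is called \emph{iterated}. The fix is to solve for $\omega$ and $\phi$ jointly, imposing $\delta(k-d^s\omega-\phi g)=0$ and $\operatorname{tr}(k-d^s\omega-\phi g)=0$. The trace equation gives $\phi=\tfrac12\operatorname{tr}(k-d^s\omega)$. Substituting this into the divergence equation yields a modified elliptic system for $\omega$, namely $\delta d^s\omega-\tfrac12 d\,\delta\omega$ up to sign. On surfaces this is the conformal Killing operator composed with its adjoint. The conformal Killing operator has trivial $L^2$-kernel on the hyperbolic disk in the admissible weights, which I would cite. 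The same Fredholm and isomorphism theory then applies.

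Uniqueness reduces to showing that if $d^s\omega+\phi g+k^{\mathrm{tt}}=0$, all three terms vanish. Taking the trace-free part gives $(d^s\omega)^{\circ}=-k^{\mathrm{tt}}$. Pairing with $k^{\mathrm{tt}}$, integrating by parts (the boundary terms vanish at weight $1/2$), and using $\delta k^{\mathrm{tt}}=0$ gives $k^{\mathrm{tt}}=0$. The conformal Killing injectivity then forces $\omega=0$, and hence $\phi=0$. The bound~\eqref{eq:itt-decomposition-estimate} follows from the bounded inverse together with continuity of the trace and of $d^s$ on these spaces.

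Preservation of polyhomogeneity follows from the standard $0$-calculus parametrix: the inverse of an elliptic $0$-operator maps polyhomogeneous data to polyhomogeneous solutions, with exponents shifted by the indicial roots. Positive boundary order is retained because these roots exceed the order of the data.

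The main obstacle is the weighted invertibility at exactly $\delta=1/2$, which requires the indicial computation. I would take this step directly from Ref.~\cite{EptaminitakisMonardZou2025} rather than rederive it.
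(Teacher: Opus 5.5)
Your proposal is correct and takes the same route as the paper, which simply invokes Theorem~1 of Eptaminitakis--Monard--Zou with $m=2$, $s=0$, $\delta=1/2$. Your sketch of what that theorem supplies is sound: the joint solve through the conformal Killing operator and its adjoint with $\phi=\tfrac12\operatorname{tr}(k-d^s\omega)$, and uniqueness by pairing with $k^{\mathrm{tt}}$ together with the absence of decaying conformal Killing fields.

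One correction to your heuristic. With $H_0^0=L^2(M,dV_{g_{\mathbb H}})$ the self-dual weight is $\delta=0$, not $\delta=1/2$. The Weitzenb\"ock coercivity therefore gives invertibility at $\delta=0$, and reaching $\delta=1/2$ genuinely requires the indicial-root Fredholm window: injectivity at $1/2$ alone would not exclude a cokernel at weight $-1/2$. That window is precisely the input you rightly take from the reference.
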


The first term in Equation~\eqref{eq:rank-two-itt-decomposition} is pure gauge. The remaining tensor
\begin{equation}
  k^{\mathrm{itt}}
  :=\phi g_{\mathbb H}+k^{\mathrm{tt}}
  \label{eq:itt-representative}
\end{equation}
will be called the iterated transverse-traceless representative. We denote the Hilbert space of these representatives by
\begin{equation}
  \mathcal X_{\mathrm{itt}}
  =\left\{\phi g_{\mathbb H}+k^{\mathrm{tt}}:
  \phi\in x^{1/2}L^2(M),\;
  k^{\mathrm{tt}}\in x^{1/2}L^2(S^2({}^0T^*M)),\;
  \delta_{g_{\mathbb H}} k^{\mathrm{tt}}
  =\operatorname{tr}_{g_{\mathbb H}}k^{\mathrm{tt}}=0\right\}.
  \label{eq:itt-source-space}
\end{equation}
In rank two, its two summands have a direct geometric meaning. The scalar $\phi$ changes the local line element isotropically, whereas $k^{\mathrm{tt}}$ is a divergence-free shear. Their X-ray data occupy orthogonal sectors on the Poincar\'e disk. This orthogonality is the principal advantage of the iterated transverse-traceless gauge over an arbitrary representative.

Define the quotient source space
\begin{equation}
  \mathcal X
  =\frac{x^{1/2}L^2
  \bigl(M;S^2({}^0T^*M)\bigr)}
  {d^s\bigl(x^{1/2}H_0^1(M;{}^0T^*M)\bigr)}
  \label{eq:source-quotient}
\end{equation}
with norm
\begin{equation}
  \norm{[k]}_{\mathcal X}^2
  =\norm{\phi}_{x^{1/2}L^2}^2
  +\norm{k^{\mathrm{tt}}}_{x^{1/2}L^2}^2,
  \label{eq:source-quotient-norm}
\end{equation}
where $k^{\mathrm{itt}}=\phi g_{\mathbb H}+k^{\mathrm{tt}}$ is the unique representative from Theorem~\ref{thm:itt-decomposition}. The estimate in Equation~\eqref{eq:itt-decomposition-estimate} makes this norm equivalent to the ordinary quotient norm and identifies $\mathcal X$ isomorphically with $\mathcal X_{\mathrm{itt}}$. Corollary~\ref{cor:diffeomorphism-gauge} gives
\begin{equation}
  I_2k=I_2k^{\mathrm{itt}}
  =I_0\phi+I_2k^{\mathrm{tt}},
  \label{eq:transform-itt-splitting}
\end{equation}
where $I_0\phi=\int_\gamma\phi\,ds$ and the identity $I_2(\phi g_{\mathbb H})=I_0\phi$ follows from $g_{\mathbb H}(T,T)=1$.

\subsection{Range conditions and stable reconstruction}
\label{subsec:range-reconstruction}

The range of $I_2$ is easiest to describe using a distinguished orthogonal decomposition of $L^2_+(\mathcal G_{\mathbb H})$. Following the data-space notation of Ref.~\cite{EptaminitakisMonardZou2025}, in particular its Equation (2.17) and Theorem 2, let
\begin{equation}
  I_{p,q}=I_q\bigl[z^p(dz)^q\bigr],
  \qquad p\geq0,
  \qquad q\geq1,
  \label{eq:distinguished-tensor-data}
\end{equation}
where $(dz)^q$ denotes the symmetric $q$-fold product. Set
\begin{equation}
  \widehat I_{p,q}
  =\frac{I_{p,q}}{\norm{I_{p,q}}_{L^2(\mathcal G_{\mathbb H})}}.
  \label{eq:normalized-tensor-data}
\end{equation}
For $q\geq1$, define the closed sector
\begin{equation}
  \mathcal E_{2q}
  =\overline{\operatorname{span}}^{L^2}
  \left\{
    \widehat I_{p,2q},
    \overline{\widehat I_{p,2q}}:p\geq0
  \right\}.
  \label{eq:even-tensor-sector}
\end{equation}
The same data-space theorem gives the orthogonal decomposition
\begin{equation}
  L^2_+(\mathcal G_{\mathbb H})
  =\overline{I_0\bigl(x^{1/2}L^2(M)\bigr)}^{L^2}
  \mathbin{\widehat{\oplus}}
  \mathop{\widehat{\bigoplus}}_{q=1}^{\infty}\mathcal E_{2q}.
  \label{eq:full-even-data-decomposition}
\end{equation}
Let $\Pi_0$ and $\Pi_{2q}$ be the associated $L^2$-orthogonal projections.

The closure in the scalar term is important. The actual transform range is generally not closed in the raw $L^2$ topology; spectral decay is part of the inverse problem. A stable metric reconstruction therefore requires a source-adapted topology rather than an unqualified $L^2$ projection.

For an orthonormal family $(e_p)_{p\geq0}$, define
\begin{equation}
  h^{1/2}(e_p)
  =\left\{
  \sum_{p=0}^{\infty}c_pe_p:
  \sum_{p=0}^{\infty}(p+1)|c_p|^2<\infty
  \right\}.
  \label{eq:h-half-space}
\end{equation}
Set
\begin{equation}
\begin{split}
  \mathcal H_0
  &:=I_0\bigl(x^{1/2}L^2(M)\bigr),\\
  \mathcal H_{2q}
  &:=h^{1/2}\bigl(\widehat I_{p,2q}:p\geq0\bigr)
  \mathbin{\widehat{\oplus}}
  h^{1/2}\bigl(\overline{\widehat I_{p,2q}}:p\geq0\bigr),
  \qquad q\geq1.
  \label{eq:sector-hilbert-spaces}
\end{split}
\end{equation}
The scalar space is equipped with the intrinsic norm
\begin{equation}
  \norm{I_0\phi}_{\mathcal H_0}
  :=\norm{\phi}_{x^{1/2}L^2(M)},
  \label{eq:scalar-intrinsic-norm}
\end{equation}
which is well defined by scalar injectivity. As a set, equivalently,
\begin{equation}
  \mathcal H_0
  =\overline{I_0\bigl(x^{1/2}L^2(M)\bigr)}^{L^2}
  \cap H_{T,D,+}^{1/2}(\mathcal G_{\mathbb H}),
  \label{eq:scalar-intrinsic-range}
\end{equation}
where $H_{T,D,+}^{1/2}$ is the Dirichlet anisotropic Sobolev space of~\cite{EptaminitakisMonardZou2026}. Equation~\eqref{eq:scalar-intrinsic-range} separates the scalar moment conditions, encoded by the first factor, from the necessary spectral decay.

Theorem 3 and Remark 2.11 of Ref.~\cite{EptaminitakisMonardZou2025},
specialized to $n=1$ and $\delta=1/2$, together with the scalar mapping
theorem of Ref.~\cite{EptaminitakisMonardZou2026}, give the complete
rank-two range characterization in the normalization used here.

\begin{lemma}[Rank-two range criterion of Eptaminitakis, Monard, and Zou]
\label{thm:rank-two-range}
A function $a\in L^2_+(\mathcal G_{\mathbb H})$ belongs to
\begin{equation}
  I_2\left(x^{1/2}L^2
  \bigl(M;S^2({}^0T^*M)\bigr)\right)
  \label{eq:rank-two-transform-range}
\end{equation}
if and only if all of the following conditions hold.
\begin{enumerate}[label=\textnormal{(\roman*)},leftmargin=2.4em]
  \item The higher tensor moments vanish,
  \begin{equation}
    \Pi_{2q}a=0
    \qquad\text{for every }q\geq2.
    \label{eq:higher-moment-vanishing}
  \end{equation}
  \item The rank-two sector has the spectral decay
  \begin{equation}
    \Pi_2a\in\mathcal H_2.
    \label{eq:rank-two-spectral-decay}
  \end{equation}
  \item The scalar sector obeys
  \begin{equation}
    \Pi_0a\in\mathcal H_0.
    \label{eq:scalar-spectral-decay}
  \end{equation}
\end{enumerate}
When these conditions hold, there is a unique iterated transverse-traceless representative $k^{\mathrm{itt}}=\phi g_{\mathbb H}+k^{\mathrm{tt}}$ satisfying $I_2k^{\mathrm{itt}}=a$.
\end{lemma}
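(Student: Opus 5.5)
The lemma is a specialization of Theorem 3 and Remark 2.11 of Ref.~\cite{EptaminitakisMonardZou2025} and the scalar mapping theorem of Ref.~\cite{EptaminitakisMonardZou2026}. I would therefore prove it by reduction through the iterated transverse-traceless decomposition, not from scratch. By Theorem~\ref{thm:itt-decomposition} and Equation~\eqref{eq:transform-itt-splitting}, the range of $I_2$ on $x^{1/2}L^2(M;S^2({}^0T^*M))$ equals $I_0(x^{1/2}L^2(M))+I_2(\mathcal X_{\mathrm{itt}}^{\mathrm{tt}})$, where the second summand uses only transverse-traceless tensors, since the gauge part is annihilated by Corollary~\ref{cor:diffeomorphism-gauge}. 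The proof then has three parts: locate each piece in the data decomposition~\eqref{eq:full-even-data-decomposition}, characterize each piece inside its sector, and prove uniqueness.

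For necessity, I would show that $I_0\phi$ lies in the closed scalar sector and $I_2k^{\mathrm{tt}}$ lies in $\mathcal E_2$. On the Poincar\'e disk, divergence-free traceless two-tensors are locally $f(z)\,dz^2$ plus its conjugate with $f$ holomorphic, as holomorphic quadratic differentials. Expanding $f=\sum_p c_pz^p$ therefore writes $I_2k^{\mathrm{tt}}$ as a combination of $I_{p,2}$ and its conjugates, which span $\mathcal E_2$. Consequently $\Pi_{2q}a=0$ for $q\geq2$, which is condition~(i). Orthogonality of the scalar sector to $\mathcal E_2$ comes from Equation~\eqref{eq:full-even-data-decomposition}. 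It implies $\Pi_0a=I_0\phi$ and $\Pi_2a=I_2k^{\mathrm{tt}}$. Condition~(iii) is then the definition of $\mathcal H_0$.

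For condition~(ii), I need the norm computation relating $\|k^{\mathrm{tt}}\|_{x^{1/2}L^2}^2$ to a weighted sum of $|c_p|^2$, namely the $L^2$ norm of $z^p\,dz^2$ with weight $x^{-1}$, compared with $\|I_{p,2}\|_{L^2}^2$. The ratio should scale like $(p+1)$. This would make the map $k^{\mathrm{tt}}\mapsto I_2k^{\mathrm{tt}}$ an isomorphism onto $\mathcal H_2$, giving necessity of~(ii) and, read backwards, sufficiency. I expect this coefficient asymptotic to be the main obstacle. I would take it directly from Remark 2.11 of Ref.~\cite{EptaminitakisMonardZou2025} in the case $n=1$, $\delta=1/2$, rather than recompute Beta-function integrals, and only check that the normalization $\widehat I_{p,q}$ matches.

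For sufficiency, given $a$ satisfying (i)--(iii), I would set $\phi=I_0^{-1}\Pi_0a$, which is well defined by scalar injectivity and the intrinsic norm~\eqref{eq:scalar-intrinsic-norm}. I would then build $k^{\mathrm{tt}}$ from the $h^{1/2}$ coefficients of $\Pi_2a$ using the isomorphism above. The resulting $k^{\mathrm{itt}}$ satisfies $I_2k^{\mathrm{itt}}=\Pi_0a+\Pi_2a=a$ by (i). For uniqueness, if $I_2(\phi g_{\mathbb H}+k^{\mathrm{tt}})=0$, then sector orthogonality forces $I_0\phi=0$ and $I_2k^{\mathrm{tt}}=0$ separately. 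Scalar injectivity gives $\phi=0$, and injectivity of the coefficient map gives $k^{\mathrm{tt}}=0$.
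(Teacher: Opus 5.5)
Your proposal is correct and follows essentially the same route as the paper. You reduce via the iterated transverse-traceless decomposition, place $I_0\phi$ in the scalar sector and $I_2k^{\mathrm{tt}}$ in $\mathcal E_2$, use the cited isomorphisms onto $\mathcal H_0$ and $\mathcal H_2$ for necessity and reconstruction, and get uniqueness from sector orthogonality plus injectivity. The only difference is that you sketch why transverse-traceless tensors are holomorphic quadratic differentials with data in $\mathcal E_2$, and why the $h^{1/2}$ weight comes from a $(p+1)$ coefficient ratio. The paper simply cites both facts from Eptaminitakis--Monard--Zou.
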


\begin{proof}
The cited range characterization places $I_0\phi$ in the scalar sector and $I_2k^{\mathrm{tt}}$ in $\mathcal E_2$, with orthogonal images. The scalar mapping theory identifies the actual scalar range with $\mathcal H_0$~\cite{EptaminitakisMonardZou2026}, while the transverse-traceless transform is a homeomorphism onto $\mathcal H_2$~\cite{EptaminitakisMonardZou2025}. No component in $\mathcal E_{2q}$ with $q\geq2$ can be produced by a two-tensor. Conversely, conditions~\textnormal{(ii)} and~\textnormal{(iii)} reconstruct $k^{\mathrm{tt}}$ and $\phi$, respectively, and their sum gives the desired representative. Uniqueness follows from injectivity on the iterated transverse-traceless subspace.
\end{proof}

Condition~\eqref{eq:higher-moment-vanishing} admits a fully boundary-based form. Define
\begin{equation}
\begin{split}
  W_{p,q}^{+}(a)
  &=\ip{a}{\widehat I_{p,2q}}_{L^2(\mathcal G_{\mathbb H})},\\
  W_{p,q}^{-}(a)
  &=\ip{a}{\overline{\widehat I_{p,2q}}}_{L^2(\mathcal G_{\mathbb H})},
  \qquad p\geq0,\quad q\geq2.
  \label{eq:moment-witnesses}
\end{split}
\end{equation}
Then
\begin{equation}
  \Pi_{2q}a=0\ \text{for all }q\geq2
  \quad\Longleftrightarrow\quad
  W_{p,q}^{\pm}(a)=0\ \text{for all }p\geq0,\ q\geq2.
  \label{eq:moment-witness-equivalence}
\end{equation}
These are the continuum counterparts of the left-null relations $y^{\mathsf T}q=0$ in Section~\ref{sec:discrete}. Each witness combines intervals distributed around the boundary and annihilates every local rank-two metric response.

To obtain a closed geometric subspace, fix $\tau>1/2$ and define the regular data Hilbert space
\begin{equation}
  \mathcal Y^\tau
  =\mathcal H_0
  \mathbin{\widehat{\oplus}}
  \mathop{\widehat{\bigoplus}}_{q=1}^{\infty}
  \mathcal H_{2q},
  \label{eq:adapted-data-space}
\end{equation}
with norm
\begin{equation}
  \norm{a}_{\mathcal Y^\tau}^2
  =\norm{\Pi_0a}_{\mathcal H_0}^2
  +\sum_{q=1}^{\infty}(1+q)^{2\tau}
  \norm{\Pi_{2q}a}_{\mathcal H_{2q}}^2.
  \label{eq:adapted-data-norm}
\end{equation}
The real subspace, denoted $\mathcal Y^\tau_{\mathbb R}$, consists of real $S_A$-even data. Every metric response generated by $\mathcal X$ belongs to $\mathcal H_0\widehat\oplus\mathcal H_2$ and hence to $\mathcal Y^\tau$ for every finite $\tau$. For general code data, membership in $\mathcal Y^\tau$ is a substantive continuum regularity hypothesis: it controls spectral decay inside each tensor sector and the tail over tensor order. No additional bulk field equation enters this kinematic criterion. The threshold $\tau>1/2$ is convenient for the quadratic sector couplings encountered in Section~\ref{sec:nonlinear-consistency}; the linear splitting itself only requires $\tau\geq0$.

The restriction to $\mathcal Y^\tau$ retains the low-sector conditions in Lemma~\ref{thm:rank-two-range}. A datum in raw $L^2_+$ that fails Equation~\eqref{eq:rank-two-spectral-decay} or Equation~\eqref{eq:scalar-spectral-decay} is already nongeometric, but the failure has no positive distance in the raw $L^2$ norm because the relevant transform ranges are not closed there. The adapted space selects the regular class on which the inverse problem is well posed; within that class, the remaining obstruction is the closed higher-sector component.

Define
\begin{equation}
  \mathcal Y_{\mathrm{geo}}
  =\mathcal H_0\mathbin{\widehat{\oplus}}\mathcal H_2,
  \qquad
  \mathcal Y_{\mathrm{ng}}
  =\mathop{\widehat{\bigoplus}}_{q=2}^{\infty}\mathcal H_{2q}.
  \label{eq:geometric-nongeometric-splitting}
\end{equation}
They are closed orthogonal subspaces of $\mathcal Y^\tau$, with projections
\begin{equation}
  \Pi_{\mathrm{geo}}=\Pi_0+\Pi_2,
  \qquad
  \Pi_{\mathrm{ng}}=\sum_{q=2}^{\infty}\Pi_{2q}.
  \label{eq:geometric-data-projectors}
\end{equation}

\begin{theorem}[Linear proto-area geometrizability]
\label{thm:linear-geometrizability}
Let $a_1\in\mathcal Y^\tau_{\mathbb R}$ be the first derivative of the background-subtracted, length-normalized proto-area for all boundary intervals. The following statements are equivalent.
\begin{enumerate}[label=\textnormal{(\alph*)},leftmargin=2.4em]
  \item There exists a metric deformation class $[k]\in\mathcal X$ such that
  \begin{equation}
    a_1=D\mathcal B_{g_{\mathbb H}}[k].
    \label{eq:linear-geometric-realization}
  \end{equation}
  \item The nongeometric projection vanishes,
  \begin{equation}
    \Pi_{\mathrm{ng}}a_1=0.
    \label{eq:linear-projector-condition}
  \end{equation}
  \item Every higher moment witness vanishes,
  \begin{equation}
    W_{p,q}^{\pm}(a_1)=0
    \qquad\text{for all }p\geq0,\ q\geq2.
    \label{eq:all-witnesses-vanish}
  \end{equation}
\end{enumerate}
If these conditions hold, the unique iterated transverse-traceless representative is
\begin{equation}
  k_1^{\mathrm{itt}}
  =\mathscr R_2a_1,
  \qquad
  \mathscr R_2
  :=2\left(I_2\big|_{\mathcal X_{\mathrm{itt}}}\right)^{-1}
  \Pi_{\mathrm{geo}},
  \label{eq:metric-reconstruction-operator}
\end{equation}
where $\mathcal X_{\mathrm{itt}}$ is identified with the representatives in Equation~\eqref{eq:itt-representative}. Suppose in addition that $a_1$ has positive-order smooth or polyhomogeneous regularity in the sense of Theorem~\ref{thm:itt-decomposition}. Then $k_1^{\mathrm{itt}}$ has the same boundary regularity, and
\begin{equation}
  g_t=g_{\mathbb H}+t k_1^{\mathrm{itt}}
  \label{eq:integrated-linear-metric-family}
\end{equation}
is, for sufficiently small $|t|$, a smooth or polyhomogeneous simple asymptotically hyperbolic metric family with conformal infinity $d\theta^2$.
\end{theorem}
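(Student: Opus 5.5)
The plan is to prove the cycle (a)$\Rightarrow$(b)$\Leftrightarrow$(c)$\Rightarrow$(a), then handle uniqueness, the reconstruction formula, and the integrated metric family.

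For (a)$\Rightarrow$(b), take a representative $k$ of $[k]$. Theorem~\ref{thm:first-variation-boundary-length} gives $a_1=\tfrac12 I_2k$. Corollary~\ref{cor:diffeomorphism-gauge} and Equation~\eqref{eq:transform-itt-splitting} then give $a_1=\tfrac12(I_0\phi+I_2k^{\mathrm{tt}})$. By Lemma~\ref{thm:rank-two-range}, the two terms lie in $\mathcal H_0$ and $\mathcal H_2$, so $a_1\in\mathcal Y_{\mathrm{geo}}$ and $\Pi_{\mathrm{ng}}a_1=0$.

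The equivalence (b)$\Leftrightarrow$(c) is Equation~\eqref{eq:moment-witness-equivalence}. The only point to check is that $\Pi_{2q}$ on $\mathcal Y^\tau$ agrees with the $L^2$ projection onto $\mathcal E_{2q}$. This holds because the sector decomposition \eqref{eq:adapted-data-space} is built from the same orthonormal families $\widehat I_{p,2q}$ and $\overline{\widehat I_{p,2q}}$. Hence $\Pi_{2q}a_1=0$ exactly when every coefficient $W^\pm_{p,q}(a_1)$ vanishes, with $a_1\in L^2_+$ since $\mathcal Y^\tau\subset L^2_+$.

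For (b)$\Rightarrow$(a), membership in $\mathcal Y^\tau$ gives $\Pi_0a_1\in\mathcal H_0$ and $\Pi_2a_1\in\mathcal H_2$, and (b) kills the higher sectors. Applying Lemma~\ref{thm:rank-two-range} to $2a_1$ yields a unique $k^{\mathrm{itt}}\in\mathcal X_{\mathrm{itt}}$ with $I_2k^{\mathrm{itt}}=2a_1$. Its class realizes (a). Uniqueness in $\mathcal X$ follows because two realizing classes differ by an element of $\ker I_2$ on $\mathcal X_{\mathrm{itt}}$, which is trivial by the injectivity clause of the lemma. Since $\Pi_{\mathrm{geo}}a_1=a_1$, this gives $k_1^{\mathrm{itt}}=2(I_2|_{\mathcal X_{\mathrm{itt}}})^{-1}\Pi_{\mathrm{geo}}a_1=\mathscr R_2a_1$. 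Boundedness of $\mathscr R_2$ from $\mathcal Y^\tau$ follows from the homeomorphism statements for $I_0$ onto $\mathcal H_0$ (intrinsic norm) and $I_2$ onto $\mathcal H_2$ quoted in the lemma's proof.

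The final clause is the main obstacle, because it needs regularity transfer from data to source. The plan is:
\begin{enumerate}
\item Take any regular (smooth or polyhomogeneous, positive-order) source $k$ with $\tfrac12I_2k=a_1$, which the regularity hypothesis on $a_1$ is meant to supply.
\item Apply the regularity-preserving part of Theorem~\ref{thm:itt-decomposition} to conclude that $k^{\mathrm{itt}}$ is regular.
\item Use uniqueness to identify this $k^{\mathrm{itt}}$ with $k_1^{\mathrm{itt}}$.
\end{enumerate}
If the hypothesis is instead read as a statement about the data alone, I would try the mapping properties of the reconstruction in Refs.~\cite{EptaminitakisMonardZou2025,EptaminitakisMonardZou2026} on polyhomogeneous classes.

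Given regularity, positive boundary order gives $k_1^{\mathrm{itt}}=O(x^{\eta})$ in zero-frames. Hence $g_t=g_{\mathbb H}+tk_1^{\mathrm{itt}}$ is positive definite for small $|t|$, and $x^2g_t|_{T\partial M}=d\theta^2$ is unchanged, so the metric is asymptotically hyperbolic with the same conformal infinity. For simplicity, I would argue that $g_t$ is a small perturbation of $g_{\mathbb H}$ in the weighted $C^2$ topology. This keeps curvature negative near the boundary and in the interior, and simplicity is open in that topology near the hyperbolic metric.
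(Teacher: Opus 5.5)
Your proposal is correct and follows essentially the same route as the paper: the identity $D\mathcal B_{g_{\mathbb H}}=\tfrac12 I_2$ together with Lemma~\ref{thm:rank-two-range} gives (a)$\Leftrightarrow$(b), Equation~\eqref{eq:moment-witness-equivalence} gives (b)$\Leftrightarrow$(c), injectivity on the ITT slice gives uniqueness and the formula for $\mathscr R_2$, and the final clause follows from a bounded compactified endomorphism, vanishing tangential compactification and $C^2$-smallness, which give positivity, the conformal infinity and simplicity. For the regularity transfer, the paper uses your second reading. It cites the decomposition and reconstruction results of Refs.~\cite{EptaminitakisMonardZou2025,MazzeoMonard2024} to pass positive-order regularity from the data to $k_1^{\mathrm{itt}}$, rather than presupposing a regular source $k$.
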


\begin{proof}
Theorem~\ref{thm:first-variation-boundary-length} identifies the differential of the boundary-length map with $I_2/2$. Lemma~\ref{thm:rank-two-range} then gives the equivalence of \textnormal{(a)} and \textnormal{(b)} on the regular data space. Equation~\eqref{eq:moment-witness-equivalence} gives the equivalence of \textnormal{(b)} and \textnormal{(c)}. The inverse in Equation~\eqref{eq:metric-reconstruction-operator} exists and is unique on the iterated transverse-traceless representative by Lemma~\ref{thm:rank-two-range}. The factor $2$ compensates for the factor $1/2$ in Equation~\eqref{eq:boundary-map-differential}. Preservation of positive-order smooth or polyhomogeneous regularity follows from the decomposition and reconstruction results in~\cite{EptaminitakisMonardZou2025,MazzeoMonard2024}. Such a perturbation has vanishing tangential compactification at $x=0$, so Equation~\eqref{eq:integrated-linear-metric-family} preserves the representative in Equation~\eqref{eq:round-conformal-infinity}. The compactified endomorphism $g_{\mathbb H}^{-1}k_1^{\mathrm{itt}}$ is bounded, so positivity follows for sufficiently small $|t|$. Simplicity persists after shrinking the parameter interval because the perturbation is small in the required $C^2$ topology.
\end{proof}

\begin{corollary}[Continuum sewing of central area operators]
\label{cor:continuum-central-sewing}
Let $\mathcal Z$ be a finite-dimensional commutative
$C^*$-algebra and let
$L\in\mathcal Y^\tau_{\mathbb R}\otimes\mathcal Z_{\mathrm{sa}}$
be an operator-valued interval datum.  There is a common
operator-valued local metric perturbation
$K\in\mathcal X\otimes\mathcal Z_{\mathrm{sa}}$ such that
\begin{equation}
 L=\frac12(I_2\otimes\operatorname{Id}_{\mathcal Z})K
 \label{eq:continuum-operator-valued-xray}
\end{equation}
if and only if
\begin{equation}
 (\Pi_{\mathrm{ng}}\otimes
 \operatorname{Id}_{\mathcal Z})L=0.
 \label{eq:continuum-operator-valued-range}
\end{equation}
Equivalently, every central-sector coefficient of $L$ obeys all scalar
rank-two range conditions.  When the condition holds, the unique
iterated transverse-traceless representative is
$(\mathscr R_2\otimes\operatorname{Id}_{\mathcal Z})L$.
\end{corollary}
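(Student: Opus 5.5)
The plan mirrors the proof of Theorem~\ref{thm:operator-valued-central-sewing}, with the left-kernel test replaced by the continuum range test of Theorem~\ref{thm:linear-geometrizability}. First I will write $\mathcal Z=\bigoplus_{\alpha=1}^m\mathbb CP_\alpha$ with minimal projections $P_\alpha$. Then $\mathcal Z_{\mathrm{sa}}$ is the real span of the $P_\alpha$, and every element of $\mathcal Y^\tau_{\mathbb R}\otimes\mathcal Z_{\mathrm{sa}}$ has a unique expansion $L=\sum_\alpha \ell^{(\alpha)}\otimes P_\alpha$ with $\ell^{(\alpha)}\in\mathcal Y^\tau_{\mathbb R}$. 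Likewise every $K\in\mathcal X\otimes\mathcal Z_{\mathrm{sa}}$ is $\sum_\alpha [k^{(\alpha)}]\otimes P_\alpha$. Since $\mathcal Z$ is finite dimensional, these algebraic tensor products are already complete, and no topological subtleties arise.

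Because $I_2\otimes\operatorname{Id}$, $\Pi_{\mathrm{ng}}\otimes\operatorname{Id}$ and $\mathscr R_2\otimes\operatorname{Id}$ act coefficientwise on this basis, the operator equation \eqref{eq:continuum-operator-valued-xray} is equivalent to the $m$ scalar equations $\ell^{(\alpha)}=\tfrac12 I_2k^{(\alpha)}$. By Theorem~\ref{thm:first-variation-boundary-length}, these read $\ell^{(\alpha)}=D\mathcal B_{g_{\mathbb H}}[k^{(\alpha)}]$. Similarly, \eqref{eq:continuum-operator-valued-range} is equivalent to $\Pi_{\mathrm{ng}}\ell^{(\alpha)}=0$ for every $\alpha$, because the $P_\alpha$ are linearly independent.

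Applying the equivalence (a)$\Leftrightarrow$(b) of Theorem~\ref{thm:linear-geometrizability} to each coefficient then yields both directions of the corollary. It also gives the equivalent formulation through all scalar range conditions, namely the witnesses $W^{\pm}_{p,q}$ via \eqref{eq:moment-witness-equivalence}. For uniqueness, the same theorem makes $k^{(\alpha),\mathrm{itt}}=\mathscr R_2\ell^{(\alpha)}$ the unique ITT representative in each sector. Summing over $\alpha$ gives $(\mathscr R_2\otimes\operatorname{Id}_{\mathcal Z})L$, and injectivity on $\mathcal X_{\mathrm{itt}}\otimes\mathcal Z_{\mathrm{sa}}$ follows coefficientwise.

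The only point needing care is the reality and self-adjointness bookkeeping. I must check that the coefficients of a self-adjoint $L$ in the $P_\alpha$ basis are real-valued data in $\mathcal Y^\tau_{\mathbb R}$, and that $\mathscr R_2$ maps real even data to real tensors. The first holds because the $P_\alpha$ are self-adjoint and independent. The second holds because $\Pi_{\mathrm{geo}}$ commutes with complex conjugation: the sectors $\mathcal E_{2q}$ are conjugation-invariant by \eqref{eq:even-tensor-sector}, and $I_2$ is real. This yields $K$ self-adjoint-valued. I expect no genuine obstacle beyond this verification, since the corollary is a direct tensorization of Theorem~\ref{thm:linear-geometrizability}.
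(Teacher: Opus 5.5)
Your proposal is correct and matches the paper's proof: expand $L$ (and $K$) in the minimal central projections $P_\alpha$ of $\mathcal Z$, then apply Theorem~\ref{thm:linear-geometrizability} to each coefficient. Your extra check that $\Pi_{\mathrm{geo}}$ and $\mathscr R_2$ preserve real-valued data fills in bookkeeping that the paper leaves implicit.
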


\begin{proof}
Expand $L$ in the minimal central projections of $\mathcal Z$ and
apply Theorem~\ref{thm:linear-geometrizability} coefficientwise.
\end{proof}

The inverse is quantitatively stable in the adapted topology.

\begin{theorem}[Stable reconstruction and best geometric approximation]
\label{thm:stable-continuum-reconstruction}
There are constants $0<c\leq C<\infty$ such that, for every $[k]\in\mathcal X$,
\begin{equation}
  c\norm{[k]}_{\mathcal X}
  \leq
  \norm{\frac12I_2k}_{\mathcal Y_{\mathrm{geo}}}
  \leq
  C\norm{[k]}_{\mathcal X}.
  \label{eq:continuum-stability-two-sided}
\end{equation}
Consequently,
\begin{equation}
  \norm{\mathscr R_2a-\mathscr R_2\widetilde a}_{\mathcal X}
  \leq c^{-1}
  \norm{\Pi_{\mathrm{geo}}(a-\widetilde a)}_{\mathcal Y^\tau}.
  \label{eq:continuum-reconstruction-stability}
\end{equation}
For arbitrary $a\in\mathcal Y^\tau$, the datum $\Pi_{\mathrm{geo}}a$ is the unique nearest geometric datum, and $\mathscr R_2a$ is its unique iterated transverse-traceless metric reconstruction.
\end{theorem}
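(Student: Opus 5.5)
The two-sided estimate will be reduced to a sectorwise statement, using the orthogonality in Equation~\eqref{eq:transform-itt-splitting} together with the isomorphism results quoted in Lemma~\ref{thm:rank-two-range}. By Corollary~\ref{cor:diffeomorphism-gauge} and Theorem~\ref{thm:itt-decomposition}, we have $I_2k=I_0\phi+I_2k^{\mathrm{tt}}$ for the unique representative $k^{\mathrm{itt}}=\phi g_{\mathbb H}+k^{\mathrm{tt}}$, and the two summands lie in $\mathcal H_0$ and $\mathcal H_2$ respectively. Since $\mathcal Y_{\mathrm{geo}}=\mathcal H_0\widehat\oplus\mathcal H_2$ carries the norm from Equation~\eqref{eq:adapted-data-norm} with weight $(1+1)^{2\tau}$ on the $q=1$ sector, we get
\begin{equation*}
  \norm{\tfrac12 I_2k}_{\mathcal Y_{\mathrm{geo}}}^2
  =\tfrac14\norm{I_0\phi}_{\mathcal H_0}^2
  +\tfrac14 2^{2\tau}\norm{I_2k^{\mathrm{tt}}}_{\mathcal H_2}^2 .
\end{equation*}
The scalar term equals $\tfrac14\norm{\phi}^2_{x^{1/2}L^2}$ exactly, by the intrinsic norm in Equation~\eqref{eq:scalar-intrinsic-norm}. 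For the transverse-traceless term, the homeomorphism of $I_2$ from the transverse-traceless subspace onto $\mathcal H_2$ (the citation used in the proof of Lemma~\ref{thm:rank-two-range}) supplies constants $0<c_2\le C_2$ with $c_2\norm{k^{\mathrm{tt}}}\le\norm{I_2k^{\mathrm{tt}}}_{\mathcal H_2}\le C_2\norm{k^{\mathrm{tt}}}$. Combining these bounds with Equation~\eqref{eq:source-quotient-norm} gives Equation~\eqref{eq:continuum-stability-two-sided}, with $c=\tfrac12\min(1,2^{\tau}c_2)$ and $C=\tfrac12\max(1,2^{\tau}C_2)$.

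For the stability estimate, apply the lower bound to $k=\mathscr R_2a-\mathscr R_2\widetilde a$. By construction, $\tfrac12 I_2\mathscr R_2 a=\Pi_{\mathrm{geo}}a$, because $\mathscr R_2=2(I_2|_{\mathcal X_{\mathrm{itt}}})^{-1}\Pi_{\mathrm{geo}}$ and $I_2|_{\mathcal X_{\mathrm{itt}}}$ is a bijection onto $\mathcal Y_{\mathrm{geo}}$ by Lemma~\ref{thm:rank-two-range}. Linearity then yields Equation~\eqref{eq:continuum-reconstruction-stability}. The nearest-point statement follows because $\mathcal Y_{\mathrm{geo}}$ and $\mathcal Y_{\mathrm{ng}}$ are closed orthogonal subspaces of the Hilbert space $\mathcal Y^\tau$ with $\Pi_{\mathrm{geo}}+\Pi_{\mathrm{ng}}=\Id$. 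Every geometric datum $b$ satisfies $\norm{a-b}^2=\norm{\Pi_{\mathrm{geo}}a-b}^2+\norm{\Pi_{\mathrm{ng}}a}^2$, so the minimum is attained uniquely at $b=\Pi_{\mathrm{geo}}a$. Uniqueness of its representative is the injectivity on $\mathcal X_{\mathrm{itt}}$ from Lemma~\ref{thm:rank-two-range}.

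The main obstacle is the transverse-traceless sector. We need the homeomorphism onto $\mathcal H_2$ to hold in exactly the $h^{1/2}$ normalization of Equation~\eqref{eq:sector-hilbert-spaces}, with $\mathcal H_2$ the closed graph-norm space, so that the open mapping theorem applies. I would first check that $I_2$ restricted to transverse-traceless tensors is bounded and bijective onto $\mathcal H_2$. Continuity of the inverse then follows from the open mapping theorem, avoiding the need to track explicit constants from Ref.~\cite{EptaminitakisMonardZou2025}. A second point to check is that $\phi g_{\mathbb H}$ and $k^{\mathrm{tt}}$ contribute nothing to each other's sectors. This is the orthogonality of the images already built into Lemma~\ref{thm:rank-two-range}.
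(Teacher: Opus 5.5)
Your proposal is correct and follows essentially the same route as the paper. Both arguments use the orthogonality of the scalar and transverse-traceless images, the cited homeomorphisms onto $\mathcal H_0$ and $\mathcal H_2$ (with the open mapping theorem supplying the inverse bound), the lower bound applied to $\mathscr R_2a-\mathscr R_2\widetilde a$ via $\tfrac12 I_2\mathscr R_2=\Pi_{\mathrm{geo}}$, and the orthogonal Hilbert splitting for the nearest-point claim. Where you go further is in making the constants explicit: the scalar sector is isometric by the intrinsic norm, and the $\mathcal H_2$ sector carries the weight $2^{2\tau}$.
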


\begin{proof}
The scalar transform is a homeomorphism from its weighted source space to $\mathcal H_0$~\cite{EptaminitakisMonardZou2026}; the transverse-traceless transform is a homeomorphism from $x^{1/2}L^2(S^2_{\mathrm{tt}})$ to $\mathcal H_2$~\cite{EptaminitakisMonardZou2025}. Their images are orthogonal, and the source norm in Equation~\eqref{eq:source-quotient-norm} is the corresponding direct-sum norm. The open mapping theorem gives Equation~\eqref{eq:continuum-stability-two-sided}. Applying the inverse to $\Pi_{\mathrm{geo}}(a-\widetilde a)$ gives Equation~\eqref{eq:continuum-reconstruction-stability}. Since Equation~\eqref{eq:geometric-nongeometric-splitting} is an orthogonal Hilbert decomposition, $\Pi_{\mathrm{geo}}a$ is the unique metric projection of $a$ onto $\mathcal Y_{\mathrm{geo}}$.
\end{proof}

\subsection{Nongeometric distance, finite witnesses, and code limits}
\label{subsec:continuum-witnesses}

The closed splitting in Equation~\eqref{eq:geometric-nongeometric-splitting} defines the linear nongeometric defect
\begin{equation}
  D_{\mathrm{geo}}^{(1)}(a)
  :=\operatorname{dist}_{\mathcal Y^\tau}
  \bigl(a,\mathcal Y_{\mathrm{geo}}\bigr)
  =\norm{\Pi_{\mathrm{ng}}a}_{\mathcal Y^\tau}.
  \label{eq:linear-nongeometric-defect}
\end{equation}
Writing the moment coefficients as in Equation~\eqref{eq:moment-witnesses}, one obtains the exact expansion
\begin{equation}
\begin{split}
  \bigl(D_{\mathrm{geo}}^{(1)}(a)\bigr)^2
  =\sum_{q=2}^{\infty}(1+q)^{2\tau}
  \sum_{p=0}^{\infty}(p+1)
  \left(
  \abs{W_{p,q}^{+}(a)}^2
  +\abs{W_{p,q}^{-}(a)}^2
  \right).
  \label{eq:defect-moment-expansion}
\end{split}
\end{equation}
The defect $D_{\mathrm{geo}}^{(1)}$ is the norm of the complete collection of forbidden tensor moments, rather than a test of one selected relation.

\begin{proposition}[Optimal witness and robustness]
\label{prop:optimal-continuum-witness}
Let $a\in\mathcal Y^\tau_{\mathbb R}$ and suppose $D_{\mathrm{geo}}^{(1)}(a)>0$. Then
\begin{equation}
  Y_\star(a)
  =\frac{\Pi_{\mathrm{ng}}a}
  {\norm{\Pi_{\mathrm{ng}}a}_{\mathcal Y^\tau}}
  \label{eq:optimal-continuum-witness-vector}
\end{equation}
annihilates every geometric datum and satisfies
\begin{equation}
  \ip{Y_\star(a)}{a}_{\mathcal Y^\tau}
  =D_{\mathrm{geo}}^{(1)}(a).
  \label{eq:optimal-continuum-witness-value}
\end{equation}
It is the unique unit vector in $\mathcal Y_{\mathrm{ng}}\cap\mathcal Y^\tau_{\mathbb R}$ maximizing the real pairing with $a$. Moreover, for every $\widetilde a\in\mathcal Y^\tau_{\mathbb R}$,
\begin{equation}
  \abs{D_{\mathrm{geo}}^{(1)}(a)
  -D_{\mathrm{geo}}^{(1)}(\widetilde a)}
  \leq\norm{a-\widetilde a}_{\mathcal Y^\tau}.
  \label{eq:defect-lipschitz}
\end{equation}
If measured data $a_{\mathrm{obs}}=a_{\mathrm{true}}+\eta$ obey
$\norm{\eta}_{\mathcal Y^\tau}\leq\varepsilon$ and
\begin{equation}
  D_{\mathrm{geo}}^{(1)}(a_{\mathrm{obs}})>\varepsilon,
  \label{eq:noise-certification-threshold}
\end{equation}
then $a_{\mathrm{true}}$ is not linearly geometrizable.
\end{proposition}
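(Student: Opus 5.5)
The plan is to treat everything as Hilbert-space geometry of the closed orthogonal splitting $\mathcal Y^\tau=\mathcal Y_{\mathrm{geo}}\widehat\oplus\mathcal Y_{\mathrm{ng}}$ from Equation~\eqref{eq:geometric-nongeometric-splitting}, restricted to the real subspace $\mathcal Y^\tau_{\mathbb R}$. The first thing to check is that $\Pi_{\mathrm{ng}}$ maps $\mathcal Y^\tau_{\mathbb R}$ into itself. The sectors $\mathcal E_{2q}$ are spanned by $\widehat I_{p,2q}$ together with their complex conjugates, so each $\Pi_{2q}$ commutes with complex conjugation. Each $\Pi_{2q}$ also preserves $S_A$-evenness, since it acts inside $L^2_+$. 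Hence $Y_\star(a)$ is a real even datum. Annihilation of geometric data is then immediate: for $b\in\mathcal Y_{\mathrm{geo}}$ we have $\ip{Y_\star(a)}{b}_{\mathcal Y^\tau}=0$ by orthogonality of the sectors in the weighted norm~\eqref{eq:adapted-data-norm}.

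For the value identity~\eqref{eq:optimal-continuum-witness-value}, I will write $a=\Pi_{\mathrm{geo}}a+\Pi_{\mathrm{ng}}a$. Pairing with $Y_\star(a)$ gives $\norm{\Pi_{\mathrm{ng}}a}_{\mathcal Y^\tau}$, which equals $D^{(1)}_{\mathrm{geo}}(a)$ by~\eqref{eq:linear-nongeometric-defect}. For optimality, take any unit $Y\in\mathcal Y_{\mathrm{ng}}\cap\mathcal Y^\tau_{\mathbb R}$. Orthogonality gives $\ip{Y}{a}=\ip{Y}{\Pi_{\mathrm{ng}}a}$, and Cauchy--Schwarz bounds this real pairing by $\norm{\Pi_{\mathrm{ng}}a}$. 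Uniqueness comes from the equality case of Cauchy--Schwarz in a real inner product space: equality forces $Y$ to be a positive multiple of $\Pi_{\mathrm{ng}}a$, and the unit-norm condition then fixes $Y=Y_\star(a)$. This mirrors part (iv) of Theorem~\ref{thm:regular-geometrizability}.

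The Lipschitz bound~\eqref{eq:defect-lipschitz} follows from the reverse triangle inequality together with $\norm{\Pi_{\mathrm{ng}}}\le1$, which holds because $\Pi_{\mathrm{ng}}$ is an orthogonal projection in $\mathcal Y^\tau$. Equivalently, a distance function to a closed set is $1$-Lipschitz.

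For the certification statement I argue by contraposition. Suppose $a_{\mathrm{true}}$ were linearly geometrizable. Theorem~\ref{thm:linear-geometrizability} then gives $\Pi_{\mathrm{ng}}a_{\mathrm{true}}=0$, so $D^{(1)}_{\mathrm{geo}}(a_{\mathrm{true}})=0$. The Lipschitz bound would then yield $D^{(1)}_{\mathrm{geo}}(a_{\mathrm{obs}})\le\norm{\eta}\le\varepsilon$, which contradicts~\eqref{eq:noise-certification-threshold}.

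The only delicate point is the realness and parity invariance of $\Pi_{\mathrm{ng}}$, together with the fact that the inner product is taken in the weighted $\mathcal Y^\tau$ norm rather than raw $L^2$. I handle both by noting that the weights $(1+q)^{2\tau}$ and $(p+1)$ are real and that they act diagonally on the sectors and on the conjugate-paired bases.
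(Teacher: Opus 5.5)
Your proposal is correct and follows essentially the paper's argument. Both use orthogonality of $\mathcal Y_{\mathrm{ng}}$ to $\mathcal Y_{\mathrm{geo}}$, Cauchy--Schwarz and its equality case for optimality and uniqueness, one-Lipschitz continuity of the distance (equivalently $\norm{\Pi_{\mathrm{ng}}}\le 1$), and the bound $D_{\mathrm{geo}}^{(1)}(a_{\mathrm{true}})\geq D_{\mathrm{geo}}^{(1)}(a_{\mathrm{obs}})-\varepsilon>0$, which you phrase as a contraposition; your extra check that $\Pi_{\mathrm{ng}}$ preserves real $S_A$-even data is a detail the paper leaves implicit.
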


\begin{proof}
The vector in Equation~\eqref{eq:optimal-continuum-witness-vector} belongs to $\mathcal Y_{\mathrm{ng}}$, so it is orthogonal to $\mathcal Y_{\mathrm{geo}}$. Cauchy--Schwarz shows that no unit vector in $\mathcal Y_{\mathrm{ng}}$ has a larger pairing with $a$, and equality fixes the direction of the residual. The distance to a nonempty closed subset of a normed space is one-Lipschitz, giving Equation~\eqref{eq:defect-lipschitz}. Applying it to $a_{\mathrm{obs}}$ and $a_{\mathrm{true}}$ yields
\begin{equation}
  D_{\mathrm{geo}}^{(1)}(a_{\mathrm{true}})
  \geq D_{\mathrm{geo}}^{(1)}(a_{\mathrm{obs}})-\varepsilon>0.
\end{equation}
\end{proof}

Only finitely many moments can be resolved in a numerical or experimental calculation. For integers $P\geq0$ and $Q\geq2$, define
\begin{equation}
\begin{split}
  \bigl(D_{\mathrm{geo}}^{(1)}[P,Q](a)\bigr)^2
  ={}&\sum_{q=2}^{Q}(1+q)^{2\tau}
  \sum_{p=0}^{P}(p+1)\\
  &\times\left(
  \abs{W_{p,q}^{+}(a)}^2
  +\abs{W_{p,q}^{-}(a)}^2
  \right).
  \label{eq:truncated-continuum-defect}
\end{split}
\end{equation}
These defects increase monotonically as the resolution window grows and satisfy
\begin{equation}
  \lim_{P,Q\to\infty}
  D_{\mathrm{geo}}^{(1)}[P,Q](a)
  =D_{\mathrm{geo}}^{(1)}(a).
  \label{eq:truncated-defect-convergence}
\end{equation}
A single nonzero truncated witness already excludes a local metric realization; increasing $(P,Q)$ can only strengthen the certificate.

A continuum limit of finite codes must approximate the geometric operator, not merely a list of interval values. Let
\begin{equation}
  A:=D\mathcal B_{g_{\mathbb H}}
  =\frac12I_2:\mathcal X\longrightarrow\mathcal Y_{\mathrm{geo}}.
  \label{eq:continuum-geometric-operator}
\end{equation}
For a sequence of regular minimum-cut chambers, let $M_N:\mathbb R^{E_N}\to\mathbb R^{\calI_N}$ be the chamber incidence maps of Section~\ref{sec:discrete}. Equip the finite source and data spaces with positive-definite inner products and pass to the visible source space
\begin{equation}
  \mathcal X_N=(\ker M_N)^\perp,
  \qquad \overline M_N=M_N|_{\mathcal X_N}.
  \label{eq:finite-visible-source}
\end{equation}
Assume these inner products are chosen together with isometric source
embeddings and bounded data interpolation maps
\begin{equation}
  E_N:\mathcal X_N\longrightarrow\mathcal X,
  \qquad
  J_N:\mathbb R^{\calI_N}\longrightarrow\mathcal Y^\tau.
  \label{eq:finite-continuum-embeddings}
\end{equation}
The map $E_N$ interprets visible edge deformations as tensor-field approximants, while $J_N$ incorporates interpolation and quadrature on interval space.

\begin{theorem}[Stable graph-to-continuum limit]
\label{thm:finite-code-continuum-limit}
Let $P_N=E_NE_N^*$ be the orthogonal projection onto $E_N\mathcal X_N$. Assume
\begin{equation}
  P_N\xrightarrow[N\to\infty]{\mathrm{s}}\operatorname{Id}_{\mathcal X}
  \label{eq:source-density-assumption}
\end{equation}
and the operator consistency estimate
\begin{equation}
  \varepsilon_N
  :=\norm{J_N\overline M_N-AE_N}_{\mathcal X_N\to\mathcal Y^\tau}
  \longrightarrow0.
  \label{eq:operator-consistency-assumption}
\end{equation}
Set $B_N=J_N\overline M_N$, let $\mathcal R_N=\Ran B_N$, and let $\Pi_N$ be the orthogonal projection of $\mathcal Y^\tau$ onto $\mathcal R_N$. Then the following conclusions hold.
\begin{enumerate}[label=\textnormal{(\roman*)},leftmargin=2.4em]
  \item For all sufficiently large $N$,
  \begin{equation}
    \norm{B_Nx}_{\mathcal Y^\tau}
    \geq(c-\varepsilon_N)\norm{x}_{\mathcal X_N}
    \geq\frac c2\norm{x}_{\mathcal X_N},
    \label{eq:uniform-discrete-inf-sup}
  \end{equation}
  where $c$ is the lower stability constant in Equation~\eqref{eq:continuum-stability-two-sided}.
  \item The finite geometric projectors converge strongly,
  \begin{equation}
    \Pi_Na\longrightarrow\Pi_{\mathrm{geo}}a
    \qquad\text{for every }a\in\mathcal Y^\tau.
    \label{eq:finite-projector-convergence}
  \end{equation}
  More specifically, the leakage of a normalized finite geometric response into the continuum nongeometric sector satisfies
  \begin{equation}
    \sup_{0\neq z\in\mathcal R_N}
    \frac{\norm{\Pi_{\mathrm{ng}}z}_{\mathcal Y^\tau}}
         {\norm{z}_{\mathcal Y^\tau}}
    \leq\frac{\varepsilon_N}{c-\varepsilon_N}.
    \label{eq:finite-range-leakage}
  \end{equation}
  \item If sampled proto-area data $q_N\in\mathbb R^{\calI_N}$ obey
  \begin{equation}
    J_Nq_N\longrightarrow a_1
    \qquad\text{in }\mathcal Y^\tau,
    \label{eq:code-data-convergence}
  \end{equation}
  then
  \begin{align}
    \operatorname{dist}_{\mathcal Y^\tau}(J_Nq_N,\mathcal R_N)
    &\longrightarrow D_{\mathrm{geo}}^{(1)}(a_1),
    \label{eq:defect-continuum-convergence}\\
    J_Nq_N-\Pi_NJ_Nq_N
    &\longrightarrow\Pi_{\mathrm{ng}}a_1,
    \label{eq:residual-continuum-convergence}\\
    E_NB_N^\dagger J_Nq_N
    &\longrightarrow\mathscr R_2a_1
    \qquad\text{in }\mathcal X.
    \label{eq:reconstruction-continuum-convergence}
  \end{align}
  Here $B_N^\dagger$ is the Moore--Penrose inverse.
\end{enumerate}
In particular, a nonzero continuum defect is eventually detected by the finite chamber problems, while geometrizable data have convergent minimum-norm reconstructions.
\end{theorem}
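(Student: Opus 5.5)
The argument is a Galerkin-type comparison between $B_N=J_N\overline M_N$ and the continuum operator $A=\tfrac12 I_2$ restricted to the embedded subspaces $E_N\mathcal X_N$. All three parts follow from two ingredients: the two-sided stability in Theorem~\ref{thm:stable-continuum-reconstruction}, applied on $\mathcal X$ with values in the orthogonal summand $\mathcal Y_{\mathrm{geo}}$, and the consistency bound $\norm{B_N-AE_N}\le\varepsilon_N$.

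\textbf{Part (i).} Since $E_N$ is an isometry, for $x\in\mathcal X_N$ we have
\[
\norm{B_Nx}\ge\norm{AE_Nx}-\varepsilon_N\norm{x}\ge(c-\varepsilon_N)\norm{x}.
\]
Once $\varepsilon_N\le c/2$, this gives Equation~\eqref{eq:uniform-discrete-inf-sup}.

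\textbf{Leakage estimate.} Let $z=B_Nx\in\mathcal R_N$. Because $AE_Nx\in\mathcal Y_{\mathrm{geo}}$, we get $\Pi_{\mathrm{ng}}z=\Pi_{\mathrm{ng}}(B_N-AE_N)x$, so $\norm{\Pi_{\mathrm{ng}}z}\le\varepsilon_N\norm{x}$. Dividing by $\norm z\ge(c-\varepsilon_N)\norm x$ gives Equation~\eqref{eq:finite-range-leakage}.

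\textbf{Strong projector convergence.} I would show two things.

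First, every $b\in\mathcal Y_{\mathrm{geo}}$ is approximated by $\mathcal R_N$. Write $b=Ak$ with $k=A^{-1}b\in\mathcal X$, and take $x_N=E_N^*k$. Then $E_Nx_N=P_Nk\to k$ by Equation~\eqref{eq:source-density-assumption}, and
\[
\norm{B_Nx_N-b}\le\varepsilon_N\norm{k}+\norm{A}\,\norm{P_Nk-k}\to0.
\]
Hence $\norm{(I-\Pi_N)b}\le\norm{b-B_Nx_N}\to0$.

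Second, for $n\in\mathcal Y_{\mathrm{ng}}$ we need $\Pi_Nn\to0$. Writing $\Pi_Nn=z$, we have
\[
\norm{z}^2=\ip{z}{n}=\ip{\Pi_{\mathrm{ng}}z}{n}\le\frac{\varepsilon_N}{c-\varepsilon_N}\norm{z}\,\norm{n}
\]
by the leakage estimate, so $\norm z\to0$.

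Combining the two pieces through the decomposition $a=\Pi_{\mathrm{geo}}a+\Pi_{\mathrm{ng}}a$ gives Equation~\eqref{eq:finite-projector-convergence}. The projectors are uniformly bounded by $1$, which justifies the density-type steps.

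\textbf{Part (iii).} Set $a_N=J_Nq_N\to a_1$. Then
\[
\norm{\Pi_Na_N-\Pi_{\mathrm{geo}}a_1}\le\norm{a_N-a_1}+\norm{\Pi_Na_1-\Pi_{\mathrm{geo}}a_1}\to0,
\]
which gives the residual convergence~\eqref{eq:residual-continuum-convergence}. Taking norms yields the defect convergence~\eqref{eq:defect-continuum-convergence}, since $\mathrm{dist}(a_N,\mathcal R_N)=\norm{(I-\Pi_N)a_N}$.

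For the reconstruction, note that $B_N$ is injective on $\mathcal X_N$ with a lower bound, so $x_N:=B_N^\dagger a_N$ is the unique $x$ with $B_Nx=\Pi_Na_N$. Write $k_1=A^{-1}\Pi_{\mathrm{geo}}a_1$; this is the class whose ITT representative is $\mathscr R_2a_1$, because $A^{-1}=2(I_2|_{\mathcal X_{\mathrm{itt}}})^{-1}$ under the identification of $\mathcal X$ with $\mathcal X_{\mathrm{itt}}$. Then
\[
E_Nx_N-k_1=(E_Nx_N-P_Nk_1)+(P_Nk_1-k_1).
\]
The second term tends to zero by source density. For the first term, the source difference is $x_N-E_N^*k_1\in\mathcal X_N$, and part (i) gives
\[
\norm{x_N-E_N^*k_1}\le\frac2c\,\norm{B_Nx_N-B_NE_N^*k_1}.
\]
Here $B_Nx_N=\Pi_Na_N\to\Pi_{\mathrm{geo}}a_1=Ak_1$. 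Also $B_NE_N^*k_1=AP_Nk_1+O(\varepsilon_N)\to Ak_1$. Both terms therefore vanish in the limit.

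\textbf{Main obstacle.} The hardest step is the strong convergence of $\Pi_N$. A bare consistency estimate only controls $\mathcal R_N$ near $A(E_N\mathcal X_N)$; density in the geometric sector needs the strong convergence $P_N\to\operatorname{Id}$, while vanishing on the nongeometric sector needs the leakage bound. Both use the uniform inf-sup constant from part (i), which is why that part comes first.
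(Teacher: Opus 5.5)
Your proposal is correct and follows essentially the same route as the paper's proof in Appendix~\ref{appB:continuum-approximation}. The shared steps are the inf--sup bound from the continuum lower constant $c$ and $\varepsilon_N$, the leakage estimate, approximation of $\mathcal Y_{\mathrm{geo}}$ through $E_N^*$ and source density, and the sup-over-unit-vectors bound that kills $\Pi_N$ on $\mathcal Y_{\mathrm{ng}}$. The only small variation is in the reconstruction step: you apply the discrete inf--sup to $x_N-E_N^*k_1$, whereas the paper bounds $\norm{AE_Nx_N-\Pi_{\mathrm{geo}}a_1}$ using the uniform bound on $x_N$ and then invokes the continuum lower bound of $A$; both arguments work.
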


\begin{proof}[Proof sketch]
The lower bound for $A$ and the operator error in Equation~\eqref{eq:operator-consistency-assumption} give the uniform discrete inf--sup estimate. They also bound the angle between $\mathcal R_N$ and $\mathcal Y_{\mathrm{geo}}$. Source density then approximates every geometric response, while the angle bound suppresses leakage from the nongeometric sector. These two estimates imply strong convergence of the projectors, defects, and pseudoinverse reconstructions. The full operator proof, constructive sufficient conditions for $E_N$ and $J_N$, and quantitative rates are given in Appendix~\ref{appB:continuum-approximation}.
\end{proof}

The hypotheses of Theorem~\ref{thm:finite-code-continuum-limit} separate two logically distinct requirements. Strong density in Equation~\eqref{eq:source-density-assumption} says that visible edge modes resolve every continuum metric deformation. Operator consistency in Equation~\eqref{eq:operator-consistency-assumption} says that their cut responses converge uniformly to geodesic integration. Pointwise agreement on a sparse list of intervals implies neither condition. Under these hypotheses, however, the finite left-null residual converges to the complete continuum obstruction rather than to a discretization artifact.

Two exact examples display the separation. Let $\varphi\in C_c^\infty(M;\mathbb R)$ and set
\begin{equation}
  k_{\mathrm{conf}}=2\varphi g_{\mathbb H}.
  \label{eq:conformal-geometric-mode}
\end{equation}
Then
\begin{equation}
  D\mathcal B_{g_{\mathbb H}}[k_{\mathrm{conf}}]
  =I_0\varphi,
  \qquad
  \Pi_{\mathrm{ng}}I_0\varphi=0.
  \label{eq:conformal-mode-data}
\end{equation}
This is a genuine local metric response. By contrast, define the real normalized fourth-order datum
\begin{equation}
  a_{\mathrm{ng}}
  =\frac{3^{-\tau}}{\sqrt2}
  \left(
  \widehat I_{0,4}+\overline{\widehat I_{0,4}}
  \right).
  \label{eq:explicit-nongeometric-mode}
\end{equation}
It belongs entirely to $\mathcal H_4$, has
\begin{equation}
  \norm{a_{\mathrm{ng}}}_{\mathcal Y^\tau}=1,
  \qquad
  D_{\mathrm{geo}}^{(1)}(a_{\mathrm{ng}})=1,
  \label{eq:explicit-nongeometric-defect}
\end{equation}
 and cannot be the first variation of any local metric on the stated slice. The obstruction is a rank mismatch in the complete boundary moment hierarchy and is independent of positivity and gauge choices.

\begin{figure}[t]
\centering
\begin{tikzpicture}[font=\small,>=Latex]
  \begin{scope}[xshift=0cm]
    \node at (-1.35,2.55) {\textbf{a}};
    \draw[line width=0.7pt] (0,1.25) circle (1.15);
    \fill[black!12] (0.15,1.15) circle (0.42);
    \draw[line width=0.8pt] (-0.93,1.93) .. controls (-0.28,0.75) and (0.44,0.62) .. (0.98,1.84);
    \draw[line width=0.8pt] (-1.08,1.02) .. controls (-0.28,1.58) and (0.42,1.63) .. (1.08,1.02);
    \fill (-0.93,1.93) circle (1.5pt);
    \fill (0.98,1.84) circle (1.5pt);
    \fill (-1.08,1.02) circle (1.5pt);
    \fill (1.08,1.02) circle (1.5pt);
    \node[align=center] at (0,-0.28) {one local tensor field\\probed by all intervals};
  \end{scope}

  \begin{scope}[xshift=4.45cm]
    \node at (-1.55,2.55) {\textbf{b}};
    \draw[rounded corners,fill=black!5] (-1.35,0.35) rectangle (1.35,2.25);
    \draw (-1.35,1.45)--(1.35,1.45);
    \draw (-1.35,0.92)--(1.35,0.92);
    \node at (0,1.86) {$\mathcal H_0$};
    \node at (0,1.18) {$\mathcal H_2$};
    \node at (0,0.62) {$\mathcal H_4\oplus\mathcal H_6\oplus\cdots$};
    \draw[decorate,decoration={brace,amplitude=4pt},line width=0.7pt]
      (1.47,2.23)--(1.47,0.94)
      node[midway,right=5pt,align=left] {$\mathcal Y_{\rm geo}$};
    \draw[decorate,decoration={brace,amplitude=4pt},line width=0.7pt]
      (1.47,0.90)--(1.47,0.37)
      node[midway,right=5pt,align=left] {$\mathcal Y_{\rm ng}$};
    \node[align=center] at (0,-0.28) {rank-two geometry occupies\\only the first two sectors};
  \end{scope}

  \begin{scope}[xshift=9.45cm]
    \node at (-1.45,2.55) {\textbf{c}};
    \draw[->] (-1.25,0.38)--(1.45,0.38) node[right] {$p$};
    \draw[->] (-1.25,0.38)--(-1.25,2.30) node[above] {$q$};
    \foreach \x in {-0.9,-0.55,-0.2,0.15,0.5,0.85,1.2}{
      \foreach \y in {0.65,0.95,1.25,1.55,1.85,2.15}{
        \fill (\x,\y) circle (1.2pt);
      }
    }
    \draw[dashed,rounded corners] (-1.05,1.08) rectangle (0.67,1.70);
    \node[align=center] at (0,-0.28) {finite witness window\\$0\leq p\leq P$, $2\leq q\leq Q$};
  \end{scope}
\end{tikzpicture}
\caption{Linear geometrizability on the hyperbolic disk. \textbf{(a)} Every boundary interval probes the same local metric perturbation along its reference geodesic. \textbf{(b)} The scalar and rank-two transverse-traceless sectors form the closed geometric range in the source-adapted data topology; sectors of order four and above are nongeometric for a metric perturbation. \textbf{(c)} A finite calculation measures a rectangular set of moment witnesses. Enlarging the window converges monotonically to the exact nongeometric defect.}
\label{fig:linear-geometrizability}
\end{figure}

Figure~\ref{fig:linear-geometrizability} summarizes the projection, reconstruction, and exclusion branches of the linear criterion.

For a given proto-area datum, the linear theory yields three outputs. The projection $\Pi_{\mathrm{geo}}a_1$ gives the closest local metric response, $\mathscr R_2a_1$ reconstructs its unique gauge-fixed tensor field, and $\Pi_{\mathrm{ng}}a_1$ measures the part that no local metric can reproduce. These are complete first-order outputs. Their continuation to a metric two-jet is controlled by the normal Hessian of the boundary-length map.

\section{Nonlinear consistency and the quadratic obstruction}
\label{sec:nonlinear-consistency}

The tensor X-ray range theorem determines the linearized space of admissible boundary-length data. Extremality removes geodesic displacement from the first variation, but not from the second. The forced Jacobi equation fixes this contribution and hence the normal acceleration of the length data. Its normal projection is defined directly on regular metric two-jets and does not require a full nonlinear Banach-manifold description of the boundary-length image. A proto-area path may therefore have a first derivative in $\mathcal Y_{\mathrm{geo}}$ and still fail to arise from any metric path. The resulting coefficientwise two-jet condition is independent of any bulk equation of motion.

Second-variation formulas for geodesic length and Jacobi fields are classical, as is the use of the linearized boundary-distance map in rigidity theory~\cite{Besse1987,PaternainSaloUhlmann2023}. Nonlinear boundary rigidity for negatively curved asymptotically hyperbolic surfaces establishes that the complete renormalized boundary distance determines the metric up to a boundary-fixing diffeomorphism~\cite{GrahamEtAl2019,Lefeuvre2020}. Those results address uniqueness once geometric data are supplied. Here the input is an arbitrary state-dependent proto-area two-jet. Membership in the metric two-jet image is decided by the normal Hessian, yielding gauge-invariant witnesses and a necessary and sufficient criterion for regular coefficient two-jets.

The analysis is kinematic: it determines whether interval data can be represented by a local Riemannian metric on the time-reflection-symmetric slice. No Einstein constraint is imposed. The conformal representative at infinity and the defining function from Section~\ref{subsec:renormalized-data} are held fixed. Metric variations are real and either compactly supported in $M$ or satisfy, for some $\eta>0$,
\begin{equation}
  \nabla^j h=O(x^{1+\eta}),
  \qquad 0\leq j\leq4,
  \label{eq:quadratic-decay-class}
\end{equation}
in a unit zero-frame. Every tensor entering a multilinear variation is assumed to satisfy this bound. It supplies the integrability needed for endpoint limits, integrations by parts, and differentiation under the renormalized geodesic integral. The formulas are first proved for compact support and then extended to Equation~\eqref{eq:quadratic-decay-class} by dominated convergence.

\subsection{Metric forcing and the normal Jacobi equation}
\label{subsec:forced-jacobi}

Fix an oriented reference geodesic $\gamma=\gamma_{\beta,b}$ and parametrize it by hyperbolic arclength $s\in\mathbb R$. Write
\begin{equation}
  T=\dot\gamma,
  \qquad
  \nabla_TT=0,
  \qquad
  g_{\mathbb H}(T,T)=1,
  \label{eq:reference-geodesic-frame}
\end{equation}
and let $N$ be the parallel unit normal chosen so that $(T,N)$ is positively oriented. We use the curvature convention
\begin{equation}
  R(V,T)T=K V
  \qquad\text{for }V\perp T,
  \label{eq:curvature-convention}
\end{equation}
so that $R(V,T)T=-V$ on $(M,g_{\mathbb H})$.

For a symmetric two-tensor $h$, let $C_h$ denote the variation of the Levi-Civita connection in the affine metric direction $h$,
\begin{equation}
  C_h(X,Y)
  =\left.\frac{d}{dt}\right|_{t=0}
  \nabla_X^{g_{\mathbb H}+th}Y.
  \label{eq:connection-variation-definition}
\end{equation}
The Koszul formula gives
\begin{equation}
\begin{split}
  2g_{\mathbb H}\bigl(C_h(X,Y),Z\bigr)
  ={}&(\nabla_Xh)(Y,Z)+(\nabla_Yh)(X,Z)\\
  &-(\nabla_Zh)(X,Y).
  \label{eq:connection-variation-formula}
\end{split}
\end{equation}
Only the normal component of $C_h(T,T)$ changes the unparametrized geodesic. Define
\begin{equation}
  \mathcal F_\gamma(h)
  :=\bigl[C_h(T,T)\bigr]^\perp
  =f_hN,
  \label{eq:metric-geodesic-forcing}
\end{equation}
where
\begin{equation}
  f_h
  =(\nabla_Th)(T,N)
  -\frac12(\nabla_Nh)(T,T).
  \label{eq:scalar-metric-forcing}
\end{equation}
The normal Jacobi operator is
\begin{equation}
  \mathcal J_\gamma
  =-D_s^2-R(\,\cdot\,,T)T.
  \label{eq:normal-jacobi-operator}
\end{equation}
On the hyperbolic disk,
\begin{equation}
  \mathcal J_\gamma(jN)=(-j''+j)N.
  \label{eq:hyperbolic-jacobi-operator}
\end{equation}

\begin{lemma}[Mixed metric--curve derivative]
\label{lem:mixed-metric-curve-derivative}
Let $\eta_r$ be a smooth variation of $\gamma$ through curves with fixed ideal endpoints, and let
\begin{equation}
  V=\left.\partial_r\eta_r\right|_{r=0}
  \label{eq:curve-variation-field}
\end{equation}
be normal to $\gamma$ and decay at both ends. Then
\begin{equation}
  \left.\frac{d}{dr}\right|_{r=0}
  \left[
  \frac12\int_{\eta_r}h(T_r,T_r)\,ds_r
  \right]
  =-\int_\gamma
  g_{\mathbb H}\bigl(\mathcal F_\gamma(h),V\bigr)\,ds.
  \label{eq:mixed-metric-curve-identity}
\end{equation}
\end{lemma}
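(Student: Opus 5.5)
This is a first-variation computation for the functional $E(\eta)=\frac12\int_\eta h(\dot\eta,\dot\eta)\,ds$ along a curve variation with fixed ideal endpoints. I will first establish the identity for compactly supported $h$ (or compactly supported $V$), where all boundary terms are trivially zero. The weighted class of Equation~\eqref{eq:quadratic-decay-class} will then follow by dominated convergence, as in the proof of Theorem~\ref{thm:first-variation-boundary-length}.

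\textbf{Setting up the variation.} Parametrize the family as $\eta(r,s)$, where $s$ is $g_{\mathbb H}$-arclength on $\gamma$ at $r=0$. Write $\eta_s=\partial_s\eta$ and take $V=\partial_r\eta|_{r=0}$ normal. Because the first variation of $g_{\mathbb H}$-length vanishes along a geodesic ($\nabla_TT=0$, $V\perp T$), we have $\partial_r|\eta_s|=g(D_rT,T)=g(D_sV,T)=\partial_s g(V,T)=0$ at $r=0$. The reparametrization and arclength factor therefore drop out at first order. Note that $h(T_r,T_r)\,ds_r=h(\eta_s,\eta_s)|\eta_s|^{-1}ds$, and at $r=0$ we have $|\eta_s|=1$. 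Using the symmetry lemma $D_r\eta_s=D_sV$, the derivative is
\begin{equation*}
\frac{d}{dr}\Big|_0 E=\int_\gamma\Bigl[\tfrac12(\nabla_Vh)(T,T)+h(D_sV,T)\Bigr]ds .
\end{equation*}

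\textbf{Integration by parts.} Next I rewrite $h(D_sV,T)=\partial_s\bigl(h(V,T)\bigr)-(\nabla_Th)(V,T)-h(V,D_sT)$, and the last term vanishes because $D_sT=0$. The total derivative integrates to zero, since $V$ decays at both ends (for compact support this is immediate). What remains is
\begin{equation*}
\int_\gamma\Bigl[\tfrac12(\nabla_Vh)(T,T)-(\nabla_Th)(T,V)\Bigr]ds .
\end{equation*}
Writing $V=vN$, the integrand equals $-v\bigl[(\nabla_Th)(T,N)-\tfrac12(\nabla_Nh)(T,T)\bigr]=-v f_h$. By Equations~\eqref{eq:metric-geodesic-forcing}--\eqref{eq:scalar-metric-forcing}, this is $-g_{\mathbb H}(\mathcal F_\gamma(h),V)$. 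As a consistency check I will compare with the Koszul formula~\eqref{eq:connection-variation-formula} at $X=Y=T$, $Z=N$, which gives $g(C_h(T,T),N)=(\nabla_Th)(T,N)-\frac12(\nabla_Nh)(T,T)$. This confirms that the normal projection is exactly $f_hN$.

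\textbf{Main obstacle: passing to the weighted class.} The hard step is justifying the boundary term $[h(V,T)]_{-\infty}^{\infty}=0$ and differentiating under the integral on the infinite geodesic. With $h=O(x^{1+\eta})$ in a unit zero-frame and $V$ decaying, $h(V,T)\to0$ at both ends. The integrands $(\nabla h)(T,T)\,v$ and $h(D_sV,T)$ are dominated by $x^{1+\eta}$ times bounded quantities, while $ds\sim|dx|/x$. This gives an integrable majorant $x^\eta|dx|$, uniformly in small $r$ once the variation's decay is uniform. If uniformity in $r$ is delicate, I will instead truncate to $\{x\ge\varepsilon\}$ with fixed truncation points as in Equation~\eqref{eq:renorm-via-fixed-points}. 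There I expect to show that the endpoint terms are $O(\varepsilon^{1+\eta})$ and then let $\varepsilon\downarrow0$.
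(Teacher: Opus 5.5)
Your proposal is correct and follows the paper's proof step for step. You differentiate $h(T_r,T_r)\,ds_r$ at $r=0$, drop the arclength term because $g_{\mathbb H}(D_sV,T)=0$, integrate $h(D_sV,T)$ by parts using $\nabla_TT=0$ and the decay, and identify $\frac12(\nabla_Vh)(T,T)-(\nabla_Th)(V,T)$ with $-g_{\mathbb H}(\mathcal F_\gamma(h),V)$ via the Koszul formula. Your dominated-convergence discussion only spells out the prove-for-compact-support-then-extend convention that the paper states before the lemma.
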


\begin{proof}
Differentiate the integrand using the hyperbolic connection. At $r=0$ one has
\begin{equation}
\begin{split}
  \left.\frac{d}{dr}\right|_{0}
  \left[h(T_r,T_r)\,ds_r\right]
  ={}&(\nabla_Vh)(T,T)\,ds
  +2h(D_sV,T)\,ds\\
  &-h(T,T)g_{\mathbb H}(D_sV,T)\,ds.
  \label{eq:metric-curve-differentiation}
\end{split}
\end{equation}
Because $V$ is normal and $\gamma$ is geodesic,
\begin{equation}
  g_{\mathbb H}(D_sV,T)
  =-g_{\mathbb H}(V,D_sT)=0.
\end{equation}
The decay of $V$ and $h$ permits integration by parts, giving
\begin{equation}
  \int_\gamma h(D_sV,T)\,ds
  =-\int_\gamma(\nabla_Th)(V,T)\,ds.
\end{equation}
Therefore the left-hand side of Equation~\eqref{eq:mixed-metric-curve-identity} equals
\begin{equation}
  \int_\gamma
  \left[
  \frac12(\nabla_Vh)(T,T)
  -(\nabla_Th)(V,T)
  \right]ds.
\end{equation}
Equations~\eqref{eq:connection-variation-formula} and~\eqref{eq:metric-geodesic-forcing} identify the bracket with
$-g_{\mathbb H}(\mathcal F_\gamma(h),V)$.
\end{proof}

Let $g_t=g_{\mathbb H}+th+O(t^2)$ and let $\gamma_t$ be the $g_t$-geodesic with the same ordered ideal endpoints as $\gamma$. Reparametrize $\gamma_t$ so that its variation field
\begin{equation}
  J_h=\left.\partial_t\gamma_t\right|_{t=0}
  \label{eq:metric-induced-geodesic-variation}
\end{equation}
is normal. This removes only a tangential reparametrization and does not change the image of the curve or its length.

The metric-forced equation below is the geodesic specialization of the
inhomogeneous Jacobi equation studied in
Refs.~\cite{GhoshMishra2016,GhoshMishra2018,Mosk2018}, with the
curvature convention and ideal-endpoint decay fixed above.

\begin{theorem}[Forced Jacobi equation]
\label{thm:forced-jacobi-equation}
For every $h$ in the class of Equation~\eqref{eq:quadratic-decay-class}, the normal displacement $J_h$ is the unique field decaying at both ideal endpoints that solves
\begin{equation}
  \mathcal J_\gamma J_h=\mathcal F_\gamma(h).
  \label{eq:forced-jacobi-equation}
\end{equation}
On the hyperbolic disk it is given explicitly by
\begin{equation}
  J_h(s)
  =\frac12N(s)
  \int_{-\infty}^{\infty}
  e^{-|s-r|}f_h(r)\,dr.
  \label{eq:forced-jacobi-green-solution}
\end{equation}
In particular,
\begin{equation}
  \mathcal J_\gamma^{-1}(s,r)
  =\frac12e^{-|s-r|}
  \label{eq:hyperbolic-jacobi-green-kernel}
\end{equation}
on the normal decaying sector.
\end{theorem}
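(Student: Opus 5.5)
The proof differentiates the geodesic equation of $g_t$ in $t$ along the family $\gamma_t$ with fixed ideal endpoints. I would first prove everything for compactly supported $h$ and then pass to the decay class of Equation~\eqref{eq:quadratic-decay-class} by dominated convergence.

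\emph{Derivation of the equation.} Write the geodesic equation for $g_t$ as $\nabla^{g_t}_{\dot\gamma_t}\dot\gamma_t=\lambda_t\dot\gamma_t$. The tangential multiplier $\lambda_t$ absorbs the non-arclength parametrization needed to keep $J_h$ normal. Differentiate at $t=0$ using the hyperbolic connection. Two terms appear.
\begin{itemize}
\item[] \emph{Variation of the connection.} Since $\nabla^{g_t}=\nabla+tC_h+O(t^2)$, this contributes $C_h(T,T)$.
\item[] \emph{Variation of the curve.} The standard commutation $D_tD_s\gamma=D_sD_t\gamma$ together with $D_tD_s-D_sD_t=R(\partial_t\gamma,\partial_s\gamma)$ contributes $D_s^2J_h+R(J_h,T)T$.
\end{itemize}
Taking normal components removes $\dot\lambda T$ and any tangential parts, which gives
\[
D_s^2J_h+R(J_h,T)T+[C_h(T,T)]^\perp=0,
\]
that is, $\mathcal J_\gamma J_h=\mathcal F_\gamma(h)$. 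The formula for $f_h$ follows from the Koszul expression in Equation~\eqref{eq:connection-variation-formula} with $X=Y=T$ and $Z=N$.

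An alternative, variational route uses Lemma~\ref{lem:mixed-metric-curve-derivative}. Since $\gamma_t$ is critical for the $g_t$-length with fixed ends, the mixed derivative of the first-variation functional in the directions $(h,V)$ must vanish for all decaying normal $V$. The second variation of $g_{\mathbb H}$-length in direction $V$ is $\int g(\mathcal J_\gamma J_h,V)\,ds$. The lemma supplies $-\int g(\mathcal F_\gamma(h),V)\,ds$. Cancellation then gives the weak form of the equation, and elliptic regularity of the ODE gives the strong form.

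\emph{Decay, uniqueness and the Green function.} Uniqueness is an ODE fact. On the hyperbolic disk, writing $J=jN$ with $N$ parallel, Equation~\eqref{eq:hyperbolic-jacobi-operator} reduces the problem to $-j''+j=f_h$. The homogeneous solutions are $e^{\pm s}$, and none decays at both ends, so a decaying solution is unique. The kernel $\tfrac12e^{-|s-r|}$ is the standard Green function of $-\partial_s^2+1$ on $\mathbb R$: it has a derivative jump of $-1$ at $s=r$ and decays at both ends. Hence the convolution in Equation~\eqref{eq:forced-jacobi-green-solution} is a decaying solution whenever $f_h$ is integrable.

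\emph{Main obstacle.} The substantive step is showing that the geometric displacement $J_h$ actually decays at both ideal endpoints, measured in the unit frame along $s$. Only then can we identify it with this particular solution, rather than one that differs by $Ae^{s}+Be^{-s}$. I would handle it as follows.
\begin{itemize}
\item[] \emph{Compact support.} Outside $\operatorname{supp}h$ the curve $\gamma_t$ is a hyperbolic geodesic with the same ideal endpoints. So the normal displacement there is a homogeneous Jacobi field, which must vanish at infinity because the endpoints are fixed. A field of the form $ce^{\pm s}$ does not vanish there, so the displacement is zero near both ends. The same ODE argument also shows that $f_h$ has compact support.
\item[] \emph{Decay class.} The bound $\nabla^jh=O(x^{1+\eta})$ together with $x\sim e^{-|s|}$ gives $f_h=O(e^{-(1+\eta)|s|})$. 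Approximate $h$ by compactly supported tensors. The Green-function formula is continuous in the weighted sup norm, and the smooth dependence of geodesics with fixed ideal endpoints on the metric (simplicity, as in Reference~\cite{GrahamEtAl2019}) identifies the limit with $J_h$.
\end{itemize}
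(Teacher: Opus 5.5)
Your derivation of the equation matches the paper's proof. Varying the connection gives $C_h(T,T)$, the commutator gives $D_s^2J_h+R(J_h,T)T$, and normal projection removes $\dot\lambda T$. The reduction to $-j''+j=f_h$, the Green kernel, and uniqueness via $e^{\pm s}$ also match. The paper takes for granted that the geometric displacement decays, so you are right to single that step out.

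Your compact-support argument for decay, however, is wrong as stated. On a tail beyond $\operatorname{supp}h$ the displacement solves $-j''+j=0$, so there $j=Ae^{s}+Be^{-s}$. The fixed ideal endpoint forces only $A=0$. Hyperbolic geodesics ending at the same ideal point are asymptotic but distinct. Their variation field is the stable Jacobi field $Be^{-s}$, which tends to zero without vanishing.

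Your claim that ``the displacement is zero near both ends'' therefore fails in general. The perturbed geodesic leaves the support along a different geodesic asymptotic to $\gamma$. The formula you are proving gives $j_h(s)=\tfrac12e^{-s}\int e^{r}f_h(r)\,dr$ for $s$ beyond the support, which is nonzero unless that exponential moment vanishes. Insisting on $\gamma_t=\gamma$ along both tails would overdetermine the problem. A compactly supported solution of $-j''+j=f_h$ exists only if both moments $\int e^{\pm r}f_h(r)\,dr$ vanish.

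The repair gives exactly what you need. At each end, the coefficient of the growing mode ($e^{s}$ at $+\infty$, $e^{-s}$ at $-\infty$) measures the first-order motion of that ideal endpoint. Fixing both endpoints removes these modes and leaves $j_h=O(e^{-|s|})$. Your uniqueness argument then identifies $J_h$ with the Green solution.

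Two smaller points. No ODE argument is needed to see that $f_h$ has compact support, since $f_h$ depends pointwise on $h$ and $\nabla h$. Your approximation step for the decay class is acceptable as a sketch. It needs the approximants to converge with their first derivatives in an exponentially weighted norm. That gives $f_{h_n}\to f_h$ under the Green kernel, while the fixed-endpoint geodesics depend continuously on the metric.
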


\begin{proof}
Differentiate the $g_t$-geodesic equation. If $T_t$ denotes the tangent of $\gamma_t$, its normal component vanishes,
\begin{equation}
  \bigl[\nabla^{g_t}_{T_t}T_t\bigr]^\perp_{g_t}=0.
  \label{eq:normal-geodesic-equation}
\end{equation}
At $t=0$, the covariant derivative of the curve variation satisfies the standard commutation identity
\begin{equation}
  \left.\frac{D}{dt}\right|_0
  \nabla_{T_t}T_t
  =D_s^2J_h+R(J_h,T)T.
  \label{eq:geodesic-variation-commutator}
\end{equation}
The variation of the connection contributes $C_h(T,T)$. Taking the normal component of the derivative of Equation~\eqref{eq:normal-geodesic-equation} gives
\begin{equation}
  D_s^2J_h+R(J_h,T)T+\mathcal F_\gamma(h)=0.
  \label{eq:forced-jacobi-before-sign}
\end{equation}
With the sign convention in Equation~\eqref{eq:normal-jacobi-operator}, this is Equation~\eqref{eq:forced-jacobi-equation}.

On $\mathbb H^2$, write $J_h=j_hN$. Equation~\eqref{eq:hyperbolic-jacobi-operator} reduces the problem to
\begin{equation}
  (-\partial_s^2+1)j_h=f_h.
  \label{eq:scalar-forced-jacobi}
\end{equation}
The distributional identity
\begin{equation}
  (-\partial_s^2+1)\left(\frac12e^{-|s-r|}\right)
  =\delta(s-r)
  \label{eq:green-kernel-identity}
\end{equation}
proves Equation~\eqref{eq:forced-jacobi-green-solution}. The decay assumption makes the integral absolutely convergent and gives exponential decay of $j_h$ at both ends. The difference of two decaying solutions solves $-j''+j=0$ and is a linear combination of $e^s$ and $e^{-s}$; decay at both ends forces both coefficients to vanish.
\end{proof}

The Green representation displays the nonlocal bending effect relevant in the present setting. A local metric change at one point of a reference RT geodesic displaces the extremal curve along its entire length. The propagation is controlled by the inverse stability operator of the extremal surface, and on the hyperbolic slice it decays exponentially in geodesic distance.

\subsection{The renormalized Hessian}
\label{subsec:renormalized-hessian}

The bilinear second derivative of $\mathcal B$ is obtained by differentiating the affine two-parameter metric family
\begin{equation}
  g_{t,u}=g_{\mathbb H}+th+uk
  \label{eq:affine-two-parameter-metric}
\end{equation}
at $(t,u)=(0,0)$. The absence of a mixed $tu$ term in Equation~\eqref{eq:affine-two-parameter-metric} ensures that the result is the bilinear Hessian rather than a path-dependent acceleration.

Second-order extremal-surface expansions under ambient metric
perturbations were developed in
Refs.~\cite{GhoshMishra2018,Mosk2018}. The theorem below gives the
hyperbolic-geodesic and renormalized ideal-boundary specialization
needed for the normal-range projection.

\begin{theorem}[Second variation of renormalized boundary length]
\label{thm:renormalized-second-variation}
Let $h$ and $k$ satisfy Equation~\eqref{eq:quadratic-decay-class}. For every complete reference geodesic $\gamma$,
\begin{equation}
\begin{split}
  D^2\mathcal B_{g_{\mathbb H}}[h,k](\gamma)
  ={}&-\frac14\int_\gamma
  h(T,T)k(T,T)\,ds\\
  &-\int_\gamma
  g_{\mathbb H}\left(
  \mathcal F_\gamma(h),
  \mathcal J_\gamma^{-1}\mathcal F_\gamma(k)
  \right)ds.
  \label{eq:second-variation-operator-form}
\end{split}
\end{equation}
Equivalently, on the hyperbolic disk,
\begin{equation}
\begin{split}
  D^2\mathcal B_{g_{\mathbb H}}[h,k](\gamma)
  ={}&-\frac14\int_{-\infty}^{\infty}
  h_{TT}(s)k_{TT}(s)\,ds\\
  &-\frac12\int_{-\infty}^{\infty}
  \int_{-\infty}^{\infty}
  e^{-|s-r|}f_h(s)f_k(r)\,dr\,ds.
  \label{eq:second-variation-green-form}
\end{split}
\end{equation}
The expression is symmetric in $h$ and $k$. For an affine one-parameter family $g_t=g_{\mathbb H}+th$,
\begin{equation}
  D^2\mathcal B_{g_{\mathbb H}}[h,h](\gamma)
  \leq0.
  \label{eq:affine-length-concavity}
\end{equation}
\end{theorem}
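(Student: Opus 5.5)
We will prove the formula by the same device used for Theorem~\ref{thm:first-variation-boundary-length}: replace the renormalized length by the truncated distance between fixed points $p_\varepsilon,q_\varepsilon$ on $\gamma$ with $x=\varepsilon$. Differentiate twice in the parameters of the affine family $g_{t,u}$ of Equation~\eqref{eq:affine-two-parameter-metric}, and pass to the limit $\varepsilon\downarrow0$ at the end. Because the counterterm $2\log\varepsilon$ is independent of $(t,u)$, it drops out of every derivative. Dominated convergence, justified by Equation~\eqref{eq:quadratic-decay-class} and the exponential decay of the Jacobi fields, then lets us exchange $\partial_t\partial_u$ with the limit.

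\textbf{Step 1: split the mixed derivative.} Let $\eta_{t,u}$ be the $g_{t,u}$-geodesic from $p_\varepsilon$ to $q_\varepsilon$, and write
\begin{equation*}
  d(t,u)=\int_{\eta_{t,u}}\sqrt{g_{t,u}(\dot\eta,\dot\eta)}.
\end{equation*}
Set $L(t,u;\eta)$ for the length of a curve $\eta$ in the metric $g_{t,u}$. The chain rule then gives
\begin{equation*}
  \partial_t\partial_u d=\partial_t\partial_u L+\partial_t\partial_\eta L[J_k]+\partial_u\partial_\eta L[J_h]+\partial_\eta^2L[J_h,J_k],
\end{equation*}
together with a term $\partial_\eta L[\partial_t\partial_u\eta]$ that vanishes by extremality at $(0,0)$.

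\textbf{Step 2: evaluate the four terms.}
\begin{itemize}
  \item[(a)] Since $g_{t,u}$ is affine, the pure metric term comes only from differentiating the square root. With $\sqrt{1+th_{TT}+uk_{TT}}$, it equals $-\tfrac14\int h_{TT}k_{TT}$.
  \item[(b)] By Lemma~\ref{lem:mixed-metric-curve-derivative}, each mixed term equals $-\int g(\mathcal F(h),J_k)$ or $-\int g(\mathcal F(k),J_h)$. Tangential components of $J$ contribute only boundary terms, which vanish because the endpoints are fixed.
  \item[(c)] The curve Hessian at fixed metric is the index form $\int g(\mathcal J_\gamma J_h,J_k)$.
\end{itemize}
Applying Theorem~\ref{thm:forced-jacobi-equation}, the curve Hessian becomes $\int g(\mathcal F(h),J_k)$. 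It therefore cancels one of the two mixed terms and leaves $-\int g(\mathcal F(h),\mathcal J^{-1}\mathcal F(k))$. This is Equation~\eqref{eq:second-variation-operator-form}.

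The Green kernel~\eqref{eq:hyperbolic-jacobi-green-kernel} turns this into Equation~\eqref{eq:second-variation-green-form}. Symmetry in $h$ and $k$ follows because the kernel $e^{-|s-r|}$ is symmetric, or equivalently because $\mathcal J$ is self-adjoint.

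\textbf{Step 3: concavity.} Take $h=k$. The first term is $-\tfrac14\int h_{TT}^2\le0$. The second term is $-\int f\,\mathcal J^{-1}f$, and $\mathcal J^{-1}$ is a positive operator: its Fourier symbol is $(\xi^2+1)^{-1}>0$, equivalently $\int j(-j''+j)=\int (j'^2+j^2)$. Hence both terms are nonpositive, which gives Equation~\eqref{eq:affine-length-concavity}.

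\textbf{Main obstacle.} The hard part is the truncation. At finite $\varepsilon$, the displacement of $\eta_{t,u}$ satisfies the forced Jacobi equation on $[s_p,s_q]$ with Dirichlet conditions, not the decaying problem on the whole line. We must show that the finite-interval Green function converges to $\tfrac12e^{-|s-r|}$ strongly enough that $\int f_h\,\mathcal J_\varepsilon^{-1}f_k\to\int f_h\,\mathcal J^{-1}f_k$.

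We would handle this using the explicit Dirichlet Green function $\sinh(\cdot)\sinh(\cdot)/\sinh(\text{length})$, together with the bound $|f_h(s)|\lesssim e^{-(1+\eta)|s|}$. That bound follows from Equation~\eqref{eq:quadratic-decay-class} because $x\sim e^{-|s|}$ along $\gamma$. With it, dominated convergence applies. The same decay also controls the boundary terms from integration by parts in Step~2.
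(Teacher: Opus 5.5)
Your proposal is correct and follows essentially the same route as the paper: truncation at fixed points of $\gamma$, Lemma~\ref{lem:mixed-metric-curve-derivative} for the mixed metric--curve term, the Dirichlet forced Jacobi problem, convergence of the explicit $\sinh$ Green kernel to $\tfrac12e^{-|s-r|}$ by dominated convergence, and concavity from positivity of $-\partial_s^2+1$. The only difference is bookkeeping. The paper uses stationarity first and differentiates the reduced first variation $\tfrac12\int h(T,T)\,ds$ in $u$, so only two terms appear; you keep the full four-term chain rule and cancel the index form against one mixed term via the Jacobi equation, which is the same envelope identity applied one step later.
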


\begin{proof}
Fix two finite points $p,q\in\gamma$. Let $\eta_{t,u}$ be the unique $g_{t,u}$-geodesic from $p$ to $q$ and let
\begin{equation}
  \ell(t,u)=\operatorname{Length}_{g_{t,u}}(\eta_{t,u}).
\end{equation}
The first derivative in the $h$ direction is
\begin{equation}
  \partial_t\ell(0,0)
  =\frac12\int_{\gamma_{pq}}h(T,T)\,ds.
  \label{eq:finite-first-derivative-h}
\end{equation}
The $u$ derivative of the right-hand side has two contributions. If the curve is held fixed, the mixed metric derivative of the square-root integrand is
\begin{equation}
  \left.\partial_t\partial_u\right|_{0}
  \sqrt{(g_{\mathbb H}+th+uk)(T,T)}
  =-\frac14h(T,T)k(T,T).
  \label{eq:fixed-curve-metric-hessian}
\end{equation}
The curve variation is $J_k=\partial_u\eta_{0,u}|_{u=0}$. After removing its tangential component, Lemma~\ref{lem:mixed-metric-curve-derivative} gives
\begin{equation}
  \partial_u\partial_t\ell(0,0)
  =-\frac14\int_{\gamma_{pq}}h_{TT}k_{TT}\,ds
  -\int_{\gamma_{pq}}
  g_{\mathbb H}(\mathcal F_\gamma(h),J_k)\,ds.
  \label{eq:finite-reduced-hessian}
\end{equation}

For fixed endpoints, $J_k$ vanishes at $p$ and $q$. Differentiating the geodesic equation, as in Theorem~\ref{thm:forced-jacobi-equation}, shows that
\begin{equation}
  \mathcal J_{\gamma,pq}J_k=\mathcal F_\gamma(k),
  \qquad J_k(p)=J_k(q)=0,
  \label{eq:finite-dirichlet-jacobi}
\end{equation}
where $\mathcal J_{\gamma,pq}$ is the normal Jacobi operator with Dirichlet boundary conditions. Negative curvature makes this operator positive and invertible. Substitution into Equation~\eqref{eq:finite-reduced-hessian} gives
\begin{equation}
\begin{split}
  \partial_u\partial_t\ell(0,0)
  ={}&-\frac14\int_{\gamma_{pq}}h_{TT}k_{TT}\,ds\\
  &-\int_{\gamma_{pq}}
  g_{\mathbb H}\left(
  \mathcal F_\gamma(h),
  \mathcal J_{\gamma,pq}^{-1}\mathcal F_\gamma(k)
  \right)ds.
  \label{eq:finite-hessian-operator}
\end{split}
\end{equation}
This expression is symmetric because $\mathcal J_{\gamma,pq}^{-1}$ is self-adjoint.

Choose $p=p_\varepsilon$ and $q=q_\varepsilon$ as in Equation~\eqref{eq:truncated-endpoints}. The counterterm $2\log\varepsilon$ is independent of $(t,u)$ because the defining function and conformal representative are fixed. For compactly supported $h,k$, the integrations in Equation~\eqref{eq:finite-hessian-operator} are eventually supported in a fixed bounded $s$ interval. The Dirichlet Green kernel on $[s_-,s_+]$ is
\begin{equation}
  G_{s_-,s_+}(s,r)
  =\frac{
  \sinh(s_<-s_-)\sinh(s_+-s_>)
  }{
  \sinh(s_+-s_-)
  },
  \label{eq:finite-jacobi-green-kernel}
\end{equation}
where $s_<=\min\{s,r\}$ and $s_>=\max\{s,r\}$. As $s_-\to-\infty$ and $s_+\to+\infty$,
\begin{equation}
  G_{s_-,s_+}(s,r)
  \longrightarrow\frac12e^{-|s-r|}
  \label{eq:green-kernel-limit}
\end{equation}
uniformly on compact subsets. Dominated convergence gives Equation~\eqref{eq:second-variation-operator-form} and then Equation~\eqref{eq:second-variation-green-form}. Under Equation~\eqref{eq:quadratic-decay-class}, $h_{TT}$, $k_{TT}$, $f_h$, and $f_k$ are integrable with a positive exponential margin in the geodesic coordinate near both ends. An integrable majorant gives the mixed estimate as well.

Finally, the two terms in Equation~\eqref{eq:second-variation-operator-form} are symmetric. On the diagonal, the first is nonpositive and the second is
\begin{equation}
  -\ip{\mathcal F_\gamma(h)}
  {\mathcal J_\gamma^{-1}\mathcal F_\gamma(h)}_{L^2(\gamma)},
\end{equation}
which is nonpositive because $\mathcal J_\gamma=-\partial_s^2+1$ is positive. This proves Equation~\eqref{eq:affine-length-concavity}.
\end{proof}

The first line of Equation~\eqref{eq:second-variation-operator-form} is the direct change of the line element. The second line is the bending contribution. It is nonlocal along each geodesic and is fixed by the stability operator of the extremal curve. Omitting it would give the Hessian of the length of the reference curve, not the Hessian of the extremal length.

For a general metric path
\begin{equation}
  g_t
  =g_{\mathbb H}+t h_1+\frac{t^2}{2}h_2+o(t^2),
  \label{eq:metric-two-jet}
\end{equation}
the chain rule gives
\begin{equation}
  \left.\frac{d^2}{dt^2}\right|_{0}\mathcal B(g_t)
  =Ah_2+D^2\mathcal B_{g_{\mathbb H}}[h_1,h_1].
  \label{eq:boundary-map-second-chain-rule}
\end{equation}
The first term can be adjusted by choosing the second metric coefficient. Its data always lie in $\mathcal Y_{\mathrm{geo}}$. The normal component of the Hessian cannot be adjusted in this way.

The direct contribution can also be written as a rank-four X-ray transform. Let $h\odot k$ be the completely symmetrized tensor product normalized by
\begin{equation}
  (h\odot k)(T,T,T,T)=h(T,T)k(T,T).
  \label{eq:symmetric-rank-four-product}
\end{equation}
Then
\begin{equation}
  D^2\mathcal B[h,k]
  =-\frac14 I_4(h\odot k)-\mathcal K(h,k),
  \label{eq:hessian-rank-four-decomposition}
\end{equation}
where
\begin{equation}
  \mathcal K(h,k)(\gamma)
  =\ip{\mathcal F_\gamma(h)}
  {\mathcal J_\gamma^{-1}\mathcal F_\gamma(k)}_{L^2(\gamma)}.
  \label{eq:bending-bilinear-operator}
\end{equation}
Equation~\eqref{eq:hessian-rank-four-decomposition} explains why quadratic metric data are not confined to the rank-two range. The product of two rank-two perturbations and the geodesic bending term can populate higher even tensor sectors. Their higher-sector component equals the normal acceleration of the geometric data locus.

The Hessian itself depends on a choice of coordinates on the space of metrics, but its projection normal to the linear range is gauge invariant.

\begin{proposition}[Gauge covariance of the Hessian]
\label{prop:hessian-gauge-covariance}
Let $X$ be a smooth vector field vanishing at $\partial M$ with sufficient positive boundary order, and let $h$ satisfy Equation~\eqref{eq:quadratic-decay-class}. Then
\begin{equation}
  D^2\mathcal B_{g_{\mathbb H}}
  [\mathcal L_Xg_{\mathbb H},h]
  =-A(\mathcal L_Xh).
  \label{eq:hessian-gauge-identity}
\end{equation}
Consequently,
\begin{equation}
  \Pi_{\mathrm{ng}}
  D^2\mathcal B_{g_{\mathbb H}}
  [\mathcal L_Xg_{\mathbb H},h]=0.
  \label{eq:normal-hessian-gauge-invariance}
\end{equation}
The bilinear form
\begin{equation}
  \mathrm{II}_{g_{\mathbb H}}([h],[k])
  :=\Pi_{\mathrm{ng}}
  D^2\mathcal B_{g_{\mathbb H}}[h,k]
  \label{eq:second-fundamental-form-definition}
\end{equation}
is therefore well defined on metric deformation classes modulo boundary-fixing infinitesimal diffeomorphisms.
\end{proposition}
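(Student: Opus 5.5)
The proof rests on the diffeomorphism invariance of $\mathcal B$ from Proposition~\ref{prop:endpoint-gauge}, differentiated twice along a mixed two-parameter family. Let $\Psi_u$ be the flow of $X$. Because $X$ vanishes at $\partial M$ with positive boundary order, $\Psi_u$ restricts to the identity on $\partial M$. Define
\begin{equation*}
  G(t,u)=\Psi_u^*(g_{\mathbb H}+th).
\end{equation*}
Invariance gives $\mathcal B(G(t,u))=\mathcal B(g_{\mathbb H}+th)$, so the mixed derivative $\partial_t\partial_u\mathcal B(G)|_{0}$ vanishes.

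Next I expand $G$ to second order:
\begin{equation*}
  G(t,u)=g_{\mathbb H}+th+u\mathcal L_Xg_{\mathbb H}+tu\,\mathcal L_Xh+\tfrac{u^2}{2}\mathcal L_X\mathcal L_Xg_{\mathbb H}+\cdots.
\end{equation*}
The chain rule for a two-parameter family, as in Equation~\eqref{eq:boundary-map-second-chain-rule} but with a mixed term, gives
\begin{equation*}
  \partial_t\partial_u\mathcal B(G)|_0=D^2\mathcal B_{g_{\mathbb H}}[h,\mathcal L_Xg_{\mathbb H}]+D\mathcal B_{g_{\mathbb H}}[\mathcal L_Xh].
\end{equation*}
Since this vanishes, Theorem~\ref{thm:first-variation-boundary-length} and the symmetry of the Hessian from Theorem~\ref{thm:renormalized-second-variation} yield Equation~\eqref{eq:hessian-gauge-identity}. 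Because $A(\mathcal L_Xh)=\tfrac12 I_2(\mathcal L_Xh)$ lies in $\mathcal Y_{\mathrm{geo}}$, applying $\Pi_{\mathrm{ng}}$ gives Equation~\eqref{eq:normal-hessian-gauge-invariance}. Bilinearity then shows that $\mathrm{II}$ is unchanged when either argument is shifted by $\mathcal L_Xg_{\mathbb H}=2d^s(X^\flat)$, so it descends to classes.

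The main obstacle is justifying the second-order chain rule in the renormalized, weighted setting. This requires two things:
\begin{itemize}
  \item $\mathcal L_Xh$ and $\mathcal L_Xg_{\mathbb H}$ must stay in the decay class of Equation~\eqref{eq:quadratic-decay-class}. Here "sufficient boundary order" of $X$ is what is needed: each Lie derivative costs one derivative, hence the four derivatives in that class.
  \item $\mathcal B$ must be $C^2$ along the family $G$, with the Hessian given by Theorem~\ref{thm:renormalized-second-variation}.
\end{itemize}

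My first approach is to take $X$ and $h$ compactly supported. Then everything happens on truncated geodesics between fixed points $p_\varepsilon,q_\varepsilon$ outside the supports, where $d_{G(t,u)}(p_\varepsilon,q_\varepsilon)$ is a smooth function of $(t,u)$ and $\Psi_u$ fixes $p_\varepsilon,q_\varepsilon$. The mixed-derivative identity then holds exactly for each $\varepsilon$, and the counterterm is independent of $(t,u)$.

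Alternatively, the identity can be checked directly from Equation~\eqref{eq:second-variation-operator-form}. One uses $(\mathcal L_Xg)(T,T)=2\partial_s(X\cdot T)$ and the fact that $\mathcal F_\gamma(\mathcal L_Xg_{\mathbb H})=-\mathcal J_\gamma X^\perp$, since a diffeomorphism moves the geodesic by $X^\perp$. With this substitution the bending term becomes $\int h$-dependent terms, which integrate by parts to $-\tfrac12\int(\mathcal L_Xh)(T,T)$.

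Finally, the result extends to $h$ in the decay class with general $X$ by dominated convergence. This uses the integrable majorants already established in the proof of Theorem~\ref{thm:renormalized-second-variation}, together with the continuity of $A$ on the source spaces.
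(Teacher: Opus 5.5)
Your proposal is correct and follows the paper's route. You differentiate the boundary-fixing invariance $\mathcal B(\varphi_u^*(g_{\mathbb H}+th))=\mathcal B(g_{\mathbb H}+th)$ once in each parameter, read off $D^2\mathcal B[\mathcal L_Xg_{\mathbb H},h]+D\mathcal B[\mathcal L_Xh]=0$, project using $A(\mathcal L_Xh)\in\mathcal Y_{\mathrm{geo}}$, and descend to classes by bilinearity. Your justification of the mixed chain rule on truncated geodesics with compact supports, and your direct check via $\mathcal F_\gamma(\mathcal L_Xg_{\mathbb H})=-\mathcal J_\gamma X^\perp$ in the Green-kernel Hessian formula, are both sound and cover regularity points the paper's proof leaves implicit.
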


\begin{proof}
Let $\varphi_s$ be the flow of $X$. Boundary-fixing diffeomorphism invariance gives
\begin{equation}
  \mathcal B\bigl(\varphi_s^*(g_{\mathbb H}+t h)\bigr)
  =\mathcal B(g_{\mathbb H}+t h).
  \label{eq:boundary-map-flow-invariance}
\end{equation}
The first derivatives of the metric on the left at $(s,t)=(0,0)$ are $\mathcal L_Xg_{\mathbb H}$ and $h$, and its mixed derivative is $\mathcal L_Xh$. Differentiating Equation~\eqref{eq:boundary-map-flow-invariance} once in each parameter gives
\begin{equation}
  D^2\mathcal B[\mathcal L_Xg_{\mathbb H},h]
  +D\mathcal B[\mathcal L_Xh]=0,
\end{equation}
which is Equation~\eqref{eq:hessian-gauge-identity}. Since $A(\mathcal L_Xh)\in\mathcal Y_{\mathrm{geo}}$, applying $\Pi_{\mathrm{ng}}$ proves Equation~\eqref{eq:normal-hessian-gauge-invariance}. Bilinearity then shows that Equation~\eqref{eq:second-fundamental-form-definition} is unchanged when either argument is altered by an infinitesimal boundary-fixing diffeomorphism.
\end{proof}

On every finite-dimensional regular source slice, Equation~\eqref{eq:second-fundamental-form-definition} is the second fundamental form of the corresponding metric two-jet image. In the full regular class it is best regarded as the gauge-invariant normal Hessian of the boundary-length map. The coefficientwise results below require only this Hessian, the range splitting, and regular reconstruction of the two metric coefficients; they do not assume a full nonlinear chart for the image.

\subsection{Quadratic geometrizability}
\label{subsec:quadratic-geometrizability}

Let a background-subtracted, length-normalized proto-area path have the expansion
\begin{equation}
  a(t)=t a_1+\frac{t^2}{2}a_2+o(t^2)
  \qquad\text{in }\mathcal Y^\tau_{\mathbb R}.
  \label{eq:proto-area-two-jet}
\end{equation}
If the first-order condition holds, its unique gauge-fixed first metric coefficient is
\begin{equation}
  h_1=\mathscr R_2a_1.
  \label{eq:first-metric-from-data}
\end{equation}
The Hessian term below is the second variation computed in Theorem~\ref{thm:renormalized-second-variation}.
Define the quadratic obstruction by
\begin{equation}
  \mathfrak O_2(a_1,a_2)
  :=\Pi_{\mathrm{ng}}
  \left[
  a_2-D^2\mathcal B_{g_{\mathbb H}}[h_1,h_1]
  \right].
  \label{eq:quadratic-obstruction-definition}
\end{equation}
The definition is meaningful whenever $a_1$ is linearly geometrizable and the Hessian datum belongs to $\mathcal Y^\tau$. In the theorem below, regular data means that $a_1$ and the corrected datum in Equation~\eqref{eq:second-order-corrected-data} reconstruct to smooth or polyhomogeneous tensors of positive boundary order. This regularity requirement already appears at linear order in Theorem~\ref{thm:linear-geometrizability}.

\begin{theorem}[Necessary and sufficient two-jet criterion]
\label{thm:two-jet-geometrizability}
Let $a(t)$ be as in Equation~\eqref{eq:proto-area-two-jet}, with regular coefficients $a_1$ and $a_2$. The following are equivalent.
\begin{enumerate}[label=\textnormal{(\alph*)},leftmargin=2.4em]
  \item There is a regular asymptotically hyperbolic metric two-jet
  with fixed conformal infinity,
  \begin{equation}
    (h_1,h_2),
    \label{eq:realizing-metric-two-jet}
  \end{equation}
  whose boundary-length coefficients obey
  \begin{equation}
    a_1=Ah_1,
    \qquad
    a_2=Ah_2+D^2\mathcal B_{g_{\mathbb H}}[h_1,h_1].
    \label{eq:two-jet-realization}
  \end{equation}
  \item The linear and quadratic conditions
  \begin{equation}
    \Pi_{\mathrm{ng}}a_1=0,
    \qquad
    \mathfrak O_2(a_1,a_2)=0
    \label{eq:two-jet-conditions}
  \end{equation}
  both hold.
\end{enumerate}
When these conditions hold, the unique iterated transverse-traceless coefficients are
\begin{equation}
  h_1=\mathscr R_2a_1,
  \qquad
  h_2=\mathscr R_2
  \left[
  a_2-D^2\mathcal B_{g_{\mathbb H}}[h_1,h_1]
  \right].
  \label{eq:second-metric-reconstruction}
\end{equation}
Equivalently, the normal acceleration is fixed by
\begin{equation}
  \Pi_{\mathrm{ng}}a_2
  =\mathrm{II}_{g_{\mathbb H}}([h_1],[h_1]).
  \label{eq:normal-acceleration-condition}
\end{equation}
If, in addition, the boundary-length map is twice differentiable into
$\mathcal Y^\tau$ at the origin along the polynomial metric path
\begin{equation}
 g_t=g_{\mathbb H}+t h_1+\frac{t^2}{2}h_2,
 \label{eq:two-jet-polynomial-metric-path}
\end{equation}
with derivatives $A$ and $D^2\mathcal B$, then
\begin{equation}
 \mathcal B(g_t)=t a_1+\frac{t^2}{2}a_2+o_{\mathcal Y^\tau}(t^2).
 \label{eq:two-jet-strong-realization}
\end{equation}
\end{theorem}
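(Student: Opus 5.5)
The plan is to reduce the two-jet problem to two successive applications of the linear criterion, Theorem~\ref{thm:linear-geometrizability}, with gauge invariance supplied by Proposition~\ref{prop:hessian-gauge-covariance}.

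\textbf{(a)$\Rightarrow$(b).} Suppose $(h_1,h_2)$ realizes the jet. Then $a_1=Ah_1\in\mathcal Y_{\mathrm{geo}}$, so $\Pi_{\mathrm{ng}}a_1=0$. By Theorem~\ref{thm:linear-geometrizability}, $h_1$ differs from $\mathscr R_2a_1$ by a potential $d^s\omega$. Proposition~\ref{prop:hessian-gauge-covariance} gives $\Pi_{\mathrm{ng}}D^2\mathcal B[h_1,h_1]=\mathrm{II}([h_1],[h_1])$, which may therefore be evaluated at the ITT representative. Applying $\Pi_{\mathrm{ng}}$ to the second identity in Equation~\eqref{eq:two-jet-realization} and using $\Pi_{\mathrm{ng}}Ah_2=0$ yields $\mathfrak O_2(a_1,a_2)=0$, equivalently Equation~\eqref{eq:normal-acceleration-condition}.

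One detail needs care. Proposition~\ref{prop:hessian-gauge-covariance} is stated for smooth boundary-vanishing vector fields, whereas the linear theorem produces $\omega\in x^{1/2}H_0^1$. I would handle this using the regularity clause of Theorem~\ref{thm:itt-decomposition}: for regular $h_1$, the potential $\omega$ is polyhomogeneous of positive boundary order, so $X=\omega^\sharp$ qualifies, possibly after a density argument using the decay class~\eqref{eq:quadratic-decay-class}.

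\textbf{(b)$\Rightarrow$(a).} Set $h_1=\mathscr R_2a_1$, which is regular by Theorem~\ref{thm:linear-geometrizability}. Form the corrected datum $c=a_2-D^2\mathcal B[h_1,h_1]$. We need $c\in\mathcal Y^\tau_{\mathbb R}$, and I expect this to be the main obstacle. Two facts are required:
\begin{enumerate}
\item the Hessian datum, namely the rank-four term $-\tfrac14I_4(h_1\odot h_1)$ together with the bending term $\mathcal K$ from Equation~\eqref{eq:hessian-rank-four-decomposition}, is even, real, and regular;
\item it has the spectral decay across sectors that $\mathcal Y^\tau$ requires.
\end{enumerate}
Here I would invoke the regularity hypothesis directly: the theorem assumes that $a_2$ is regular and that the corrected datum reconstructs to a regular tensor. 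Given that, $\mathfrak O_2=0$ means $\Pi_{\mathrm{ng}}c=0$, so Theorem~\ref{thm:linear-geometrizability} applies to $c$. It yields $h_2=\mathscr R_2c$ with $Ah_2=c$, which is exactly the second identity in Equation~\eqref{eq:two-jet-realization}. Regularity of $h_2$ follows from the polyhomogeneous clause, and the same argument as in that theorem (bounded compactified endomorphism, small $t$) gives fixed conformal infinity.

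\textbf{Uniqueness.} Injectivity of $I_2$ on $\mathcal X_{\mathrm{itt}}$ (Lemma~\ref{thm:rank-two-range}) makes $h_1$ unique among ITT representatives. Since $h_1$ is then fixed, the Hessian term is fixed, and $h_2$ is unique as well.

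\textbf{Strong realization.} Assume $\mathcal B$ is twice differentiable into $\mathcal Y^\tau$ along the polynomial path~\eqref{eq:two-jet-polynomial-metric-path}. Taylor's theorem then gives $\mathcal B(g_t)=tAh_1+\tfrac{t^2}{2}(Ah_2+D^2\mathcal B[h_1,h_1])+o(t^2)$; this is the chain rule~\eqref{eq:boundary-map-second-chain-rule}, with the $h_2$ contribution entering only linearly at order $t^2$. Substituting the realization identities gives Equation~\eqref{eq:two-jet-strong-realization}.
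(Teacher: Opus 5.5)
Your proposal is correct and follows the paper's proof. Necessity is the normal projection of the chain-rule identity; sufficiency applies $\mathscr R_2$ first to $a_1$ and then to the corrected datum $a_2-D^2\mathcal B_{g_{\mathbb H}}[h_1,h_1]$, with the regularity hypothesis supplying both $\mathcal Y^\tau$-membership and positive boundary order; and the strong remainder comes from Taylor's theorem along the polynomial path. The one addition is your explicit use of Proposition~\ref{prop:hessian-gauge-covariance} to replace a non-ITT realizing $h_1$ by $\mathscr R_2a_1$ inside the Hessian term, a step the paper's proof leaves implicit (the gauge field there is $X=\tfrac12\omega^\sharp$, since $\mathcal L_Xg_{\mathbb H}=2d^s(X^\flat)$).
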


\begin{proof}
Assume first that a realizing metric two-jet exists.  The first
identity in Equation~\eqref{eq:two-jet-realization} gives
$a_1=Ah_1$,
so $\Pi_{\mathrm{ng}}a_1=0$ and the ITT representative is $h_1=\mathscr R_2a_1$. The second derivative and Equation~\eqref{eq:boundary-map-second-chain-rule} give
\begin{equation}
  a_2=Ah_2+D^2\mathcal B[h_1,h_1].
  \label{eq:two-jet-second-identity}
\end{equation}
Applying $\Pi_{\mathrm{ng}}$ removes $Ah_2$ and proves $\mathfrak O_2(a_1,a_2)=0$.

Conversely, assume Equation~\eqref{eq:two-jet-conditions}. Set $h_1=\mathscr R_2a_1$ and define
\begin{equation}
  r_2=a_2-D^2\mathcal B[h_1,h_1].
  \label{eq:second-order-corrected-data}
\end{equation}
The vanishing of $\mathfrak O_2$ says that $r_2\in\mathcal Y_{\mathrm{geo}}$, so $h_2=\mathscr R_2r_2$ satisfies $Ah_2=r_2$. The regularity assumption gives coefficients of positive boundary order.  The two coefficient identities in Equation~\eqref{eq:two-jet-realization} follow by construction.  For small $t$, the polynomial path in Equation~\eqref{eq:two-jet-polynomial-metric-path} is positive definite, has the fixed conformal infinity, and remains simple.  Under the additional pathwise differentiability hypothesis, Taylor's theorem in $\mathcal Y^\tau$ gives Equation~\eqref{eq:two-jet-strong-realization}. Uniqueness of the ITT coefficients follows from injectivity of $A$ on the gauge slice. Equation~\eqref{eq:normal-acceleration-condition} is the normal projection of Equation~\eqref{eq:two-jet-second-identity}.
\end{proof}

Theorem~\ref{thm:two-jet-geometrizability} proves that linear geometrizability is not sufficient. For any linearly geometric $a_1$ and any nonzero $n\in\mathcal Y_{\mathrm{ng}}$, the data path
\begin{equation}
  a_{\mathrm{test}}(t)
  =t a_1+\frac{t^2}{2}
  \left[
  D^2\mathcal B[h_1,h_1]+n
  \right]
  \label{eq:quadratically-obstructed-data-path}
\end{equation}
passes every first-order test and has
\begin{equation}
  \mathfrak O_2(a_1,a_2)=n.
  \label{eq:prescribed-quadratic-obstruction}
\end{equation}
No adjustment of $h_2$ can remove this discrepancy because $Ah_2$ is tangential to the geometric data locus.

The quadratic obstruction has a complete boundary-witness expansion. For $p\geq0$ and $q\geq2$, define
\begin{equation}
\begin{split}
  W_{p,q}^{(2),+}(a_1,a_2)
  &=W_{p,q}^{+}
  \left(a_2-D^2\mathcal B[h_1,h_1]\right),\\
  W_{p,q}^{(2),-}(a_1,a_2)
  &=W_{p,q}^{-}
  \left(a_2-D^2\mathcal B[h_1,h_1]\right).
  \label{eq:quadratic-moment-witnesses}
\end{split}
\end{equation}
Then
\begin{equation}
  \mathfrak O_2(a_1,a_2)=0
  \quad\Longleftrightarrow\quad
  W_{p,q}^{(2),\pm}(a_1,a_2)=0
  \text{ for all }p\geq0,\ q\geq2.
  \label{eq:quadratic-witness-completeness}
\end{equation}
The conditioned second-order defect is
\begin{equation}
  D_{\mathrm{geo}}^{(2)}(a_1,a_2)
  :=\norm{\mathfrak O_2(a_1,a_2)}_{\mathcal Y^\tau},
  \label{eq:quadratic-geometric-defect}
\end{equation}
and has the exact moment expansion
\begin{equation}
\begin{split}
  \bigl(D_{\mathrm{geo}}^{(2)}(a_1,a_2)\bigr)^2
  ={}&\sum_{q=2}^{\infty}(1+q)^{2\tau}
  \sum_{p=0}^{\infty}(p+1)\\
  &\times\left(
  \abs{W_{p,q}^{(2),+}}^2
  +\abs{W_{p,q}^{(2),-}}^2
  \right).
  \label{eq:quadratic-defect-moment-expansion}
\end{split}
\end{equation}
A finite witness window is obtained by restricting the sums to $0\leq p\leq P$ and $2\leq q\leq Q$. As in Equation~\eqref{eq:truncated-defect-convergence}, these truncated defects increase monotonically to Equation~\eqref{eq:quadratic-geometric-defect}.

The nonlinear witness is stable under errors in both derivatives. Fix regular source and data norms for which the range projection, reconstruction operator, and normal Hessian are bounded, and set
\begin{align}
  C_P&=\norm{\Pi_{\mathrm{ng}}},
  \label{eq:normal-projection-bound}\\
  \norm{\mathrm{II}_{g_{\mathbb H}}(h,k)}_{\mathscr Y}
  &\leq C_2\norm{h}_{\mathscr X}\norm{k}_{\mathscr X},
  \label{eq:second-form-bilinear-bound}\\
  C_R&=\norm{\mathscr R_2}_{\mathcal Y_{\mathrm{geo}}\to\mathscr X}.
  \label{eq:reconstruction-operator-bound}
\end{align}
In the source-adapted Hilbert scale of Section~\ref{sec:linearized-geometry}, $\Pi_{\mathrm{ng}}$ is orthogonal and hence $C_P=1$, while $C_R\leq c^{-1}$ with $c$ from Equation~\eqref{eq:continuum-stability-two-sided}.

\begin{proposition}[Robustness of the quadratic obstruction]
\label{prop:quadratic-obstruction-robustness}
Let $(a_1,a_2)$ and $(\widetilde a_1,\widetilde a_2)$ have linearly geometrizable first components, and set
\begin{equation}
  h_1=\mathscr R_2a_1,
  \qquad
  \widetilde h_1=\mathscr R_2\widetilde a_1.
\end{equation}
Then
\begin{equation}
\begin{split}
  &\norm{\mathfrak O_2(a_1,a_2)
  -\mathfrak O_2(\widetilde a_1,\widetilde a_2)}_{\mathscr Y}\\
  &\qquad\leq
  C_P\norm{a_2-\widetilde a_2}_{\mathscr Y}
  +C_2
  \bigl(\norm{h_1}_{\mathscr X}
  +\norm{\widetilde h_1}_{\mathscr X}\bigr)
  \norm{h_1-\widetilde h_1}_{\mathscr X},
  \label{eq:quadratic-obstruction-lipschitz}
\end{split}
\end{equation}
and
\begin{equation}
  \norm{h_1-\widetilde h_1}_{\mathscr X}
  \leq C_R
  \norm{a_1-\widetilde a_1}_{\mathscr Y}.
  \label{eq:quadratic-reconstruction-error}
\end{equation}
Thus, after inserting the stated error bounds, an observed defect larger than the right-hand side of Equation~\eqref{eq:quadratic-obstruction-lipschitz} certifies a genuinely nongeometric two-jet.
\end{proposition}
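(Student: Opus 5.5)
The plan is a direct triangle-inequality argument built on the linearity of $\Pi_{\mathrm{ng}}$ and the bilinearity and symmetry of the normal Hessian. First I will express the obstruction through the second fundamental form. Since $\Pi_{\mathrm{ng}}$ is linear,
\begin{equation*}
  \mathfrak O_2(a_1,a_2)=\Pi_{\mathrm{ng}}a_2-\mathrm{II}_{g_{\mathbb H}}(h_1,h_1).
\end{equation*}
Here $\mathrm{II}_{g_{\mathbb H}}$ is the normal Hessian of Proposition~\ref{prop:hessian-gauge-covariance}, evaluated on the ITT representatives $h_1=\mathscr R_2a_1$ and $\widetilde h_1=\mathscr R_2\widetilde a_1$. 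Gauge invariance ensures that nothing depends on representative choices. Subtracting the analogous expression for the tilde data splits the difference into a linear term $\Pi_{\mathrm{ng}}(a_2-\widetilde a_2)$ and a quadratic term $\mathrm{II}(h_1,h_1)-\mathrm{II}(\widetilde h_1,\widetilde h_1)$.

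The linear term is bounded by $C_P\norm{a_2-\widetilde a_2}_{\mathscr Y}$ from the definition of $C_P$ in Equation~\eqref{eq:normal-projection-bound}. For the quadratic term I will use the polarization identity for a symmetric bilinear form,
\begin{equation*}
  \mathrm{II}(h_1,h_1)-\mathrm{II}(\widetilde h_1,\widetilde h_1)
  =\mathrm{II}(h_1+\widetilde h_1,\,h_1-\widetilde h_1).
\end{equation*}
Symmetry follows from Theorem~\ref{thm:renormalized-second-variation}, which states that $D^2\mathcal B$ is symmetric in its two arguments. Applying the bilinear bound of Equation~\eqref{eq:second-form-bilinear-bound} and the triangle inequality on $\norm{h_1+\widetilde h_1}$ then yields the second summand of Equation~\eqref{eq:quadratic-obstruction-lipschitz}.

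For Equation~\eqref{eq:quadratic-reconstruction-error}, the first components are linearly geometrizable, so $a_1,\widetilde a_1\in\mathcal Y_{\mathrm{geo}}$. Linearity of $\mathscr R_2$ then gives $h_1-\widetilde h_1=\mathscr R_2(a_1-\widetilde a_1)$ with $a_1-\widetilde a_1\in\mathcal Y_{\mathrm{geo}}$. The operator bound $C_R$ applies directly, and in the adapted Hilbert scale Theorem~\ref{thm:stable-continuum-reconstruction} supplies $C_R\le c^{-1}$.

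The certification statement follows as in Proposition~\ref{prop:optimal-continuum-witness}. Suppose the observed pair deviates from a true pair by errors within the stated bounds, and the true pair were geometrizable, so that $\mathfrak O_2=0$ by Theorem~\ref{thm:two-jet-geometrizability}. Then the observed defect would be at most the right-hand side of Equation~\eqref{eq:quadratic-obstruction-lipschitz}. An observed defect exceeding that bound therefore forces $\mathfrak O_2(a_1^{\mathrm{true}},a_2^{\mathrm{true}})\ne0$, which by the same theorem means the true two-jet is nongeometric.

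The main obstacle is not the algebra but making sure the regular norms $\mathscr X,\mathscr Y$ are chosen so that $\mathrm{II}$ is genuinely bounded, which is what gives $C_2<\infty$. The proposition assumes this, so I take it as a hypothesis rather than proving it.
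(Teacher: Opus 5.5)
Your proof is correct and follows the paper's argument. You write $\mathfrak O_2$ as $\Pi_{\mathrm{ng}}a_2-\mathrm{II}(h_1,h_1)$, bound the linear difference by $C_P$, bound the quadratic difference by the bilinear constant $C_2$, and obtain the reconstruction estimate from the definition of $C_R$. The only cosmetic difference is that the paper splits $\mathrm{II}(h_1,h_1)-\mathrm{II}(\widetilde h_1,\widetilde h_1)=\mathrm{II}(h_1-\widetilde h_1,h_1)+\mathrm{II}(\widetilde h_1,h_1-\widetilde h_1)$, which needs only bilinearity, whereas your polarization form $\mathrm{II}(h_1+\widetilde h_1,h_1-\widetilde h_1)$ also uses symmetry; both give the same bound.
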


\begin{proof}
By bilinearity and symmetry,
\begin{equation}
\begin{split}
  \mathrm{II}(h_1,h_1)-\mathrm{II}(\widetilde h_1,\widetilde h_1)
  ={}&\mathrm{II}(h_1-\widetilde h_1,h_1)\\
  &+\mathrm{II}(\widetilde h_1,h_1-\widetilde h_1).
\end{split}
\end{equation}
Apply Equation~\eqref{eq:second-form-bilinear-bound}, the triangle inequality, and the bound in Equation~\eqref{eq:normal-projection-bound}. Equation~\eqref{eq:quadratic-reconstruction-error} follows from the definition of $C_R$.
\end{proof}

\subsection{State-space factorization}
\label{subsec:state-space-factorization}

A single curve in state space probes only diagonal second derivatives. A geometrizable family must also satisfy mixed consistency conditions between independent state directions. Let $\Sigma$ be a finite-dimensional smooth manifold of recovery-regular full-rank logical states, let $\rho_0\in\Sigma$, and let
\begin{equation}
  \mathcal A:\Sigma\longrightarrow\mathcal Y^\tau_{\mathbb R}
  \label{eq:proto-area-state-map}
\end{equation}
be the background-subtracted proto-area map, with $\mathcal A(\rho_0)=0$. We assume that the first derivatives and all corrected second derivatives used below satisfy the regularity hypothesis of Theorem~\ref{thm:two-jet-geometrizability}, so that their ITT reconstructions define genuine asymptotically hyperbolic metric coefficients. Choose any torsion-free connection $\nabla^\Sigma$ on $\Sigma$. Its Hessian is the $\mathcal Y^\tau$-valued symmetric tensor
\begin{equation}
  (\nabla d\mathcal A)_{\rho_0}(X,Y)
  =X\bigl(d\mathcal A(Y)\bigr)
  -d\mathcal A(\nabla_X^\Sigma Y)
  \label{eq:state-map-hessian}
\end{equation}
for $X,Y\in T_{\rho_0}\Sigma$.

Assume the first-order condition
\begin{equation}
  \Pi_{\mathrm{ng}}d\mathcal A_{\rho_0}=0.
  \label{eq:state-map-first-order-condition}
\end{equation}
For each tangent vector define
\begin{equation}
  h_X=\mathscr R_2\bigl(d\mathcal A_{\rho_0}X\bigr).
  \label{eq:state-direction-metric}
\end{equation}
The state-space quadratic obstruction is
\begin{equation}
\begin{split}
  \mathfrak O^{(2)}_{\rho_0}(X,Y)
  :=\Pi_{\mathrm{ng}}
  \Bigl[
  &(\nabla d\mathcal A)_{\rho_0}(X,Y)\\
  &-D^2\mathcal B_{g_{\mathbb H}}[h_X,h_Y]
  \Bigr].
  \label{eq:state-space-obstruction-tensor}
\end{split}
\end{equation}

\begin{proposition}[Intrinsic character]
\label{prop:state-obstruction-intrinsic}
Under Equation~\eqref{eq:state-map-first-order-condition}, the tensor $\mathfrak O^{(2)}_{\rho_0}$ is independent of the torsion-free connection used in Equation~\eqref{eq:state-map-hessian}. It is a symmetric bilinear map
\begin{equation}
  \mathfrak O^{(2)}_{\rho_0}:
  \operatorname{Sym}^2T_{\rho_0}\Sigma
  \longrightarrow\mathcal Y_{\mathrm{ng}}.
  \label{eq:obstruction-tensor-type}
\end{equation}
It is also independent of the representatives chosen for the metric deformation classes $[h_X]$.
\end{proposition}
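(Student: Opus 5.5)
The proof splits into three independent checks: connection independence, symmetry and type, and independence of the choice of representatives $[h_X]$. All three reduce to the linear range condition, Equation~\eqref{eq:state-map-first-order-condition}, together with earlier results on the Hessian of $\mathcal B$.

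\emph{Connection independence.} Let $\nabla^\Sigma$ and $\widetilde\nabla^\Sigma$ be two torsion-free connections. Their difference $S(X,Y)=\widetilde\nabla^\Sigma_XY-\nabla^\Sigma_XY$ is a tensor, and it is symmetric because both connections are torsion free. By Equation~\eqref{eq:state-map-hessian} the two Hessians differ by $-d\mathcal A_{\rho_0}(S(X,Y))$. This term lies in the image of $d\mathcal A_{\rho_0}$, which by Equation~\eqref{eq:state-map-first-order-condition} is contained in $\mathcal Y_{\mathrm{geo}}=\ker\Pi_{\mathrm{ng}}$. Applying $\Pi_{\mathrm{ng}}$ removes it. The Hessian term $D^2\mathcal B[h_X,h_Y]$ involves no connection on $\Sigma$, so $\mathfrak O^{(2)}_{\rho_0}$ is unchanged.

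\emph{Symmetry and type.} Symmetry of $\nabla d\mathcal A$ follows from torsion-freeness: $X(d\mathcal A(Y))-Y(d\mathcal A(X))=d\mathcal A([X,Y])$ when $X,Y$ are extended as vector fields. The definition is tensorial in $X,Y$ because we evaluate at the point $\rho_0$. The term $D^2\mathcal B[h_X,h_Y]$ is symmetric by Theorem~\ref{thm:renormalized-second-variation}, and it is bilinear because $X\mapsto h_X=\mathscr R_2 d\mathcal A_{\rho_0}X$ is linear. The values lie in $\mathcal Y_{\mathrm{ng}}$ because the whole expression is the image of $\Pi_{\mathrm{ng}}$. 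Here I will use the regularity hypothesis that the Hessian data belong to $\mathcal Y^\tau$.

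\emph{Representative independence.} Suppose $h_X$ is replaced by $h_X+\mathcal L_{Z}g_{\mathbb H}$ for a boundary-fixing vector field $Z$ of sufficient boundary order. More generally, any element of the gauge class in the regular class is of this form. The change is $D^2\mathcal B[\mathcal L_Zg_{\mathbb H},h_Y]$, plus the corresponding term in the second slot. By Proposition~\ref{prop:hessian-gauge-covariance} this equals $-A(\mathcal L_Zh_Y)\in\mathcal Y_{\mathrm{geo}}$, which is annihilated by $\Pi_{\mathrm{ng}}$. Bilinearity handles a simultaneous change of both representatives, including the cross term $D^2\mathcal B[\mathcal L_Zg,\mathcal L_Wg]$, to which the same identity applies.

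The one delicate point is this last step. Proposition~\ref{prop:hessian-gauge-covariance} is stated for smooth $X$ with positive boundary order and $h$ in the decay class~\eqref{eq:quadratic-decay-class}. The quotient $\mathcal X$, however, is taken modulo $d^s(x^{1/2}H^1_0)$. I would restrict to the regular representatives guaranteed by the standing regularity hypothesis, namely polyhomogeneous tensors of positive order. There the gauge difference is $2d^s\omega$ with $\omega$ regular, and the identity can be extended by density using the bilinear bound~\eqref{eq:second-form-bilinear-bound}. This gives $\mathrm{II}_{g_{\mathbb H}}([h_X],[h_Y])$ as a well-defined function of the classes, and hence of $X$ and $Y$ alone.
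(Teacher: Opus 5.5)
Your proposal is correct and follows the paper's proof essentially verbatim. The difference tensor $S$ of two torsion-free connections changes the Hessian by $-d\mathcal A(S(X,Y))\in\mathcal Y_{\mathrm{geo}}$, symmetry comes from torsion-freeness together with the symmetry of $D^2\mathcal B$, and representative independence is Proposition~\ref{prop:hessian-gauge-covariance}. Your closing remark on extending the gauge identity from smooth boundary-fixing fields to the weighted gauge terms $d^s\omega$, by density and the bilinear bound, is a refinement the paper leaves implicit rather than a different route.
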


\begin{proof}
Let $\widehat\nabla^\Sigma$ be another torsion-free connection. Their difference is a symmetric $(1,2)$ tensor $S$, and
\begin{equation}
  (\widehat\nabla d\mathcal A)(X,Y)
  -(\nabla d\mathcal A)(X,Y)
  =-d\mathcal A(S(X,Y)).
  \label{eq:hessian-connection-change}
\end{equation}
Equation~\eqref{eq:state-map-first-order-condition} places the right-hand side in $\mathcal Y_{\mathrm{geo}}$, so its normal projection vanishes. Symmetry follows from torsion-freeness and the symmetry of $D^2\mathcal B$. Gauge independence follows from Proposition~\ref{prop:hessian-gauge-covariance}.
\end{proof}

We say that $\mathcal A$ factors through local metrics coefficientwise
to second order at $\rho_0$ if there is a regular metric-valued
two-jet $j^2_{\rho_0}\mathcal G$, with $\mathcal G(\rho_0)=g_{\mathbb H}$ and fixed
conformal infinity, such that the first and second chain-rule
coefficients obey
\begin{equation}
  j^2_{\rho_0}(\mathcal B\circ\mathcal G)
  =j^2_{\rho_0}\mathcal A.
  \label{eq:second-order-factorization-definition}
\end{equation}
Equality is understood modulo boundary-fixing diffeomorphisms on the
metric side.  If $\mathcal B$ is twice differentiable into
$\mathcal Y^\tau$ along a representative polynomial metric map, this
coefficientwise definition agrees with equality of ordinary
$\mathcal Y^\tau$-valued two-jets.

\begin{theorem}[Multiparameter factorization criterion]
\label{thm:multiparameter-factorization}
The proto-area map $\mathcal A$ factors through local metrics
coefficientwise to second order at $\rho_0$ if and only if
\begin{equation}
  \Pi_{\mathrm{ng}}d\mathcal A_{\rho_0}=0
  \label{eq:multiparameter-linear-condition}
\end{equation}
and
\begin{equation}
  \mathfrak O^{(2)}_{\rho_0}=0.
  \label{eq:multiparameter-quadratic-condition}
\end{equation}
When these conditions hold, the first and second derivatives of a gauge-fixed realizing metric map are
\begin{align}
  d\mathcal G_{\rho_0}(X)
  &=h_X,
  \label{eq:metric-map-first-derivative}\\
  (\nabla d\mathcal G)_{\rho_0}(X,Y)
  &=\mathscr R_2
  \Bigl[
  (\nabla d\mathcal A)_{\rho_0}(X,Y)
  -D^2\mathcal B[h_X,h_Y]
  \Bigr].
  \label{eq:metric-map-second-derivative}
\end{align}
\end{theorem}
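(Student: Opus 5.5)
The plan is to reduce the multiparameter statement to the one-parameter criterion, Theorem~\ref{thm:two-jet-geometrizability}, by working in a chart, and to use Proposition~\ref{prop:state-obstruction-intrinsic} to remove any dependence on that chart. First fix normal coordinates $\lambda=(\lambda^i)$ for $\nabla^\Sigma$ centered at $\rho_0$. The connection-independence in Proposition~\ref{prop:state-obstruction-intrinsic} allows this choice, and in these coordinates $(\nabla d\mathcal A)_{\rho_0}(\partial_i,\partial_j)=\partial_i\partial_j\mathcal A(0)$. A metric-valued two-jet is then a pair consisting of a linear map $X\mapsto d\mathcal G(X)$ and a symmetric bilinear map $(X,Y)\mapsto(\nabla d\mathcal G)(X,Y)$. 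By bilinearity of $D^2\mathcal B$, the second-order chain rule gives
\begin{equation*}
 \nabla d(\mathcal B\circ\mathcal G)(X,Y)=A\,(\nabla d\mathcal G)(X,Y)+D^2\mathcal B[d\mathcal G(X),d\mathcal G(Y)].
\end{equation*}
This is the polarized form of Equation~\eqref{eq:boundary-map-second-chain-rule}. One obtains it by applying that identity to the lines $\lambda=t(X\pm Y)$ and polarizing; for the polynomial metric map the mixed terms are exact.

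\emph{Necessity.} Suppose a factorization exists. The first-order coefficient gives $d\mathcal A(X)=A\,d\mathcal G(X)$, so $\Pi_{\mathrm{ng}}d\mathcal A_{\rho_0}=0$. Lemma~\ref{thm:rank-two-range} says the ITT representative is unique, so $[d\mathcal G(X)]=[h_X]$. Apply $\Pi_{\mathrm{ng}}$ to the second-order identity. This kills $A(\nabla d\mathcal G)$ because its range lies in $\mathcal Y_{\mathrm{geo}}$. By Proposition~\ref{prop:hessian-gauge-covariance}, replacing $d\mathcal G(X)$ with the gauge-equivalent $h_X$ does not change the normal projection of the Hessian. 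Hence $\mathfrak O^{(2)}_{\rho_0}(X,Y)=0$. The one point needing care is that "modulo diffeomorphisms" also allows a gauge term at second order. A term $\mathcal L_Zg_{\mathbb H}$ in $\nabla d\mathcal G$ is invisible to $A$, by Corollary~\ref{cor:diffeomorphism-gauge}. A first-order gauge change produces cross terms of the form $-A(\mathcal L_Zh)$, which are again tangential. So none of these enters the normal projection.

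\emph{Sufficiency.} Assume both conditions. Define $d\mathcal G(X)=h_X$, which is linear in $X$ because $\mathscr R_2$ is linear. Define $(\nabla d\mathcal G)(X,Y)$ by Equation~\eqref{eq:metric-map-second-derivative}. The bracket lies in $\mathcal Y_{\mathrm{geo}}$ exactly because $\mathfrak O^{(2)}_{\rho_0}=0$, so applying $\mathscr R_2$ returns it exactly: $A\mathscr R_2 r=r$ for every $r\in\mathcal Y_{\mathrm{geo}}$. The result is symmetric and bilinear, because both $\nabla d\mathcal A$ and $D^2\mathcal B$ are. Put $\mathcal G(\lambda)=g_{\mathbb H}+\lambda^ih_{\partial_i}+\tfrac12\lambda^i\lambda^j(\nabla d\mathcal G)(\partial_i,\partial_j)$. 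By the standing regularity hypothesis these coefficients are positive-order regular tensors. So, as in the proof of Theorem~\ref{thm:two-jet-geometrizability}, $\mathcal G$ is a simple asymptotically hyperbolic metric with fixed conformal infinity for small $\lambda$. The chain-rule identity then reproduces $d\mathcal A$ and $\nabla d\mathcal A$ exactly, and the derivative formulas \eqref{eq:metric-map-first-derivative}--\eqref{eq:metric-map-second-derivative} hold by construction. Uniqueness in the ITT gauge follows from injectivity of $A$ on $\mathcal X_{\mathrm{itt}}$.

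The main obstacle is the gauge bookkeeping in the necessity direction. Equality of two-jets only up to boundary-fixing diffeomorphisms allows an arbitrary second-order diffeomorphism jet, and one must check that it changes $\nabla d(\mathcal B\circ\mathcal G)$ only by tangential terms. The plan is to differentiate the invariance $\mathcal B(\varphi_\lambda^*\mathcal G(\lambda))=\mathcal B(\mathcal G(\lambda))$ twice, exactly as in the proof of Proposition~\ref{prop:hessian-gauge-covariance}, and to show that every extra term has the form $A(\cdot)$ or $D^2\mathcal B[\mathcal L_Zg_{\mathbb H},\cdot]=-A(\mathcal L_Z\cdot)$.
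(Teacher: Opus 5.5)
Your proposal is correct and follows the paper's proof: necessity comes from the first-order and covariant second-order chain rules followed by normal projection, and sufficiency comes from defining the second coefficient as $\mathscr R_2$ of the (geometric) bracket and building the quadratic metric map in $\nabla^\Sigma$-normal coordinates. The one addition is your explicit use of Proposition~\ref{prop:hessian-gauge-covariance}, both to replace $d\mathcal G(X)$ by $h_X$ inside $D^2\mathcal B$ and to check that second-order gauge terms are tangential; this makes precise a step the paper leaves implicit when it writes $D^2\mathcal B[h_X,h_Y]$ directly.
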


\begin{proof}
If $\mathcal A$ and $\mathcal B\circ\mathcal G$ have the same coefficient two-jet, the first-order chain rule gives
\begin{equation}
  d\mathcal A(X)=A\,d\mathcal G(X),
\end{equation}
which proves Equation~\eqref{eq:multiparameter-linear-condition} and identifies the gauge-fixed derivative with Equation~\eqref{eq:metric-map-first-derivative}. The covariant second-order chain rule gives
\begin{equation}
\begin{split}
  (\nabla d\mathcal A)(X,Y)
  ={}&A(\nabla d\mathcal G)(X,Y)\\
  &+D^2\mathcal B[h_X,h_Y].
  \label{eq:multiparameter-second-chain-rule}
\end{split}
\end{equation}
Normal projection proves Equation~\eqref{eq:multiparameter-quadratic-condition}.

Conversely, choose $\nabla^\Sigma$-normal coordinates $y^1,\ldots,y^d$ centered at $\rho_0$ and let $e_i=\partial_{y^i}|_{\rho_0}$. Put $h_i=h_{e_i}$ and define $h_{ij}$ by the right-hand side of Equation~\eqref{eq:metric-map-second-derivative}. The vanishing of the obstruction ensures that the bracket is geometric and therefore lies in the domain of $\mathscr R_2$. The local metric map
\begin{equation}
  \mathcal G(y)=g_{\mathbb H}
  +\sum_i y^ih_i
  +\frac12\sum_{i,j}y^iy^jh_{ij}
  \label{eq:constructed-metric-state-map}
\end{equation}
has the prescribed first and covariant second derivatives at the origin. For sufficiently small $y$, it is positive, simple, asymptotically hyperbolic, and has fixed conformal infinity. Equations~\eqref{eq:metric-map-first-derivative}, \eqref{eq:metric-map-second-derivative}, and the chain rule show that $\mathcal B\circ\mathcal G$ and $\mathcal A$ have the same coefficient two-jet. Pathwise $C^2$ differentiability into $\mathcal Y^\tau$ upgrades it to an ordinary Banach-space two-jet.
\end{proof}

\begin{theorem}[BKM--Jacobi matching]
\label{thm:bkm-jacobi-matching}
Let $U$ be an affine neighborhood of a faithful logical state $\rho_0$.
For every boundary interval $A\subset S^1$, let $\mathcal N_A$ be the
boundary marginal channel, let $\mathcal D_A$ be a fixed
entropy-preserving decoding isometry, set
$\widehat{\mathcal N}_A=\mathcal D_A\circ\mathcal N_A$, and let
$\Phi_A$ be a completely positive trace-preserving (CPTP) recovered-matter
channel on the decoded output.  Put
\begin{equation}
 \widehat\omega_A(\rho)=\widehat{\mathcal N}_A(\rho),
 \qquad
 \tau_A(\rho)=\Phi_A\widehat{\mathcal N}_A(\rho),
 \label{eq:bkm-jacobi-decoded-family}
\end{equation}
and assume that $\widehat\omega_{A,0}$ and $\tau_{A,0}$ are faithful on
fixed supports.  Define
\begin{equation}
 \begin{split}
 \Delta_A(X,Y)={}&
 \mathfrak g_{\widehat\omega_{A,0}}
 \bigl(\widehat{\mathcal N}_AX,
       \widehat{\mathcal N}_AY\bigr)\\
 &-\mathfrak g_{\tau_{A,0}}
 \bigl(\Phi_A\widehat{\mathcal N}_AX,
       \Phi_A\widehat{\mathcal N}_AY\bigr),
 \end{split}
 \label{eq:bkm-jacobi-sufficiency-form}
\end{equation}
Assume that, for every $X,Y$, the interval function
\begin{equation}
 \boldsymbol\Delta(X,Y):
 A\longmapsto\Delta_A(X,Y)
 \label{eq:bkm-jacobi-defect-datum}
\end{equation}
belongs to $\mathcal Y^\tau_{\mathbb R}$, and that the first derivative
of the proto-area family below is an element of the same data space.
Let
\begin{equation}
 \begin{split}
 \mathcal A_{\rho_0}(\rho)(A)=4G_{\mathrm{eff}}\Bigl\{&
 S(\widehat\omega_A(\rho))-S(\tau_A(\rho))\\
 &-S(\widehat\omega_{A,0})+S(\tau_{A,0})\Bigr\}.
 \end{split}
 \label{eq:bkm-jacobi-proto-area}
\end{equation}
Because $\mathcal D_A$ is isometric,
$S(\widehat\omega_A)=S(\mathcal N_A(\rho))$; hence
Equation~\eqref{eq:bkm-jacobi-proto-area} is the branch proto-area of
Section~\ref{sec:proto-area}, now indexed by the full interval family.
Assume that this first derivative is linearly geometrizable and that the
data have the regularity required in
Theorem~\ref{thm:multiparameter-factorization}.  Set
\begin{equation}
 h_X=\mathscr R_2\bigl((d\mathcal A_{\rho_0})_{\rho_0}X\bigr).
 \label{eq:bkm-jacobi-metric-tangent}
\end{equation}
Then $\mathcal A_{\rho_0}$ factors through one local metric coefficientwise to second order at
$\rho_0$ if and only if, for all logical tangents $X,Y$,
\begin{equation}
 \boxed{
 \Pi_{\mathrm{ng}}
 \left[
 -4G_{\mathrm{eff}}\,\boldsymbol\Delta(X,Y)
 -D^2\mathcal B_{g_{\mathbb H}}[h_X,h_Y]
 \right]=0.}
 \label{eq:bkm-jacobi-matching}
\end{equation}
If $\widehat\omega_{A,0}=\tau_{A,0}\otimes\chi_A$ and
$\Phi_A=\tr_{A_2}$, then
\begin{equation}
 \Delta_A(X,Y)=
 \mathfrak g_{\widehat\omega_{A,0}}
 \bigl(Q_A\widehat{\mathcal N}_AX,
       Q_A\widehat{\mathcal N}_AY\bigr),
 \label{eq:bkm-jacobi-product-square}
\end{equation}
with $Q_A=I-R_{\chi_A}\Phi_A$.  The statement extends to a fixed
central fiber by taking the weighted direct sum over central blocks.
\end{theorem}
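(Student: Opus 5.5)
The plan is to reduce the statement to Theorem~\ref{thm:multiparameter-factorization} by computing the state-space Hessian of $\mathcal A_{\rho_0}$ explicitly on the affine neighborhood $U$. I will use the flat connection of the affine space of density matrices as the torsion-free connection $\nabla^\Sigma$. Proposition~\ref{prop:state-obstruction-intrinsic} allows this choice, since under the first-order condition the obstruction tensor does not depend on the connection. With this connection, affine coordinates are normal coordinates, and $(\nabla d\mathcal A_{\rho_0})_{\rho_0}(X,Y)$ is the ordinary mixed second derivative $\partial_s\partial_t\mathcal A_{\rho_0}(\rho_0+sX+tY)|_{0}$.

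Next I compute this Hessian region by region. Both $\widehat{\mathcal N}_A$ and $\Phi_A\widehat{\mathcal N}_A$ are linear, so the output families $\widehat\omega_A(\rho_0+sX+tY)$ and $\tau_A(\rho_0+sX+tY)$ are affine in $(s,t)$. Their second derivatives $\sigma_{ij}$ therefore vanish. By the faithful fixed-support hypothesis, Theorem~\ref{thm:section2-entropy-two-jet} applies to each output, and Equation~\eqref{eq:section2-entropy-mixed-hessian} reduces to
\[
\partial_s\partial_tS(\widehat\omega_A)=-\mathfrak g_{\widehat\omega_{A,0}}(\widehat{\mathcal N}_AX,\widehat{\mathcal N}_AY),
\]
with the analogous formula for $\tau_A$. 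Subtracting the two and multiplying by $4G_{\mathrm{eff}}$ gives
\[
(\nabla d\mathcal A_{\rho_0})(X,Y)(A)=-4G_{\mathrm{eff}}\Delta_A(X,Y).
\]
This is the multiparameter form of Equation~\eqref{eq:section2-affine-proto-area-curvature}. By hypothesis $\boldsymbol\Delta(X,Y)\in\mathcal Y^\tau_{\mathbb R}$, so the Hessian is a legitimate $\mathcal Y^\tau$-valued symmetric tensor. The isometry of $\mathcal D_A$ gives $S(\widehat\omega_A)=S(\mathcal N_A\rho)$, which confirms that this is the branch proto-area; it is not needed in the computation itself.

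Inserting this Hessian into Equation~\eqref{eq:state-space-obstruction-tensor} shows that $\mathfrak O^{(2)}_{\rho_0}(X,Y)$ equals the left-hand side of the boxed condition~\eqref{eq:bkm-jacobi-matching}, with $h_X$ as defined there. The linear condition~\eqref{eq:multiparameter-linear-condition} is assumed, and the regularity hypotheses are those of Theorem~\ref{thm:multiparameter-factorization}. That theorem therefore gives the equivalence: factorization holds if and only if $\mathfrak O^{(2)}_{\rho_0}=0$ on all pairs $X,Y$. Polarization lets one pass between pairs and diagonal entries, though it is not needed since the statement is already stated for all pairs.

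For the product case, set $\widehat\omega_{A,0}=\tau_{A,0}\otimes\chi_A$ and $\Phi_A=\tr_{A_2}=T$. Since $\tau_{A,0}=T\widehat\omega_{A,0}$, Equation~\eqref{eq:section2-bkm-polarized-defect} of Theorem~\ref{thm:section2-bkm-petz-decomposition}, applied with $Y=\widehat{\mathcal N}_AX$ and $Z=\widehat{\mathcal N}_AY$, gives Equation~\eqref{eq:bkm-jacobi-product-square} directly, with $Q_A=Q_{\chi_A}$. For a fixed central fiber I will use Corollary~\ref{cor:section2-fixed-center-bkm}. Along fixed-central-probability tangents the BKM forms split as weighted block sums, and the entropy Hessian computation holds blockwise on the block-diagonal outputs, so $\Delta_A$ becomes the weighted direct sum. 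The rest of the argument is unchanged.

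The step I expect to need the most care is justifying the flat-connection Hessian inside Theorem~\ref{thm:multiparameter-factorization}. The entropy Hessian is computed region by region, so one must check that the interval-indexed family is differentiable as a $\mathcal Y^\tau$-valued map and not merely pointwise in $A$. I would handle this by reading the coefficientwise definition of the two-jet and the stated membership hypotheses as supplying the coefficients directly: the coefficients are identified pointwise, and they lie in $\mathcal Y^\tau$ by assumption. This is exactly what the coefficientwise criterion requires.
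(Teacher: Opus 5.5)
Your proposal is correct and follows essentially the same route as the paper. You compute the state-space Hessian in affine coordinates from the mixed entropy Hessian, using that the outputs vary affinely so their second derivatives vanish, which gives $-4G_{\mathrm{eff}}\Delta_A(X,Y)$; you then substitute this into the obstruction tensor and invoke Theorem~\ref{thm:multiparameter-factorization}, and handle the product and central-fiber cases by the polarized BKM--Petz defect and blockwise additivity of the BKM form, exactly as the paper does.
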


\begin{proof}
The two output states in Equation~\eqref{eq:bkm-jacobi-proto-area}
depend affinely on the logical density matrix.  The mixed entropy
Hessian therefore gives
\begin{equation}
 (\nabla d\mathcal A_{\rho_0})_{\rho_0}(X,Y)(A)
 =-4G_{\mathrm{eff}}\Delta_A(X,Y)
 \label{eq:bkm-jacobi-area-hessian}
\end{equation}
in affine coordinates.  Substitution into the intrinsic obstruction
tensor in Equation~\eqref{eq:state-space-obstruction-tensor} yields
Equation~\eqref{eq:bkm-jacobi-matching}.  Its vanishing is necessary
and sufficient by
Theorem~\ref{thm:multiparameter-factorization}.  In the product case,
Theorem~\ref{thm:section2-bkm-petz-decomposition} proves
Equation~\eqref{eq:bkm-jacobi-product-square}.  Additivity of the BKM
form over fixed-weight central blocks proves the last assertion.
\end{proof}

Equation~\eqref{eq:bkm-jacobi-matching} matches the proto-area Hessian
generated by negative BKM loss with the Jacobi normal acceleration.
At a product reference, $\Delta_A(X,X)$ equals the squared norm
discarded by the Petz-sufficient lift.  The geometric term is the
acceleration forced by displacement of the bulk geodesic.  A local
metric exists coefficientwise to second order only when
\begin{equation}
 \Pi_{\mathrm{ng}}\bigl(-4G_{\mathrm{eff}}
 \boldsymbol\Delta(X,Y)\bigr)
 =\Pi_{\mathrm{ng}}D^2\mathcal B[h_X,h_Y].
 \label{eq:bkm-jacobi-signed-equality}
\end{equation}
In the stationary case
$d\mathcal A_{\rho_0}=0$, one has $h_X=0$ and the condition reduces to
\begin{equation}
 \Pi_{\mathrm{ng}}\boldsymbol\Delta(X,Y)=0.
 \label{eq:stationary-bkm-geometrizability}
\end{equation}
Thus a nonzero normal component of the sufficiency defect is an intrinsic
quadratic obstruction in the stationary continuum setting and does not
invoke a discrete approximation.

The tensor in Equation~\eqref{eq:state-space-obstruction-tensor} is the natural nonlinear analogue of the linear witness map. It tests mixed state directions as well as individual paths. In geometric terms, the normal Hessian of the proto-area map must equal the pullback of the second fundamental form of the boundary-length data locus.

\begin{figure}[t]
\centering
\begin{tikzpicture}[font=\small,>=Latex,x=1cm,y=1cm]
  \draw[->] (-0.2,0)--(7.5,0) node[right] {$\mathcal Y_{\mathrm{geo}}$};
  \draw[->] (0,-0.2)--(0,4.4) node[above] {$\mathcal Y_{\mathrm{ng}}$};
  \draw[line width=1pt]
    plot[smooth] coordinates {(0,0) (1.3,0.12) (2.7,0.48) (4.2,1.18) (5.7,2.25) (7,3.55)};
  \node[anchor=west] at (5.15,3.45) {metric two-jet image};
  \draw[dashed] (0,0)--(7.1,0);
  \node[below] at (5.7,-0.05) {linear range};
  \draw[->,line width=0.8pt] (0,0)--(2.35,0) node[midway,below] {$t a_1$};
  \draw[->,line width=0.8pt] (2.35,0)--(2.35,0.38)
    node[midway,right] {$\frac12t^2\mathrm{II}(h_1,h_1)$};
  \fill (2.35,0.38) circle (2pt);
  \draw[->,dotted,line width=0.9pt] (0,0).. controls (1.1,0.02) and (1.9,0.18)..(2.35,0.38);
  \draw[->,line width=0.8pt] (2.35,0.38)--(2.35,1.35)
    node[midway,right] {$\frac12t^2\mathfrak O_2$};
  \fill (2.35,1.35) circle (2pt);
  \node[align=left,anchor=west] at (2.65,1.35)
    {same linear response,\\nongeometric acceleration};
\end{tikzpicture}
\caption{Coefficientwise two-jet geometry near the hyperbolic background. The tangent space is the rank-two X-ray range $\mathcal Y_{\mathrm{geo}}$. A regular metric two-jet has normal acceleration fixed by $\mathrm{II}_{g_{\mathbb H}}$. A proto-area two-jet can share the same first coefficient while differing by a quadratic obstruction $\mathfrak O_2$, which no second metric coefficient can remove. The diagram represents the regular two-jet image and does not assume a full nonlinear Banach submanifold.}
\label{fig:quadratic-geometric-locus}
\end{figure}

Figure~\ref{fig:quadratic-geometric-locus} summarizes the coefficientwise test: a linearly admissible velocity fixes the tangential direction, while the normal Hessian fixes the acceleration of every realizing metric two-jet.

At first nonlinear order, the normal acceleration is fixed and gauge
covariant, giving a necessary and sufficient criterion for regular
coefficient two-jets. The code constructions of
Section~\ref{sec:skewed-codes-revised} distinguish local geometric state
dependence from connected nongeometric response.

\section{Code realizations of geometric and nongeometric response}
\label{sec:skewed-codes-revised}

Approximate recovery may make the proto-area depend on the logical state, while geometrizability requires the full interval family to arise from one local set of geometric degrees of freedom. Overlapping cuts reveal the difference. A local change of a bond Schmidt spectrum enters every interval through the incidence of that bond with its minimum cut and is represented by an edge coordinate. A coupling between separated bond modes produces a connected response depending on two incidences and may point outside the local edge-weight image. The four-terminal model of Section~\ref{subsec:star-example} makes this separation explicit.

Tensor-network codes realize Ryu--Takayanagi-type entropy formulae and operator-algebra quantum error correction \cite{PastawskiEtAl2015,AlmheiriDongHarlow2015,sec6BenyKempfKribs2007}. Approximate holographic codes obtained by skewing an exact code subspace were introduced in Ref.~\cite{CaoLackey2021} and developed further in Ref.~\cite{CaoEtAl2026}. The controlled-sector family and fixed-central-fiber erasure code below are distinct constructions that isolate complementary cross-region sewing witnesses. The former has exact central area operators. Its local controlled bond skews are geometrizable within a fixed cut chamber, whereas a cross-cell skew has an exact left-null witness. The latter follows an affine path inside a fixed central fiber and produces a genuine BKM area Hessian with its own exact witness. The microscopic dilation and finite-dimensional calculations are given in Appendix~\ref{app:code-validation}.

\subsection{Controlled bond skews}
\label{subsec:controlled-bond-skews}

Let $G=(V,E)$ be a finite tensor-network graph with boundary regions $\calI$, and fix a regular minimum-cut chamber $\mathcal C$. Its cut incidence matrix is denoted by $M_{\mathcal C}$. We use only the standard fixed-cut factorization property: after applying the exact decoder associated with $A\in\calI$, the degrees of freedom crossing the cut $C_A$ appear as bond factors, while the logical algebra assigned to the entanglement wedge of $A$ appears as a decoded subsystem. This property is exact for the fixed-point tensor-network codes used to derive graph entropy formulae. No random-tensor or large-bond-dimension approximation is needed below.

The source Hilbert space is $\mathcal H_X\cong\mathbb C^m$ with orthonormal basis $\{|x\rangle\}_{x\in\mathsf X}$. It is part of the logical input rather than an auxiliary boundary flag. We consider the full-rank exponential family
\begin{equation}
 \rho_\lambda
 =\sum_{x\in\mathsf X}p_x(\lambda)|x\rangle\!\langle x|\otimes\sigma_x,
 \qquad
 p_x(\lambda)
 =\frac{p_x^{0}e^{\lambda q_x}}{\sum_zp_z^{0}e^{\lambda q_z}},
 \label{eq:sec6-gibbs-family}
\end{equation}
where $p_x^0>0$, each $\sigma_x$ has fixed support, and the charge is centered,
\begin{equation}
 \langle q\rangle_0:=\sum_xp_x^0q_x=0,
 \qquad
 \sigma_q^2:=\langle q^2\rangle_0.
 \label{eq:sec6-centered-charge}
\end{equation}
The block-diagonal family is sufficient to probe state dependence while making the recovered entropy independent of the coherence-alignment gauge. Coherent superpositions of the source sectors calibrate the error of the full code and distinguish exact recovery of the center from approximate recovery of the Hilbert-space extension.

The projectors $P_x$ generate a commutative central algebra shared by the monitored entanglement wedges, as in the sector decomposition of operator-algebra codes. Only this commuting algebra is redundantly recoverable; no noncommuting source observable is assumed to be reconstructible from complementary boundary regions. Keeping the full Hilbert space $\mathcal H_X$ nevertheless has operational content, because the attenuation of off-diagonal matrix units measures how far the Hamiltonian-skewed encoding departs from exact quantum error correction. The fixed-cut model can be implemented as an edge-mode extension of any exact chamber code: a bond pair is assigned to each graph edge and its halves are routed through the adjacent isometries. At $\varepsilon=0$ these pairs are independent of the source, so the original decoders remain exact.
We denote the common center by
\begin{equation}
 \mathcal Z_X=\operatorname{span}\{P_x:x\in\mathsf X\}
 \cong\bigoplus_{x\in\mathsf X}\mathbb C P_x.
 \label{eq:sec6-common-center-algebra}
\end{equation}

A controlled skew is inserted directly into the tensor network.  Let
$|\Omega_e\rangle$ be the unperturbed state of bond $e$. If
$P_x=|x\rangle\!\langle x|$ and $K_j$ acts on existing bond modes, then
\begin{equation}
 U_\varepsilon^{\mathrm{virt}}
 =\exp\!\left[-i\varepsilon\sum_j\sum_xc_j(x)P_x\otimes K_j\right].
 \label{eq:sec6-controlled-isometry}
\end{equation}
If $\mathcal T=\bigotimes_vT_v$ denotes the vertex contraction, the
perturbed encoding is
\begin{equation}
 V_\varepsilon
 =\mathcal T U_\varepsilon^{\mathrm{virt}}
 \left(I_L\otimes\bigotimes_e|\Omega_e\rangle\right)
 =:\widetilde U_\varepsilon V_0,
 \label{eq:sec6-controlled-encoding-contraction}
\end{equation}
where $\widetilde U_\varepsilon$ is any physical unitary extension of
the pushed-forward action on the boundary code subspace.
The supports are chosen disjoint, so the terms commute. The isometry is therefore generated by a genuine Hamiltonian perturbation of the exact code. In a tensor-network representation the operators $P_x$ act on the pre-existing bulk source wire and the $K_j$ act on virtual bond modes. Pushing them through the isometric tensors gives a physical representative on the boundary code subspace. A local skew has bounded network diameter. A cross-cell skew couples modes belonging to separated cells.

For a diagonal source state, the fixed-cut decoder gives a classical--quantum state of the form
\begin{equation}
 \omega_A(\lambda,\varepsilon)
 =\sum_xp_x(\lambda)|x\rangle\!\langle x|
   \otimes\sigma_x\otimes\tau_{A,x}(\varepsilon).
 \label{eq:sec6-decoded-cq-state}
\end{equation}
The canonical fixed-cut decoder discards the bond factor and returns the physical diagonal source--matter state. It therefore recovers every state in Equation~\eqref{eq:sec6-gibbs-family} as a state of the specified central algebra. It does not, in general, preserve a purification of the source Hilbert space, because off-diagonal matrix units experience the Schur channel below. Coherence alignment is calibrated in Appendix~\ref{app:code-validation}. Every admissible center-preserving aligned decoder has the same action on the diagonal Gibbs sector, so the corresponding background-subtracted proto-area, in entropy units, is
\begin{equation}
 a_A(\lambda,\varepsilon)
 =\sum_x\bigl[p_x(\lambda)-p_x^0\bigr]
 S\bigl(\tau_{A,x}(\varepsilon)\bigr).
 \label{eq:sec6-proto-area-cq}
\end{equation}
The Shannon entropy of the source and the entropy of the recovered matter cancel identically. Equation~\eqref{eq:sec6-proto-area-cq} is the exact calibrated proto-area on this recovery-regular Gibbs sector. It is independent of the center-preserving coherence alignment selected by the channel calibration in Appendix~\ref{app:code-validation}.

For fixed $\varepsilon$, define
\begin{equation}
 L_A(\varepsilon)=\sum_{x\in\mathsf X}
 S\bigl(\tau_{A,x}(\varepsilon)\bigr)P_x.
 \label{eq:sec6-central-area-operator}
\end{equation}
In particular, $L_A(\varepsilon)\in(\mathcal Z_X)_{\mathrm{sa}}$.
Then
\begin{equation}
 a_A(\rho,\varepsilon)-a_A(\rho_0,\varepsilon)
 =\tr\!\left[(\rho-\rho_0)L_A(\varepsilon)\right].
 \label{eq:sec6-central-affine-expectation}
\end{equation}
This controlled-sector model therefore has a state-independent central
area operator and zero Hessian on the block-diagonal algebraic affine state space. A
nonzero second derivative along the Gibbs coordinate is
\begin{equation}
 \left.\frac{d^2}{d\lambda^2}\right|_0a_A(\rho_\lambda,\varepsilon)
 =\tr\!\left[\rho_\lambda''(0)L_A(\varepsilon)\right],
 \label{eq:sec6-gibbs-curve-acceleration}
\end{equation}
which is acceleration of the chosen state curve rather than an
intrinsic within-sector area Hessian.  Its role is operator-valued
sewing, as in Theorem~\ref{thm:operator-valued-central-sewing}.

On the full source Hilbert space, this construction gives approximate rather than exact regional recovery. Suppose $K_j^2=\mathbf 1$ and $\langle\Xi_j|K_j|\Xi_j\rangle=0$ in the unperturbed bond state. After decoding and discarding the bond modes, the source undergoes the Schur channel
\begin{equation}
 \mathcal D_{\Gamma_\varepsilon}(|x\rangle\!\langle y|)
 =\Gamma_\varepsilon(x,y)|x\rangle\!\langle y|,
 \qquad
 \Gamma_\varepsilon(x,y)
 =\prod_j\cos\!\bigl(\varepsilon[c_j(x)-c_j(y)]\bigr).
 \label{eq:sec6-schur-channel}
\end{equation}
The channel fixes the diagonal algebra, while coherences are attenuated only at second order.

\begin{proposition}[Uniform approximate-recovery bound]
\label{prop:sec6-approximate-recovery}
Assume
\begin{equation}
 |\varepsilon[c_j(x)-c_j(y)]|\leq\frac{\pi}{2}
 \label{eq:sec6-small-angle-condition}
\end{equation}
for all $j,x,y$. The canonical decoded recovery satisfies
\begin{equation}
 \bigl\|\mathcal D_{\Gamma_\varepsilon}-\operatorname{id}_X\bigr\|_\diamond
 \leq
 \frac{m^2\varepsilon^2}{2}
 \max_{x,y}\sum_j[c_j(x)-c_j(y)]^2.
 \label{eq:sec6-diamond-bound}
\end{equation}
It is exact on every state in the family \eqref{eq:sec6-gibbs-family}.
\end{proposition}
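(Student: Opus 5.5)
The plan is to work directly with the Schur form of the decoded source channel in Equation~\eqref{eq:sec6-schur-channel}, which is taken as given from the fixed-cut decoding described above. Write $\Delta_\varepsilon:=\mathcal D_{\Gamma_\varepsilon}-\operatorname{id}_X$. This is itself a Schur multiplier with symbol $M(x,y):=\Gamma_\varepsilon(x,y)-1$, and $M(x,x)=0$ because every factor $\cos(0)$ equals $1$. The bound then splits into two independent pieces: a crude norm bound for a Schur multiplier in terms of its entries, and a pointwise bound on $|1-\Gamma_\varepsilon(x,y)|$.

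\emph{Norm of a Schur multiplier.} For any reference system $R$ and any $Z$ on $X\otimes R$, write $P_x=|x\rangle\!\langle x|$. Then
\begin{equation*}
 (\Delta_\varepsilon\otimes\operatorname{id}_R)(Z)=\sum_{x,y}M(x,y)\,(P_x\otimes I_R)\,Z\,(P_y\otimes I_R).
\end{equation*}
Apply the triangle inequality for the trace norm. Use also $\norm{(P_x\otimes I)Z(P_y\otimes I)}_1\le\norm{Z}_1$, which holds because the projectors are contractions. Together these give
\begin{equation*}
 \norm{(\Delta_\varepsilon\otimes\operatorname{id}_R)(Z)}_1\le\sum_{x,y}|M(x,y)|\,\norm{Z}_1\le m^2\max_{x,y}|M(x,y)|\,\norm{Z}_1 .
\end{equation*}
This bound is uniform in $R$, so it bounds the diamond norm. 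A sharper factorization-norm estimate exists, but the stated constant $m^2$ only requires this crude bound.

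\emph{Entrywise estimate.} Set $\theta_j=\varepsilon[c_j(x)-c_j(y)]$. The small-angle hypothesis~\eqref{eq:sec6-small-angle-condition} gives $\cos\theta_j\in[0,1]$. For numbers $a_j\in[0,1]$ one has the elementary inequality $1-\prod_j a_j\le\sum_j(1-a_j)$, which follows by induction from $1-ab=(1-a)+a(1-b)\le(1-a)+(1-b)$. Combining it with $1-\cos\theta\le\theta^2/2$ yields
\begin{equation*}
 0\le 1-\Gamma_\varepsilon(x,y)\le\frac{\varepsilon^2}{2}\sum_j[c_j(x)-c_j(y)]^2 .
\end{equation*}
Inserting this into the previous bound gives Equation~\eqref{eq:sec6-diamond-bound}. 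The only delicate point, and the step I would check most carefully, is the nonnegativity of the cosines. It is precisely what the hypothesis $|\theta_j|\le\pi/2$ supplies. Without it the product could change sign, and the telescoping inequality would have to be replaced by a $|\cdot|$ version with a worse constant.

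\emph{Exactness on the Gibbs family.} The states~\eqref{eq:sec6-gibbs-family} are block diagonal in the source basis, with the form $\sum_x p_x(\lambda)P_x\otimes\sigma_x$. The channel acts only on the source factor, and by Equation~\eqref{eq:sec6-schur-channel} it multiplies $P_x=|x\rangle\!\langle x|$ by $\Gamma_\varepsilon(x,x)=1$. It therefore fixes every such state, with the matter factors $\sigma_x$ untouched. This is the same statement as the earlier observation that the channel fixes the diagonal algebra $\mathcal Z_X$.
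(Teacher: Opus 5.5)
Your proof is correct and follows the paper's own argument essentially step for step: the block decomposition of $Z$ with contractive compressions gives the factor $m^2\max_{x,y}|1-\Gamma_\varepsilon(x,y)|$, the telescoping product inequality together with $1-\cos t\leq t^2/2$ bounds each entry, and the unit diagonal of the Schur symbol gives exactness on the block-diagonal Gibbs states. One aside: your closing remark overstates the role of the small-angle hypothesis, because $1-ab=(1-a)+a(1-b)$ gives $|1-ab|\leq|1-a|+|1-b|$ whenever $|a|\leq1$, and $1-\cos\theta_j\geq0$ always, so the same bound with the same constant holds without that hypothesis.
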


\begin{proof}
After fixed-cut decoding, tracing the bond modes gives the Schur multiplier in
Equation~\eqref{eq:sec6-schur-channel}.  For a source--reference operator of unit
trace norm, the deviation of the extended channel from the identity has trace
norm at most $m^2\max_{x,y}|1-\Gamma_\varepsilon(x,y)|$.  The small-angle
hypothesis makes
all cosine factors nonnegative, so
$1-\prod_j a_j\leq\sum_j(1-a_j)$ and $1-\cos t\leq t^2/2$ give
Equation~\eqref{eq:sec6-diamond-bound}.  Diagonal source states have no
off-diagonal blocks and are fixed exactly.  Appendix~\ref{appD:fixed-cut-encoding}
gives the block-norm calculation in full.
\end{proof}

The derivatives of the state family enter through a simple centered-moment identity. For any function $f$ on $\mathsf X$,
\begin{equation}
 \left.\frac{d}{d\lambda}\langle f\rangle_\lambda\right|_{0}
 =\langle qf\rangle_0,
 \qquad
 \left.\frac{d^2}{d\lambda^2}\langle f\rangle_\lambda\right|_{0}
 =\left\langle(q^2-\sigma_q^2)f\right\rangle_0.
 \label{eq:sec6-gibbs-jet}
\end{equation}
These formulae follow by differentiating the normalized exponential family.

\subsection{Local skews and exact geometrizability}
\label{subsec:local-skews-exact}

Each local bond is taken to contain a two-dimensional Schmidt sector
\begin{equation}
 |\Omega(\vartheta)\rangle
 =\cos\vartheta\,|00\rangle+\sin\vartheta\,|11\rangle,
 \qquad 0<\vartheta<\frac{\pi}{4}.
 \label{eq:sec6-schmidt-state}
\end{equation}
Let
\begin{equation}
 K=i\bigl(|11\rangle\!\langle00|-|00\rangle\!\langle11|\bigr).
 \label{eq:sec6-schmidt-generator}
\end{equation}
On the Schmidt sector,
\begin{equation}
 e^{-i\alpha K}|\Omega(\vartheta)\rangle
 =|\Omega(\vartheta+\alpha)\rangle.
 \label{eq:sec6-angle-shift}
\end{equation}
The one-sided entropy is
\begin{equation}
 s(\vartheta)
 =h_2(\sin^2\vartheta),
 \qquad
 s'(\vartheta)
 =\sin(2\vartheta)\log\!\bigl(\cot^2\vartheta\bigr)>0,
 \label{eq:sec6-bond-entropy}
\end{equation}
where logarithms are natural. The strict derivative is important. Near a nonmaximal bond, the Schmidt angle is a genuine local coordinate on the space of edge entropies.

For edge $e$, choose a source profile $c_e(x)$ and insert the controlled generator $P_x\otimes K_e$. If $e\notin C_A$, both halves of the bond lie on the same side of the cut, and the rotation does not change the entropy of $A$. If $e\in C_A$, it contributes $s(\vartheta_e+\varepsilon c_e(x))$.

\begin{theorem}[All-order geometrizability of local controlled skews]
\label{thm:sec6-local-all-order}
Inside the fixed chamber $\mathcal C$, define
\begin{equation}
 w_e(\lambda,\varepsilon)
 =\sum_x\bigl[p_x(\lambda)-p_x^0\bigr]
 s\bigl(\vartheta_e+\varepsilon c_e(x)\bigr).
 \label{eq:sec6-local-edge-weight}
\end{equation}
The calibrated proto-area vector of the locally skewed code is \begin{equation}
 \mathbf a^{\mathrm{loc}}(\lambda,\varepsilon)
 =M_{\mathcal C}\mathbf w(\lambda,\varepsilon).
 \label{eq:sec6-local-factorization}
\end{equation}
Every discrete nongeometric witness therefore vanishes,
\begin{equation}
 y^{\mathsf T}\mathbf a^{\mathrm{loc}}(\lambda,\varepsilon)=0
 \qquad
 \text{for all }y\in\ker M_{\mathcal C}^{\mathsf T},
 \label{eq:sec6-local-witness-zero}
\end{equation}
for all $\lambda$ and $\varepsilon$ for which the chamber remains fixed.
\end{theorem}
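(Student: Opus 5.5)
The plan is to start from the exact calibrated proto-area formula in Equation~\eqref{eq:sec6-proto-area-cq} and show that, for each region $A\in\calI$, the recovered-matter entropy $S(\tau_{A,x}(\varepsilon))$ splits into a sum over the cut edges. I will write it as a background term independent of $\varepsilon$ plus $\sum_{e\in C_A}s(\vartheta_e+\varepsilon c_e(x))$. The background term is the entropy of the unskewed bond factors that carry no controlled Schmidt sector, together with the fixed chamber contributions. It is independent of $x$, so it drops out after multiplication by $p_x(\lambda)-p_x^0$, because $\sum_x[p_x(\lambda)-p_x^0]=0$.

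To get the edge decomposition, I fix the sector $x$. In that sector the controlled unitary in Equation~\eqref{eq:sec6-controlled-isometry} reduces to $\bigotimes_e e^{-i\varepsilon c_e(x)K_e}$. The generators have disjoint supports and therefore commute. By Equation~\eqref{eq:sec6-angle-shift}, each bond state is then $|\Omega(\vartheta_e+\varepsilon c_e(x))\rangle$. Next I use the fixed-cut factorization property. After the exact decoder for $A$, the factor $\tau_{A,x}$ is a tensor product of the decoded bond halves crossing $C_A$ with a fixed state. Bonds with $e\notin C_A$ have both halves on one side of the cut, so they contribute a pure factor or nothing, and hence no entropy. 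Additivity of von Neumann entropy over tensor products then gives $S(\tau_{A,x}(\varepsilon))=\mathrm{const}_A+\sum_e (M_{\mathcal C})_{A,e}\,s(\vartheta_e+\varepsilon c_e(x))$.

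Substituting this into Equation~\eqref{eq:sec6-proto-area-cq} and exchanging the finite sums over $x$ and $e$ gives $a_A=\sum_e (M_{\mathcal C})_{A,e}w_e(\lambda,\varepsilon)$, which is Equation~\eqref{eq:sec6-local-factorization}. The witness identity in Equation~\eqref{eq:sec6-local-witness-zero} then follows immediately, since $y^{\mathsf T}M_{\mathcal C}\mathbf w=(M_{\mathcal C}^{\mathsf T}y)^{\mathsf T}\mathbf w=0$, exactly as in Theorem~\ref{thm:regular-geometrizability}(iii).

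The main obstacle is the middle step: justifying that the decoded output factorizes along $C_A$ with exactly the rotated bond Schmidt states, and that nothing else depends on $x$ or $\varepsilon$. This is where the hypothesis that the chamber $\mathcal C$ stays fixed enters. The minimum cut for each $A$ must remain $C_A$, so that the same decoder applies and the incidence matrix does not change. I would argue this from the edge-mode extension construction, in which the bond halves are routed through the isometries and the exact decoder acts as the identity on routed bond halves, and defer the tensor-network bookkeeping to Appendix~\ref{app:code-validation}. One more check is needed. A local skew must act within a single bond, so that no controlled term couples two bonds across different cuts; this is precisely what distinguishes this case from the cross-cell skew.
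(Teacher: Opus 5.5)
Your proposal is correct and matches the paper's proof. You fix the source sector $x$, factor the decoded bond state over the edges of $C_A$, use entropy additivity to get $\sum_{e\in C_A}s(\vartheta_e+\varepsilon c_e(x))$, average against $p_x(\lambda)-p_x^0$ to obtain $M_{\mathcal C}\mathbf w$, and annihilate with $y\in\ker M_{\mathcal C}^{\mathsf T}$; your explicit removal of an $x$-independent background constant via $\sum_x[p_x(\lambda)-p_x^0]=0$ is a harmless bit of extra care that the paper leaves implicit.

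One terminological slip: in Equation~\eqref{eq:sec6-proto-area-cq}, $\tau_{A,x}(\varepsilon)$ is the decoded bond factor that the decoder discards, not a recovered-matter entropy. It enters through $S(\omega_A)$, while the source Shannon term and the matter entropies cancel against the recovered state. Your computation already treats it as the bond factor, so the argument is unaffected.
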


\begin{proof}
For a fixed source sector, the decoded bond state factorizes over the edges of
$C_A$.  Entropy additivity gives a sum of
$s(\vartheta_e+\varepsilon c_e(x))$ over those edges; averaging with
$p_x(\lambda)-p_x^0$ yields
$a_A^{\mathrm{loc}}=\sum_e(M_{\mathcal C})_{Ae}w_e$, which is
Equation~\eqref{eq:sec6-local-factorization}.  Multiplication by any
$y\in\ker M_{\mathcal C}^{\mathsf T}$ proves
Equation~\eqref{eq:sec6-local-witness-zero}.  The corresponding fixed-cut
calculation appears in Appendix~\ref{appD:fixed-cut-encoding}.
\end{proof}

Because $s'(\vartheta_e)\neq0$, the same construction locally realizes every sufficiently small state-dependent edge-weight profile. The inverse function theorem gives a neighborhood $U_e$ of $s(\vartheta_e)$ on which $s^{-1}$ is smooth. For a target profile $f_e(x)$ in this neighborhood, the angle displacement
\begin{equation}
 \alpha_e(x)
 =s^{-1}\!\bigl(s(\vartheta_e)+f_e(x)\bigr)-\vartheta_e
 \label{eq:sec6-inverse-entropy-coordinate}
\end{equation}
produces the prescribed profile. The image of local controlled skews therefore contains an open neighborhood in the finite-dimensional edge-weight space. A local skew may generate a nontrivial state-dependent proto-area while all intervals continue to see the change through the same edge coordinate.

\subsection{Cross-cell skews and connected cut response}
\label{subsec:cross-cell-skews}

A nonlocal skew is detected by its connected response across two bond incidences. Consider two unperturbed bonds
\begin{equation}
 |\Psi_r\rangle
 =|\Omega_r\rangle_{u\bar u}\otimes|\Omega_r\rangle_{v\bar v},
 \qquad
 |\Omega_r\rangle
 =\sqrt r\,|00\rangle+\sqrt{1-r}\,|11\rangle,
 \qquad \frac12<r<1.
 \label{eq:sec6-two-bond-state}
\end{equation}
The modes $u$ and $v$ belong to different network cells. We couple them by
\begin{equation}
 U_\times(\alpha)=e^{-i\alpha X_uX_v}.
 \label{eq:sec6-cross-gate}
\end{equation}
If $u$ and $v$ lie on the same side of a bipartition, this is a local unitary and the entropy is unchanged. The nontrivial configuration is the crossed cut
\begin{equation}
 u,\bar v\in A,
 \qquad
 \bar u,v\in\bar A.
 \label{eq:sec6-crossed-configuration}
\end{equation}
Both original bonds cross the cut, so the reduced density matrix is full rank and its entropy is analytic in $\alpha$.

Let $F_r(\alpha)$ denote the entropy increment of $A$ in the configuration \eqref{eq:sec6-crossed-configuration}. In the ordered basis
$\{|00\rangle,|01\rangle,|10\rangle,|11\rangle\}$ of $u\bar v$, the reduced state is
\begin{equation}
 \rho_A(\alpha)=
 \begin{pmatrix}
 rA&0&0&i\kappa\\
 0&(1-r)A&i\kappa&0\\
 0&-i\kappa&rB&0\\
 -i\kappa&0&0&(1-r)B
 \end{pmatrix},
 \label{eq:sec6-cross-reduced-matrix}
\end{equation}
where
\begin{equation}
 A=rc^2+(1-r)s^2,
 \qquad
 B=(1-r)c^2+rs^2,
 \qquad
 \kappa=(2r-1)\sqrt{r(1-r)}\,cs,
 \label{eq:sec6-cross-matrix-abbreviations}
\end{equation}
with $c=\cos\alpha$ and $s=\sin\alpha$. Its four eigenvalues are obtained from the two displayed $2\times2$ blocks, and
\begin{equation}
 F_r(\alpha)=S\bigl(\rho_A(\alpha)\bigr)-2h_2(r).
 \label{eq:sec6-exact-Fr}
\end{equation}
This is the exact entropy increment when both supporting bonds cross
with opposite orientations.  A second configuration is essential for
extracting the connected part.  If only the $u\bar u$ bond crosses and
$u$ and $v$ are on opposite sides, the reduced state of $u$ has
eigenvalues $A$ and $1-A$, and the entropy increment is
\begin{equation}
 f_r(\alpha)=h_2\!\left(r\cos^2\alpha+(1-r)\sin^2\alpha\right)-h_2(r).
 \label{eq:sec6-single-cross-response}
\end{equation}
The response not representable by a one-edge weight is
\begin{equation}
 H_r(\alpha)=F_r(\alpha)-f_r(\alpha).
 \label{eq:sec6-connected-response}
\end{equation}

\begin{proposition}[Raw and connected cross-cell susceptibilities]
\label{prop:sec6-cross-susceptibility}
The functions $F_r$, $f_r$, and $H_r$ are even and analytic near the
origin.  The raw crossed response satisfies
\begin{equation}
 F_r(0)=F_r'(0)=0,
 \qquad
 F_r''(0)=\chi(r),
 \label{eq:sec6-Fr-jet}
\end{equation}
where
\begin{equation}
 \chi(r)
 =2\delta(1+\delta^2)\operatorname{arctanh}\delta-2\delta^2>0,
 \qquad \delta=2r-1.
 \label{eq:sec6-chi-r}
\end{equation}
Equivalently,
\begin{equation}
 F_r(\alpha)=\frac{\chi(r)}{2}\alpha^2+O(\alpha^4).
 \label{eq:sec6-Fr-expansion}
\end{equation}
For $r=4/5$,
\begin{equation}
 \chi(4/5)=\frac{204}{125}\log2-\frac{18}{25}>0.
 \label{eq:sec6-chi-example}
\end{equation}
The connected response instead obeys
\begin{align}
 H_r(0)&=H_r'(0)=0,
 \qquad H_r''(0)=\widetilde\chi(r),
 \label{eq:sec6-connected-jet}\\
 \widetilde\chi(r)
 &=-2\delta\left[(1-\delta^2)\operatorname{arctanh}\delta
 +\delta\right]<0,
 \label{eq:sec6-connected-susceptibility}\\
 H_r(\alpha)&=\frac{\widetilde\chi(r)}2\alpha^2+O(\alpha^4).
 \label{eq:sec6-connected-expansion}
\end{align}
For $r=4/5$,
\begin{equation}
 \widetilde\chi(4/5)
 =-\frac{96}{125}\log2-\frac{18}{25}
 \simeq-1.252337.
 \label{eq:sec6-connected-example}
\end{equation}
\end{proposition}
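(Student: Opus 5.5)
The plan is to work directly from the block structure of Equation~\eqref{eq:sec6-cross-reduced-matrix} and obtain every jet from Theorem~\ref{thm:section2-entropy-two-jet}, without diagonalizing in closed form. Ordering the basis as $\{|00\rangle,|11\rangle\}\oplus\{|01\rangle,|10\rangle\}$, $\rho_A(\alpha)$ is a direct sum of two Hermitian $2\times2$ blocks,
\begin{equation*}
 \begin{pmatrix} rA & i\kappa\\ -i\kappa & (1-r)B\end{pmatrix},
 \qquad
 \begin{pmatrix} (1-r)A & i\kappa\\ -i\kappa & rB\end{pmatrix}.
\end{equation*}

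\textbf{Evenness and analyticity.} Under $\alpha\mapsto-\alpha$ the functions $A$ and $B$ are unchanged and $\kappa\mapsto-\kappa$. Each block's trace and determinant depend only on $A$, $B$ and $\kappa^2$, so the spectrum, and hence $F_r$, is even. The same holds for $f_r$, since $A$ is even, and therefore for $H_r$. At $\alpha=0$ we have $A=r$, $B=1-r$ and $\kappa=0$, so the block spectra are $\{r^2,(1-r)^2\}$ and $\{r(1-r),r(1-r)\}$. These are all strictly positive because $1/2<r<1$. Hence $\rho_A$ is faithful on a fixed support near the origin, and since the entries are analytic, $S(\rho_A(\alpha))$ is analytic there. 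The repeated eigenvalue of the second block is harmless, because the Fréchet logarithm formula in Equation~\eqref{eq:section2-frechet-logarithm} does not require simple spectrum. Finally $F_r(0)=H(r^2,r(1-r),r(1-r),(1-r)^2)-2h_2(r)=0$, $f_r(0)=0$, and evenness gives $F_r'(0)=f_r'(0)=0$.

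\textbf{Second derivatives.} Apply Equation~\eqref{eq:section2-entropy-second-derivative} at $\alpha=0$. The derivatives of the coefficients are
\begin{equation*}
 A'=B'=0,\qquad \kappa'=\delta\sqrt{r(1-r)},\qquad A''=-2\delta,\qquad B''=2\delta,\qquad \kappa''=0 .
\end{equation*}
Consequently $\dot\rho$ is purely off-diagonal and $\ddot\rho$ is purely diagonal.

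The linear term $-\tr(\ddot\rho\log\rho)$ then collects four diagonal contributions with weights $\mp2\delta r$ and $\mp2\delta(1-r)$ against $\log r^2$, $\log(r(1-r))$ and $\log(1-r)^2$. It reduces to a multiple of $\delta\log\frac{r}{1-r}=2\delta\operatorname{arctanh}\delta$.

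The BKM term is evaluated with the divided differences of Equation~\eqref{eq:section2-divided-logarithm}:
\begin{equation*}
 \mathfrak g(\dot\rho,\dot\rho)=2\kappa'^2\bigl[\ell(r^2,(1-r)^2)+\ell(r(1-r),r(1-r))\bigr].
\end{equation*}
Here $\ell(r^2,(1-r)^2)=4\operatorname{arctanh}\delta/\delta$, using $r^2-(1-r)^2=\delta$, and $\ell(r(1-r),r(1-r))=1/(r(1-r))=4/(1-\delta^2)$. Collecting terms with $r(1-r)=(1-\delta^2)/4$ should give $\chi(r)$ in Equation~\eqref{eq:sec6-chi-r}.

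For $f_r$, the chain rule gives $f_r''(0)=h_2'(r)A''(0)=\log\frac{1-r}{r}\cdot(-2\delta)=4\delta\operatorname{arctanh}\delta$. Then $\widetilde\chi=\chi-4\delta\operatorname{arctanh}\delta$, which should simplify to Equation~\eqref{eq:sec6-connected-susceptibility}. The $O(\alpha^4)$ remainders follow from evenness and analyticity.

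\textbf{Values and signs.} At $r=4/5$ we have $\delta=3/5$ and $\operatorname{arctanh}(3/5)=\log 2$, which gives Equations~\eqref{eq:sec6-chi-example} and~\eqref{eq:sec6-connected-example}. The sign $\widetilde\chi<0$ is immediate, since $\delta>0$ and $(1-\delta^2)\operatorname{arctanh}\delta+\delta>0$. The sign $\chi>0$ amounts to $(1+\delta^2)\operatorname{arctanh}\delta>\delta$. This holds because $\operatorname{arctanh}\delta>\delta$ on $(0,1)$, which is clear from its power series with positive coefficients.

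\textbf{Main obstacle.} The main obstacle is bookkeeping rather than conceptual: the diagonal $\log$-terms and the two BKM divided differences must combine exactly into the stated closed form. I would first verify the algebra numerically at $r=4/5$ against a direct eigenvalue computation of the two blocks, and only then simplify symbolically.
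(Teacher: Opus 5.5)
Your proposal is correct and follows essentially the same route as the paper: you use the block structure of $\rho_A(\alpha)$, apply the entropy two-jet with the divided-difference BKM term at the diagonal reference $\rho_A(0)$, and compute $f_r''(0)=h_2'(r)A''(0)$ by the chain rule before subtracting. The bookkeeping you flag does close: the linear term is $-\tr(\ddot\rho\log\rho)=4\delta\operatorname{arctanh}\delta$, the BKM term is $2\delta(1-\delta^2)\operatorname{arctanh}\delta+2\delta^2$, and their difference is exactly $\chi(r)$.
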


\begin{proof}
Appendix~\ref{appD:cross-cell-calculation} derives the reduced parity blocks and evaluates the entropy two-jet. Equations~\eqref{eq:sec6-chi-intermediate} and~\eqref{eq:sec6-single-cross-susceptibility} give the raw and one-edge responses; their difference is Equation~\eqref{eq:sec6-connected-susceptibility}, whose sign is strict for $0<2r-1<1$.
\end{proof}

The complete cut pattern must be recorded before applying a witness.
In the four-terminal chamber of
Equation~\eqref{eq:star-cut-matrix}, place $u$ on the boundary side of
$e_1$ and $v$ on the central side of $e_2$.  In the region ordering
$(1,2,3,4,12,23)$, the six entropy increments are
\begin{equation}
 \mathbf q_\times(\alpha)
 =\bigl(f_r(\alpha),0,0,0,F_r(\alpha),0\bigr)^{\mathsf T}.
 \label{eq:sec6-complete-six-region-response}
\end{equation}
The first entry is a local change of edge $e_1$, so
\begin{equation}
 \mathbf q_\times(\alpha)
 =M\bigl(f_r(\alpha),0,0,0\bigr)^{\mathsf T}
 +H_r(\alpha)b^\times,
 \qquad
 b^\times=(0,0,0,0,1,0)^{\mathsf T}.
 \label{eq:sec6-exact-connected-decomposition}
\end{equation}
Reversing both bond orientations moves the single-cross entries but
leaves the connected coefficient $H_r=F_r-f_r$ unchanged after local
edge subtraction.  Equation~\eqref{eq:sec6-exact-connected-decomposition}
fixes the convention for $b^\times$ below.

For the remainder of this subsection, $\mathcal C$ denotes this
four-terminal chamber, $M_{\mathcal C}=M$ is the matrix in
Equation~\eqref{eq:star-cut-matrix}, and $b^\times\in\mathbb R^6$ has
the ordering displayed above.  The additional local controlled
rotations are assumed to act on bond subfactors disjoint from
$u,\bar u,v,\bar v$.  Their generators therefore commute with the
cross gate and their decoded bond states factor from the crossed
two-bond subsystem.

The controlled cross-cell skew uses
\begin{equation}
 U_{\times,\varepsilon}
 =\sum_xP_x\otimes e^{-i\varepsilon c_\times(x)X_uX_v}.
 \label{eq:sec6-controlled-cross-gate}
\end{equation}
Combining it with arbitrary local skews satisfying this disjoint-support
condition gives the exact data decomposition
\begin{equation}
 \mathbf a(\lambda,\varepsilon)
 =M_{\mathcal C}\mathbf w(\lambda,\varepsilon)
 +\beta(\lambda,\varepsilon)b^\times,
 \label{eq:sec6-local-plus-cross}
\end{equation}
where
\begin{equation}
 \beta(\lambda,\varepsilon)
 =\sum_x[p_x(\lambda)-p_x^0]
 H_r\bigl(\varepsilon c_\times(x)\bigr).
 \label{eq:sec6-beta-exact}
\end{equation}
No entropy approximation is present in \eqref{eq:sec6-local-plus-cross}. The only restriction is that the minimum-cut chamber and the oriented incidence pattern in Equation~\eqref{eq:sec6-exact-connected-decomposition} remain fixed.

At the operator level, collect the local one-edge contributions into
$W_e(\varepsilon)\in(\mathcal Z_X)_{\mathrm{sa}}$ and define
\begin{equation}
 K_\times(\varepsilon)=\sum_x
 H_r\bigl(\varepsilon c_\times(x)\bigr)P_x.
 \label{eq:sec6-connected-central-operator}
\end{equation}
Thus $K_\times(\varepsilon)\in(\mathcal Z_X)_{\mathrm{sa}}$, and all
$L_A(\varepsilon)$ are compared through the fixed embeddings into
$\mathcal Z_X$.
The regional central area operators obey the exact decomposition
\begin{equation}
 (L_A(\varepsilon))_{A\in\calI}
 =M_{\mathcal C}(W_e(\varepsilon))_{e\in E}
 +b^\times K_\times(\varepsilon).
 \label{eq:sec6-operator-valued-cross-decomposition}
\end{equation}
Consequently every $y\in\ker M_{\mathcal C}^{\mathsf T}$ satisfies
\begin{equation}
 \sum_Ay_AL_A(\varepsilon)
 =(y^{\mathsf T}b^\times)K_\times(\varepsilon).
 \label{eq:sec6-operator-valued-cross-witness}
\end{equation}
Whenever the right-hand side is nonzero, the exact regional central
operators fail the sewing criterion in
Theorem~\ref{thm:operator-valued-central-sewing}.  This conclusion is
state independent and is stronger than detecting the violation on one
chosen Gibbs path.

\begin{theorem}[Exact first-order nongeometric witness]
\label{thm:sec6-first-order-nongeometric}
Suppose $b^\times\notin\operatorname{im}M_{\mathcal C}$, and choose
$y\in\ker M_{\mathcal C}^{\mathsf T}$ with $y^{\mathsf T}b^\times\neq0$. Then
\begin{equation}
 y^{\mathsf T}\mathbf a(\lambda,\varepsilon)
 =\beta(\lambda,\varepsilon)y^{\mathsf T}b^\times.
 \label{eq:sec6-exact-cross-witness}
\end{equation}
For the binary Gibbs source $\mathsf X=\{-1,+1\}$ with $p_x^0=1/2$ and $q_x=x$,
\begin{equation}
 \left.\frac{d\beta}{d\lambda}\right|_0
 =\frac12\left[
 H_r\bigl(\varepsilon c_\times(+1)\bigr)
 -H_r\bigl(\varepsilon c_\times(-1)\bigr)
 \right].
 \label{eq:sec6-binary-beta-prime}
\end{equation}
If $c_\times(+1)^2\neq c_\times(-1)^2$, this derivative is nonzero for all sufficiently small nonzero $\varepsilon$. More specifically,
\begin{equation}
 \left.\frac{d\beta}{d\lambda}\right|_0
 =\frac{\widetilde\chi(r)\varepsilon^2}{4}
 \left[c_\times(+1)^2-c_\times(-1)^2\right]+O(\varepsilon^4).
 \label{eq:sec6-binary-beta-expansion}
\end{equation}
The state-dependent response is therefore nongeometric already at first order in the state parameter.
\end{theorem}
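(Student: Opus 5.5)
The argument is essentially linear algebra on top of the exact decomposition \eqref{eq:sec6-local-plus-cross}. I will first apply $y^{\mathsf T}$ to both sides of that identity. Since $y\in\ker M_{\mathcal C}^{\mathsf T}$, the term $y^{\mathsf T}M_{\mathcal C}\mathbf w(\lambda,\varepsilon)=(M_{\mathcal C}^{\mathsf T}y)^{\mathsf T}\mathbf w=0$ drops out identically in $(\lambda,\varepsilon)$. This leaves \eqref{eq:sec6-exact-cross-witness}.

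A witness $y$ with $y^{\mathsf T}b^\times\neq0$ exists precisely because $b^\times\notin\operatorname{im}M_{\mathcal C}$, by part (iii) of Theorem~\ref{thm:regular-geometrizability}. For the concrete star chamber one can take $y=y^{12}$ from \eqref{eq:star-witnesses}, which gives $y^{\mathsf T}b^\times=1$. The hypothesis that the chamber and the oriented incidence pattern remain fixed is exactly the condition under which \eqref{eq:sec6-local-plus-cross} holds, so no further work is needed at this stage.

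Next, I will differentiate $\beta$ from \eqref{eq:sec6-beta-exact} in $\lambda$ at $0$. The first identity in \eqref{eq:sec6-gibbs-jet}, applied to $f(x)=H_r(\varepsilon c_\times(x))$, gives $\beta'(0)=\langle qf\rangle_0$. Note that the constant $p_x^0$ subtraction does not contribute to the derivative. For the binary source, with $p^0=1/2$ and $q_x=x$ (already centered), this evaluates to $\tfrac12[f(+1)-f(-1)]$, which is \eqref{eq:sec6-binary-beta-prime}.

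Finally, I will insert the expansion \eqref{eq:sec6-connected-expansion}, namely $H_r(\alpha)=\tfrac{\widetilde\chi(r)}2\alpha^2+O(\alpha^4)$, which is valid uniformly for $|\alpha|$ small because $H_r$ is even and analytic by Proposition~\ref{prop:sec6-cross-susceptibility}. Substituting produces \eqref{eq:sec6-binary-beta-expansion}. Proposition~\ref{prop:sec6-cross-susceptibility} also gives $\widetilde\chi(r)<0$, so the leading coefficient is nonzero whenever $c_\times(+1)^2\neq c_\times(-1)^2$. For small nonzero $\varepsilon$, the $\varepsilon^2$ term then dominates the $O(\varepsilon^4)$ remainder, and the derivative is nonzero.

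It follows that $\frac{d}{d\lambda}y^{\mathsf T}\mathbf a|_0\neq0$. The first-order variation therefore leaves $\operatorname{im}M_{\mathcal C}$, and Theorem~\ref{thm:regular-geometrizability} makes it nongeometric at first order.

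The only delicate point is ensuring that the chamber stays fixed for small $\varepsilon$ and $\lambda$, which is needed for \eqref{eq:sec6-local-plus-cross} to apply. I expect regularity of the background $w_0$ to handle this: the chamber radius of Proposition~\ref{prop:chamber-radius} is positive, while the perturbations are $O(\varepsilon^2)$ and continuous in $\lambda$.
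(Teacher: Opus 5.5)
Your proposal is correct and follows the paper's own proof step for step: $y\in\ker M_{\mathcal C}^{\mathsf T}$ annihilates the local term in the exact decomposition, the first Gibbs-jet identity gives the binary derivative, and the expansion of $H_r$ together with $\widetilde\chi(r)<0$ makes the leading coefficient nonzero. Your closing worry is unnecessary, because the fixed chamber and incidence pattern are a standing hypothesis of that decomposition rather than something you must prove; note also that the local-skew edge entropies move at order $\varepsilon$, not $\varepsilon^2$, though continuity alone is enough for your remark.
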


\begin{proof}
Multiplying \eqref{eq:sec6-local-plus-cross} by $y^{\mathsf T}$ removes the local term and gives \eqref{eq:sec6-exact-cross-witness}. Equation~\eqref{eq:sec6-binary-beta-prime} is the first identity in \eqref{eq:sec6-gibbs-jet}. The expansion follows from \eqref{eq:sec6-connected-expansion}. Since $\widetilde\chi(r)<0$, the leading coefficient is nonzero whenever the squared couplings differ.
\end{proof}

Figure~\ref{fig:sec6-controlled-bond-skews} summarizes the mechanism. A local controlled rotation changes one Schmidt spectrum and hence one edge coordinate. A cross-cell gate also has a single-cross contribution, which must first be absorbed into the local edge profile.  The remainder $H_r$ is a connected two-edge observable and has the strictly negative susceptibility in Equation~\eqref{eq:sec6-connected-susceptibility}.

\begin{figure}[t]
 \centering
 \includegraphics[width=\textwidth]{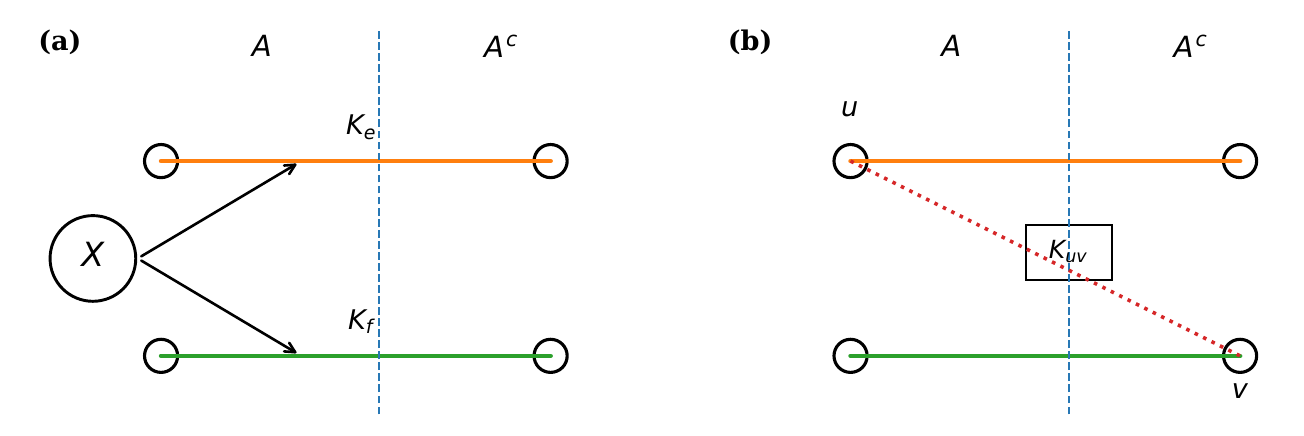}
 \caption{Controlled bond skews. Local Schmidt rotations alter the entropy of each crossed bond independently and produce data in $\operatorname{im}M_{\mathcal C}$. A cross-cell gate produces a one-edge response and the connected component $H_r b^\times$ in Equation~\eqref{eq:sec6-exact-connected-decomposition}.}
 \label{fig:sec6-controlled-bond-skews}
\end{figure}

\subsection{A fixed-central-fiber affine code with an exact quadratic witness}
\label{subsec:fixed-center-erasure-code}

The preceding controlled-center family tests operator-valued sewing,
but its state dependence comes from changing central probabilities.
A different example uses one global encoding isometry and an affine
noncentral logical-state path while keeping the central probabilities
fixed.  Its first proto-area derivative
vanishes identically, while its second derivative is a BKM defect with
a nonzero geometric witness.

Let
\begin{equation}
 \mathcal M_L=\mathcal B(\mathbb C^2)\oplus
 \mathcal B(\mathbb C^2),
 \qquad
 Z(\mathcal M_L)=\operatorname{span}\{P_0,P_1\}.
 \label{eq:fixed-fiber-logical-algebra}
\end{equation}
Fix $0<q<1$, let $\pi=I_2/2$, and consider
\begin{equation}
 \rho_t=q\rho_t^{(0)}\oplus(1-q)\pi,
 \qquad
 \rho_t^{(0)}=\frac12(I_2+tZ),
 \qquad |t|<1,
 \label{eq:fixed-fiber-affine-path}
\end{equation}
where $Z$ is the Pauli matrix.  This is a faithful affine path and
\begin{equation}
 \tr(P_0\rho_t)=q,
 \qquad
 \tr(P_1\rho_t)=1-q.
 \label{eq:fixed-fiber-central-probabilities}
\end{equation}
Its tangent is noncentral in the first matrix block.

Define the Bell-pair isometry
\begin{equation}
 J|0\rangle=|\Phi_+\rangle,
 \qquad
 J|1\rangle=|\Phi_-\rangle,
 \qquad
 |\Phi_\pm\rangle=\frac{|00\rangle\pm|11\rangle}{\sqrt2}.
 \label{eq:fixed-fiber-bell-mask}
\end{equation}
For every state diagonal in the logical $Z$ basis,
\begin{equation}
 \tr_{Q_2}(J\rho J^\dagger)
 =\tr_{Q_1}(J\rho J^\dagger)=\pi.
 \label{eq:fixed-fiber-masked-marginals}
\end{equation}
Take four boundary sites with noncentral factors
\begin{equation}
 B_1=Q_1,
 \qquad B_2=Q_2\otimes F,
 \qquad B_3=\mathbb C,
 \qquad B_4=L_4\otimes E,
 \label{eq:fixed-fiber-boundary-sites}
\end{equation}
and one classical center record $C_i\simeq\mathbb C^2$ at every site.
For $0<s<1$ and $|\eta\rangle=|\Phi_+\rangle$, define
\begin{equation}
 \begin{split}
 W|\psi\rangle={}&
 \sqrt{s}\,J|\psi\rangle_{Q_1Q_2}
 |0\rangle_F|0\rangle_{L_4}|0\rangle_E\\
 &+\sqrt{1-s}\,|\eta\rangle_{Q_1Q_2}
 |1\rangle_F|\psi\rangle_{L_4}|1\rangle_E.
 \end{split}
 \label{eq:fixed-fiber-erasure-isometry}
\end{equation}
The two branches are orthogonal, so $W^\dagger W=I$.  The full
center-preserving encoding is
\begin{equation}
 V(\psi_0\oplus\psi_1)=
 \sum_{\alpha=0}^1
 |\alpha\rangle_{C_1}|\alpha\rangle_{C_2}
 |\alpha\rangle_{C_3}|\alpha\rangle_{C_4}
 \otimes W|\psi_\alpha\rangle.
 \label{eq:fixed-fiber-global-encoding}
\end{equation}
Orthogonality of the center records proves that $V$ is an isometry.

For region $12$, the flag $F$ distinguishes transmission from
erasure.  The diamond-minimax covariant recovery applies $J^\dagger$
on the transmitted branch and outputs $\pi$ on the erased branch.
On either matrix block its recovered channel is
\begin{equation}
 \mathcal M_{12}(\rho)=s\rho+(1-s)\pi\tr\rho.
 \label{eq:fixed-fiber-recovery-12}
\end{equation}
For region $4$, the complementary recovery keeps $L_4$ on the branch
$E=1$ and outputs $\pi$ on the other branch,
\begin{equation}
 \mathcal M_4(\rho)=(1-s)\rho+s\pi\tr\rho.
 \label{eq:fixed-fiber-recovery-4}
\end{equation}
For regions $1,2,3,$ and $23$, take the reconstructible algebra to be
the center alone and let the calibrated recovery read the copied center
record.  Their boundary marginals are constant along
Equation~\eqref{eq:fixed-fiber-affine-path}: the one-share Bell
marginals are maximally mixed, and the orthogonal flags remove the
cross terms.  Their boundary entropy and recovered algebraic entropy
therefore have vanishing derivatives.

\begin{proposition}[Exact fixed-fiber BKM defect]
\label{prop:fixed-fiber-exact-defect}
In the ordering $\calI=(1,2,3,4,12,23)$, the first proto-area
derivative vanishes and the regional defect vector is
\begin{equation}
 \mathbf a_1=0,
 \qquad
 \boldsymbol\Delta
 =q s(1-s)(0,0,0,1,1,0)^{\mathsf T}.
 \label{eq:fixed-fiber-defect-vector}
\end{equation}
Consequently the length-normalized second derivative is
\begin{equation}
 \mathbf a_2=-4G_{\mathrm{eff}}q s(1-s)
 (0,0,0,1,1,0)^{\mathsf T}.
 \label{eq:fixed-fiber-second-jet}
\end{equation}
\end{proposition}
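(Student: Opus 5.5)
The plan is a direct block-by-block computation. Since the path \eqref{eq:fixed-fiber-affine-path} is affine, I will use the affine-path formula \eqref{eq:section2-affine-proto-area-curvature} of Theorem~\ref{thm:section2-proto-area-two-jet}. Each boundary marginal and each recovered state is block diagonal with respect to the copied center records. The center probabilities $(q,1-q)$ are fixed by \eqref{eq:fixed-fiber-central-probabilities}. By the blockwise convention following Proposition~\ref{prop:section2-central-area-operator} and Corollary~\ref{cor:section2-fixed-center-bkm}, every entropy is therefore $H(q)$ plus $q$ times the entropy of the $\alpha=0$ block plus $(1-q)$ times a $t$-independent $\alpha=1$ term. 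So only the first block matters. The one input needed throughout is the one-qubit fact that $\sigma(t)=\tfrac12(I+ctZ)$ has $\tfrac{d}{dt}S|_0=0$ and $\tfrac{d^2}{dt^2}S|_0=-c^2$. This follows from Theorem~\ref{thm:section2-entropy-two-jet}, since $\mathcal T_\pi(X)=2X$ gives $\mathfrak g_\pi(cZ/2,cZ/2)=c^2$.

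First I treat the regions $1,2,3,23$. In each case the complement contains a flag ($F$ or $E$) that distinguishes the two branches of $W$, so cross terms drop out of the marginal. On $Q_1$ the transmitted branch gives $\tr_{Q_2}(J\rho J^\dagger)=\pi$ by \eqref{eq:fixed-fiber-masked-marginals}, which applies because $\rho_t^{(0)}$ is diagonal in the $Z$ basis. The erased branch gives $\tr_{Q_2}\eta=\pi$. Hence the marginal is $\pi$ up to fixed center data. On $Q_2F$ the marginal is $s\pi\otimes|0\rangle\langle0|+(1-s)\pi\otimes|1\rangle\langle1|$, again constant. Region $3$ carries only its center record. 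Region $23$ reduces to the $Q_2F$ computation together with center records. In all four cases the recovery reads only the center record and outputs $H(q)$. Both entropies are therefore constant in $t$, so these regions contribute zero to $\mathbf a_1$ and to $\boldsymbol\Delta$.

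Next I treat region $12$. Its complement $B_3B_4$ contains $E$, which is orthogonal across the two branches, so the marginal on $Q_1Q_2F$ in block $0$ is $sJ\rho^{(0)}_tJ^\dagger\otimes|0\rangle\langle0|_F+(1-s)|\eta\rangle\langle\eta|\otimes|1\rangle\langle1|_F$. Its entropy is $h_2(s)+sS(\rho^{(0)}_t)$, because $J$ is isometric and $\eta$ is pure. The recovered state is $\mathcal M_{12}(\rho^{(0)}_t)=\tfrac12(I+stZ)$ by \eqref{eq:fixed-fiber-recovery-12}. The first derivatives vanish at $t=0$. The second derivatives are $-qs$ for the boundary entropy and $-qs^2$ for the recovered entropy. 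This gives $a_2(12)/4G_{\mathrm{eff}}=-qs+qs^2=-qs(1-s)$, so $\Delta_{12}=qs(1-s)$.

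Region $4$ is handled the same way with the roles of the branches exchanged. The flag $F$ sits in the complement, so the block-$0$ marginal is $s|00\rangle\langle00|_{L_4E}+(1-s)\rho^{(0)}_t\otimes|1\rangle\langle1|_E$. Its entropy is $h_2(s)+(1-s)S(\rho^{(0)}_t)$. The recovered state from \eqref{eq:fixed-fiber-recovery-4} is $\tfrac12(I+(1-s)tZ)$. The two second derivatives are $-q(1-s)$ and $-q(1-s)^2$, so $\Delta_4=q(1-s)-q(1-s)^2=qs(1-s)$.

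Collecting the six regions gives \eqref{eq:fixed-fiber-defect-vector}. Multiplying by $-4G_{\mathrm{eff}}$, as in \eqref{eq:bkm-jacobi-area-hessian}, gives \eqref{eq:fixed-fiber-second-jet}.

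The step I expect to need the most care is justifying the block-diagonal form of the boundary marginals. This means checking explicitly that the orthogonal flags in the complement kill every cross term between the transmitted and erased branches. It also means checking that the center records make each marginal a direct sum with $t$-independent weights. Once that structure is established, everything else is the two one-qubit second derivatives above.
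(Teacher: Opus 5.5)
Your proposal is correct and follows essentially the same route as the paper. The fixed center weights reduce everything to the $\alpha=0$ block. Flag orthogonality gives the exact region-$12$ and region-$4$ marginals, with entropies $h_2(s)+sS(\rho_t^{(0)})$ and $h_2(s)+(1-s)S(\rho_t^{(0)})$. The recovered states $\tfrac12(I+stZ)$ and $\tfrac12(I+(1-s)tZ)$ then yield the BKM norms $qs,\,qs^2$ and $q(1-s),\,q(1-s)^2$.

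The paper reads the second derivatives off the expansion $h_2\bigl(\tfrac{1+t}{2}\bigr)=\log2-\tfrac{t^2}{2}+O(t^4)$ rather than from $\mathcal T_\pi(X)=2X$, which is the same computation. Your explicit check that regions $1,2,3,23$ have constant marginals spells out what the paper only asserts.
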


\begin{proof}
The complete center-resolved entropy and BKM calculation is given in Appendix~\ref{appD:fixed-center-calculation}; Equations~\eqref{eq:fixed-fiber-exact-area12}--\eqref{eq:fixed-fiber-binary-expansion} give the two nonzero regional defects.
\end{proof}

\begin{theorem}[Exact fixed-fiber quadratic nongeometry]
\label{thm:fixed-fiber-quadratic-witness}
For the four-terminal cut matrix in
Equation~\eqref{eq:star-cut-matrix},
\begin{equation}
 (y^{12})^{\mathsf T}\mathbf a_1=0,
 \qquad
 (y^{12})^{\mathsf T}\mathbf a_2
 =-4G_{\mathrm{eff}}q s(1-s)\neq0.
 \label{eq:fixed-fiber-witness}
\end{equation}
Thus the affine path passes every first-order test but admits no local
edge-metric two-jet in this chamber.  Its exact Euclidean defect is
\begin{equation}
 \operatorname{dist}_2(\mathbf a_2,\operatorname{im}M)
 =4G_{\mathrm{eff}}q s(1-s)\frac{\sqrt6}{4}.
 \label{eq:fixed-fiber-distance}
\end{equation}
The proto-area functions for regions $12$ and $4$ cannot be
expectation values of state-independent area operators on any convex
neighborhood in the fixed central fiber.
\end{theorem}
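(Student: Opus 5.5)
The argument combines Proposition~\ref{prop:fixed-fiber-exact-defect} with the finite chamber theory of Section~\ref{sec:discrete}. I first evaluate the primitive witness $y^{12}=(-1,-1,0,0,1,0)^{\mathsf T}$ from Equation~\eqref{eq:star-witnesses} on the two jets. Since $\mathbf a_1=0$, the first identity in Equation~\eqref{eq:fixed-fiber-witness} is immediate. For the second, $\mathbf a_2$ is supported on the entries for regions $4$ and $12$. The witness has coefficient $0$ on region $4$ and $1$ on region $12$, so
\[
(y^{12})^{\mathsf T}\mathbf a_2=-4G_{\mathrm{eff}}qs(1-s).
\]
This is nonzero because $0<q,s<1$.

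\textbf{No edge-metric two-jet in the chamber.} I then argue as in the discrete analogue of Theorem~\ref{thm:two-jet-geometrizability}. Suppose a local edge two-jet $w(t)=w_0+th_1+\tfrac{t^2}{2}h_2+o(t^2)$ stays in the regular chamber. By Theorem~\ref{thm:regular-geometrizability}(i), $\bfa_G(w(t))-\bfa_G(w_0)=M(w(t)-w_0)$ is exactly linear inside the chamber ball. Both jets must therefore lie in $\operatorname{im}M$: $\mathbf a_1=Mh_1$ and $\mathbf a_2=Mh_2$. Here the discrete Hessian of $\bfa_G$ vanishes identically, so there is no bending correction. Pairing with $y^{12}\in\ker M^{\mathsf T}$ gives $(y^{12})^{\mathsf T}\mathbf a_2=0$, which contradicts the computation above. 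Finite perturbations need not be treated separately, since the two-jet only sees an arbitrarily small neighbourhood of $w_0$.

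\textbf{Euclidean defect.} By Theorem~\ref{thm:regular-geometrizability}(iv), the defect equals $\norm{(I-P_M)\mathbf a_2}_2$. The left kernel is two-dimensional, spanned by $y^{12}$ and $y^{23}$, so I compute the projection of $v=(0,0,0,1,1,0)^{\mathsf T}$ onto that span. The Gram matrix has entries $\ip{y^{12}}{y^{12}}=\ip{y^{23}}{y^{23}}=3$ and $\ip{y^{12}}{y^{23}}=1$. The pairings are $\ip{y^{12}}{v}=1$ and $\ip{y^{23}}{v}=0$. The squared norm of the projection is $(1,0)G^{-1}(1,0)^{\mathsf T}=3/8$, so the defect is $4G_{\mathrm{eff}}qs(1-s)\sqrt{3/8}=4G_{\mathrm{eff}}qs(1-s)\sqrt6/4$. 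As a consistency check, the star example of Equation~\eqref{eq:star-distance-witness} gave defect $\sqrt6/200$ for a $1/50$ violation, which is the same ratio $\sqrt6/4$.

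\textbf{Area-operator exclusion.} The last claim follows from Proposition~\ref{prop:section2-area-function-operator}. The fixed central fiber is an affine slice, and the path of Equation~\eqref{eq:fixed-fiber-affine-path} is affine in it. Along this path the second derivatives for regions $4$ and $12$ equal $-4G_{\mathrm{eff}}qs(1-s)\neq0$. A nonzero second derivative along an affine line rules out affinity on any convex neighbourhood containing a segment of it. For a convex neighbourhood around another point of the fiber, I would translate the direction $X$ to that point. Its Hessian is the analogous BKM defect at the nearby faithful state, and by continuity it stays nonzero on a small neighbourhood.

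The main obstacle is this final extension to arbitrary neighbourhoods. The safest route is to restrict the claim to neighbourhoods meeting the path, where the nonzero Hessian is already available from Proposition~\ref{prop:fixed-fiber-exact-defect}. The witness and defect steps are routine linear algebra.
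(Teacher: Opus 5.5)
Your proposal is correct and is essentially the paper's proof: the same pairing with $y^{12}$, and the same affine-path use of Proposition~\ref{prop:section2-area-function-operator}, plus an explicit exact-chamber-linearity argument for the no-two-jet claim, which the paper leaves implicit (the paper likewise argues only along the path, citing the exact formula \eqref{eq:fixed-fiber-exact-area12}, whose second derivative is nonzero for every $|t|<1$, not just at $t=0$). The only computational difference is the defect: you project onto $\operatorname{span}\{y^{12},y^{23}\}$ using the Gram matrix, whereas the paper writes $(0,0,0,1,1,0)^{\mathsf T}$ as the fourth column of $M$ plus $b^\times$ and reuses Equation~\eqref{eq:appD-cross-distance}, which also shows that the quadratic obstruction points along the same normal direction as the cross-cell skew.
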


\begin{proof}
The vector in Equation~\eqref{eq:fixed-fiber-second-jet} decomposes as
\begin{equation}
 (0,0,0,1,1,0)^{\mathsf T}
 =(0,0,0,1,0,0)^{\mathsf T}+b^\times.
\end{equation}
The first summand is the fourth column of $M$; the second has distance
$\sqrt6/4$ from $\operatorname{im}M$ by
Equation~\eqref{eq:appD-cross-distance}.  Since
$y^{12}=(-1,-1,0,0,1,0)^{\mathsf T}$ annihilates $M$ and pairs to one
with $b^\times$, the witness and distance formulas follow.  Finally,
the expectation of a fixed operator is affine on the affine path,
whereas Equation~\eqref{eq:fixed-fiber-exact-area12} and its region-$4$
counterpart have nonzero second derivatives.
\end{proof}

Appendix~\ref{appD:fixed-center-calculation} evaluates the relative-entropy loss, state-specific recovery deviation, diamond error, and the unrestricted covariant diamond-minimax calibration. These quantities are inequivalent; none is substituted for the BKM defect in Equation~\eqref{eq:fixed-fiber-defect-vector}. The calculation establishes diamond-minimax calibration, not coherent-information optimality. The geometric and BKM theorems require only the fixed calibrated recovery branch used here.

\subsection{Robustness and finite-size persistence}
\label{subsec:sec6-robustness}

The preceding witnesses are exact. Their stability under imperfect recovery
and entropy estimation, and their persistence along a refining sequence,
follow from the same dual estimates.

Let $\widehat{\mathbf a}=\mathbf a+\delta\mathbf a$ be measured data. For any $y\in\ker M_{\mathcal C}^{\mathsf T}$,
\begin{equation}
 \left|y^{\mathsf T}\widehat{\mathbf a}-y^{\mathsf T}\mathbf a\right|
 \leq\|y\|_2\|\delta\mathbf a\|_2,
 \label{eq:sec6-witness-noise}
\end{equation}
while
\begin{equation}
 \operatorname{dist}_2(\widehat{\mathbf a},\operatorname{im}M_{\mathcal C})
 \geq\frac{|y^{\mathsf T}\widehat{\mathbf a}|}{\|y\|_2}.
 \label{eq:sec6-distance-lower-bound}
\end{equation}
If an approximate recovery returns a logical state at trace distance at most $\delta_A$ from the calibrated state, the Audenaert continuity bound \cite{Audenaert2007} gives
\begin{equation}
 |\Delta S_{\mathrm{rec}}(A)|
 \leq F_{d_L}(\delta_A),
 \label{eq:sec6-recovery-entropy-error}
\end{equation}
where $\delta_A$ denotes one half of the trace norm, $d_L$ is the recovered logical dimension, and the piecewise function $F_{d_L}$ is defined in Equation~\eqref{eq:section2-audenaert-function}. Equations~\eqref{eq:sec6-witness-noise} and \eqref{eq:sec6-recovery-entropy-error} propagate a quantitative recovery bound to each geometrizability witness; the channel-error estimate is proved in Appendix~\ref{appA:recovery-errors}. In the controlled-bond family itself, the diagonal Gibbs states have $\delta_A=0$; Proposition~\ref{prop:sec6-approximate-recovery} controls coherent perturbations away from that family.

\begin{proposition}[Uniform persistence under refinement]
\label{prop:sec6-uniform-persistence}
Let $(G_N,\mathcal C_N)$ be a sequence of refined graph codes with incidence matrices $M_N$. Suppose there are cross-cell vectors $b_N^\times$ and witnesses
$y_N\in\ker M_N^{\mathsf T}$ such that
\begin{equation}
 |y_N^{\mathsf T}b_N^\times|\geq c>0,
 \qquad
 \|y_N\|_2\leq C<\infty
 \label{eq:sec6-uniform-motif}
\end{equation}
uniformly in $N$. If the nonlocal coefficient is $\beta_N$, then
\begin{equation}
 \operatorname{dist}_2(\mathbf a_N,\operatorname{im}M_N)
 \geq\frac{c}{C}|\beta_N|.
 \label{eq:sec6-uniform-distance}
\end{equation}
For noisy data with $\|\delta\mathbf a_N\|_2\leq\eta_N$,
\begin{equation}
 \operatorname{dist}_2(\widehat{\mathbf a}_N,\operatorname{im}M_N)
 \geq\frac{c}{C}|\beta_N|-\eta_N.
 \label{eq:sec6-uniform-noisy-distance}
\end{equation}
A fixed cross-cell motif remains detectable in the thermodynamic limit whenever $\inf_N|\beta_N|>0$ and $\eta_N<c\inf_N|\beta_N|/C$.
\end{proposition}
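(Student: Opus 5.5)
The plan is to reduce everything to the single linear witness $y_N$ and the exact decomposition of the sampled data, in the same way as Theorem~\ref{thm:sec6-first-order-nongeometric}. I would first record that, for each $N$, the data split as in Equation~\eqref{eq:sec6-local-plus-cross}:
\begin{equation*}
 \mathbf a_N=M_N\mathbf w_N+\beta_N b_N^\times .
\end{equation*}
Here $\mathbf w_N$ collects the local one-edge contributions, including the single-cross pieces absorbed into edge weights. This is the standing meaning of ``nonlocal coefficient'' in the statement, and it requires the chamber $\mathcal C_N$ and the oriented incidence pattern to stay fixed at each $N$, as in Section~\ref{subsec:cross-cell-skews}.

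Next I would pair this decomposition with $y_N\in\ker M_N^{\mathsf T}$. The local term drops out, so $y_N^{\mathsf T}\mathbf a_N=\beta_N\,y_N^{\mathsf T}b_N^\times$, and hence $|y_N^{\mathsf T}\mathbf a_N|\geq c|\beta_N|$. For the distance, I would apply Equation~\eqref{eq:sec6-distance-lower-bound}, which follows from Theorem~\ref{thm:regular-geometrizability}(iv): for any $z\in\operatorname{im}M_N$, $y_N^{\mathsf T}(\mathbf a_N-z)=y_N^{\mathsf T}\mathbf a_N$, and Cauchy--Schwarz then gives
\begin{equation*}
 \operatorname{dist}_2(\mathbf a_N,\operatorname{im}M_N)\geq |y_N^{\mathsf T}\mathbf a_N|/\|y_N\|_2 .
\end{equation*}
Combining this with $\|y_N\|_2\leq C$ yields Equation~\eqref{eq:sec6-uniform-distance}. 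The case $y_N=0$ cannot occur, since $|y_N^{\mathsf T}b_N^\times|\geq c>0$.

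For the noisy bound, the distance to a closed subspace is $1$-Lipschitz, as in the argument for Equation~\eqref{eq:defect-lipschitz}. So $\operatorname{dist}_2(\widehat{\mathbf a}_N,\operatorname{im}M_N)\geq \operatorname{dist}_2(\mathbf a_N,\operatorname{im}M_N)-\|\delta\mathbf a_N\|_2$, and together with $\|\delta\mathbf a_N\|_2\leq\eta_N$ this gives Equation~\eqref{eq:sec6-uniform-noisy-distance}. One could instead use Equation~\eqref{eq:sec6-witness-noise} directly, at the cost of the slightly weaker term $C\eta_N/\|y_N\|_2$. The Lipschitz route gives exactly the stated $-\eta_N$.

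The final sentence then follows by taking the infimum: if $\beta_*:=\inf_N|\beta_N|>0$ and $\eta_N<c\beta_*/C$, the right-hand side stays positive for every $N$, so the observed distance is nonzero and certifies nongeometry at every resolution. If ``remains detectable'' is meant uniformly, I would additionally assume $\sup_N\eta_N<c\beta_*/C$, which gives a uniform positive lower bound. The only real obstacle is bookkeeping rather than analysis: the decomposition $\mathbf a_N=M_N\mathbf w_N+\beta_Nb_N^\times$ must hold exactly at each $N$. I would take this as part of the hypothesis, namely that each refined code contains the disjoint-support cross-cell motif within a fixed chamber. I would not attempt to derive it from the refinement itself.
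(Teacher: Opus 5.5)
Your proposal is correct and matches the paper's proof step for step: the exact decomposition $\mathbf a_N=M_N\mathbf w_N+\beta_Nb_N^\times$ (which the paper also takes as given rather than deriving it from the refinement), the duality bound $\operatorname{dist}_2(\mathbf a_N,\operatorname{im}M_N)\geq|y_N^{\mathsf T}\mathbf a_N|/\|y_N\|_2$, and the $1$-Lipschitz property of the distance for the noisy case. One small correction to your aside: the witness-noise route is not weaker, since dividing $|y_N^{\mathsf T}\widehat{\mathbf a}_N|\geq|y_N^{\mathsf T}\mathbf a_N|-\|y_N\|_2\eta_N$ by $\|y_N\|_2$ gives exactly the same $-\eta_N$ term.
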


\begin{proof}
The exact data have the form $\mathbf a_N=M_Nw_N+\beta_Nb_N^\times$. By duality,
\begin{equation}
 \operatorname{dist}_2(\mathbf a_N,\operatorname{im}M_N)
 \geq\frac{|y_N^{\mathsf T}\mathbf a_N|}{\|y_N\|_2}
 =\frac{|\beta_N|\,|y_N^{\mathsf T}b_N^\times|}{\|y_N\|_2},
 \label{eq:sec6-uniform-proof}
\end{equation}
which gives \eqref{eq:sec6-uniform-distance}. The distance to a closed subspace is one-Lipschitz, so adding an error of norm at most $\eta_N$ gives \eqref{eq:sec6-uniform-noisy-distance}.
\end{proof}

A uniform family satisfying \eqref{eq:sec6-uniform-motif} can be produced by embedding one fixed four-terminal witness motif and refining the graph outside its support. Extend the primitive witness by zero on the added interval data. Local columns generated entirely outside the motif remain annihilated, whereas the cross-cell vector has the same pairing with the witness. The dual lower bound is unchanged by refinements outside the witness motif.

The three mechanisms are summarized in Table~\ref{tab:sec6-summary}.
\begin{table}[t]
 \centering
 \small
 \begin{tabular}{p{0.22\textwidth}p{0.37\textwidth}p{0.31\textwidth}}
 \toprule
 construction & calibrated datum & geometric status\\
 \midrule
 local controlled bonds
 & $M_{\mathcal C}w(\lambda,\varepsilon)$
 & geometric to all orders\\
 controlled cross-cell gate
 & $M_{\mathcal C}w+\beta(\lambda,\varepsilon)b^\times$ and
   $(L_A)_A=M_{\mathcal C}(W_e)_e+b^\times K_\times$
 & first-order scalar witness and operator-valued sewing obstruction\\
 fixed-central-fiber erasure code
 & $\mathbf a_1=0$ and
   $\mathbf a_2=-4G_{\mathrm{eff}}qs(1-s)(0,0,0,1,1,0)^{\mathsf T}$
 & quadratic-only fixed-fiber obstruction\\
 \bottomrule
 \end{tabular}
 \caption{Three code mechanisms. The controlled-sector rows have central affine area operators; the fixed-central-fiber row has a genuine noncentral BKM Hessian.}
 \label{tab:sec6-summary}
\end{table}
Geometrizability is controlled by the spatial organization of the response. A one-bond contribution is represented by an edge coordinate. A connected response of separated bond modes produces a normal component in data space. In the fixed-central-fiber code the same normal direction is selected by a BKM loss rather than a change of central probabilities.

Complete dephasing of the bond modes produces a classical shared-record normal form with the same algebraic witness vector. Dephasing removes the overlap matrix \eqref{eq:sec6-schur-channel} and changes the recovery of coherent source states. The Hamiltonian controlled-bond realization retains that overlap structure. The fixed-central-fiber erasure code supplies the logically separate example that passes every linear test.

\section{Physical interpretation and outlook}
\label{sec:physical-interpretation}

The main conclusion is a hierarchy of consistency conditions.  A regional
proto-area may exist without an area operator, regional area operators may
exist without a common local source, and a common metric may exist without
satisfying any gravitational field equation.  The BKM--Jacobi condition tests
whether the information discarded by regional recovery has precisely the
spatial organization required by the bending of extremal curves in one metric.

\subsection{Locality of the response}
\label{subsec:kinematic-meaning}

The factorization
\begin{equation}
  \mathcal A_{\rho_0}=\mathcal B\circ\mathcal G
  \label{eq:section7-kinematic-factorization}
\end{equation}
is stronger than assigning a separate geometry to each interval.  It requires
one metric-valued map $\mathcal G$ whose first derivative supplies all line
integrals and whose second derivative supplies both the direct metric response
and the Jacobi displacement of every geodesic.  Failure of the linear or
quadratic condition excludes every regular path of simple asymptotically
hyperbolic metrics in the stated neighborhood of $g_{\mathbb H}$, independently of the choice of
boundary-fixing gauge.  Passing the conditions reconstructs the metric
coefficients in the iterated transverse-traceless gauge; it does not exclude
nonmetric descriptions or metrics outside this local class.

The logical structure is therefore
\begin{equation}
  \begin{aligned}
    \text{dynamical admissibility}
    &\Longrightarrow \text{metric geometrizability},\\
    \text{state dependence}
    &\not\Longrightarrow \text{metric geometrizability}.
  \end{aligned}
  \label{eq:section7-implication-hierarchy}
\end{equation}
First-order geometrizability remains insufficient at second order.  The
finite codes of Section~\ref{sec:skewed-codes-revised} realize both failures.
The cross-cell construction has exact central area operators for every
sampled region but violates their operator-valued sewing relation.  The fixed-central-fiber
construction passes every linear test but has a nonzero quadratic witness.
These are algebraic statements at finite resolution.  Their continuum meaning
requires the operator-consistency and source-density assumptions of
Theorem~\ref{thm:finite-code-continuum-limit} and the second-derivative
consistency of Theorem~\ref{thm:second-order-discrete-continuum}.

Non-Clifford structure alone does not determine the sewing obstruction.  A local controlled Schmidt
rotation can be non-Clifford while its response remains in the image of the cut
map.  Conversely, dephasing the cross-cell model removes coherent non-Clifford
structure while preserving its shared-record sewing witness.  The quantities
that diagnose locality are instead the normal response components
\begin{equation}
  a^{\mathrm{ng}}_1=\Pi_{\mathrm{ng}}a_1,
  \qquad
  a^{\mathrm{ng}}_{2,\mathrm{corr}}
  =\Pi_{\mathrm{ng}}\!\left[
     a_2-D^2\mathcal B[h_1,h_1]
   \right].
  \label{eq:section7-resource-normal-components}
\end{equation}
Tripartite nonlocal magic can generate matter--proto-area coupling
\cite{CaoEtAl2026,Cao2024}, while BKM information loss controls its
fixed-reference Hessian.  Geometry imposes the additional requirement that
the regional pattern of this loss cancel the Jacobi normal acceleration.  The
relevant code-design conditions are
\begin{equation}
  \Pi_{\mathrm{ng}}a_1=0,
  \qquad
  \Pi_{\mathrm{ng}}\!\left[
    a_2-D^2\mathcal B[h_1,h_1]
  \right]=0.
  \label{eq:section7-code-design-conditions}
\end{equation}

\subsection{Dynamics and quantum extremality}
\label{subsec:dynamical-admissibility}

Metric geometrizability is kinematic.  A quantum extremal surface is instead
defined by the generalized entropy
\begin{equation}
  S_{\mathrm{gen}}[X;\rho]
  =\frac{\operatorname{Length}_{g_\rho}(X)}{4G_N}
   +S_{\mathrm{bulk}}[\Sigma_X;\rho]+S_{\mathrm{ct}}[X],
  \qquad \delta_XS_{\mathrm{gen}}=0.
  \label{eq:section7-generalized-entropy}
\end{equation}
Equating the recovered entropy used here with the renormalized bulk entropy is
additional holographic input.  Moreover, the shape Hessian of bulk entropy
changes the Jacobi problem.  If $\mathfrak f^{\mathrm q}$ is the quantum shape
force, $\mathcal K^{\mathrm q}_\gamma=D_X\mathfrak f^{\mathrm q}$ its shape
linearization, and $\dot{\mathfrak f}^{\mathrm q}_\gamma$ its state-and-metric
variation, quantum extremal deviation obeys
\begin{equation}
  \left(\mathcal J_\gamma+4G_N\mathcal K^{\mathrm q}_\gamma\right)J
  =\mathcal F_\gamma(h)-4G_N\dot{\mathfrak f}^{\mathrm q}_\gamma.
  \label{eq:section7-qes-jacobi}
\end{equation}
This is the weak-stability operator of Ref.~\cite{EngelhardtFischetti2019}.
The boundary-length theory is recovered when the quantum force vanishes or has
been absorbed into an independently justified effective metric.  Allowing an
arbitrary state-dependent force without a defined bulk entropy functional
would remove the content of the common-source test.

Dynamics adds another independent condition.  Given a bulk matter model with
normal energy density $\varepsilon(\rho)$, a time-symmetric slice must satisfy
\begin{equation}
  R(g)-2\Lambda-16\pi G\varepsilon=0.
  \label{eq:section7-hamiltonian-residual}
\end{equation}
For $h_X=\phi_Xg_{\mathbb H}+h_X^{\mathrm{tt}}$, the linearized scalar
constraint reduces to
\begin{equation}
  (-\Delta+2)\phi_X-16\pi G\,d\varepsilon_{\rho_0}(X)=0.
  \label{eq:section7-scalar-dynamical-residual}
\end{equation}
The remaining spacetime equations constrain the transverse-traceless sector
and the Lorentzian development.  Without independently supplied matter data,
one could define an effective energy density from the reconstructed scalar
curvature and satisfy the Hamiltonian constraint tautologically.  Entanglement
derivations of the linearized Einstein equation require further input, notably
the boundary first law and the modular-energy dictionary
\cite{FaulknerEtAl2014,JafferisEtAl2016}.

\subsection{Extensions}
\label{subsec:physical-extensions}

The covariant problem replaces geodesics by HRT extremal curves and the
Riemannian X-ray transform by a Lorentzian tensor transform
\cite{HubenyRangamaniTakayanagi2007}.  Caustics, conjugate points, and competing
branches make the data locus stratified; inside one branch, the same tangent
range and normal-curvature logic applies.  Minimum-cut chambers provide a
finite-dimensional model of the resulting branch transitions, while
symmetry-reduced covariant inversion already exhibits related integrability
conditions \cite{Chae2026}.

In higher dimensions, boundary lengths are replaced by codimension-two
extremal areas.  Their first variation is a tensor Radon transform over the
reference surfaces, and their normal Hessian contains the inverse
normal-bundle Jacobi operator.  A higher-dimensional BKM--Jacobi test therefore
requires both a range theorem for that surface transform and control of Jacobi
zero modes.  For infinite-dimensional operator algebras, the corresponding
problem requires modular tangent spaces and a von Neumann algebraic
information metric.  The sharp angle result of Section~\ref{sec:proto-area}
already identifies the necessary stability hypothesis: the admissible tangent
space must remain uniformly separated from the channel kernel.

State dependence can encode backreaction, but locality resides in its
cross-region organization.  At linear order that organization is the X-ray
range; at quadratic order it is the BKM--Jacobi cancellation.  Only after both
conditions hold is there a common metric two-jet on which quantum extremality
and gravitational dynamics can be tested to the same order.

\appendix
\section{Recovery regularity and entropy jets}
\label{app:recovery-entropy}

A differentiable proto-area map can fail when the maximizing recovery bifurcates or when an output density matrix reaches the boundary of state space. Recovery-regular sectors exclude both phenomena at the reference point. A nondegenerate calibrated optimum on a local gauge slice yields a smooth recovery branch, while fixed-support entropy calculus supplies the proto-area two-jet. Continuity estimates then transfer recovery errors to the geometrizability witnesses without introducing zero modes from redundant Stinespring coordinates.

\subsection{Calibrated recovery branches}
\label{appA:calibrated-recovery}

For fixed ancilla dimensions, an admissible local Stinespring isometry belongs to a complex Stiefel manifold. The product of the two local Stiefel manifolds is compact. Unitaries acting only on discarded output factors form a compact gauge group. We use the compact parameter space to control global maximizers and a smooth slice transverse to the gauge orbit to differentiate them.

For a state manifold $\Sigma$, define an equivalence relation on optimal recovery pairs by
\begin{equation}
  \mathbf R\sim_{A,\Sigma}\mathbf R'
  \quad\Longleftrightarrow\quad
  S\!\left(\tau_{A,\mathbf R}(\rho)\right)
  =S\!\left(\tau_{A,\mathbf R'}(\rho)\right)
  \quad\text{for every }\rho\in\Sigma.
  \label{eq:appA-recovery-equivalence}
\end{equation}
The set in Equation~\eqref{eq:section2-set-valued-proto-area} is a singleton on $\Sigma$ if and only if $\mathsf{Opt}_A(V)$ is contained in one equivalence class. This criterion is weaker than uniqueness of a recovery map and stronger than equality at one calibration state. It removes output-isometry gauge while retaining every ambiguity that can change the entropy function.

\begin{proof}[Proof of Proposition~\ref{prop:section2-unitary-equivalent-recoveries}]
Equation~\eqref{eq:section2-unitary-equivalent-recovered-states} implies
\begin{equation}
  S\!\left(\tau_{A,\mathbf R}(\rho)\right)
  =S\!\left(\tau_{A,\mathbf R_A^\star}(\rho)\right)
\end{equation}
for all $\rho$ because von Neumann entropy is invariant under unitary conjugation. Thus all optimal recovery pairs lie in one class for the relation in Equation~\eqref{eq:appA-recovery-equivalence}. A fixed unitary maps the support projection of the reference marginal to a projection of the same rank. Constant support of the reference recovered branch therefore gives constant support for every equivalent recovered branch. Together with the stated constant-support hypothesis for $\omega_A(\rho)$, this is recovery regularity.
\end{proof}

A nondegenerate critical branch need not remain the global recovery optimum. Compactness of the admissible recovery space supplies the required global separation. Write
\begin{equation}
  F_A(s,\mathbf R)
  :=I_c\!\left(\mathcal Q^{(s)}_{A,\mathbf R}\right)
  \label{eq:appA-code-dependent-objective}
\end{equation}
for the coherent-information objective associated with the encoding $V_s$.

\begin{proof}[Proof of Proposition~\ref{prop:section2-optimal-recovery-persistence}]
Choose a smooth slice $\mathfrak S_A$ through $\mathbf R_0$ transverse to the discarded-output gauge orbit. In local coordinates $r\in\mathbb R^m$ on the slice, write
\begin{equation}
  f(s,r)=F_A(s,\mathbf R(r)).
\end{equation}
The faithfulness assumption and smooth dependence of the Stinespring channel imply that $f$ is $C^k$. The stationarity equation is
\begin{equation}
  D_r f(s,r)=0.
  \label{eq:appA-stationarity-equation}
\end{equation}
At $(0,r_0)$ its derivative with respect to $r$ is the negative-definite Hessian
\begin{equation}
  H_0=D_r^2f(0,r_0).
\end{equation}
The implicit function theorem gives an $\varepsilon>0$ and a unique $C^{k-1}$ solution $r_\star(s)$ of Equation~\eqref{eq:appA-stationarity-equation} in a neighborhood of $r_0$. Negative definiteness is open, so the branch consists of strict local maxima after decreasing $\varepsilon$.

It remains to exclude a competing global maximum outside the slice neighborhood. Let $U$ be a gauge-saturated neighborhood on which the critical orbit is unique. Compactness of the full Stiefel parameter space and uniqueness of the global maximizing orbit at $s=0$ give a number $\delta>0$ such that
\begin{equation}
  F_A(0,\mathbf R_0)
  -\sup_{\mathbf R\notin U}F_A(0,\mathbf R)
  \geq 3\delta.
  \label{eq:appA-global-optimizer-gap}
\end{equation}
Uniform continuity in $(s,\mathbf R)$ implies, after decreasing $\varepsilon$, that every value of $F_A(s,\cdot)$ differs from its value at $s=0$ by less than $\delta$. Hence every recovery outside $U$ remains at least $\delta$ below the local branch. The branch is therefore the unique global optimum modulo gauge.

The recovered channel depends $C^{k-1}$ on $s$. Under the additional fixed-support and uniform faithfulness hypothesis in the proposition, the boundary and recovered spectra remain in compact subsets of the positive cones on their prescribed supports. Entropy is smooth there, so the recovered marginal and the calibrated proto-area are $C^{k-1}$ in $(s,\rho)$.
\end{proof}

When the code varies, standard sensitivity theory~\cite{BonnansShapiro2000}
determines the velocity of the maximizing branch by differentiating its
stationarity equation.  The envelope cancellation applies to the optimized
coherent information itself.  A distinct proto-area function generally
contains the resulting optimizer-velocity term.  This distinction is absent
for logical-state derivatives at fixed code, which are the derivatives used
in the geometrizability theorems.

\subsection{Fixed-support entropy calculus}
\label{appA:fixed-support-entropy}

Let $P$ be a fixed support projection and regard $P\mathcal H$ as the ambient Hilbert space. A density matrix is faithful on this support when
\begin{equation}
  \sigma\geq\mu P
  \label{eq:appA-spectral-gap}
\end{equation}
for some $\mu>0$. The Fr\'echet calculus for matrix functions on this open cone is standard~\cite{Higham2008}. The resolvent formula in Equation~\eqref{eq:section2-frechet-logarithm} is particularly useful because it preserves positivity and gives explicit norm estimates.

\begin{lemma}[Resolvent identities for the logarithm]
\label{lem:appA-resolvent-logarithm}
Let $\sigma>0$ on $P\mathcal H$. Then the map $\mathcal T_\sigma=D\log_\sigma$ is self-adjoint with respect to the Hilbert--Schmidt inner product, completely positive, and satisfies
\begin{equation}
  \tr\!\left[\sigma\mathcal T_\sigma(X)\right]=\tr X.
  \label{eq:appA-logarithm-trace-identity}
\end{equation}
If $\sigma=\sum_i p_i|i\rangle\langle i|$, then
\begin{equation}
  \bigl(\mathcal T_\sigma(X)\bigr)_{ij}
  =\ell(p_i,p_j)X_{ij},
  \qquad
  \ell(x,y)=\int_0^\infty\frac{ds}{(x+s)(y+s)}.
  \label{eq:appA-divided-difference-integral}
\end{equation}
Thus
\begin{equation}
  \ell(x,y)=
  \begin{cases}
  \dfrac{\log x-\log y}{x-y},&x\neq y,\\[2mm]
  \dfrac1x,&x=y,
  \end{cases}
  \label{eq:appA-divided-difference-log}
\end{equation}
and, for Hermitian $X$,
\begin{equation}
  \mathfrak g_\sigma(X,X)
  =\sum_{i,j}\ell(p_i,p_j)|X_{ij}|^2.
  \label{eq:appA-bkm-spectral-sum}
\end{equation}
In particular,
\begin{equation}
  \frac{1}{\lambda_{\max}(\sigma)}\norm{X}_2^2
  \leq\mathfrak g_\sigma(X,X)
  \leq\frac{1}{\lambda_{\min}(\sigma)}\norm{X}_2^2.
  \label{eq:appA-bkm-spectral-bounds}
\end{equation}
\end{lemma}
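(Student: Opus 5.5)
The plan is to start from the resolvent representation $\mathcal T_\sigma(X)=\int_0^\infty(\sigma+sI)^{-1}X(\sigma+sI)^{-1}\,ds$ on $P\mathcal H$ and prove every assertion directly from it.

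\emph{Operator properties.} Self-adjointness for the Hilbert--Schmidt inner product follows from cyclicity of the trace:
\[
\tr\!\bigl[Y^\dagger(\sigma+s)^{-1}X(\sigma+s)^{-1}\bigr]=\tr\!\bigl[\bigl((\sigma+s)^{-1}Y(\sigma+s)^{-1}\bigr)^\dagger X\bigr],
\]
since each resolvent is Hermitian; integrating in $s$ gives the claim. Complete positivity holds because each integrand $X\mapsto R_sXR_s$, with $R_s=(\sigma+s)^{-1}$, is a congruence and hence completely positive. A convergent integral of completely positive maps is again completely positive, and convergence is secured by $\norm{R_s}\le(\lambda_{\min}+s)^{-1}$, which is integrable after squaring. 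For the trace identity \eqref{eq:appA-logarithm-trace-identity}, $\sigma$ commutes with $R_s$, so
\[
\tr[\sigma R_sXR_s]=\tr[\sigma R_s^2X].
\]
Integrating in $s$ gives $\int_0^\infty\sigma(\sigma+s)^{-2}\,ds=\sigma\cdot\sigma^{-1}=P$, which yields $\tr X$. This computation can be checked in the eigenbasis.

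\emph{Spectral formulas.} In the eigenbasis of $\sigma$, $(R_sXR_s)_{ij}=X_{ij}/\bigl((p_i+s)(p_j+s)\bigr)$, which gives \eqref{eq:appA-divided-difference-integral}. Evaluating the integral gives \eqref{eq:appA-divided-difference-log}:
\begin{itemize}
\item for $x\neq y$, partial fractions give $\frac{1}{y-x}\bigl[\log\frac{y+s}{x+s}\bigr]_0^\infty=\frac{\log x-\log y}{x-y}$;
\item for $x=y$, $\int_0^\infty (x+s)^{-2}\,ds=1/x$.
\end{itemize}
For Hermitian $X$, $\tr[X\mathcal T_\sigma(X)]=\sum_{ij}X_{ji}\,\ell(p_i,p_j)X_{ij}=\sum_{ij}\ell(p_i,p_j)|X_{ij}|^2$, since $X_{ji}=\overline{X_{ij}}$. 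This is \eqref{eq:appA-bkm-spectral-sum}.

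\emph{Spectral bounds.} The only step needing care is \eqref{eq:appA-bkm-spectral-bounds}. By the mean value theorem applied to $\log$, $\ell(x,y)=1/\xi$ for some $\xi$ between $x$ and $y$; the diagonal case $x=y$ is immediate. Hence
\[
\frac{1}{\lambda_{\max}}\le \ell(p_i,p_j)\le \frac{1}{\lambda_{\min}}.
\]
Summing against $|X_{ij}|^2$ gives the two-sided Hilbert--Schmidt bound.

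The argument is routine throughout; the main points of care are that the trace and identity refer to the support $P\mathcal H$, and that the $s$-integrals converge, which the lower bound $\sigma\ge\mu P$ guarantees.
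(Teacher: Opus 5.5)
Your proposal is correct and follows the paper's proof essentially step for step: the congruence integrands $X\mapsto R_sXR_s$ give complete positivity and Hilbert--Schmidt self-adjointness, eigenbasis evaluation plus partial fractions gives the divided-difference formula, and the mean-value theorem gives the spectral bounds, while your trace identity via $\int_0^\infty\sigma(\sigma+s)^{-2}\,ds=P$ is the operator form of the paper's remark that $p_i\ell(p_i,p_i)=1$. One cosmetic sign slip: $\frac{1}{y-x}\bigl[\log\frac{y+s}{x+s}\bigr]_0^\infty$ equals $\frac{\log x-\log y}{y-x}$, so the bracket should read $\log\frac{x+s}{y+s}$; with that correction your stated value $\frac{\log x-\log y}{x-y}$ is right.
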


\begin{proof}
Each integrand in Equation~\eqref{eq:section2-frechet-logarithm} has the form $X\mapsto R_sXR_s$ with $R_s=(\sigma+sP)^{-1}$. It is completely positive and Hilbert--Schmidt self-adjoint, and these properties are preserved by the Bochner integral. In the eigenbasis of $\sigma$, direct evaluation gives Equation~\eqref{eq:appA-divided-difference-integral}. Partial fractions give Equation~\eqref{eq:appA-divided-difference-log}. The diagonal entries of $\sigma\mathcal T_\sigma(X)$ are $X_{ii}$, which proves Equation~\eqref{eq:appA-logarithm-trace-identity}. Equation~\eqref{eq:appA-bkm-spectral-sum} follows by taking the trace. The scalar mean-value theorem places $\ell(x,y)$ between $1/\max\{x,y\}$ and $1/\min\{x,y\}$, proving Equation~\eqref{eq:appA-bkm-spectral-bounds}. This also proves the positivity and symmetry asserted in Section~\ref{subsec:proto-area-differential-structure}. The resulting metric is the Bogoliubov--Kubo--Mori member of the monotone quantum information metrics~\cite{Petz1996,LesniewskiRuskai1999}.
\end{proof}

\begin{proof}[Proof of Theorem~\ref{thm:section2-entropy-two-jet}]
For a traceless Hermitian variation $X$, the differential of
$S(\sigma)=-\tr(\sigma\log\sigma)$ is
\begin{equation}
  dS_\sigma[X]
  =-\tr(X\log\sigma)
   -\tr\!\left[\sigma\mathcal T_\sigma(X)\right].
  \label{eq:appA-entropy-first-differential}
\end{equation}
Equation~\eqref{eq:appA-logarithm-trace-identity} reduces the second term to $-\tr X=0$. This proves Equation~\eqref{eq:section2-entropy-first-derivative}. Differentiating it along the curve gives
\begin{equation}
  \frac{d^2}{dt^2}S(\sigma(t))\bigg|_0
  =-\tr(\ddot\sigma\log\sigma)
   -\tr\!\left[\dot\sigma\mathcal T_\sigma(\dot\sigma)\right],
\end{equation}
which is Equation~\eqref{eq:section2-entropy-second-derivative}. Applying the same calculation to a two-parameter family and polarizing the quadratic term gives Equation~\eqref{eq:section2-entropy-mixed-hessian}. Symmetry follows from the Hilbert--Schmidt self-adjointness of $\mathcal T_\sigma$.
\end{proof}

The ordinary $o(t^2)$ remainder in Equation~\eqref{eq:section2-proto-area-jet-expansion} follows from twice Fr\'echet differentiability on the open fixed-support cone. On a set satisfying Equation~\eqref{eq:appA-spectral-gap} with a uniform $\mu$, uniform continuity of the derivatives makes this remainder uniform for state curves whose first two derivatives remain in a bounded set.

\subsection{Proto-area differentials}
\label{appA:proto-area-differentials}

\begin{proof}[Proof of Proposition~\ref{prop:section2-exact-code-state-independence}]
Taking the $A$ marginal of Equation~\eqref{eq:section2-exact-recovery-factorization} gives
\begin{equation}
  R_A\omega_A(\rho)R_A^\dagger
  =\rho_{a_A}\otimes\chi_{A_2}.
\end{equation}
Isometric conjugation preserves the nonzero spectrum, so
\begin{equation}
  S(\omega_A(\rho))
  =S(\rho_{a_A})+S(\chi_{A_2}).
\end{equation}
The recovered matter marginal is $\tau_A^\star(\rho)=\rho_{a_A}$. Their difference is therefore the constant in Equation~\eqref{eq:section2-exact-code-proto-area}; background subtraction annihilates it together with all state derivatives.
\end{proof}

\begin{proof}[Proof of Proposition~\ref{prop:section2-area-function-operator}]
An expression of the form $\tr(O\rho)+c$ is affine. Conversely, choose $\rho_\ast\in U$. The differential of an affine function is a constant real linear functional on the vector space of traceless Hermitian matrices. Nondegeneracy of the Hilbert--Schmidt pairing produces a traceless Hermitian $O_0$ such that
\begin{equation}
  dF[X]=\tr(O_0X).
\end{equation}
Affineness then gives
\begin{equation}
  F(\rho)-F(\rho_\ast)
  =\tr\!\left[O_0(\rho-\rho_\ast)\right].
\end{equation}
Taking $c=F(\rho_\ast)-\tr(O_0\rho_\ast)$ proves the representation. Every affine function has zero Hessian, proving the final statement.
\end{proof}

\begin{proof}[Proof of Theorem~\ref{thm:section2-proto-area-two-jet}]
Once the recovery is calibrated, the two output maps in Equation~\eqref{eq:section2-linear-output-maps} are linear. Apply Theorem~\ref{thm:section2-entropy-two-jet} to $\omega_A(t)$ and $\tau_A(t)$, subtract the two entropy expansions, and multiply by $4G_{\mathrm{eff}}$. This gives Equations~\eqref{eq:section2-proto-area-first-jet} and~\eqref{eq:section2-proto-area-second-jet}. For an affine logical path, linearity gives
$\ddot\omega_A=\ddot\tau_A=0$, and Equation~\eqref{eq:section2-affine-proto-area-curvature} follows.

For a multidimensional state manifold, choose normal coordinates for the fixed torsion-free connection at $\rho_0$. Applying Equation~\eqref{eq:section2-entropy-mixed-hessian} to the two channel outputs gives Equation~\eqref{eq:section2-proto-area-mixed-hessian}. Since the expression is the Hessian of a scalar function with respect to the chosen connection, the coordinate calculation determines the intrinsic bilinear form at the base point.
\end{proof}

\begin{proof}[Proof of Corollary~\ref{cor:section2-proto-area-continuity}]
Both $\mathcal N_A$ and $\mathcal M_A^\star$ are quantum channels, so trace distance contracts:
\begin{equation}
  \frac12\norm{\omega_A(\rho)-\omega_A(\sigma)}_1
  \leq\varepsilon,
  \qquad
  \frac12\norm{\tau_A^\star(\rho)-\tau_A^\star(\sigma)}_1
  \leq\varepsilon.
\end{equation}
Apply the sharp entropy continuity estimate of Ref.~\cite{Audenaert2007} to the two output pairs and use the triangle inequality in Equation~\eqref{eq:section2-proto-area-entropy}. Multiplication by $4G_{\mathrm{eff}}$ yields Equation~\eqref{eq:section2-proto-area-continuity-bound}.
\end{proof}

The calibration order is required in these formulas. If the recovery were reoptimized independently for each logical state, the state function would be an upper-envelope construction. It would generally be only directionally differentiable at optimizer crossings. The fixed channel calibration in Equation~\eqref{eq:section2-optimal-recovery-set} instead makes the state derivatives ordinary entropy derivatives of two linear channel outputs.

\subsection{Constant rank and rank-changing paths}
\label{appA:rank-changing}

A fixed support is convenient but slightly stronger than constant rank. The following reduction shows that no entropy regularity is lost by using it locally.

\begin{proposition}[Smooth trivialization of a constant-rank sector]
\label{prop:appA-constant-rank-trivialization}
Let $t\mapsto\sigma(t)$ be a $C^k$ curve of density matrices, $k\geq1$, with constant rank near $t=0$. Then, after shrinking the parameter interval, there is a $C^k$ unitary curve $U(t)$ such that
\begin{equation}
  U(t)^\dagger\Supp\sigma(t)\,U(t)=P_0:=\Supp\sigma(0).
  \label{eq:appA-support-trivialization}
\end{equation}
The curve $\widehat\sigma(t)=U(t)^\dagger\sigma(t)U(t)$ has fixed support $P_0$ and
\begin{equation}
  S(\widehat\sigma(t))=S(\sigma(t)).
\end{equation}
Thus the fixed-support formulas apply to every constant-rank curve after a smooth unitary trivialization.
\end{proposition}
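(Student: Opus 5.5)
The plan is to build $U(t)$ from the spectral projection onto the nonzero eigenvalues and then transport it with Kato's adiabatic evolution.

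\emph{Step 1: a smooth support projection.} Let $r$ be the constant rank and $\mu=\lambda_{\min}^{+}(\sigma(0))>0$ the smallest nonzero eigenvalue. By continuity of eigenvalues, after shrinking the interval the spectrum of $\sigma(t)$ stays inside $\{0\}\cup(\mu/2,\infty)$. Constant rank is what guarantees that exactly $r$ eigenvalues lie in $(\mu/2,\infty)$ and that all others are exactly zero. Hence
\begin{equation*}
  P(t):=\Supp\sigma(t)=\frac{1}{2\pi i}\oint_{\Gamma}(z-\sigma(t))^{-1}\,dz
\end{equation*}
holds for a fixed contour $\Gamma$ enclosing $(\mu/2,\|\sigma\|+1]$ and avoiding $0$. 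Since the resolvent is $C^k$ in $t$ uniformly on $\Gamma$, the projection $P(t)$ is a $C^k$ curve of orthogonal projections with $P(0)=P_0$.

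\emph{Step 2: Kato's transport.} Define the generator
\begin{equation*}
  G(t)=[\dot P(t),P(t)],
\end{equation*}
which is anti-Hermitian and $C^{k-1}$. Let $U$ solve $\dot U=GU$ with $U(0)=I$. Then $U$ is a unitary curve of class $C^k$, since $\dot U$ is $C^{k-1}$.

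Differentiating $P^2=P$ gives $P\dot PP=0$. From this one checks $\dot P=[G,P]$. It follows that $W(t)=U^\dagger P U$ satisfies
\begin{equation*}
  \dot W=U^\dagger\bigl(\dot P-[G,P]\bigr)U=0 .
\end{equation*}
So $U(t)^\dagger P(t)U(t)=P_0$, which is Equation~\eqref{eq:appA-support-trivialization}.

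\emph{Step 3: conclusion.} Set $\widehat\sigma(t)=U(t)^\dagger\sigma(t)U(t)$. Its support is $U(t)^\dagger P(t)U(t)=P_0$, and it is $C^k$. Unitary conjugation preserves the spectrum, so $S(\widehat\sigma(t))=S(\sigma(t))$. Moreover, $\widehat\sigma\geq(\mu/2)P_0$, so the uniform faithfulness hypothesis~\eqref{eq:appA-spectral-gap} of Appendix~\ref{appA:fixed-support-entropy} applies.

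\emph{Main obstacle.} The delicate point is Step 1: the projection must be smooth without any eigenvalue-gap assumption beyond $0$. This is exactly where constant rank enters. It forbids eigenvalues from leaving $0$, so the fixed contour separates the zero eigenvalue from the rest of the spectrum. The remaining verifications are routine.
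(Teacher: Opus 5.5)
Your proof is correct and follows the paper's route. The paper also takes the Riesz projection onto the nonzero spectral cluster, which is $C^k$ because constant rank gives a gap from zero. It then cites Kato for a $C^k$ unitary intertwining that projection with $P_0$, and this is exactly the adiabatic transport $\dot U=[\dot P,P]U$ that you construct and verify explicitly.
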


\begin{proof}
Constant rank separates the nonzero spectrum from zero by a positive local gap. The Riesz projection onto the nonzero spectral cluster is therefore a $C^k$ family of projections. Standard perturbation theory provides a local $C^k$ unitary intertwining this projection with $P_0$~\cite{Kato1995}. Unitary invariance of entropy gives the final statement.
\end{proof}

When rank changes, the logarithmic terms are intrinsic and cannot be removed by a choice of coordinates.

\begin{proposition}[One-sided entropy asymptotics at a rank change]
\label{prop:appA-rank-changing-asymptotics}
Let $\sigma(t)$ be a density-matrix curve for $0\leq t<\varepsilon$. Assume that its nonzero eigenvalues at $t=0$ continue as smooth positive branches, while the eigenvalues emerging from the kernel satisfy
\begin{equation}
  \lambda_j(t)
  =c_jt^{m_j}\bigl(1+O(t^{\eta_j})\bigr),
  \qquad
  c_j>0,
  \quad m_j\in\mathbb N,
  \quad\eta_j>0.
  \label{eq:appA-emerging-eigenvalues}
\end{equation}
Then
\begin{equation}
\begin{split}
  S(\sigma(t))
  ={}&S_{\mathrm{reg}}(t)
  -\sum_j c_jt^{m_j}
   \bigl(m_j\log t+\log c_j\bigr)\\
  &+O\!\left(
  \sum_j t^{m_j+\eta_j}|\log t|
  \right),
  \label{eq:appA-rank-changing-entropy}
\end{split}
\end{equation}
where $S_{\mathrm{reg}}$ is smooth. In particular, a branch with $m_j=1$ obstructs an ordinary first derivative, and a branch with $m_j=2$ permits a finite first derivative but obstructs an ordinary second derivative. A $C^2$ entropy jet exists if every emerging branch has order strictly larger than two.
\end{proposition}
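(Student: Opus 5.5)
The plan is to diagonalize the spectrum and reduce the claim to a scalar expansion. Write $S(\sigma(t))=-\sum_i\lambda_i(t)\log\lambda_i(t)$. Then split the eigenvalues into two groups. The first group consists of the branches that are positive at $t=0$. The second group consists of the emerging branches $\lambda_j(t)$ described in Equation~\eqref{eq:appA-emerging-eigenvalues}.

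For the positive branches, the function $x\mapsto-x\log x$ is smooth on a neighborhood of each positive limit value. The continuing branches are smooth by hypothesis, so their contribution is smooth. This contribution will be absorbed into $S_{\mathrm{reg}}(t)$. If some positive eigenvalues are degenerate and the individual branches are not smooth, I would instead use the Riesz projection onto the positive cluster, exactly as in Proposition~\ref{prop:appA-constant-rank-trivialization}. The spectral gap makes the cluster smooth, and entropy restricted to that block is a smooth function of a faithful operator. This is the one place where care is needed in the regular part.

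For each emerging branch, set $\lambda=c\,t^{m}(1+r)$ with $r=O(t^{\eta})$. Expanding,
\begin{equation*}
-\lambda\log\lambda=-c\,t^{m}(1+r)\bigl[m\log t+\log c+\log(1+r)\bigr].
\end{equation*}
The leading term is $-c\,t^m(m\log t+\log c)$. There are three corrections. The term $c\,t^m r\,(m\log t+\log c)$ is $O(t^{m+\eta}|\log t|)$. The term $c\,t^m(1+r)\log(1+r)$ is $O(t^{m+\eta})$, and this is dominated by the previous bound for small $t$. Summing over the emerging branches gives Equation~\eqref{eq:appA-rank-changing-entropy}.

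The regularity claims then follow from the behavior of $t^m\log t$. For $m=1$, the difference quotient of $t\log t$ diverges, so there is no ordinary first derivative. For $m=2$, the function $t^2\log t$ is $C^1$ with derivative $0$ at the origin, but its second derivative behaves like $2\log t$ and diverges. For $m\ge3$, the function $t^m\log t$ is $C^2$ on $[0,\varepsilon)$, and the remainder $t^{m+\eta}|\log t|$ is $o(t^2)$. The main obstacle is this last remainder: the $O$-bound alone gives an ordinary two-jet but not literal $C^2$ regularity. I would read the conclusion as existence of the second-order Taylor jet. If $C^2$ proper is wanted, I would strengthen the hypothesis so that the correction term $r$ is differentiable to the needed order.
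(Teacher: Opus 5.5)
Your proposal is correct and takes essentially the paper's route. The continuing positive branches give the smooth part $S_{\mathrm{reg}}$, the expansion $\log\lambda_j=\log c_j+m_j\log t+O(t^{\eta_j})$ multiplied by $-\lambda_j$ produces the displayed terms and remainder, and the regularity statements are read off from $t^{m}\log t$. Your caveat is well taken: the bare $O(t^{\eta_j})$ hypothesis gives only a second-order Taylor jet, not literal $C^2$ regularity. The paper supplies the missing derivative control only implicitly, through its closing appeal to analytic Rellich--Kato eigenvalue branches.
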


\begin{proof}
The positive eigenvalues present at $t=0$ remain bounded away from zero and contribute the smooth function $S_{\mathrm{reg}}$. For an emerging branch, Equation~\eqref{eq:appA-emerging-eigenvalues} gives
\begin{equation}
  \log\lambda_j(t)
  =\log c_j+m_j\log t+O(t^{\eta_j}).
\end{equation}
Multiplication by $-\lambda_j(t)$ gives its term in Equation~\eqref{eq:appA-rank-changing-entropy}. The differentiability conclusions follow by differentiating $t^m\log t$. For an analytic Hermitian family, analytic eigenvalue branches and hence expansions of the form in Equation~\eqref{eq:appA-emerging-eigenvalues} follow from Rellich--Kato perturbation theory~\cite{Kato1995}.
\end{proof}

Rank creation at order two produces the $t^2\log t$ term that is incompatible with the quadratic metric-jet problem and explains the full-rank or constant-support hypothesis in Section~\ref{subsec:proto-area-differential-structure}.

\subsection{Recovery-error estimates}
\label{appA:recovery-errors}

The first estimate is nonperturbative and requires no spectral gap. The next estimate controls derivatives and therefore necessarily assumes a fixed gap from zero.

\begin{proposition}[Proto-area stability under channel error]
\label{prop:appA-channel-error}
Let $\mathcal M$ and $\widetilde{\mathcal M}$ be channels with $d$-dimensional outputs and
\begin{equation}
  \norm{\widetilde{\mathcal M}-\mathcal M}_\diamond\leq\epsilon.
  \label{eq:appA-diamond-channel-error}
\end{equation}
Then for every input state $\rho$,
\begin{equation}
  \abs{S(\widetilde{\mathcal M}(\rho))-S(\mathcal M(\rho))}
  \leq F_d\!\left(\min\{\epsilon/2,1\}\right),
  \label{eq:appA-channel-entropy-error}
\end{equation}
with $F_d$ defined in Equation~\eqref{eq:section2-audenaert-function}. If both the boundary channel and recovered-matter channel are perturbed, the error in $a_\rho(A)$ is at most $4G_{\mathrm{eff}}$ times the sum of the two corresponding right-hand sides.
\end{proposition}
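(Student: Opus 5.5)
The plan is to reduce the statement to two standard inequalities: contraction of trace distance under a fixed input, followed by the Audenaert entropy-continuity bound on the $d$-dimensional output. First, fix an input state $\rho$ and compare the two output states $\widetilde{\mathcal M}(\rho)$ and $\mathcal M(\rho)$. By definition of the diamond norm, which dominates the induced trace norm on inputs without a reference system,
\begin{equation*}
 \norm{\widetilde{\mathcal M}(\rho)-\mathcal M(\rho)}_1
 \leq\norm{\widetilde{\mathcal M}-\mathcal M}_\diamond\norm{\rho}_1\leq\epsilon .
\end{equation*}
So the half-trace distance $T$ of the outputs satisfies $T\leq\epsilon/2$. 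Since any two density matrices have $T\leq1$, we also have $T\leq\min\{\epsilon/2,1\}$.

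Second, apply the sharp continuity estimate of Ref.~\cite{Audenaert2007}. For $d$-dimensional states at half-trace distance $T$, it gives $\abs{S(\sigma)-S(\sigma')}\leq T\log(d-1)+h_2(T)$ when $T\leq1-d^{-1}$, and $\log d$ trivially otherwise. This is exactly $F_d(T)$ as defined in Equation~\eqref{eq:section2-audenaert-function}, with $F_1=0$ because one-dimensional states have zero entropy.

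The main point needing care, and the only real obstacle, is monotonicity. The bound at the true distance $T$ must be converted into the stated bound at $\min\{\epsilon/2,1\}$, which requires $F_d$ to be nondecreasing on $[0,1]$.
\begin{itemize}
\item On $[0,1-d^{-1}]$, the derivative is $\log(d-1)+\log\bigl((1-\varepsilon)/\varepsilon\bigr)$. This is nonnegative precisely when $\varepsilon\leq1-d^{-1}$.
\item At $\varepsilon=1-d^{-1}$, the value equals $\log d$, so $F_d$ is continuous there.
\item Beyond that point $F_d$ is constant at $\log d$.
\end{itemize}
Hence $F_d(T)\leq F_d(\min\{\epsilon/2,1\})$, which proves Equation~\eqref{eq:appA-channel-entropy-error}.

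For the final sentence, write the change in $a_\rho(A)$ using Equation~\eqref{eq:section2-proto-area-entropy} as $4G_{\mathrm{eff}}$ times the difference of the two entropy changes. Apply the bound just proved separately to the boundary channel, with its own output dimension and diamond error, and to the recovered-matter channel. The triangle inequality then yields at most $4G_{\mathrm{eff}}$ times the sum of the two right-hand sides.
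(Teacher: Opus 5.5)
Your proposal is correct and follows the paper's proof. The diamond bound gives output half-trace distance at most $\epsilon/2$, the Audenaert estimate converts this to the entropy bound, and the triangle inequality with the factor $4G_{\mathrm{eff}}$ handles the proto-area. Your explicit check that $F_d$ is nondecreasing on $[0,1]$, needed to pass from the true distance to $\min\{\epsilon/2,1\}$, fills in a step the paper leaves implicit.
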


\begin{proof}
The diamond bound implies
\begin{equation}
  \frac12\norm{
  \widetilde{\mathcal M}(\rho)-\mathcal M(\rho)
  }_1\leq\epsilon/2.
\end{equation}
Equation~\eqref{eq:appA-channel-entropy-error} is the Audenaert continuity estimate. The proto-area statement follows from the triangle inequality.
\end{proof}

The value estimate is all that is required for the finite witness gaps in
Section~\ref{subsec:sec6-robustness}.  Derivative-level perturbation bounds can
be obtained on a fixed faithful support by combining the resolvent identity
with the entropy two-jet, but they require an explicit spectral gap and are not
used in the geometrizability criteria.

\section{Discrete-to-continuum approximation}
\label{app:finite-geometry}

The finite witnesses become continuum statements only when the entire
cut-response operator, rather than a selected list of interval values,
converges to the rank-two geodesic X-ray transform.  This appendix gives
checkable sufficient conditions for that operator limit and proves the
first- and second-order transfer theorems used in the main text.

\subsection{Constructive continuum approximation}
\label{appB:continuum-approximation}

The assumptions in Theorem~\ref{thm:finite-code-continuum-limit} are operator statements. They cannot be replaced by pointwise agreement on a finite collection of intervals. They can, however, be verified from three concrete ingredients: a dense family of source fields, a stable moment reconstruction on kinematic space, and calibration of the finite cut columns against sampled continuum responses.

Let $\mathcal X^\beta\hookrightarrow\mathcal X$ be a dense regularity space, and let $X_N\subset\mathcal X$ be finite-dimensional subspaces with orthogonal projections $P_N$. Assume the approximation estimate
\begin{equation}
  \norm{(I-P_N)x}_{\mathcal X}
  \leq\alpha_N\norm{x}_{\mathcal X^\beta},
  \qquad
  \alpha_N\longrightarrow0.
  \label{eq:appB-source-approximation-rate}
\end{equation}
Taking $\mathcal X_N=X_N$ and $E_N$ to be the inclusion gives Equation~\eqref{eq:source-density-assumption}. One convenient choice is the span of the first $m_N$ elements of any complete orthonormal source basis. A geometry-adapted choice takes inverse X-ray images of the scalar and rank-two data modes in Lemma~\ref{thm:rank-two-range}.

The interval-data reconstruction can be built from finitely many moments. Let $(e_\nu)_{\nu\geq1}$ be a smooth orthonormal basis of $\mathcal Y^\tau$ adapted to the decomposition in Equation~\eqref{eq:geometric-nongeometric-splitting}, and let $Q_L$ project onto the finite span $Y_L$ indexed by a finite mode set $\Lambda_L$. The spectral scale may be normalized so that
\begin{equation}
  \norm{(I-Q_L)a}_{\mathcal Y^\tau}
  \leq \vartheta_L^{\,-\beta}
  \norm{a}_{\mathcal Y^{\tau+\beta}},
  \qquad
  \vartheta_L\longrightarrow\infty.
  \label{eq:appB-data-spectral-tail}
\end{equation}
This is the direct tail estimate for the weighted mode norm in Equation~\eqref{eq:adapted-data-norm}.

For each $L$, the coefficient functionals on $Y_L$ can be obtained from finitely many interval evaluations. Let $\kappa_\nu$ be the smooth density characterized on regular data by
\begin{equation}
  \ip{a}{e_\nu}_{\mathcal Y^\tau}
  =\int_{\mathcal G_{\mathbb H}}a\,\overline{\kappa_\nu}\,d\mu.
  \label{eq:appB-coefficient-density}
\end{equation}
The real finite-dimensional space
\begin{equation}
  \mathscr F_L
  =\operatorname{span}_{\mathbb R}\left\{
  \operatorname{Re}(f\,\overline{\kappa_\nu}),
  \operatorname{Im}(f\,\overline{\kappa_\nu}):
  f\in Y_L,\ \nu\in\Lambda_L\right\}
  \label{eq:appB-quadrature-space}
\end{equation}
contains the real and imaginary parts of the functions whose integrals determine all coefficients in $Y_L$. There exist interval nodes $\xi_{L,j}$ and nonnegative weights $\omega_{L,j}$, with at most $\dim_{\mathbb R}\mathscr F_L$ nonzero weights, such that
\begin{equation}
  \int_{\mathcal G_{\mathbb H}}F\,d\mu
  =\sum_j\omega_{L,j}F(\xi_{L,j})
  \qquad(F\in\mathscr F_L).
  \label{eq:appB-exact-quadrature}
\end{equation}
This is the Richter--Tchakaloff theorem in its general finite-dimensional $L^1$ form~\cite{BerschneiderSasvari2012}. It applies directly because the elements of $\mathscr F_L$ are fixed proper representatives and are integrable with respect to the positive kinematic-space measure. Zero weights may be discarded, so the remaining weights are strictly positive. Compactness of $\mathcal G_{\mathbb H}$ and finiteness of its invariant measure are not required.

Define the raw sampling map and moment reconstruction by
\begin{align}
  (S_La)_j&=a(\xi_{L,j}),
  \label{eq:appB-sampling-map}\\
  J_Lz&=\sum_{\nu\in\Lambda_L}
  \left(
    \sum_j\omega_{L,j}z_j\overline{\kappa_\nu(\xi_{L,j})}
  \right)e_\nu.
  \label{eq:appB-moment-reconstruction}
\end{align}
Equation~\eqref{eq:appB-exact-quadrature} implies
\begin{equation}
  J_LS_La=a
  \qquad(a\in Y_L).
  \label{eq:appB-bandlimited-exactness}
\end{equation}
For general regular data, set
\begin{equation}
  \chi_L^2
  =\sum_{\nu\in\Lambda_L}
  \left(
    \sum_j\omega_{L,j}|\kappa_\nu(\xi_{L,j})|
  \right)^2.
  \label{eq:appB-quadrature-stability-factor}
\end{equation}
Then
\begin{equation}
  \norm{J_LS_La-Q_La}_{\mathcal Y^\tau}
  \leq\chi_L\norm{(I-Q_L)a}_{C^0}.
  \label{eq:appB-interpolation-error-c0}
\end{equation}
If the regularity index is above the weighted Sobolev embedding threshold and
\begin{equation}
  \norm{(I-Q_L)a}_{C^0}
  \leq C_{\mathrm{emb}}\zeta_L
  \norm{a}_{\mathcal Y^{\tau+\beta}},
  \qquad
  \chi_L\zeta_L\longrightarrow0,
  \label{eq:appB-c0-tail}
\end{equation}
then
\begin{equation}
  \norm{J_LS_La-a}_{\mathcal Y^\tau}
  \leq
  \left(\vartheta_L^{-\beta}
  +C_{\mathrm{emb}}\chi_L\zeta_L\right)
  \norm{a}_{\mathcal Y^{\tau+\beta}}.
  \label{eq:appB-interpolation-error}
\end{equation}
The operator $J_L$ is a finite moment reconstruction whose error is the sum of spectral truncation and quadrature leakage.

The finite cut map is calibrated against these samples. Equip the finite data vector with the quadrature-weighted norm and define
\begin{equation}
  \delta_N
  =\norm{\overline M_N-S_{L_N}AE_N}_{\mathcal X_N\to\ell^2(\omega_N)}.
  \label{eq:appB-column-calibration}
\end{equation}
This quantity is directly computable from the cut-incidence columns and a family of continuum response functions. It tests a common source field, not only agreement of individual entropies.

\begin{proposition}[Sufficient conditions for operator consistency]
\label{prop:appB-operator-consistency-sufficient}
Assume the source estimate in Equation~\eqref{eq:appB-source-approximation-rate}, the moment reconstruction in Equations~\eqref{eq:appB-sampling-map}--\eqref{eq:appB-interpolation-error}, and the column calibration in Equation~\eqref{eq:appB-column-calibration}. Suppose $AE_N\mathcal X_N$ is bounded in $\mathcal Y^{\tau+\beta}$ uniformly in $N$, and set $C_A=\sup_N\norm{AE_N}_{\mathcal X_N\to\mathcal Y^{\tau+\beta}}<\infty$. With $J_N=J_{L_N}$,
\begin{equation}
\begin{split}
  \varepsilon_N
  &=\norm{J_N\overline M_N-AE_N}\\
  &\leq \norm{J_N}\,\delta_N
  +C_A\left(
    \vartheta_{L_N}^{-\beta}
    +C_{\mathrm{emb}}\chi_{L_N}\zeta_{L_N}
  \right).
  \label{eq:appB-operator-consistency-bound}
\end{split}
\end{equation}
In particular, bounded $\norm{J_N}$, $\delta_N\to0$, and a vanishing interpolation error imply Equation~\eqref{eq:operator-consistency-assumption}. If $AE_N\mathcal X_N\subset Y_{L_N}$ and the quadrature is exact, the second term vanishes identically.
\end{proposition}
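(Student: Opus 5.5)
The bound \eqref{eq:appB-operator-consistency-bound} follows from a single triangle inequality. Insert and subtract the sampled continuum response $J_NS_{L_N}AE_N$:
\begin{equation*}
  J_N\overline M_N-AE_N
  =J_N\bigl(\overline M_N-S_{L_N}AE_N\bigr)
  +\bigl(J_NS_{L_N}-I\bigr)AE_N .
\end{equation*}
The first term is a calibration error. The second is the interpolation error of the moment reconstruction, evaluated on the range of the continuum operator. I will bound each in operator norm from $\mathcal X_N$ to $\mathcal Y^\tau$.

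For the first term, I regard $J_N$ as a map from $\ell^2(\omega_N)$ into $\mathcal Y^\tau$, with the quadrature-weighted norm that also defines $\delta_N$ in \eqref{eq:appB-column-calibration}. Submultiplicativity then gives the bound $\norm{J_N}\,\delta_N$ directly. The only care needed is that $\norm{J_N}$ refers to exactly this weighted norm. It is finite because $J_N$ is a finite-rank map built from finitely many weights and the values $\kappa_\nu(\xi_{L,j})$.

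For the second term, take $x\in\mathcal X_N$ and set $a=AE_Nx$. By the uniform hypothesis, $a\in\mathcal Y^{\tau+\beta}$ with $\norm{a}_{\mathcal Y^{\tau+\beta}}\le C_A\norm{x}_{\mathcal X_N}$. Applying \eqref{eq:appB-interpolation-error} to $a$ gives
\begin{equation*}
  \norm{(J_NS_{L_N}-I)a}_{\mathcal Y^\tau}
  \le\bigl(\vartheta_{L_N}^{-\beta}+C_{\mathrm{emb}}\chi_{L_N}\zeta_{L_N}\bigr)C_A\norm{x}_{\mathcal X_N}.
\end{equation*}
Taking the supremum over unit $x$ yields the second summand. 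The real content of this step is already contained in \eqref{eq:appB-interpolation-error}. That estimate splits $J_LS_La-a$ into the tail $Q_La-a$, controlled by \eqref{eq:appB-data-spectral-tail}, and the quadrature leakage $J_LS_L(I-Q_L)a$, controlled by \eqref{eq:appB-interpolation-error-c0} and \eqref{eq:appB-c0-tail}. The leakage piece uses $J_LS_LQ_La=Q_La$, which is \eqref{eq:appB-bandlimited-exactness}. I need to confirm that sampling is well defined on the elements $a=AE_Nx$. Since these have regularity above the embedding threshold, point values make sense. This is the step I expect to need the most attention, because $S_L$ is not bounded on $\mathcal Y^\tau$ itself.

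The remaining claims follow quickly. If $\sup_N\norm{J_N}<\infty$, $\delta_N\to0$, and the interpolation factor tends to zero, then both terms vanish and \eqref{eq:operator-consistency-assumption} holds. If in addition $AE_N\mathcal X_N\subset Y_{L_N}$ and the quadrature is exact, then \eqref{eq:appB-bandlimited-exactness} gives $(J_NS_{L_N}-I)AE_N=0$ identically, so the second term vanishes.
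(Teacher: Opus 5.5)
Your proposal is correct and matches the paper's proof: you insert and subtract $J_NS_{L_N}AE_N$, bound the calibration term by $\norm{J_N}\delta_N$ in the quadrature-weighted norm, and bound the interpolation term using Equation~\eqref{eq:appB-interpolation-error} together with the uniform $\mathcal Y^{\tau+\beta}$ bound $C_A$. You also note that point sampling makes sense only on the regular range $AE_N\mathcal X_N$, not on all of $\mathcal Y^\tau$; the paper leaves this implicit in the embedding hypothesis~\eqref{eq:appB-c0-tail}. The exact-quadrature case follows from~\eqref{eq:appB-bandlimited-exactness}, as you say.
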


\begin{proof}
Insert and subtract $J_NS_{L_N}AE_N$:
\begin{equation}
\begin{split}
  \norm{J_N\overline M_N-AE_N}
  \leq{}&\norm{J_N(\overline M_N-S_{L_N}AE_N)}\\
  &+\norm{(J_NS_{L_N}-I)AE_N}.
\end{split}
\end{equation}
The first term is at most $\norm{J_N}\delta_N$. The second is bounded by Equation~\eqref{eq:appB-interpolation-error} and the assumed uniform regularity of $AE_N$. This proves Equation~\eqref{eq:appB-operator-consistency-bound}.
\end{proof}

The three hypotheses control different approximations.  Equation~\eqref{eq:appB-source-approximation-rate} requires the graph to resolve every visible bulk deformation, Equation~\eqref{eq:appB-interpolation-error} requires the interval samples to resolve the chosen data norm, and Equation~\eqref{eq:appB-column-calibration} calibrates each resolved cut column against the X-ray response of its assigned tensor.  The first two are resolution conditions; the last compares the discrete and continuum geometries.

\begin{proof}[Proof of Theorem~\ref{thm:finite-code-continuum-limit}]
The continuum estimate in Equation~\eqref{eq:continuum-stability-two-sided} and the isometry of $E_N$ give
\begin{equation}
  \norm{B_Nx}
  \geq\norm{AE_Nx}-\norm{(B_N-AE_N)x}
  \geq(c-\varepsilon_N)\norm{x},
  \label{eq:discrete-lower-bound-proof}
\end{equation}
which proves \textnormal{(i)}. If $z=B_Nx\in\mathcal R_N$, then $AE_Nx\in\mathcal Y_{\mathrm{geo}}$ and hence
\begin{equation}
  \norm{\Pi_{\mathrm{ng}}z}
  \leq\varepsilon_N\norm{x}
  \leq\frac{\varepsilon_N}{c-\varepsilon_N}\norm{z}.
  \label{eq:finite-leakage-proof}
\end{equation}
This proves Equation~\eqref{eq:finite-range-leakage}.

To prove strong convergence, first take $n\in\mathcal Y_{\mathrm{ng}}$. Since $n\perp\mathcal Y_{\mathrm{geo}}$,
\begin{equation}
  \norm{\Pi_Nn}
  =\sup_{\substack{z\in\mathcal R_N\\\norm{z}=1}}
  \abs{\ip{n}{z}}
  =\sup_{\substack{z\in\mathcal R_N\\\norm{z}=1}}
  \abs{\ip{n}{\Pi_{\mathrm{ng}}z}}
  \leq\frac{\varepsilon_N}{c-\varepsilon_N}\norm{n}.
  \label{eq:nongeometric-projector-bound}
\end{equation}
Next let $r=Ax\in\mathcal Y_{\mathrm{geo}}$. With $x_N=E_N^*x$,
\begin{equation}
\begin{split}
  \operatorname{dist}(r,\mathcal R_N)
  &\leq\norm{Ax-B_Nx_N}\\
  &\leq\norm{A(\operatorname{Id}-P_N)x}
  +\varepsilon_N\norm{x_N}
  \longrightarrow0.
  \label{eq:geometric-range-approximation}
\end{split}
\end{equation}
Thus $\Pi_Nr\to r$. Decomposing $a=r+n$ proves Equation~\eqref{eq:finite-projector-convergence}.

Under Equation~\eqref{eq:code-data-convergence},
\begin{equation}
  \Pi_NJ_Nq_N-\Pi_{\mathrm{geo}}a_1
  =\Pi_N(J_Nq_N-a_1)
  +(\Pi_Na_1-\Pi_{\mathrm{geo}}a_1)
  \longrightarrow0.
  \label{eq:projected-data-convergence-proof}
\end{equation}
The residual and distance limits follow immediately. Put $x_N=B_N^\dagger J_Nq_N$. Equation~\eqref{eq:uniform-discrete-inf-sup} gives a uniform bound on $\norm{x_N}$, and
\begin{equation}
\begin{split}
  \norm{AE_Nx_N-\Pi_{\mathrm{geo}}a_1}
  &\leq\varepsilon_N\norm{x_N}
  +\norm{\Pi_NJ_Nq_N-\Pi_{\mathrm{geo}}a_1}
  \longrightarrow0.
  \label{eq:reconstruction-limit-proof}
\end{split}
\end{equation}
Because $A$ is bounded below by $c$ and $A\mathscr R_2a_1=\Pi_{\mathrm{geo}}a_1$, Equation~\eqref{eq:reconstruction-continuum-convergence} follows.
\end{proof}

\subsection{Second-order consistency of nonlinear discretizations}
\label{appB:second-order-consistency}

First-order operator consistency does not control displacement of the
discrete extremizer.  Approximating the Jacobi contribution requires a
nonlinear discrete extremal-length map and convergence of its second
derivative.  Second-order transfer requires the two operator limits below.

Let $\mathscr X$ and
$\mathscr Y=\mathscr Y_{\mathrm{geo}}\widehat\oplus
\mathscr Y_{\mathrm{ng}}$ be Hilbert source and data spaces on which
the operators $A$, $H$, and the orthogonal range splitting below are
bounded, and write
\begin{equation}
 \mathbf B(h)=\mathcal B(g_{\mathbb H}+h)-\mathcal B(g_{\mathbb H}),
 \qquad A=D\mathbf B(0),
 \qquad H=D^2\mathbf B(0).
 \label{eq:appB-continuum-nonlinear-operators}
\end{equation}
Let $\mathscr X_N$ be finite-dimensional Hilbert spaces with isometric
embeddings $E_N:\mathscr X_N\to\mathscr X$ such that
$E_NE_N^*\to I_{\mathscr X}$ strongly.  Let
\begin{equation}
 \mathbf B_N:U_N\subset\mathscr X_N\longrightarrow\mathscr Y,
 \qquad \mathbf B_N(0)=0,
 \label{eq:appB-nonlinear-discrete-map}
\end{equation}
be $C^2$ maps.  In an application, $\mathbf B_N$ is a discrete
extremal-length map followed by interpolation into $\mathscr Y$.

\begin{theorem}[Second-order discrete-to-continuum transfer]
\label{thm:second-order-discrete-continuum}
Assume that $A:\mathscr X\to\mathscr Y_{\mathrm{geo}}$ is onto and
bounded below, so that
\begin{equation}
 \norm{Ax}_{\mathscr Y}\geq c\norm{x}_{\mathscr X}
 \qquad (x\in\mathscr X)
 \label{eq:appB-continuum-lower-bound}
\end{equation}
for some $c>0$.  Assume also
\begin{align}
 \epsilon_N^{(1)}
 &:=\norm{D\mathbf B_N(0)-AE_N}_{\mathscr X_N\to\mathscr Y}
 \longrightarrow0,
 \label{eq:appB-first-derivative-consistency}\\
 \epsilon_N^{(2)}
 &:=\sup_{u,v\neq0}
 \frac{\norm{D^2\mathbf B_N(0)[u,v]-H[E_Nu,E_Nv]}_{\mathscr Y}}
 {\norm{u}_{\mathscr X_N}\norm{v}_{\mathscr X_N}}
 \longrightarrow0.
 \label{eq:appB-second-derivative-consistency}
\end{align}
Put $A_N=D\mathbf B_N(0)$, let
$\mathscr R_N^{(2)}=\operatorname{Ran}A_N$, and let
$\Pi_N^{(2)}$ be the orthogonal projection onto this range.  Suppose
$d_{1,N}\in\mathscr R_N^{(2)}$ and $d_{2,N}\in\mathscr Y$ satisfy
\begin{equation}
 d_{1,N}\longrightarrow a_1,
 \qquad d_{2,N}\longrightarrow a_2
 \quad\text{in }\mathscr Y.
 \label{eq:appB-discrete-two-jet-convergence}
\end{equation}
Define $h_{1,N}=A_N^\dagger d_{1,N}$.  Then
$a_1\in\mathscr Y_{\mathrm{geo}}$, and, with
$h_1=A^{-1}a_1$,
\begin{equation}
 E_Nh_{1,N}\longrightarrow h_1.
 \label{eq:appB-first-reconstruction-convergence}
\end{equation}
Moreover,
\begin{equation}
 \begin{split}
 &(I-\Pi_N^{(2)})
 \left[d_{2,N}-D^2\mathbf B_N(0)[h_{1,N},h_{1,N}]\right]\\
 &\qquad\longrightarrow
 \Pi_{\mathrm{ng}}
 \left[a_2-H[h_1,h_1]\right]
 \quad\text{in }\mathscr Y.
 \end{split}
 \label{eq:appB-second-obstruction-convergence}
\end{equation}
Thus a nonzero continuum quadratic obstruction is eventually detected
by the corrected discrete residual.
\end{theorem}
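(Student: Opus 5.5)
The proof mirrors the proof of Theorem~\ref{thm:finite-code-continuum-limit}, with $A_N=D\mathbf B_N(0)$ in place of $B_N$. The second derivative is then handled by a bilinear perturbation estimate.

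\emph{Step 1: a uniform inf--sup bound for $A_N$.} Since $E_N$ is an isometry, Equations~\eqref{eq:appB-continuum-lower-bound} and~\eqref{eq:appB-first-derivative-consistency} give
\begin{equation*}
 \norm{A_Nu}\geq(c-\epsilon_N^{(1)})\norm{u}\geq\tfrac c2\norm{u}
\end{equation*}
for large $N$. Hence $A_N$ is injective and $\norm{A_N^\dagger}\leq2/c$ on $\mathscr R_N^{(2)}$. Repeating the leakage argument of Equation~\eqref{eq:finite-leakage-proof}, for $z=A_Nu$ we have $\Pi_{\mathrm{ng}}z=\Pi_{\mathrm{ng}}(A_N-AE_N)u$, so
\begin{equation*}
 \norm{\Pi_{\mathrm{ng}}z}\leq\frac{\epsilon_N^{(1)}}{c-\epsilon_N^{(1)}}\norm{z}.
\end{equation*}
Source density $E_NE_N^*\to I$ then gives $\operatorname{dist}(Ax,\mathscr R_N^{(2)})\to0$ for every $x$, exactly as in Equation~\eqref{eq:geometric-range-approximation}. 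Together these yield strong convergence $\Pi_N^{(2)}\to\Pi_{\mathrm{geo}}$, using surjectivity of $A$ onto $\mathscr Y_{\mathrm{geo}}$ and the splitting of the data space.

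\emph{Step 2: the first jet.} Since $d_{1,N}\in\mathscr R_N^{(2)}$, we have $d_{1,N}=\Pi_N^{(2)}d_{1,N}$. Strong convergence of $\Pi_N^{(2)}$, together with their uniform bound, gives $\Pi_N^{(2)}d_{1,N}\to\Pi_{\mathrm{geo}}a_1$, while $d_{1,N}\to a_1$. Hence $a_1=\Pi_{\mathrm{geo}}a_1\in\mathscr Y_{\mathrm{geo}}$, and $h_1=A^{-1}a_1$ is well defined. Set $h_{1,N}=A_N^\dagger d_{1,N}$; these are uniformly bounded by $(2/c)\sup_N\norm{d_{1,N}}$. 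Then
\begin{equation*}
 \norm{AE_Nh_{1,N}-a_1}\leq\epsilon_N^{(1)}\norm{h_{1,N}}+\norm{A_Nh_{1,N}-a_1}\to0,
\end{equation*}
because $A_Nh_{1,N}=d_{1,N}$. The lower bound for $A$ now gives $E_Nh_{1,N}\to h_1$, which is Equation~\eqref{eq:appB-first-reconstruction-convergence}.

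\emph{Step 3: the corrected second-order residual.} Write $r_N=d_{2,N}-D^2\mathbf B_N(0)[h_{1,N},h_{1,N}]$ and $r=a_2-H[h_1,h_1]$. Then
\begin{equation*}
 \norm{r_N-r}\leq\norm{d_{2,N}-a_2}+\epsilon_N^{(2)}\norm{h_{1,N}}^2+\norm{H[E_Nh_{1,N},E_Nh_{1,N}]-H[h_1,h_1]}.
\end{equation*}
The last term tends to zero by boundedness and bilinearity of $H$, via the same polarization used in Proposition~\ref{prop:quadratic-obstruction-robustness}, together with Step 2. Hence $r_N\to r$. Finally,
\begin{equation*}
 (I-\Pi_N^{(2)})r_N-\Pi_{\mathrm{ng}}r=(I-\Pi_N^{(2)})(r_N-r)+(\Pi_{\mathrm{geo}}-\Pi_N^{(2)})r.
\end{equation*}
The first term is bounded by $\norm{r_N-r}$, and the second tends to zero by strong convergence. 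This proves Equation~\eqref{eq:appB-second-obstruction-convergence}. The detection statement follows because the norm of a convergent sequence converges to the norm of its limit.

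The main obstacle is Step 1's strong projector convergence. Here the paper only gives a sketch of the Theorem~\ref{thm:finite-code-continuum-limit} version, and I must check that nothing beyond first-derivative consistency and source density is used. In particular, the geometric approximation $\operatorname{dist}(Ax,\mathscr R_N^{(2)})\le\norm{A(I-E_NE_N^*)x}+\epsilon_N^{(1)}\norm{x}$ must hold with $x_N=E_N^*x$, so that the $C^2$ nonlinearity of $\mathbf B_N$ plays no role except through its two derivatives at the origin.
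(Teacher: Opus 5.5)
Your proposal is correct and follows the paper's proof essentially step for step. Both arguments use the uniform lower bound for $A_N$, then the leakage and source-density estimates to obtain strong convergence $\Pi_N^{(2)}\to\Pi_{\mathrm{geo}}$, then bounded reconstructions with $\|A(E_Nh_{1,N}-h_1)\|\to0$, and finally convergence of the corrected datum via bilinearity of $H$ and second-derivative consistency.

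The check you flag is exactly how the paper proceeds: only $D\mathbf B_N(0)$, $D^2\mathbf B_N(0)$, source density, and boundedness of $A$ enter. You deduce $a_1\in\mathscr Y_{\mathrm{geo}}$ from strong projector convergence, whereas the paper uses the vanishing distance of $\mathscr R_N^{(2)}$ to $\mathscr Y_{\mathrm{geo}}$ (equivalently, that the leakage $\Pi_{\mathrm{ng}}d_{1,N}$ tends to zero); the two routes are equivalent.
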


\begin{proof}
For all large $N$,
Equation~\eqref{eq:appB-first-derivative-consistency} gives
\begin{equation}
 \norm{A_Nu}\geq(c-\epsilon_N^{(1)})\norm{u}
 \geq\frac c2\norm{u}.
 \label{eq:appB-second-order-uniform-lower-bound}
\end{equation}
We first prove $\Pi_N^{(2)}\to\Pi_{\mathrm{geo}}$ strongly.  If
$y=Ax\in\mathscr Y_{\mathrm{geo}}$, set $u_N=E_N^*x$.  Source density
and first-derivative consistency give $A_Nu_N\to y$, so
$\operatorname{dist}(y,\mathscr R_N^{(2)})\to0$.  If
$z=A_Nu\in\mathscr R_N^{(2)}$ has unit norm, then
\begin{equation}
 \norm{\Pi_{\mathrm{ng}}z}
 \leq\epsilon_N^{(1)}\norm{u}
 \leq\frac{2\epsilon_N^{(1)}}c.
\end{equation}
Consequently, for $n\in\mathscr Y_{\mathrm{ng}}$,
\begin{equation}
 \norm{\Pi_N^{(2)}n}
 =\sup_{\substack{z\in\mathscr R_N^{(2)}\\\norm z=1}}
 \abs{\ip{n}{z}}
 \leq\frac{2\epsilon_N^{(1)}}c\norm n\longrightarrow0.
\end{equation}
The orthogonal decomposition proves strong projector convergence.

Since $d_{1,N}=A_Nh_{1,N}$, the lower bound makes
$(h_{1,N})$ bounded.  The convergence of $d_{1,N}$ and the vanishing
distance of $\mathscr R_N^{(2)}$ to the geometric space show that
$a_1\in\mathscr Y_{\mathrm{geo}}$.  Furthermore,
\begin{equation}
 \norm{A(E_Nh_{1,N}-h_1)}
 \leq\norm{(AE_N-A_N)h_{1,N}}+\norm{d_{1,N}-a_1}
 \longrightarrow0,
\end{equation}
so bounded-below invertibility of $A$ proves
Equation~\eqref{eq:appB-first-reconstruction-convergence}.  Bilinearity,
continuity of $H$, and
Equation~\eqref{eq:appB-second-derivative-consistency} give
\begin{equation}
 D^2\mathbf B_N(0)[h_{1,N},h_{1,N}]
 \longrightarrow H[h_1,h_1].
\end{equation}
The corrected discrete data therefore converge to
$a_2-H[h_1,h_1]$; strong projector convergence proves
Equation~\eqref{eq:appB-second-obstruction-convergence}.
\end{proof}

\begin{corollary}[Fixed chambers do not approximate Jacobi bending]
\label{cor:fixed-chamber-no-jacobi-limit}
Suppose every $\mathbf B_N$ is linear in a neighborhood of the origin,
as in a fixed minimum-cut chamber.  If
Equation~\eqref{eq:appB-second-derivative-consistency} holds on a
source-dense sequence, then $H[h,k]=0$ for all
$h,k\in\mathscr X$.  Hence a fixed-chamber family cannot approximate
the nonlinear boundary-length map to second order when its Jacobi
Hessian is nonzero.
\end{corollary}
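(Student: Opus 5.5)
The plan is to show that $D^2\mathbf B_N(0)=0$ and then pass to the limit in the second-derivative consistency bound. If $\mathbf B_N$ is linear on a neighborhood $U_N$ of the origin, as it is inside a fixed chamber by Theorem~\ref{thm:regular-geometrizability}\textnormal{(i)} composed with the linear interpolation $J_N$, then its second Fréchet derivative at $0$ vanishes identically. Equation~\eqref{eq:appB-second-derivative-consistency} then reads
\begin{equation*}
 \sup_{u,v\neq0}
 \frac{\norm{H[E_Nu,E_Nv]}_{\mathscr Y}}
 {\norm{u}_{\mathscr X_N}\norm{v}_{\mathscr X_N}}
 =\epsilon_N^{(2)}\longrightarrow0 .
\end{equation*}
In other words, the restriction of the bounded bilinear form $H$ to $E_N\mathscr X_N\times E_N\mathscr X_N$ has operator norm tending to zero, because $E_N$ is isometric.

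Now take arbitrary $h,k\in\mathscr X$ and put $u_N=E_N^*h$ and $v_N=E_N^*k$. Since $E_N$ is an isometry, $E_Nu_N=P_Nh$ with $P_N=E_NE_N^*$, and $\norm{u_N}\leq\norm h$. By the displayed bound,
\begin{equation*}
 \norm{H[P_Nh,P_Nk]}\leq\epsilon_N^{(2)}\norm h\norm k\to0 .
\end{equation*}
Strong convergence $P_N\to I$ and boundedness of $H$ then give
\begin{equation*}
 H[P_Nh,P_Nk]-H[h,k]=H[P_Nh-h,P_Nk]+H[h,P_Nk-k]\to0,
\end{equation*}
with $\norm{P_Nk}\leq\norm k$ uniformly. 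Hence $H[h,k]=0$.

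The second statement is then the contrapositive. By Theorem~\ref{thm:renormalized-second-variation} the Jacobi Hessian is nonzero: for instance, the diagonal value is strictly negative for $h$ with $h_{TT}\neq0$ on some geodesic, through the term $-\tfrac14\int h_{TT}^2$. So no source-dense fixed-chamber family can satisfy the consistency hypothesis.

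The only delicate point is bookkeeping. One has to check that linearity holds on a full neighborhood, so that the second derivative at $0$ is genuinely zero rather than merely one-sided, and that the Hessian of the composed interpolated map is $J_N$ applied to zero. Both follow from the chamber-ball statement together with linearity of $J_N$.
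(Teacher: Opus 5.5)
Your argument is correct and is essentially the paper's proof: you use linearity to get $D^2\mathbf B_N(0)=0$, test the consistency bound with $u=E_N^*h$, $v=E_N^*k$ to obtain $\norm{H[P_Nh,P_Nk]}\leq\epsilon_N^{(2)}\norm{h}\norm{k}$, and pass to the limit using strong convergence $P_N\to I$ and continuity of $H$. Your extra remark, that $H[h,h]\neq0$ whenever $h_{TT}\not\equiv0$ on some geodesic because of the strictly negative direct term in Theorem~\ref{thm:renormalized-second-variation}, is a valid strengthening: it makes the final ``hence'' unconditional rather than contingent on the Hessian being nonzero.
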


\begin{proof}
Linearity gives $D^2\mathbf B_N(0)=0$.  Put $P_N=E_NE_N^*$.  For
$h,k\in\mathscr X$, Equation~\eqref{eq:appB-second-derivative-consistency}
with $u=E_N^*h$ and $v=E_N^*k$ gives
\begin{equation}
 \norm{H[P_Nh,P_Nk]}_{\mathscr Y}
 \leq\epsilon_N^{(2)}\norm{h}_{\mathscr X}\norm{k}_{\mathscr X}
 \longrightarrow0.
\end{equation}
Since $P_Nh\to h$ and $P_Nk\to k$, continuity of $H$ yields
$H[h,k]=0$.
\end{proof}

The corollary shows why a fixed chamber is insufficient.  A
nonlinear discretization must allow its minimizing path to move with
the discrete metric.  An incidence matrix in one chamber supplies the
linearized operator but cannot generate the forced-Jacobi term.

\begin{theorem}[Stationary quadratic departure under refinement]
\label{thm:stationary-quadratic-refinement}
Let $\mathscr R_N\subset\mathscr Y$ be finite-dimensional geometric
response spaces with orthogonal projectors
$\Pi_N\to\Pi_{\mathrm{geo}}$ strongly.  Let $a_N(t)$ be $C^2$ data
paths such that
\begin{equation}
 a_N(0)=0,
 \qquad a_N'(0)\longrightarrow0,
 \qquad a_N''(0)=z_N+\beta_Nb_N,
 \label{eq:appB-stationary-discrete-path}
\end{equation}
where $z_N\in\mathscr R_N$, $z_N\to z$,
$\beta_N\to\beta$, and $b_N\to b$ in $\mathscr Y$.  If the limiting
path has two-jet
$a(t)=t^2(z+\beta b)/2+o(t^2)$, then
\begin{equation}
 \mathfrak O_2(0,a_2)=\Pi_{\mathrm{ng}}a_2
 =\beta\Pi_{\mathrm{ng}}b.
 \label{eq:appB-stationary-continuum-obstruction}
\end{equation}
If the right-hand side is nonzero, no regular metric two-jet realizes
the limiting path.
\end{theorem}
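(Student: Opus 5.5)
The plan is to identify the limiting two-jet coefficients, show the first-order data vanish in the limit so that the reconstructed first metric coefficient is zero, and then evaluate the normal projection of the second coefficient using the splitting $\mathscr Y=\mathscr Y_{\mathrm{geo}}\widehat\oplus\mathscr Y_{\mathrm{ng}}$.

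First I fix the limiting coefficients. The limiting path has $a_1=0$, so $\Pi_{\mathrm{ng}}a_1=0$ holds trivially. The first reconstructed coefficient is then $h_1=\mathscr R_2 a_1=0$, and the Jacobi correction $D^2\mathcal B[h_1,h_1]$ vanishes by bilinearity. This matches the stationary reduction noted after Theorem~\ref{thm:bkm-jacobi-matching}, and it gives
\begin{equation*}
 \mathfrak O_2(0,a_2)=\Pi_{\mathrm{ng}}a_2 .
\end{equation*}
To identify $a_2$ I use the stated two-jet of the limit, $a_2=z+\beta b$. Consistency with the discrete data is immediate: $a_N''(0)=z_N+\beta_Nb_N\to z+\beta b$ in $\mathscr Y$ by continuity of addition and scalar multiplication.

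Second, I show $z\in\mathscr Y_{\mathrm{geo}}$. Since $z_N\in\mathscr R_N$, we have $\Pi_Nz_N=z_N$. Write
\begin{equation*}
 \Pi_Nz_N-\Pi_{\mathrm{geo}}z=\Pi_N(z_N-z)+(\Pi_N-\Pi_{\mathrm{geo}})z .
\end{equation*}
The first term is bounded by $\norm{z_N-z}$, because orthogonal projections have norm at most one. The second term tends to zero by the assumed strong convergence. Hence $z_N\to\Pi_{\mathrm{geo}}z$. Combined with $z_N\to z$, this gives $z=\Pi_{\mathrm{geo}}z$, so $\Pi_{\mathrm{ng}}z=0$.

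Third, linearity of $\Pi_{\mathrm{ng}}$ gives $\Pi_{\mathrm{ng}}a_2=\beta\,\Pi_{\mathrm{ng}}b$, which is Equation~\eqref{eq:appB-stationary-continuum-obstruction}. The final assertion is then Theorem~\ref{thm:two-jet-geometrizability}: a regular metric two-jet exists only if $\mathfrak O_2(a_1,a_2)=0$. With $a_1=0$ that obstruction equals $\beta\Pi_{\mathrm{ng}}b\neq0$, so no realization exists. I would also observe that necessity in that theorem needs no regularity hypothesis on the data, so the argument applies even if $a_2$ itself is not regular.

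The main obstacle is the second step: strong convergence of the projectors has to be combined with convergence of the moving vectors $z_N$, not of a fixed vector. The uniform bound $\norm{\Pi_N}\le1$ is exactly what makes this work. A secondary point is bookkeeping between the spaces $\mathscr Y$ and $\mathcal Y^\tau$. I would state that the projector $\Pi_{\mathrm{ng}}$ is taken in whichever Hilbert space carries the orthogonal splitting, consistent with Appendix~\ref{appB:second-order-consistency}.
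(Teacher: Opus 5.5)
Your proof is correct and follows the paper's own argument. You use $z_N=\Pi_Nz_N$ with strong projector convergence to get $z\in\mathscr Y_{\mathrm{geo}}$, note that $a_1=0$ gives $h_1=0$ so the obstruction reduces to $\Pi_{\mathrm{ng}}a_2=\beta\Pi_{\mathrm{ng}}b$, and conclude by the necessity direction of the two-jet criterion; your explicit use of $\norm{\Pi_N}\leq1$ to handle the moving sequence $z_N$ is a detail the paper leaves implicit.
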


\begin{proof}
Since $z_N=\Pi_Nz_N$, strong projector convergence and $z_N\to z$
give $z=\Pi_{\mathrm{geo}}z$.  A realizing metric two-jet would obey
$Ah_1=a_1=0$, hence $h_1=0$ in the gauge slice.  The second-order
chain rule would reduce to $a_2=Ah_2$, so its normal projection would
vanish.  Projecting $a_2=z+\beta b$ proves the stated obstruction.
\end{proof}

This stationary theorem concerns a parametrized data path.  It does
not by itself make a central-affine proto-area into a non-affine
function on state space.  The fixed-fiber construction in
Section~\ref{subsec:fixed-center-erasure-code} supplies the stronger
ambient state-space Hessian, while
Theorem~\ref{thm:stationary-quadratic-refinement} supplies a
conditional continuum transfer under the stated strong-limit
hypotheses.

\section{Analytic geometry of the boundary-length map}
\label{app:analytic-geometry}

The linear and quadratic geometrizability tests require compatible function spaces for metric perturbations and interval data, controlled endpoint limits, gauge cancellation, and differentiability of the boundary-length map. The Hilbert range theorem and the quadratic mapping problem are kept separate below. The former is established at the tomography weight used in Section~\ref{sec:linearized-geometry}; the latter is stated on a higher-regularity, positive-boundary-order class.

\subsection{Asymptotically hyperbolic spaces and gauge fixing}
\label{appC:ah-gauge}

Let $\overline M$ be the closed disk and let $M$ be its interior. A boundary defining function is a smooth nonnegative function $\rho$ on $\overline M$ with $\rho^{-1}(0)=\partial M$ and $d\rho\neq0$ on $\partial M$. A metric $g$ is asymptotically hyperbolic with the conformal representative $d\theta^2$ fixed at infinity if
\begin{equation}
  g=\rho^{-2}\overline g,
  \qquad
  \overline g\in C^\infty(\overline M;S^2T^*\overline M),
  \qquad
  |d\rho|_{\overline g}=1\quad\text{at }\partial M,
  \qquad
  \overline g|_{T\partial M}=d\theta^2.
  \label{eq:appC-ah-metric}
\end{equation}
The reference metric is $g_{\mathbb H}$. Metric perturbations are measured as zero-cotensors. For $m\geq6$, $0<\alpha<1$, and $\eta>0$, set
\begin{equation}
  \mathscr X^{m,\alpha}_\eta
  =\rho^{1+\eta} C^{m,\alpha}_0
  (\overline M;S^2({}^0T^*M)),
  \label{eq:appC-source-space}
\end{equation}
where $C^{m,\alpha}_0$ is the H\"older space defined by a smooth zero-frame $\rho\partial_\rho,\rho\partial_\theta$. The positive weight fixes the conformal infinity and supplies the endpoint decay used in the Hessian calculation. This high-regularity source is not identified with the $\rho^{1/2}L^2$ completion used for the linear range theorem.

The geometric data are functions of ordered ideal endpoints. Write
\begin{equation}
  \partial_2M=(S^1\times S^1)\setminus\operatorname{diag},
  \qquad
  a(u,v)=a(v,u).
  \label{eq:appC-boundary-pairs}
\end{equation}
Changing the defining function from $\rho$ to $e^\omega\rho$ changes
the renormalized length by the endpoint term $\omega(u)+\omega(v)$.
Both $\rho$ and the conformal representative are
fixed, so no endpoint-Weyl quotient is taken.  The topology near the
endpoint diagonal is the weighted topology incorporated in the chosen
data space; for the two-jet calculation it is the weighted topology in
which $A$, $\mathscr R_2$, and the Hessian terms below are defined.

Let $\mathfrak X^{m+1,\alpha}_{\eta}$ be the space of
$C^{m+1,\alpha}_0$ zero-vector fields $X$ with
$X=O(\rho^{1+\eta})$. Its elements generate diffeomorphisms fixing the
ideal boundary pointwise. The infinitesimal gauge action is
\begin{equation}
  X\longmapsto\mathcal L_Xg_{\mathbb H}=2\operatorname{sym}\nabla X.
  \label{eq:appC-gauge-action}
\end{equation}
The following elementary identity explains why this is the only gauge in the linear inverse problem.

\begin{lemma}[Boundary-fixing gauge annihilation]
\label{lem:appC-gauge-annihilation}
For $X\in\mathfrak X^{m+1,\alpha}_{\eta}$,
\begin{equation}
  I_2(\mathcal L_Xg_{\mathbb H})(u,v)=0
  \label{eq:appC-potential-zero}
\end{equation}
for every ordered pair of ideal endpoints.
\end{lemma}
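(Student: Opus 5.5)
The plan follows Corollary~\ref{cor:diffeomorphism-gauge}: reduce the integrand to a total derivative along each reference geodesic, and use the boundary decay of $X$ to kill the endpoint terms. Fix an ordered pair $(u,v)$ of distinct ideal endpoints and let $\gamma$ be the unit-speed $g_{\mathbb H}$-geodesic joining them, with tangent $T$ and $\nabla_TT=0$. Write $\omega=X^\flat$, so that $\mathcal L_Xg_{\mathbb H}=2\,d^s\omega$. Then, as in Equation~\eqref{eq:potential-total-derivative},
\begin{equation*}
 (\mathcal L_Xg_{\mathbb H})(T,T)=2(\nabla_T\omega)(T)=2\frac{d}{ds}\,g_{\mathbb H}(X,T),
\end{equation*}
so on each truncated segment $\gamma\cap\{\rho\geq\varepsilon\}$ the integral equals $2[g_{\mathbb H}(X,T)]$ evaluated between the two cutoff points.

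The remaining step is the boundary behaviour. The unit tangent $T$ is bounded in the zero-frame: $g_{\mathbb H}$-unit vectors are $\rho$ times bounded Euclidean vectors, and so are unit-length combinations of $\rho\partial_\rho$ and $\rho\partial_\theta$. The hypothesis $X\in\mathfrak X^{m+1,\alpha}_\eta$ means that the components of $X$ in the frame $\rho\partial_\rho,\rho\partial_\theta$ are $O(\rho^{1+\eta})$. Since that frame is $g_{\mathbb H}$-bounded, this gives $|X|_{g_{\mathbb H}}=O(\rho^{1+\eta})$. Cauchy--Schwarz then yields $|g_{\mathbb H}(X,T)|\leq|X|_{g_{\mathbb H}}\to0$ at both ends of $\gamma$.

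To justify the passage $\varepsilon\downarrow0$, I will check absolute integrability. The integrand $(\nabla_TX^\flat)(T)$ is bounded by $|\nabla X|_{g_{\mathbb H}}$. Because zero-derivatives of a $C^{1}_0$ field of order $\rho^{1+\eta}$ remain $O(\rho^{1+\eta})$ and the Christoffel symbols are bounded in the zero-frame, this is $O(\rho^{1+\eta})$. Along $\gamma$ one has $ds\sim|d\rho|/\rho$, so the integrand is dominated by $\rho^{\eta}|d\rho|$, which is integrable at both ends. Dominated convergence then identifies $I_2(\mathcal L_Xg_{\mathbb H})(u,v)$ with the limit of the truncated integrals, and that limit is $2\lim[g_{\mathbb H}(X,T)]=0$.

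The only point needing real care is this zero-frame bookkeeping: verifying that $|\nabla X|_{g_{\mathbb H}}$ inherits the weight $\rho^{1+\eta}$ uniformly, rather than losing a power of $\rho$. Otherwise the argument is just the fundamental theorem of calculus along geodesics. Reversing orientation changes neither the integrand nor the conclusion, so the identity holds for every ordered pair.
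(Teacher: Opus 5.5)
Your proposal is correct and follows essentially the same route as the paper: you write the integrand as the total derivative $2\frac{d}{ds}g_{\mathbb H}(X,T)$ along the unit-speed reference geodesic, and the positive boundary order of $X$ kills the endpoint terms $2g_{\mathbb H}(X,T)\big|_{s_-}^{s_+}$ as $s_\pm\to\pm\infty$. Your extra dominated-convergence check is harmless extra rigor that the paper leaves implicit. It uses the weighted-H\"older reading $X\in\rho^{1+\eta}C^{1}_0$, which the paper's norm confirms, to get $|\nabla X|_{g_{\mathbb H}}=O(\rho^{1+\eta})$; the identity itself needs only the limit of the truncated integrals.
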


\begin{proof}
Let $\gamma$ be the hyperbolic geodesic from $u$ to $v$, parametrized by arclength $s$ and tangent $T$. For the truncated segment $\gamma_{[s_-,s_+]}$,
\begin{equation}
  \int_{s_-}^{s_+}(\mathcal L_Xg_{\mathbb H})(T,T)\,ds
  =2\int_{s_-}^{s_+}g_{\mathbb H}(\nabla_TX,T)\,ds
  =2g_{\mathbb H}(X,T)\big|_{s_-}^{s_+}.
  \label{eq:appC-potential-integration}
\end{equation}
The positive boundary order makes the endpoint contribution vanish as
$s_\pm\to\pm\infty$.
\end{proof}

A vector field with nonzero tangential boundary value is not part of
this gauge group.  If it generates $Y$ on $S^1$, the induced endpoint
relabeling contains
\begin{equation}
 Y(u)\partial_u\mathcal L_g(u,v)
 +Y(v)\partial_v\mathcal L_g(u,v),
 \label{eq:appC-boundary-relabeling}
\end{equation}
which is not generally of the endpoint-Weyl form
$\omega(u)+\omega(v)$.  It changes the labels of measured boundary
regions and is not quotiented out when the endpoints and the conformal
representative are fixed.

The gauge slice used in the main text is the iterated transverse-traceless representative. The solenoidal projection below is a distinct elliptic decomposition and supplies the comparison estimate.

\begin{lemma}[Solenoidal gauge projection]
\label{lem:appC-elliptic-gauge-projection}
Assume that the weight $\eta$ is chosen so that
\begin{equation}
 P=\delta_{g_{\mathbb H}}\mathcal L_{(\cdot)}g_{\mathbb H}:
 \rho^{1+\eta}C_0^{m+1,\alpha}\longrightarrow
 \rho^{1+\eta}C_0^{m-1,\alpha}
 \label{eq:appC-vector-isomorphism}
\end{equation}
is an isomorphism.  Every $h\in\mathscr X^{m,\alpha}_\eta$ then has a unique decomposition
\begin{equation}
  h=\mathcal L_Xg_{\mathbb H}+h^{\mathrm{sol}},
  \qquad
  \delta_{g_{\mathbb H}}h^{\mathrm{sol}}=0,
  \qquad
  X\in\mathfrak X^{m+1,\alpha}_{\eta},
  \label{eq:appC-solenoidal-decomposition}
\end{equation}
with
\begin{equation}
  \|X\|_{\rho^{1+\eta}C^{m+1,\alpha}_0}
  +\|h^{\mathrm{sol}}\|_{\rho^{1+\eta}C^{m,\alpha}_0}
  \leq C\|h\|_{\rho^{1+\eta}C^{m,\alpha}_0}.
  \label{eq:appC-solenoidal-estimate}
\end{equation}
Independently, applying the iterated decomposition of
Theorem~\ref{thm:itt-decomposition} to $h$ produces a representative
$h^{\mathrm{itt}}$ satisfying
\begin{equation}
  I_2h=I_2h^{\mathrm{itt}}
  \label{eq:appC-transform-gauge-slice}
\end{equation}
and obeys the estimate \eqref{eq:itt-decomposition-estimate}.
\end{lemma}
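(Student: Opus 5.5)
The plan is to reduce the solenoidal decomposition to solving the vector equation $PX=\delta_{g_{\mathbb H}}h$, using the assumed isomorphism \eqref{eq:appC-vector-isomorphism}.

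Given $h\in\mathscr X^{m,\alpha}_\eta$, the divergence $\delta_{g_{\mathbb H}}h$ is a first-order zero-differential operator applied to $h$. Zero-vector fields preserve the weighted Hölder scale, so $\delta_{g_{\mathbb H}}h\in\rho^{1+\eta}C^{m-1,\alpha}_0$ with norm bounded by $C\|h\|_{\rho^{1+\eta}C^{m,\alpha}_0}$. The isomorphism hypothesis then yields a unique $X\in\rho^{1+\eta}C^{m+1,\alpha}_0$ with $PX=\delta_{g_{\mathbb H}}h$, and the open mapping theorem bounds $\|X\|$ by $\|P^{-1}\|$ times this norm. We set $h^{\mathrm{sol}}=h-\mathcal L_Xg_{\mathbb H}$. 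It is divergence free by construction. It lies in $\rho^{1+\eta}C^{m,\alpha}_0$ because $\mathcal L_Xg_{\mathbb H}=2\operatorname{sym}\nabla X$ loses one zero-derivative and keeps the weight. The triangle inequality then gives \eqref{eq:appC-solenoidal-estimate}.

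For uniqueness, suppose $\mathcal L_Xg_{\mathbb H}+h^{\mathrm{sol}}=\mathcal L_{X'}g_{\mathbb H}+h'^{\mathrm{sol}}$. Taking the divergence gives $P(X-X')=0$, so $X=X'$ by injectivity of $P$, and then the solenoidal parts agree. Since $\mathfrak X^{m+1,\alpha}_\eta$ is exactly the domain of $P$, the gauge field lies in the stated class. The main point requiring care is that the weight in the target of $P$ matches the weight of $\delta h$. This holds because $\delta$ is built from zero-frame derivatives $\rho\partial_\rho,\rho\partial_\theta$ and bounded connection coefficients in that frame, so it does not change the power of $\rho$. I would verify this in the collar normal form \eqref{eq:hyperbolic-geodesic-normal-form}.

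For the second claim, $\eta>0$ and $m\ge 6$ imply $\mathscr X^{m,\alpha}_\eta\subset x^{1/2}L^2(M;S^2({}^0T^*M))$. Indeed, $|h|_{g_{\mathbb H}}=O(\rho^{1+\eta})$ while $dV_{g_{\mathbb H}}\sim\rho^{-2}d\rho\,d\theta$, so $\int\rho^{-1}|h|^2\,dV<\infty$. Theorem~\ref{thm:itt-decomposition} therefore applies and gives $h=d^s\omega+h^{\mathrm{itt}}$ with the estimate \eqref{eq:itt-decomposition-estimate}. Corollary~\ref{cor:diffeomorphism-gauge}, in its weighted extension to $\omega\in x^{1/2}H^1_0$, gives $I_2(d^s\omega)=0$, and hence \eqref{eq:appC-transform-gauge-slice}. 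Since the problem separates these two decompositions, I will not attempt to compare $h^{\mathrm{sol}}$ with $h^{\mathrm{itt}}$.
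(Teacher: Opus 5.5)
Your proposal is correct and follows the paper's proof. You solve $PX=\delta_{g_{\mathbb H}}h$ through the assumed isomorphism and set $h^{\mathrm{sol}}=h-\mathcal L_Xg_{\mathbb H}$; for the second claim you apply Theorem~\ref{thm:itt-decomposition} and remove the potential term with the X-ray gauge identity. Your version is slightly more careful than the paper's: you check that $\mathscr X^{m,\alpha}_\eta\subset x^{1/2}L^2$, and you invoke the weighted extension of Corollary~\ref{cor:diffeomorphism-gauge} to $\omega\in x^{1/2}H^1_0$, which is the class the ITT decomposition actually produces. The paper instead cites Lemma~\ref{lem:appC-gauge-annihilation}, which is stated only for $X\in\mathfrak X^{m+1,\alpha}_\eta$.
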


\begin{proof}
Apply the divergence operator to the desired decomposition. The vector field $X$ must solve
\begin{equation}
  P X:=\delta_{g_{\mathbb H}}\mathcal L_Xg_{\mathbb H}=\delta_{g_{\mathbb H}}h.
  \label{eq:appC-vector-elliptic-equation}
\end{equation}
The assumed isomorphism gives $X=P^{-1}\delta h$ with the stated
weighted Schauder estimate, proving
Equations~\eqref{eq:appC-solenoidal-decomposition}
and~\eqref{eq:appC-solenoidal-estimate}.  The iterated-TT
representative is obtained from its own decomposition theorem, not by
identifying it with the solenoidal representative; any additional
potential term is absorbed into the boundary-fixing gauge.  Lemma~\ref{lem:appC-gauge-annihilation}
then gives Equation~\eqref{eq:appC-transform-gauge-slice}.
\end{proof}

\subsection{Linearized boundary length and tensor tomography}
\label{appC:linearized-length}

For a metric perturbation $h$ of positive boundary order, the derivative of renormalized length is obtained by first differentiating a finite segment and only then taking the endpoint limit. This order avoids spurious endpoint terms.

Let $\gamma_{u,v}$ be parametrized by hyperbolic arclength and write $p_\epsilon(u,v)$ and $q_\epsilon(u,v)$ for its intersections with $\{\rho=\epsilon\}$, as in Equation~\eqref{eq:truncated-endpoints}. Let $\gamma_{t,\epsilon}$ be the $g_t=g_{\mathbb H}+th$ geodesic joining those two finite points. Its length is
\begin{equation}
  L_\epsilon(t)=\operatorname{Length}_{g_t}(\gamma_{t,\epsilon}).
  \label{eq:appC-truncated-length}
\end{equation}
Stationarity of $\gamma_{0,\epsilon}$ gives
\begin{equation}
  \dot L_\epsilon(0)
  =\frac12\int_{\gamma_{0,\epsilon}}h(T,T)\,ds.
  \label{eq:appC-truncated-first-variation}
\end{equation}
If one instead keeps the ideal endpoints fixed while moving the cutoff points with the metric, the additional finite-endpoint variation has size $O(\epsilon^\eta)$. To see this, note that the endpoint displacement is tangent to the level set $\rho=\epsilon$ plus a normal displacement of size $O(\epsilon^{1+\eta})$ in zero-coordinates, while the first variation one-form of length at a geodesic endpoint is $O(\epsilon^{-1})$ in the compactified metric. Their product is $O(\epsilon^\eta)$. Therefore
\begin{equation}
  \left.\frac{d}{dt}\right|_{0}
  \mathcal L_{g_t}(u,v)
  =\frac12\int_{\gamma_{u,v}}h(T,T)\,ds.
  \label{eq:appC-renormalized-first-variation}
\end{equation}
The integral is absolutely convergent under Equation~\eqref{eq:quadratic-decay-class}. For a merely weighted tensor it is defined by density from compact support and continuity in the tomography norm.

The transform in \eqref{eq:appC-renormalized-first-variation} is the rank-two geodesic X-ray transform. Let $\mathcal X_{\mathrm{itt}}^\sigma$ be the completion of the iterated representative space in the source norm used in Section~\ref{sec:linearized-geometry}, and let $\mathcal Y^\tau$ be the even data Hilbert space of Lemma~\ref{thm:rank-two-range}. The tensor-tomography range theorem gives a homeomorphism
\begin{equation}
  I_2:\mathcal X_{\mathrm{itt}}^\sigma
  \longrightarrow \mathcal Y_{\mathrm{geo}}
  \subset\mathcal Y^\tau,
  \label{eq:appC-transform-homeomorphism}
\end{equation}
where
\begin{equation}
  \mathcal Y_{\mathrm{geo}}
  =\mathcal H_0\widehat\oplus\mathcal H_2,
  \qquad
  \mathcal Y_{\mathrm{ng}}
  =\widehat\bigoplus_{q\geq2}\mathcal H_{2q}.
  \label{eq:appC-data-sector-splitting}
\end{equation}
Here the scalar sector records the trace part and the second even sector records the transverse-traceless part; the higher even sectors are the linear nongeometric data. The closed-range projector is
\begin{equation}
  \Pi_{\mathrm{geo}}=\frac12I_2\mathscr R_2,
  \qquad
  \Pi_{\mathrm{ng}}=I-\Pi_{\mathrm{geo}},
  \qquad
  \mathscr R_2=2(I_2|_{\mathcal X_{\mathrm{itt}}^\sigma})^{-1}\Pi_{\mathrm{geo}}.
  \label{eq:appC-projectors}
\end{equation}
The factor $2$ is required because $D\mathcal B=I_2/2$.

A finite spectral truncation gives the same structure without any functional-analytic subtlety. Let $E_L\subset\mathcal X_{\mathrm{itt}}$ be a finite-dimensional space of smooth tensors and let $R_L=I_2(E_L)$. Since $I_2$ is injective on $\mathcal X_{\mathrm{itt}}$, the Gram matrix
\begin{equation}
  G^{(L)}_{ij}=\langle I_2e_i,I_2e_j\rangle_{\mathcal Y^\tau}
  \label{eq:appC-finite-gram}
\end{equation}
is positive definite for every basis $\{e_i\}$ of $E_L$. Hence
\begin{equation}
  \Pi_L a=\sum_{i,j}I_2e_i\,(G^{(L)})^{-1}_{ij}
  \langle a,I_2e_j\rangle_{\mathcal Y^\tau}
  \label{eq:appC-finite-projector}
\end{equation}
is the exact orthogonal projection onto $R_L$. If $E_L$ is source-dense and the sampling maps satisfy the operator consistency assumptions of Theorem~\ref{thm:finite-code-continuum-limit}, then $\Pi_L\to\Pi_{\mathrm{geo}}$ strongly. The strong convergence of these projectors underlies the convergence of finite cut witnesses to continuum nongeometric witnesses.

\subsection{The renormalized Hessian}
\label{appC:hessian}

The second derivative of the boundary-length map has two origins. The line element changes along the reference geodesic, and the extremal curve itself moves. The second contribution produces the nonlinear normal acceleration used in Section~\ref{sec:nonlinear-consistency}.

For a symmetric tensor $h$, the variation of the Levi-Civita connection is the tensor $C_h$ defined by
\begin{equation}
  2g_{\mathbb H}(C_h(X,Y),Z)
  = (\nabla_Xh)(Y,Z)+(\nabla_Yh)(X,Z)-(\nabla_Zh)(X,Y).
  \label{eq:appC-connection-variation}
\end{equation}
Along a reference geodesic $\gamma$ with unit tangent $T$ and parallel unit normal $N$, write
\begin{equation}
  \mathcal F_\gamma(h)=[C_h(T,T)]^\perp=f_hN,
  \qquad
  f_h=(\nabla_Th)(T,N)-\frac12(\nabla_Nh)(T,T).
  \label{eq:appC-forcing}
\end{equation}
If $g_t=g_{\mathbb H}+th+O(t^2)$ and $\gamma_t$ is the $g_t$-geodesic with the same ideal endpoints, its normal displacement $J_h$ solves
\begin{equation}
  \mathcal J_\gamma J_h=\mathcal F_\gamma(h),
  \qquad
  \mathcal J_\gamma=-D_s^2-R(\,\cdot\,,T)T.
  \label{eq:appC-forced-jacobi}
\end{equation}
On the Poincar\'e disk, $\mathcal J_\gamma=-\partial_s^2+1$, and the decaying Green kernel is $\frac12e^{-|s-r|}$.

The mixed curve derivative used in the main text follows from one integration by parts. If $V=vN$ is a normal variation of $\gamma$ with sufficient decay, then
\begin{equation}
\begin{split}
  \left.\frac{d}{dr}\right|_0
  \frac12\int_{\eta_r}h(T_r,T_r)\,ds_r
  ={}&\int_\gamma
  \left[\frac12(\nabla_Vh)(T,T)-(\nabla_Th)(V,T)\right]ds \\
  ={}&-\int_\gamma g_{\mathbb H}(\mathcal F_\gamma(h),V)\,ds.
  \label{eq:appC-mixed-curve-derivative}
\end{split}
\end{equation}
The endpoint terms vanish because $V$ and $h$ have positive boundary order. The sign and the inverse Jacobi operator in the Hessian follow from this identity.

Consider an affine two-parameter family $g_{t,u}=g_{\mathbb H}+th+uk$. On a finite segment with endpoints fixed, differentiating the first-variation formula in the $u$ direction gives
\begin{equation}
  \partial_u\partial_t L(0,0)
  =-\frac14\int h(T,T)k(T,T)\,ds
   -\int g_{\mathbb H}(\mathcal F_\gamma(h),J_k)\,ds,
  \label{eq:appC-finite-hessian}
\end{equation}
where $J_k$ is the Dirichlet solution of $\mathcal J_\gamma J_k=\mathcal F_\gamma(k)$. Passing to ideal endpoints requires only one estimate. In Fermi coordinates near an end of $\gamma$, Equation~\eqref{eq:quadratic-decay-class} gives
\begin{equation}
  |h(T,T)|+|k(T,T)|+|f_h|+|f_k|
  \leq C e^{-\mu |s|}
  \label{eq:appC-end-decay}
\end{equation}
for some $\mu>0$. The finite Dirichlet kernels are bounded by $Ce^{-|s-r|}$ uniformly on compactly exhausted intervals. Thus all terms in \eqref{eq:appC-finite-hessian} are dominated by an integrable function independent of the cutoff. The limit is
\begin{equation}
\begin{split}
  D^2\mathcal B_{g_{\mathbb H}}[h,k](\gamma)
  ={}&-\frac14\int_\gamma h(T,T)k(T,T)\,ds \\
  &-\int_\gamma g_{\mathbb H}\left(
  \mathcal F_\gamma(h),
  \mathcal J_\gamma^{-1}\mathcal F_\gamma(k)
  \right)ds.
  \label{eq:appC-hessian-operator}
\end{split}
\end{equation}
On the hyperbolic disk this is the double-integral formula
\begin{equation}
\begin{split}
  D^2\mathcal B_{g_{\mathbb H}}[h,k](\gamma)
  ={}&-\frac14\int_{-\infty}^{\infty}h_{TT}(s)k_{TT}(s)\,ds \\
  &-\frac12\int_{-\infty}^{\infty}\int_{-\infty}^{\infty}
  e^{-|s-r|}f_h(s)f_k(r)\,dr\,ds.
  \label{eq:appC-hessian-green}
\end{split}
\end{equation}
The formula is symmetric because $\mathcal J_\gamma^{-1}$ is self-adjoint. It is also gauge covariant. If $h=\mathcal L_Xg_{\mathbb H}$, the flow of $X$ gives
\begin{equation}
  D^2\mathcal B_{g_{\mathbb H}}[\mathcal L_Xg_{\mathbb H},k]
  =-D\mathcal B_{g_{\mathbb H}}[\mathcal L_Xk].
  \label{eq:appC-hessian-gauge}
\end{equation}
The right-hand side lies in $\mathcal Y_{\mathrm{geo}}$. Therefore the normal bilinear form
\begin{equation}
  \mathrm{II}_{g_{\mathbb H}}([h],[k])
  =\Pi_{\mathrm{ng}}D^2\mathcal B_{g_{\mathbb H}}[h,k]
  \label{eq:appC-second-fundamental-form}
\end{equation}
only depends on gauge classes. This gauge-class bilinear form is the second fundamental form entering the quadratic obstruction \eqref{eq:quadratic-obstruction-definition}.

The formula alone does not imply a bilinear estimate in the
$\rho^{1/2}L^2$ source norm: $f_h$ contains one derivative and the
direct term contains the product $h\odot k$. The coefficientwise
two-jet theorem therefore uses a high-regularity source and data norm
pair for which
\begin{equation}
  \|D^2\mathcal B[h,k]\|_{\mathscr Y}
  \leq C_H\|h\|_{\mathscr X}\|k\|_{\mathscr X}.
  \label{eq:appC-hessian-bound}
\end{equation}
This estimate is automatic after restriction to a finite-dimensional
smooth decaying source. In infinite dimension it is an explicit
boundedness hypothesis for the Hessian, together with the pathwise
differentiability assumption in
Theorem~\ref{thm:two-jet-geometrizability}; it is not a consequence of
the one-dimensional Green-kernel bound by itself.

\subsection{Coefficientwise two-jet regularity}
\label{appC:two-jet-regularity}

The coefficientwise criterion does not require a nonlinear inverse-function theorem, but both reconstructed metric coefficients must remain in the positive-boundary-order class used in the Hessian calculation. Suppose
\begin{equation}
  a(t)=t a_1+\frac{t^2}{2}a_2+o(t^2)
  \label{eq:appC-data-two-jet}
\end{equation}
and $\Pi_{\mathrm{ng}}a_1=0$. Let $h_1=\mathscr R_2a_1$ and assume that $h_1$ is smooth or polyhomogeneous with the positive boundary order required in Equations~\eqref{eq:appC-end-decay}--\eqref{eq:appC-hessian-operator}. A realizing metric two-jet necessarily satisfies
\begin{equation}
  a_2=Ah_2+D^2\mathcal B_{g_{\mathbb H}}[h_1,h_1].
  \label{eq:appC-two-jet-equation}
\end{equation}
Normal projection gives the quadratic obstruction in Equation~\eqref{eq:quadratic-obstruction-definition}.

Conversely, if that obstruction vanishes, the corrected datum
\begin{equation}
  r_2=a_2-D^2\mathcal B_{g_{\mathbb H}}[h_1,h_1]
  \label{eq:appC-corrected-second-data}
\end{equation}
lies in $\mathcal Y_{\mathrm{geo}}$. Assume that $h_2=\mathscr R_2r_2$ has the same positive boundary order. Then the polynomial metric path $g_t=g_{\mathbb H}+th_1+t^2h_2/2$ has the prescribed first two boundary-length coefficients by the chain rule and the Hessian formula. If $\mathcal B$ is twice differentiable into $\mathcal Y^\tau$ along this path, Taylor's theorem gives the strong $o_{\mathcal Y^\tau}(t^2)$ remainder. This proves the sufficiency direction of Theorem~\ref{thm:two-jet-geometrizability} without asserting a full nonlinear local image.

For a multiparameter state manifold, replace $a_1$ by $d\mathcal A(X)$ and $a_2$ by $\nabla^2\mathcal A(X,Y)$. Changing the torsion-free connection adds $d\mathcal A(Z)$ to the Hessian. Under the first-order condition this term belongs to $\mathcal Y_{\mathrm{geo}}$, so its normal projection vanishes. This gives the connection independence used in Proposition~\ref{prop:state-obstruction-intrinsic} and Theorem~\ref{thm:multiparameter-factorization}.

\section{Approximate code realizations and finite-dimensional analysis}
\label{app:code-validation}

For the fixed-cut models of Section~\ref{sec:skewed-codes-revised}, the controlled Hamiltonian gives an isometric perturbation of the exact tensor-network code and the decoded block structure determines the Gibbs-sector proto-area. A Stinespring calculation calibrates coherent source states. The incidence matrix and the entropy coefficients then determine the geometric residual and its dual witness.

\subsection{Fixed-cut encoding and controlled Hamiltonian dilation}
\label{appD:fixed-cut-encoding}

Let $G=(V,E)$ be a tensor-network graph in a regular minimum-cut chamber. For an oriented edge $e=(v,w)$, write $\mathcal H_{e,v}$ and $\mathcal H_{e,w}$ for the two bond factors and choose a normalized state
\begin{equation}
  |\Omega_e\rangle\in
  \mathcal H_{e,v}\otimes\mathcal H_{e,w}.
  \label{eq:appD-edge-state}
\end{equation}
The local tensor at $v$ is an isometry from the incident bond factors, together with the logical inputs attached to $v$, into the physical factors assigned to that vertex. Suppressing the contraction maps from the notation, the exact encoding has the form
\begin{equation}
  V_0
  =\left(\bigotimes_{v\in V}T_v\right)
   \left(I_L\otimes\bigotimes_{e\in E}|\Omega_e\rangle\right).
  \label{eq:appD-base-encoding}
\end{equation}
The only property of the tensors used here is the fixed-cut factorization. If $C_A$ is the selected cut for $A$, the two decoders associated with $A$ and $\bar A$ identify the encoded subspace with
\begin{equation}
  \mathcal H_L\otimes
  \bigotimes_{e\in C_A}
  \left(\mathcal H_{e,A}\otimes\mathcal H_{e,\bar A}\right)
  \otimes\mathcal H_{\mathrm{sp},A},
  \label{eq:appD-fixed-cut-factorization}
\end{equation}
where the spectator factor is independent of the logical input. This factorization also yields the exact graph entropy formula. The controlled perturbation modifies only the bond state in Equation~\eqref{eq:appD-fixed-cut-factorization}; the vertex tensors and the assignment of logical algebras remain unchanged.

The source register $X$ carries the orthogonal projectors $P_x=|x\rangle\langle x|$. Let $K_j$ be Hermitian operators on pairwise disjoint collections of bond modes. The controlled perturbation in Equation~\eqref{eq:sec6-controlled-isometry} is generated by
\begin{equation}
  H_{\mathrm{skew}}
  =\sum_{j}\sum_{x\in\mathsf X}c_j(x)P_x\otimes K_j,
  \qquad
  U_\varepsilon=e^{-i\varepsilon H_{\mathrm{skew}}}.
  \label{eq:appD-skew-hamiltonian}
\end{equation}
Because the supports of the $K_j$ are disjoint, the summands commute and
\begin{equation}
  U_\varepsilon
  =\sum_xP_x\otimes
  \bigotimes_j e^{-i\varepsilon c_j(x)K_j}.
  \label{eq:appD-controlled-product}
\end{equation}
The perturbed encoding is the type-correct contraction
\begin{equation}
 V_\varepsilon
 =\left(\bigotimes_{v\in V}T_v\right)
 U_\varepsilon
 \left(I_L\otimes\bigotimes_{e\in E}|\Omega_e\rangle\right).
 \label{eq:appD-perturbed-encoding}
\end{equation}
It is an isometry for every real $\varepsilon$, since
$U_\varepsilon$ is unitary before the isometric vertex contraction.
Equivalently, pushing $U_\varepsilon$ through the tensors defines a
physical unitary on the code subspace. In the decoded representation
for a fixed cut, the bond state conditioned on $x$ is
\begin{equation}
  |\Xi_{A,x}(\varepsilon)\rangle
  =\bigotimes_j e^{-i\varepsilon c_j(x)K_j}|\Xi_{A,j}\rangle.
  \label{eq:appD-conditioned-bond-state}
\end{equation}
Tracing the bond modes after the canonical fixed-cut decoder gives a Schur multiplier on the full source Hilbert space,
\begin{equation}
  \mathcal D_{\Gamma_A}(|x\rangle\langle y|)
  =\Gamma_A(x,y)|x\rangle\langle y|,
  \qquad
  \Gamma_A(x,y)
  =\langle\Xi_{A,y}(\varepsilon)|\Xi_{A,x}(\varepsilon)\rangle.
  \label{eq:appD-general-gram-channel}
\end{equation}
The matrix $\Gamma_A$ is a Gram matrix, hence $\Gamma_A\succeq0$, and its diagonal entries equal one. These two conditions are the complete-positivity and trace-preservation conditions for the Schur multiplier. If $K_j^2=I$ and $\langle\Xi_{A,j}|K_j|\Xi_{A,j}\rangle=0$, then
\begin{equation}
  \Gamma_A(x,y)
  =\prod_j\cos\!\left(\varepsilon[c_j(x)-c_j(y)]\right),
  \label{eq:appD-product-gram}
\end{equation}
which is Equation~\eqref{eq:sec6-schur-channel}. The $O(\varepsilon^2)$ recovery estimate in Proposition~\ref{prop:sec6-approximate-recovery} follows immediately. For a binary source and a real coefficient $0\leq\gamma\leq1$, the estimate can be sharpened to an equality. Writing
\begin{equation}
  \mathcal D_\gamma(\rho)
  =\frac{1+\gamma}{2}\rho
   +\frac{1-\gamma}{2}Z\rho Z,
  \label{eq:appD-binary-dephasing}
\end{equation}
shows that
\begin{equation}
  \|\mathcal D_\gamma-\operatorname{id}\|_\diamond
  =1-\gamma.
  \label{eq:appD-binary-diamond-exact}
\end{equation}
The upper bound follows from the convex decomposition in Equation~\eqref{eq:appD-binary-dephasing}; the maximally entangled input attains it.

\begin{proof}[Proof of Proposition~\ref{prop:sec6-approximate-recovery}]
Let $Z=[Z_{xy}]$ be an operator on $\mathcal H_X\otimes\mathcal H_R$ with $\|Z\|_1=1$. Block compression is trace-norm contractive, so $\|Z_{xy}\|_1\leq1$. Therefore
\begin{align}
 \bigl\|[(\mathcal D_{\Gamma_\varepsilon}-\operatorname{id})\otimes
 \operatorname{id}_R](Z)\bigr\|_1
 &\leq\sum_{x,y}|1-\Gamma_\varepsilon(x,y)|\,\|Z_{xy}\|_1\nonumber\\
 &\leq m^2\max_{x,y}|1-\Gamma_\varepsilon(x,y)|.
 \label{eq:sec6-block-diamond-estimate}
\end{align}
All cosines are nonnegative under \eqref{eq:sec6-small-angle-condition}. For nonnegative $a_j\leq1$,
$1-\prod_ja_j\leq\sum_j(1-a_j)$, and $1-\cos t\leq t^2/2$. Applying these inequalities to \eqref{eq:sec6-schur-channel} gives \eqref{eq:sec6-diamond-bound}. If the input is diagonal in the source basis, every off-diagonal block vanishes and the Schur channel acts as the identity. The matter factor is untouched by construction.
\end{proof}

\begin{proof}[Proof of Theorem~\ref{thm:sec6-local-all-order}]
For a fixed source sector $x$, the decoded bond state factorizes over cut edges. Entropy additivity and \eqref{eq:sec6-proto-area-cq} give
\begin{align}
 a_A^{\mathrm{loc}}(\lambda,\varepsilon)
 &=\sum_x[p_x(\lambda)-p_x^0]
   \sum_{e\in C_A}s\bigl(\vartheta_e+\varepsilon c_e(x)\bigr)\nonumber\\
 &=\sum_{e\in E}(M_{\mathcal C})_{Ae}
   w_e(\lambda,\varepsilon).
 \label{eq:sec6-local-proof-factorization}
\end{align}
This is \eqref{eq:sec6-local-factorization}. Multiplication by a vector in the left kernel of $M_{\mathcal C}$ gives \eqref{eq:sec6-local-witness-zero}.
\end{proof}

The cross-cell realization uses two bond pairs. With the ordering adopted in Section~\ref{subsec:cross-cell-skews}, define the amplitude matrix
\begin{equation}
  \Psi(\alpha)_{b,c}
  =\langle b|_B\langle c|_C
  U_\times(\alpha)
  |\Omega_r\rangle_{u\bar u}|\Omega_r\rangle_{v\bar v},
  \label{eq:appD-cross-amplitude}
\end{equation}
where $B=(u,\bar v)$ and $C=(\bar u,v)$. The reduced states are
\begin{equation}
  \rho_B(\alpha)=\Psi(\alpha)\Psi(\alpha)^\dagger,
  \qquad
  \rho_C(\alpha)=\Psi(\alpha)^\dagger\Psi(\alpha).
  \label{eq:appD-reduced-amplitudes}
\end{equation}
Direct multiplication gives the matrix in Equation~\eqref{eq:sec6-cross-reduced-matrix}. This amplitude representation is also the convenient starting point for recovery alignment.

\subsection{Cross-cell susceptibility and recovery alignment}
\label{appD:recovery-alignment}
\label{appD:cross-cell-calculation}

\begin{proof}[Proof of Proposition~\ref{prop:sec6-cross-susceptibility}]
Equation~\eqref{eq:sec6-cross-reduced-matrix} follows by writing
\begin{equation}
 U_\times(\alpha)|ij\rangle_{uv}
 =c|ij\rangle_{uv}-is|\bar i\,\bar j\rangle_{uv}
 \label{eq:sec6-cross-gate-action}
\end{equation}
inside the product of the two Schmidt decompositions. The matrix depends on $\alpha$ through $c^2$, $s^2$, and $cs$; conjugation by a diagonal unitary changes the sign of $cs$, so its spectrum is even. At $\alpha=0$,
\begin{equation}
 \rho_A(0)
 =\operatorname{diag}\bigl(r^2,r(1-r),r(1-r),(1-r)^2\bigr).
 \label{eq:sec6-cross-rho-zero}
\end{equation}
The first derivative is purely off diagonal,
\begin{align}
 \dot\rho_A(0)
 =i\delta\sqrt{r(1-r)}\bigl(&|00\rangle\!\langle11|-|11\rangle\!\langle00|\nonumber\\
 &+|01\rangle\!\langle10|-|10\rangle\!\langle01|\bigr).
 \label{eq:sec6-cross-rho-dot}
\end{align}
It follows that $-\operatorname{Tr}[\dot\rho_A(0)\log\rho_A(0)]=0$. The second entropy derivative is
\begin{equation}
 \frac{d^2}{d\alpha^2}S(\rho_A(\alpha))\bigg|_{0}
 =-\operatorname{Tr}\bigl[\ddot\rho_A(0)\log\rho_A(0)\bigr]
 -\operatorname{Tr}\bigl[\dot\rho_A(0)D\log_{\rho_A(0)}(\dot\rho_A(0))\bigr].
 \label{eq:sec6-entropy-hessian-cross}
\end{equation}
For a diagonal positive matrix with eigenvalues $\lambda_a$,
\begin{equation}
 \bigl[D\log_\rho(Z)\bigr]_{ab}
 =Z_{ab}\frac{\log\lambda_a-\log\lambda_b}{\lambda_a-\lambda_b},
 \label{eq:sec6-divided-log}
\end{equation}
with the continuous value $1/\lambda_a$ when the eigenvalues coincide. Substitution of \eqref{eq:sec6-cross-rho-zero} and \eqref{eq:sec6-cross-rho-dot}, together with the diagonal second derivative read from \eqref{eq:sec6-cross-reduced-matrix}, gives
\begin{equation}
 F_r''(0)
 =2\delta\bigl[r^2+(1-r)^2\bigr]\log\frac{r}{1-r}-2\delta^2.
 \label{eq:sec6-chi-intermediate}
\end{equation}
Since $r^2+(1-r)^2=(1+\delta^2)/2$ and
$\log[r/(1-r)]=2\operatorname{arctanh}\delta$, this is \eqref{eq:sec6-chi-r}. Positivity follows from $\operatorname{arctanh}\delta>\delta$ for $0<\delta<1$:
\begin{equation}
 (1+\delta^2)\operatorname{arctanh}\delta-\delta
 >\delta^3>0.
 \label{eq:sec6-chi-positive}
\end{equation}
The value \eqref{eq:sec6-chi-example} is obtained by setting $\delta=3/5$ and $\operatorname{arctanh}(3/5)=\log2$.  For the single-cross
function, differentiation of
$A=r-(2r-1)\sin^2\alpha$ gives
\begin{equation}
 f_r''(0)=4\delta\operatorname{arctanh}\delta.
 \label{eq:sec6-single-cross-susceptibility}
\end{equation}
Subtracting this from Equation~\eqref{eq:sec6-chi-r} yields
Equation~\eqref{eq:sec6-connected-susceptibility}.  Its sign is
strictly negative because $0<\delta<1$ and every term in brackets is
positive.  Equation~\eqref{eq:sec6-connected-example} follows from
$\delta=3/5$.
\end{proof}

The projectors $P_x$ belong to the specified center of the recoverable algebra. Admissible fixed-cut decoders preserve these projectors. On the Hilbert-space extension used to measure residual source coherence, a minimal center-preserving Stinespring decoder is therefore
\begin{equation}
  W_{\boldsymbol U}
  =\sum_{x\in\mathsf X}
  |x\rangle_Y\langle x|_Q\otimes U_x^{B\to E},
  \qquad
  U_x\in U(d_B),
  \label{eq:appD-center-preserving-recovery}
\end{equation}
with $d_E=d_B$. The output $Y$ is the recovered source and $E$ is discarded. The matter factors $\sigma_x$ are carried through unchanged. Every decoder in Equation~\eqref{eq:appD-center-preserving-recovery} has the same action on the physical Gibbs sector,
\begin{equation}
  \sum_xp_x(\lambda)|x\rangle\langle x|\otimes\sigma_x
  \longmapsto
  \sum_xp_x(\lambda)|x\rangle\langle x|\otimes\sigma_x.
  \label{eq:appD-diagonal-sector-fixed}
\end{equation}
All decoders in Equation~\eqref{eq:appD-center-preserving-recovery} give the same recovered matter entropy on the diagonal Gibbs sector. The proto-area formula in Equation~\eqref{eq:sec6-proto-area-cq} therefore holds throughout this recovery class. The alignment problem concerns only the off-diagonal matrix units.

Let $|\psi_x\rangle_{BC}$ be the conditioned bond state and let $\Psi_x$ be its $B\times C$ amplitude matrix. Set
\begin{equation}
  X_{xy}
  :=\operatorname{tr}_C|\psi_x\rangle\langle\psi_y|
  =\Psi_x\Psi_y^\dagger.
  \label{eq:appD-cross-operator}
\end{equation}
After Equation~\eqref{eq:appD-center-preserving-recovery}, the source channel is again a Schur multiplier, now with Gram matrix
\begin{equation}
  \Gamma_{\boldsymbol U}(x,y)
  =\operatorname{tr}\!\left(U_y^\dagger U_xX_{xy}\right).
  \label{eq:appD-aligned-gram}
\end{equation}
For the maximally mixed source input, the output state is $I_Y/m$, while the Choi state is supported on the maximally correlated subspace and has nonzero spectrum equal to that of $\Gamma_{\boldsymbol U}/m$. The calibrated coherent information is therefore
\begin{equation}
  I_{\mathrm c}(\boldsymbol U)
  =\log m-S\!\left(\frac{\Gamma_{\boldsymbol U}}{m}\right).
  \label{eq:appD-schur-coherent-information}
\end{equation}
Equation~\eqref{eq:appD-uhlmann-overlap} reduces the recovery calculation to a finite unitary-alignment problem.

For two source sectors, set $U_0=I$. If
\begin{equation}
  X_{10}=V\Sigma W^\dagger
  \label{eq:appD-svd-cross-operator}
\end{equation}
then the polar choice
\begin{equation}
  U_1=WV^\dagger
  \label{eq:appD-polar-recovery}
\end{equation}
satisfies
\begin{equation}
  \Gamma_{\boldsymbol U}(1,0)
  =\operatorname{tr}\Sigma
  =\|X_{10}\|_1
  =\|\sqrt{\rho_C(1)}\sqrt{\rho_C(0)}\|_1.
  \label{eq:appD-uhlmann-overlap}
\end{equation}
The last quantity is the root fidelity of the inaccessible reduced states. Equation~\eqref{eq:appD-uhlmann-overlap} is the finite-dimensional polar form of Uhlmann's theorem~\cite{Uhlmann1976,Jozsa1994}. Since
\begin{equation}
  I_{\mathrm c}
  =\log2-h_2\!\left(\frac{1+|\Gamma_{10}|}{2}\right)
  \label{eq:appD-binary-coherent-information}
\end{equation}
strictly increases with $|\Gamma_{10}|$, the polar decoder is the exact maximizer among all center-preserving minimal Stinespring decoders.

For more than two inequivalent environment states, pairwise Uhlmann
optimizers need not be jointly compatible.  The decoder-optimality statements
are therefore restricted to the binary sector.

A recovery acting on the full accessible Hilbert space cannot increase coherent information. In the controlled-sector dilation, the accessible output retains the orthogonal sector record $x$, whereas the complementary output is $m^{-1}\sum_x\rho_C^x$ for the maximally mixed source. Hence $S(B)=\log m+m^{-1}\sum_xS(\rho_B^x)$, and the coherent information of the unrecovered boundary channel gives the rigorous upper bound
\begin{equation}
  I_{\mathrm c}(\mathcal R\circ\mathcal N)
  \leq I_{\mathrm c}(\mathcal N)
  =\log m+\frac1m\sum_xS(\rho_B^x)
  -S\!\left(\frac1m\sum_x\rho_C^x\right).
  \label{eq:appD-raw-channel-upper-bound}
\end{equation}
The bound applies to arbitrary Stinespring environment dimension and does not assume preservation of the center; it provides a rigorous upper bound for the variational calculation.

\subsection{Entropy jets and closed-form evaluation}
\label{appD:entropy-jets}

The source parameter $\lambda$ and the Hamiltonian strength $\varepsilon$ play different roles. The geometry tests differentiate with respect to $\lambda$ at fixed encoding, while the weak-skew expansion controls the size of the resulting coefficient. Let
\begin{equation}
  s_{A,x}(\varepsilon)
  :=S\bigl(\tau_{A,x}(\varepsilon)\bigr).
  \label{eq:appD-sector-entropy-profile}
\end{equation}
Equation~\eqref{eq:sec6-proto-area-cq} and differentiation of the normalized Gibbs weights give the closed-form formulas
\begin{align}
  \left.\partial_\lambda a_A\right|_0
  &=\sum_xp_x^0q_xs_{A,x}(\varepsilon),
  \label{eq:appD-first-state-jet}\\
  \left.\partial_\lambda^2a_A\right|_0
  &=\sum_xp_x^0\bigl(q_x^2-\sigma_q^2\bigr)
  s_{A,x}(\varepsilon).
  \label{eq:appD-second-state-jet}
\end{align}
For a local Schmidt rotation, the one-bond entropy and its first two angle derivatives are
\begin{align}
  s(\vartheta)
  &=h_2(\sin^2\vartheta),
  \label{eq:appD-local-entropy}\\
  s'(\vartheta)
  &=\sin(2\vartheta)\log\cot^2\vartheta,
  \label{eq:appD-local-entropy-first}\\
  s''(\vartheta)
  &=2\cos(2\vartheta)\log\cot^2\vartheta-4.
  \label{eq:appD-local-entropy-second}
\end{align}
These expressions generate the exact edge-weight jets used in Theorem~\ref{thm:sec6-local-all-order}. The strict positivity of $s'$ on $0<\vartheta<\pi/4$ supplies the local inverse entropy coordinate in Equation~\eqref{eq:sec6-inverse-entropy-coordinate}.

For the cross-cell gate, the entropy is evaluated from the two Hermitian $2\times2$ blocks of Equation~\eqref{eq:sec6-cross-reduced-matrix}. If a block is
\begin{equation}
  B=\begin{pmatrix}a&z\\ \bar z&d\end{pmatrix},
  \label{eq:appD-two-by-two-block}
\end{equation}
its eigenvalues are
\begin{equation}
  \lambda_\pm(B)
  =\frac{a+d\pm\sqrt{(a-d)^2+4|z|^2}}{2}.
  \label{eq:appD-block-eigenvalues}
\end{equation}
Applying Equation~\eqref{eq:appD-block-eigenvalues} to both parity blocks gives the four eigenvalues of $\rho_A(\alpha)$ and hence its entropy in closed form.

The raw crossed entropy must be combined with the single-cross entropy
before applying a geometric witness.  
\subsection{Fixed-central-fiber erasure calculation and minimax calibration}
\label{appD:fixed-center-calculation}

\begin{proof}[Proof of Proposition~\ref{prop:fixed-fiber-exact-defect}]
The center records make every regional state block diagonal with fixed
weights $q$ and $1-q$.  The common Shannon term cancels between
boundary and recovered algebraic entropies.  In the varying block, the
region-$12$ boundary state is
\begin{equation}
 \begin{split}
 \omega_{12}^{(0)}(t)={}&
 sJ\rho_t^{(0)}J^\dagger\otimes|0\rangle\!\langle0|_F\\
 &+(1-s)|\eta\rangle\!\langle\eta|
 \otimes|1\rangle\!\langle1|_F.
 \end{split}
 \label{eq:fixed-fiber-region12-state}
\end{equation}
The flag sectors are orthogonal, and the recovered first-block state
is $(I+stZ)/2$.  Hence
\begin{equation}
 \begin{split}
 &S_{\mathrm{PA}}(t;12)-S_{\mathrm{PA}}(0;12)\\
 &\quad=q\left\{
 s\left[h_2\!\left(\frac{1+t}{2}\right)-\log2\right]
 -\left[h_2\!\left(\frac{1+st}{2}\right)-\log2\right]
 \right\}.
 \end{split}
 \label{eq:fixed-fiber-exact-area12}
\end{equation}
Since
\begin{equation}
 h_2\!\left(\frac{1+t}{2}\right)
 =\log2-\frac{t^2}{2}+O(t^4),
 \label{eq:fixed-fiber-binary-expansion}
\end{equation}
the first derivative vanishes and the second derivative is
$-q s(1-s)$.  The same calculation for region $4$, with $s$ replaced
by $1-s$, gives the same value.  The other monitored boundary states
are independent of $t$.

Equivalently, in region $12$ the full and recovered BKM norms are
$qs$ and $qs^2$; in region $4$ they are $q(1-s)$ and
$q(1-s)^2$.  Their differences prove
Equation~\eqref{eq:fixed-fiber-defect-vector}, and multiplication by
$-4G_{\mathrm{eff}}$ proves
Equation~\eqref{eq:fixed-fiber-second-jet}.
\end{proof}

For the information-theoretic comparison below, the symbols denote the
complete center-resolved states
\begin{align}
 \omega_{12}(t)
 &=q\,\omega_{12}^{(0)}(t)
   \oplus(1-q)\omega_{12}^{(1)}(0),
 \label{eq:fixed-fiber-complete-boundary-state}\\
 \tau_{12}(t)
 &=q\,\frac{I+stZ}{2}\oplus(1-q)\pi,
 \label{eq:fixed-fiber-complete-recovered-state}
\end{align}
where $\omega_{12}^{(0)}(t)$ is given in
Equation~\eqref{eq:fixed-fiber-region12-state} and the second block is
independent of $t$.

\begin{proposition}[Separation of four error quantities]
\label{prop:fixed-fiber-error-separation}
For region $12$,
\begin{align}
 D(\omega_{12}(t)\Vert\omega_{12}(0))
 &=\frac{qst^2}{2}+O(t^4),
 \label{eq:fixed-fiber-full-relative-entropy}\\
 D(\tau_{12}(t)\Vert\tau_{12}(0))
 &=\frac{qs^2t^2}{2}+O(t^4),
 \label{eq:fixed-fiber-reduced-relative-entropy}\\
 D(\omega_{12}(t)\Vert\omega_{12}(0))
 -D(\tau_{12}(t)\Vert\tau_{12}(0))
 &=\frac{qs(1-s)t^2}{2}+O(t^4),
 \label{eq:fixed-fiber-relative-entropy-loss}\\
 D(\tau_{12}(t)\Vert\rho_t)
 &=\frac{q(1-s)^2t^2}{2}+O(t^4).
 \label{eq:fixed-fiber-state-recovery-error}
\end{align}
The worst-case recovered-channel error on the varying qubit block is
\begin{equation}
 \norm{\mathcal M_{12}-\operatorname{id}}_\diamond
 =\frac32(1-s).
 \label{eq:fixed-fiber-diamond-error}
\end{equation}
Thus retained distinguishability, data-processing loss, the
state-specific reverse-relative-entropy deviation, and diamond error
are inequivalent.
\end{proposition}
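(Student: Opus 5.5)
The plan is to reduce each of the four relative-entropy statements to the fixed-support BKM calculus of Section~\ref{subsec:proto-area-differential-structure}, and to treat the diamond-norm identity as a separate exact computation for the qubit depolarizing channel.

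\emph{Second-order expansion and evenness.} For a faithful affine curve $\sigma(t)=\sigma_0+tY$ on fixed support, I will expand
\[
 D(\sigma(t)\Vert\sigma_0)=\tr\bigl[\sigma(t)(\log\sigma(t)-\log\sigma_0)\bigr].
\]
The zeroth- and first-order terms vanish. The second-order term is $\tfrac{t^2}{2}\mathfrak g_{\sigma_0}(Y,Y)$, which follows from Theorem~\ref{thm:section2-entropy-two-jet} together with $\tr Y=0$. To upgrade the remainder from $O(t^3)$ to $O(t^4)$, I will use a parity symmetry. Conjugation by the logical $X$ on the first block sends $\rho_t\mapsto\rho_{-t}$ and fixes $\rho_0$. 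On the boundary this is implemented by the unitary $X_{L_4}$ together with $X_{Q_1}$, which maps $J Z J^\dagger$ to $-JZJ^\dagger$ since $X_{Q_1}|\Phi_\pm\rangle=|\Psi_\pm\rangle$ up to the relevant sign. I will check that this unitary leaves $|\eta\rangle$ invariant up to a local unitary on $Q_1Q_2$, or otherwise use an equivalent symmetry. If the parity argument fails, I will fall back on computing in the Bell/eigenbasis, where every state is diagonal and commutes with $\sigma_0$. Each relative entropy is then a classical Kullback--Leibler divergence in $t$, which can be checked to be even directly.

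\emph{BKM norms.} Here I will use the commuting diagonal structure. In the flag-$0$ branch the boundary block is $qsJ\rho^{(0)}_tJ^\dagger$, diagonal in $\{|\Phi_+\rangle,|\Phi_-\rangle\}$ with eigenvalues $qs(1\pm t)/2$. Its tangent therefore has diagonal entries $\pm qs/2$ at eigenvalues $qs/2$. By Equation~\eqref{eq:appA-bkm-spectral-sum} this gives $\mathfrak g=2(qs/2)^2/(qs/2)=qs$. The flag-$1$ branch and the second central block are $t$-independent and contribute nothing. For the recovered state $q(I+stZ)/2$ the same computation gives $qs^2$. Subtracting the two yields Equation~\eqref{eq:fixed-fiber-relative-entropy-loss}. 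For Equation~\eqref{eq:fixed-fiber-state-recovery-error}, $\tau_{12}(t)$ and $\rho_t$ commute. The quantity therefore reduces to $q$ times the binary divergence between $(1\pm st)/2$ and $(1\pm t)/2$, whose Taylor coefficient is $(1-s)^2t^2/2$. The second blocks coincide and drop out.

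\emph{Diamond norm.} By Equation~\eqref{eq:fixed-fiber-recovery-12}, $\mathcal M_{12}-\operatorname{id}=(1-s)(\Delta-\operatorname{id})$ with $\Delta(\rho)=\pi\tr\rho$. For the upper bound I will write the twirl $\Delta=\tfrac14\sum_{P}P(\cdot)P$ over the Paulis, so that
\[
 \Delta-\operatorname{id}=\tfrac14\sum_{P\neq I}P(\cdot)P-\tfrac34\operatorname{id}.
\]
The triangle inequality then gives $\|\Delta-\operatorname{id}\|_\diamond\leq\tfrac34+\tfrac34=\tfrac32$. For the lower bound I will take the maximally entangled input: $(\Delta-\operatorname{id})\otimes\operatorname{id}$ applied to $\Phi$ gives $I/4-\Phi$, with eigenvalues $-\tfrac34$ and $\tfrac14$ (three times). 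Its trace norm is $\tfrac32$.

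The main obstacle I expect is the $O(t^4)$ remainder, that is, identifying a clean parity unitary compatible with the erasure branch. The diagonal-basis fallback should settle it.
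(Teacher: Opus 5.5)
Your proposal is correct and follows the paper's route. The first three expansions come from the BKM norms $qs$ and $qs^2$ of the commuting Bell-diagonal boundary and recovered states, and the fourth comes from the explicit binary divergence between $(1\pm st)/2$ and $(1\pm t)/2$. The diamond norm comes from the maximally entangled lower bound; your Pauli-twirl triangle inequality is an explicit version of the paper's appeal to unitary covariance for the matching upper bound.

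One slip is in the parity step: $X_{Q_1}$ carries $|\Phi_\pm\rangle$ to the $|\Psi_\pm\rangle$ Bell states, so it does not send $JZJ^\dagger$ to $-JZJ^\dagger$. The region-$12$ symmetry that does is $Z_{Q_1}$ controlled on $F=0$. Your diagonal-basis fallback already secures the $O(t^4)$ remainders, since there every divergence is an analytic and manifestly even classical Kullback--Leibler divergence.
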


\begin{proof}
The first three expansions are the BKM expansions in the proof of
Proposition~\ref{prop:fixed-fiber-exact-defect}.  Since the recovered
state is $(I+stZ)/2$, expansion about $t=0$ gives
Equation~\eqref{eq:fixed-fiber-state-recovery-error}.  Moreover,
\begin{equation}
 \mathcal M_{12}-\operatorname{id}
 =(1-s)(\mathcal R_\pi-\operatorname{id}),
 \qquad
 \mathcal R_\pi(X)=\pi\tr X.
\end{equation}
For a qubit,
$\norm{\mathcal R_\pi-\operatorname{id}}_\diamond=3/2$.
The lower bound is attained on a maximally entangled input, and the
matching upper bound follows from unitary covariance.  Homogeneity of
the diamond norm proves Equation~\eqref{eq:fixed-fiber-diamond-error}.
\end{proof}

The minimax calibration can be verified without restricting the
decoder ansatz.  The accessible flag first decomposes any recovery
into a channel $\mathcal G$ on the transmitted branch and a replacement
channel $\mathcal R_\zeta$ on the erased branch.  Twirling both outputs
over $U(2)$ cannot increase the diamond distance to the identity,
because that norm is convex and invariant under simultaneous input and
output conjugation.  The twirled replacement is $\mathcal R_\pi$, and
the twirled transmitted channel is depolarizing,
\begin{equation}
 \mathcal G_\lambda(\rho)
 =\lambda\rho+(1-\lambda)\pi\tr\rho,
 \qquad -\frac13\leq\lambda\leq1.
 \label{eq:fixed-fiber-twirled-transmitted-channel}
\end{equation}
The total recovered channel is depolarizing with parameter $s\lambda$,
and therefore
\begin{equation}
 \norm{s\mathcal G_\lambda+(1-s)\mathcal R_\pi
       -\operatorname{id}}_\diamond
 =\frac32(1-s\lambda).
 \label{eq:fixed-fiber-minimax-objective}
\end{equation}
It is minimized at $\lambda=1$, proving the claimed
diamond-minimax calibration.  This is distinct from asserting that the
same decoder maximizes the coherent-information objective.  The
geometric and BKM theorems require a fixed calibrated recovery branch,
whereas optimality for the coherent-information prescription of
Ref.~\cite{CaoEtAl2026} requires a separate optimizer theorem.

\subsection{Witness reconstruction and finite-size certificates}
\label{appD:witness-certificates}

The cross-cell contribution is tested against the fixed geometric range defined by the local bond response. In the four-terminal chamber of Equation~\eqref{eq:star-cut-matrix}, choose the unit crossed-cut vector
\begin{equation}
  b^\times=(0,0,0,0,1,0)^{\mathsf T}.
  \label{eq:appD-cross-vector}
\end{equation}
The exact least-squares edge reconstruction and residual are
\begin{align}
  M^+b^\times
  &=\left(\frac38,\frac14,-\frac18,0\right)^{\mathsf T},
  \label{eq:appD-cross-pseudoinverse}\\
  (I-MM^+)b^\times
  &=\left(-\frac38,-\frac14,\frac18,0,
  \frac38,-\frac18\right)^{\mathsf T}.
  \label{eq:appD-cross-residual}
\end{align}
Hence
\begin{equation}
  \operatorname{dist}_2(b^\times,\operatorname{im}M)
  =\frac{\sqrt6}{4},
  \qquad
  y_\star
  =\frac{1}{2\sqrt6}(-3,-2,1,0,3,-1)^{\mathsf T}.
  \label{eq:appD-cross-distance}
\end{equation}
The vector $y_\star$ obeys $M^{\mathsf T}y_\star=0$ and
$y_\star^{\mathsf T}b^\times=\sqrt6/4$. Therefore a code coefficient $\beta b^\times$ has the exact nongeometric distance
\begin{equation}
  D_{\mathrm{geo}}=\frac{\sqrt6}{4}|\beta|.
  \label{eq:appD-beta-distance}
\end{equation}

The fixed-motif refinement in
Proposition~\ref{prop:sec6-uniform-persistence} is exact. Refining the
graph outside the four-terminal support appends rows and columns
annihilated by the zero-extended witness and does not alter
Equations~\eqref{eq:appD-cross-residual}--\eqref{eq:appD-beta-distance}.

A completely dephased shared-record construction reproduces the same
diagonal central-operator vector but changes the recovery of coherent
source states. Dephasing replaces the Gram matrix in
Equation~\eqref{eq:appD-general-gram-channel} by a classical record,
whereas the controlled Hamiltonian dilation retains the
approximate-recovery structure quantified in
Proposition~\ref{prop:sec6-approximate-recovery}.

\clearpage

\providecommand{\href}[2]{#2}\begingroup\raggedright\endgroup

\end{document}